\documentclass[11pt]{article}

% Author comments  
\def\ShowComment{True} % Switch comments on or off

\usepackage[usenames,dvipsnames,svgnames,table]{xcolor}
\usepackage{enumitem}
\usepackage{graphicx}
\usepackage{fancybox}
\usepackage{comment}
\usepackage{xcolor}
\usepackage{nameref}

\definecolor{ForestGreen}{rgb}{0.1333,0.5451,0.1333}
\definecolor{DarkRed}{rgb}{0.8,0,0}
\definecolor{Red}{rgb}{1,0,0}
\usepackage[linktocpage=true,
pagebackref=true,colorlinks,
linkcolor=ForestGreen,citecolor=ForestGreen,
bookmarks,bookmarksopen,bookmarksnumbered]
{hyperref}

\usepackage{amsmath}
\usepackage{amssymb}
\usepackage{amsthm}

\usepackage[ruled, noend, linesnumbered]{algorithm2e}

\usepackage{subfig}

\usepackage{thm-restate}

\usepackage{float}
\usepackage{cleveref}

\usepackage{cancel}

\usepackage{booktabs}

\newtheorem{theorem}{Theorem}[section]

\newtheorem{lemma}[theorem]{Lemma}
\newtheorem{observation}[theorem]{Observation}

\newtheorem{claim}[theorem]{Claim}

\newtheorem{fact}[theorem]{Fact}

\newtheorem{informal lemma}{Informal Lemma}[section]
\newtheorem{informal definition}{Informal Definition}[section]

\newtheorem{property}[theorem]{Property}
\newtheorem{definition}[theorem]{Definition}
\newtheorem{remark}[theorem]{Remark}

\newtheorem*{theorem*}{Theorem}
\newtheorem*{corollary*}{Corollary}
\newtheorem*{conjecture*}{Conjecture}
\newtheorem*{lemma*}{Lemma}
\newtheorem*{thm*}{Theorem}
\newtheorem*{prop*}{Proposition}
\newtheorem*{obs*}{Observation}
\newtheorem*{definition*}{Definition}

\newtheorem*{remark*}{Remark}
\newtheorem*{rec*}{Recommendation}

\newenvironment{fminipage}%
  {\begin{Sbox}\begin{minipage}}%
  {\end{minipage}\end{Sbox}\fbox{\TheSbox}}

\def\abs#1{\left|#1  \right|}

\newcommand\vecone{\boldsymbol{1}}

%% https://tex.stackexchange.com/questions/56765/getting-the-expectation-symbol-to-behave-like-sum-instead-of-sigma
%% extra {} inside mathop gives nicer (standard) vertical placement of E

\newcommand{\update}{\textsc{Update}}

%%%%% LINEAR ALGEBRA

% \newcommand{\schurto}[2]{\ensuremath{\textsc{Sc}\left[#1\right]_{#2}}}

%{$\textsc{Sc}\left[#1\right]_{#2}$}
% \newcommand{\schurto}[2]{\textsc{Sc}\left(#1, #2\right)}
%\newcommand{\schurto}[2]{\left[{#1}\right]_{\to #2}}

%transpose

%pseudoinverse

\DeclareMathOperator*{\val}{val}

\DeclareMathOperator*{\im}{im}

%%%%% LAYOUT

\newcommand{\polylog}{\text{polylog}}
\newcommand{\econg}{\text{econg}}
\newcommand{\OPT}{\text{OPT}}

\ifdefined\ShowComment
\def\wuwei#1{\marginpar{$\leftarrow$\fbox{Wuwei}}\footnote{$\Rightarrow$~{\sf\textcolor{blue}{#1 --Wuwei}}}}
\def\wuweiI#1{\marginpar{$\leftarrow$\fbox{Wuwei}}{\textcolor{blue}{#1}}}
\def\wuweiII#1{\textcolor{blue}{{#1}--Wuwei}}
\else
\def\wuwei#1{}
\def\wuweiI#1{}
\def\wuweiII#1{}
\fi 

\ifdefined\ShowComment
\def\maxP#1{\marginpar{$\leftarrow$\fbox{Max}}{{\textcolor{red}{#1}}}}
\else
\def\maxP#1{}
\fi 

\ifdefined\ShowComment
\def\rasmusK#1{\marginpar{$\leftarrow$\fbox{Rasmus}}{{\textcolor{red}{#1}}}}
\else
\def\rasmusK#1{}
\fi 

\ifdefined\ShowComment
\def\Weixuan#1{\marginpar{$\leftarrow$\fbox{Weixuan}}\footnote{$\Rightarrow$~{\sf\textcolor{cyan}{#1 --Weixuan}}}}
\else
\def\Weixuan#1{}
\fi 

\ifdefined\ShowComment
\def\dpnote#1{\marginpar{$\leftarrow$\fbox{Debmalya}}{{\textcolor{blue}{#1}}}}
\else
\def\dpnote#1{}
\fi 

\ifdefined\ShowComment
\def\AmirA#1{\marginpar{$\leftarrow$\fbox{Amir}}{{\textcolor{olive}{#1}}}}
\else
\def\AmirA#1{}
\fi 

\ifdefined\ShowComment
\def\reviewer#1{\marginpar{$\leftarrow$\fbox{Reviewer}}{{\textcolor{purple}{#1}}}}
\else
\def\reviewer#1{}
\fi 
\usepackage{fullpage}
\usepackage[nottoc,numbib]{tocbibind}

\DeclareMathOperator{\dist}{dist}

\global\long\def\opt{\mathrm{OPT}}

\newcommand*\samethanks[1][\value{footnote}]{\footnotemark[#1]}

\title{Deterministic Almost-Linear-Time Gomory-Hu Trees}

\author{Amir Abboud\thanks{This work is part of the project CONJEXITY that has received funding from the European Research Council (ERC) under the European Union's Horizon Europe research and innovation programme (grant agreement No.~101078482). Supported by an Alon scholarship and a research grant from the Center for New Scientists at the Weizmann Institute of Science. Part of this work was done while visiting INSAIT, Sofia University "St. Kliment Ohridski".} \\ Weizmann Institute of Science \\ amir.abboud@weizmann.ac.il 
 \and  
 Rasmus Kyng 
 \thanks{The research leading to these results has received funding from the starting grant ``A New Paradigm for Flow and Cut Algorithms'' (no. TMSGI2 218022) of the Swiss National Science Foundation.}
 \thanks{The research leading to these results has received funding from grant no. 200021 204787 of the Swiss National Science Foundation.}
 \\
 ETH Zurich \\
kyng@inf.ethz.ch
 \and  
 Jason Li
 \\
 Carnegie Mellon University \\
jmli@cs.cmu.edu
 \and  
 Debmalya Panigrahi\thanks{Supported in part by NSF grants CCF-1955703 and CCF-2329230. Part of this work was done when the author was visiting the Simons Institute for Theory of Computing and Google Research.}
 \\
 Duke University \\
debmalya@cs.duke.edu
 \and
Maximilian Probst Gutenberg\samethanks[3] \\
 ETH Zurich \\
maximilian.probst@inf.ethz.ch 
 \and  
 Thatchaphol Saranurak\thanks{
         Supported by NSF Grant CCF-2238138. Partially funded by the Ministry of Education and Science of Bulgaria's support for INSAIT, Sofia University ``St.~Kliment Ohridski'' as part of the Bulgarian National Roadmap for Research Infrastructure.
    }\\
    University of Michigan\\
    thsa@umich.edu
    \and 
 Weixuan Yuan\samethanks[3]  \\
 ETH Zurich \\
weyuan@inf.ethz.ch
    \and
     Wuwei Yuan  \\
     ETH Zurich \\
    wuyuan@ethz.ch
}
\date{}

\begin{document}
\maketitle
\begin{abstract}
Given an $m$-edge, undirected, weighted graph $G=(V,E,w)$, a Gomory-Hu tree $T$ (Gomory and Hu, 1961) is a tree over the vertex set $V$ such that all-pairs mincuts in $G$ are preserved exactly in $T$.

In this article, we give the first almost-optimal $m^{1+o(1)}$-time {\em deterministic} algorithm for constructing a Gomory-Hu tree. Prior to our work, the best deterministic algorithm for this problem dated back to the original algorithm of Gomory and Hu that runs in $nm^{1+o(1)}$ time (using current maxflow algorithms). In fact, this is the first almost-linear time deterministic algorithm for even simpler problems, such as finding the $k$-edge-connected components of a graph. 

Our new result hinges on two separate and novel components that each introduce a distinct set of de-randomization tools of independent interest:
\begin{itemize}
    \item a deterministic reduction from the all-pairs mincuts problem to the single-souce mincuts problem incurring only subpolynomial overhead, and 
    \item a deterministic almost-linear time algorithm for the single-source mincuts problem.
\end{itemize}
\end{abstract}

\pagenumbering{gobble}

\pagebreak

\tableofcontents

\pagebreak

\pagenumbering{arabic}

\newcommand{\apmc}{\textsc{apmc}\xspace}
\newcommand{\apmf}{\textsc{apmf}\xspace}
\newcommand{\ssmc}{\textsc{ssmc}\xspace}
\newcommand{\ghtree}{\textsc{GHTree}\xspace}
\newcommand{\kcon}{\textsc{$k$-con}\xspace}

\section{Introduction}

For a pair of vertices $s, t$ in a graph $G$, the $(s,t)$ edge connectivity $\lambda_G(s,t)$ is the value of a minimum $(s, t)$-cut, or equivalently, that of a maximum $(s, t)$-flow. Finding the $(s, t)$ edge connectivity for all pairs of vertices in a graph is a fundamental problem. In a seminal result in 1961, Gomory and Hu solved this problem on undirected (weighted) graphs by proposing a tree data structure defined on the vertices of the graph that encodes an $(s, t)$-mincut for each vertex pair $s, t$ in the graph~\cite{gomory1961multi}. This is called a Gomory-Hu tree (\ghtree) or cut tree of the graph. Since their introduction, \ghtree{s} have been widely studied in combinatorial optimization~\cite{Elmaghraby64,HuS83,GranotH86,Hassin88,Gusfield90,Benczur95,HartvigsenM95,Hartvigsen98,Hartvigsen01,GoldbergT01} and have found applications in diverse domains such as image segmentation, computer networks, and optimization  (see textbooks such as \cite{AhujaMO93, CookCPS97, Schrijver03} for further discussions on \ghtree{s}).

The classic algorithm by Gomory and Hu from the 1960s is a deterministic reduction to $(s,t)$-mincut and (using current max-flow algorithms) achieves runtime $nm^{1+o(1)}$ by a divide-and-conquer algorithm of recursion depth up to $n-1$, that reduces the problem to solving $(s,t)$-mincut problems on graphs of size $O(m)$ on each level. During the last years, a long line of work \cite{AbboudKT21a,AbboudKT21b,LiPS21,zhang2021gomory,AbboudKT22,abboud2022breaking,abboud2023all} has developed a powerful framework that follows this reduction but enforces $\tilde{O}(1)$ recursion depth, resulting in an almost-optimal $m^{1+o(1)}$ time algorithm. 

However, all of the above results rely for their improvement of the recursion depth on the insight that for each graph $G = (V,E,w)$, there exists a \emph{pivot} $r \in V$, such that a set of vertex-disjoint $(v,r)$-mincuts $S_v$ exists, such that for some constant $C > 1$, each cut is of size at most $|V|/C$ and the set of vertices not in these cuts is of size at most $|V|/C$. Since, in fact, at least a $\Omega(1)$-fraction of the vertices are pivots, it then suffices to sample such a vertex uniformly at random. Once a pivot $r$ is sampled, they then show, using advanced machinery, how to compute disjoint sets of $(v,r)$-mincuts with the properties described above, again heavily relying on randomization. Notably, the recent results breaking through the Gomory-Hu bound for weighted graphs \cite{abboud2022breaking,abboud2023all} are based on a sophisticated extension of Karger's tree-packing technique \cite{karger2000minimum} that is infamously difficult to derandomize; in particular, one needs to sample a so-called \emph{guide tree} from a packing of Steiner trees.

The unfortunate consequence is that the state-of-the-art {\em deterministic} algorithm for the all-pairs mincuts problem remains the one given by Gomory and Hu more than 60 years ago! In fact, even for the simpler (and extensively studied) problem of finding the $k$-connected components of a graph (which is trivial given a \ghtree), an almost-linear-time deterministic algorithm is not known unless $k = o(\log n)$.

\paragraph{Our Contribution.} In this article, we close the gap between the randomized and the deterministic setting (up to subpolynomial factors). We give the first \emph{deterministic} almost-optimal algorithm to all-pairs mincut. 

\begin{restatable}{theorem}{main} \label{thm:mainThm}
Given an undirected, weighted graph $G=(V,E,w)$ with polynomially-bounded weights, there is a \emph{deterministic} algorithm that computes a \ghtree $T$ for $G$ in $m^{1+o(1)}$ time.
\end{restatable}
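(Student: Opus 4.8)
The plan is to prove \Cref{thm:mainThm} by composing the two main components developed in this paper: (i) a \emph{deterministic} reduction from computing a Gomory-Hu tree (equivalently, all-pairs mincut) to the single-source mincut problem that, on an $m$-edge graph, produces a collection of single-source instances of total size $m^{1+o(1)}$ and uses only $n^{o(1)}$ additional overhead; and (ii) a \emph{deterministic} algorithm that solves single-source mincut in $m^{1+o(1)}$ time on graphs with polynomially bounded weights. Given both, \Cref{thm:mainThm} follows by instantiating the single-source oracle of (i) with the algorithm of (ii), so the content of this proof is really the accounting that the composition stays almost-linear.

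First I would recall the recursive skeleton, inherited from Gomory and Hu and refined by the line of work cited above, and isolate the role of randomness in it. To build a Gomory-Hu (Steiner) tree on $G$ with terminal set $U$ — with $U=V$ recovering the ordinary Gomory-Hu tree — one repeatedly selects a \emph{pivot} $r\in U$, runs a single-source routine from $r$ to obtain (minimal) $(v,r)$-mincuts $S_v$ for the terminals $v$, and recurses on the graph obtained by contracting each maximal small cut among the $S_v$ together with the ``core'', and on each graph obtained by contracting the complement of such a cut; a partial tree is then assembled from the recursive answers. Correctness of the assembly is the standard cut-uncrossing argument and carries over verbatim. The two quantitative guarantees one needs — that a constant fraction of terminals is resolved per level, so the recursion is shallow, and that the subproblems on a level have total size $O(m)$, since contraction never increases the edge count and each boundary edge is charged to at most two sides — are exactly where earlier algorithms used randomness: to sample a good pivot that is hit by disjoint small mincuts covering a constant fraction of the terminals, and, for weighted graphs, to sample a guide tree from a Steiner-tree packing. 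Our deterministic reduction replaces these sampling steps wholesale, and I would invoke it here as a black box.

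Second I would account for the running time. By (i), the Gomory-Hu tree computation on $G$ is reduced to single-source instances $I_1,I_2,\dots$, each $I_j$ a contraction of $G$ (hence $|E(I_j)|\le m$, with polynomially bounded weights inherited from $G$) with $\sum_j |E(I_j)|\le m^{1+o(1)}$, together with an $n^{o(1)}$ multiplicative overhead for the reduction's own work. Solving each $I_j$ with the algorithm of (ii) takes $|E(I_j)|^{1+o(1)}$ time, and since $x\mapsto x^{1+o(1)}$ is super-additive up to the uniform subpolynomial slack, $\sum_j |E(I_j)|^{1+o(1)} \le \bigl(\sum_j |E(I_j)|\bigr)^{1+o(1)} \le m^{1+o(1)}$. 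Folding in the $n^{o(1)}$ overhead keeps the total at $m^{1+o(1)}$ (treating each connected component separately so that $n=O(m)$ on the inputs that matter). The hypothesis that weights are polynomially bounded is used precisely so that the flow and expander subroutines invoked inside (i) and (ii) run in almost-linear time.

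The difficulty is entirely inside components (i) and (ii) — the deterministic single-source algorithm and, above all, the derandomization of pivot selection and of the Steiner-tree-packing step inside the reduction — and not in their composition. For the theorem statement itself the only things to verify carefully are that the subpolynomial factors genuinely collapse (the reduction's $n^{o(1)}$ overhead, the $m^{o(1)}$ slack of the single-source algorithm, and any superconstant recursion depth must multiply to a single $m^{o(1)}$ rather than compound), and that the single-source instances handed to (ii) really meet its input requirements (weighted, polynomially bounded, and $O(m)$ in total size); both are routine bookkeeping once (i) and (ii) are in hand.
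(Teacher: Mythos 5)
Your proposal is correct and takes essentially the same approach as the paper: Theorem 1.1 is obtained by composing the deterministic reduction from \ghtree to single-source mincuts (the recursion of \Cref{alg:GHTree} driven by \Cref{clm:GHTreeStep}, which guarantees $(1-\Omega(1))$-shrinkage of terminals and disjointness of the recovered minimal $(v,r)$-mincuts) with the deterministic SSMC algorithm of \Cref{thm:finalSSMC}, and the time accounting is exactly the depth-$O(\log n)$ recursion with $\tilde{O}(m)$ edges per level (\Cref{lem:GHTreeTimeOmit}) that you sketch. Your caution that the $n^{o(1)}$ overheads across levels must multiply into a single $m^{o(1)}$ is the right thing to check, and is handled in the paper by the $O(\log n)$ depth bound together with subpolynomial per-call factors.
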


This positively answers the first open question in \cite{abboud2023all}. As a corollary, this yields the first deterministic almost-linear time algorithm to compute $k$-connected components.

\begin{theorem}
\label{thm:kconn}
Given an undirected, weighted graph $G=(V,E,w)$ with polynomially-bounded weights and positive integer $k \geq 0$, there is a \emph{deterministic} algorithm that computes the $k$-connected components of $G$ in $m^{1+o(1)}$ time.
\end{theorem}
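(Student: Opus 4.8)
The plan is to obtain Theorem~\ref{thm:kconn} as an almost-immediate corollary of Theorem~\ref{thm:mainThm}. First I would run the deterministic algorithm of Theorem~\ref{thm:mainThm} on $G$ to compute, in $m^{1+o(1)}$ time, a Gomory-Hu tree $T=(V,E_T,w_T)$ of $G$ (if $G$ is disconnected, one works instead with a Gomory-Hu forest, i.e.\ one such tree per connected component, obtained by running the algorithm componentwise; this changes nothing below). Then I would delete from $T$ every edge of weight strictly less than $k$, obtaining a forest $T_{\ge k}$, and output the partition of $V$ given by the connected components of $T_{\ge k}$. Extracting this partition from $T$ costs only $O(n)$ additional time via a single graph traversal, so the overall running time is $m^{1+o(1)}$ and the procedure is deterministic.

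For correctness I would invoke two facts. First, the defining property of a Gomory-Hu tree: for every $u,v\in V$, the value $\lambda_G(u,v)$ equals the minimum weight of an edge on the unique $uv$-path $P_{uv}$ in $T$. Second, the standard (and elementary) inequality $\lambda_G(u,v)\ge\min\{\lambda_G(u,x),\lambda_G(x,v)\}$ for all $u,v,x$: if $(S,\bar S)$ is a minimum $(u,v)$-cut, then whichever side contains $x$ makes $(S,\bar S)$ a valid cut separating $x$ from one of $u,v$, so $\lambda_G(u,v)=w(S,\bar S)$ is at least $\lambda_G(u,x)$ or at least $\lambda_G(x,v)$. This inequality makes the relation $u\sim_k v \iff \lambda_G(u,v)\ge k$ an equivalence relation, whose classes are by definition the $k$-(edge-)connected components of $G$. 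Finally, $u$ and $v$ lie in the same component of $T_{\ge k}$ if and only if every edge of $P_{uv}$ has weight at least $k$, which by the first fact is equivalent to $\lambda_G(u,v)\ge k$, i.e.\ to $u\sim_k v$. Hence the components of $T_{\ge k}$ are exactly the $k$-connected components of $G$, as claimed.

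I expect essentially no technical obstacle to remain once Theorem~\ref{thm:mainThm} is in hand: the reduction is a linear-time thresholding of the tree, and both structural facts are classical. The only minor points are the degenerate cases --- $k=0$ (output the single block $V$, or one block per connected component under the convention that $\lambda_G=0$ between components), disconnected $G$ (handled by the Gomory-Hu forest remark above), and $k$ larger than the total edge weight (every block is a singleton) --- all of which are immediate; the polynomial bound on the weights is used only to be able to invoke Theorem~\ref{thm:mainThm} and plays no further role.
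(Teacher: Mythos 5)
Your proof is correct and is exactly the reduction the paper has in mind: the paper states Theorem~\ref{thm:kconn} as an immediate corollary of Theorem~\ref{thm:mainThm}, and indeed remarks in the related-work section that the $k$-connected components ``can be solved by constructing a Gomory-Hu tree and identifying the $k$-edge-connected components in the tree by removing edges of weight less than $k$.'' Your justification via the Gomory-Hu property and the triangle-type inequality $\lambda_G(u,v)\ge\min\{\lambda_G(u,x),\lambda_G(x,v)\}$ is precisely the standard argument, and the degenerate-case remarks are fine.
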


Our contribution can be seen as part of a larger push in graph algorithms to understand the deterministic complexity of various fundamental problems. In particular, the global mincut problem, a special case of \ghtree, has been a central problem in this quest, and the deterministic algorithms developed for it \cite{kawarabayashi2015deterministic, li2020deterministic, li2021deterministic, henzinger2024deterministic} have revealed deep structural properties about graphs that have since inspired various breakthroughs \cite{AbboudKT21a,li2021vertex,abboud2023all}. Seeing that the almost-linear-time (randomized) \ghtree algorithms employ sophisticated augmentations of techniques from the global mincut literature, it is important to investigate whether they can be derandomized as well. We believe that the insights used to obtain  \Cref{thm:mainThm} further deepen our understanding of the cut structure of graphs, and hope that they can inspire further advancements in the area.

\paragraph{Technical Contribution.} We believe that the following two key technical contributions towards proving \Cref{thm:mainThm} are of particular interest: 
\begin{enumerate}
    \item To resolve the aforementioned challenge of deterministically finding one of the $\Omega(n)$ ``pivots'', we give the structural insight that for the largest integer $k \geq 0$, such that the largest $k$-connected component $C$ is of size at least $\frac{3}{4}|V|$, we have that \emph{every} vertex in $C$ is a suitable pivot. This yields a radical new approach towards detecting a pivot $r \in V$. It remains to compute $k$ and $C$ as above. We find $k$ by binary search and $C$ by a novel technique that is inspired by the expander-based de-randomizations for global mincut from \cite{kawarabayashi2015deterministic,li2020deterministic}.

%    Enhanced with an algorithm to solve the single-source mincut (SSMC) problem, we further show that this also yields a collection of disjoint $(v,r)$-mincuts that satisfies the above-stated guarantees. \AmirA{Perhaps we don't need to go into such details here}
    
    \item The second major challenge is that of sampling a guide tree from a Steiner tree packing which, in turn, is computed by a Multiplicative Weight Updates (MWU) framework. A key difficulty is that the support size of the packing vastly exceeds $m$, which means that the $m^{1+o(1)}$-time algorithm of \cite{abboud2023all} had to operate on an \emph{implicit} packing, as in the \emph{Randomized} MWU framework \cite{bernstein2022deterministic}. At the end of the algorithm, \cite{abboud2023all} samples few trees from the packing, constructs them explicitly via the implicit representation and forms guide trees which can then be used to extract single-source all-mincuts.
    
    Our algorithm instead uses the recent data structure \cite{kyng2023dynamic}, and we show that by augmenting dynamic tree structure \cite{sleator1981data} in a highly non-trivial way, the algorithm can deterministically maintain an implicit representation that is much more convenient to work with. In fact, we show that we can extract all necessary information and build a \emph{vertex sparsifier} using the entire cut information of the packing! We then show that on the \emph{vertex sparsifier}, we can afford to compute a small collection of guide trees that deterministically satisfy the same guarantees that the randomized sample of the packing satisfied with high probability.
    
    We believe that the \emph{vertex sparsifier} and our novel algorithm for its construction are of broad independent interest.
\end{enumerate}

Remarkably, substituting our new algorithm for computing guide trees into the (randomized) algorithm of \cite{abboud2023all} gives an almost-linear-time solution for \ghtree that is more modular and could be considered much simpler (see further discussion below). This makes progress on the final open question in \cite{abboud2023all} regarding a simple reduction from all-pairs to single-pair for max-flow. 

%\maxP{Thatchaphol will re-phrase this sentence} Remarkably, the second contribution yields an algorithm to compute guide trees that is much more modular and simpler than the previous randomized algorithm. In \cite{abboud2023all}, more than 50 pages are required for algorithmic description and analysis, while our algorithm requires only 20 pages.

We give a more detailed overview of our deterministic, almost-linear time Gomory-Hu tree algorithm and the new techniques involved in \Cref{sec:overview}.

\subsection{Related Work}

%\maxP{Not so happy with the global mincut section...} \AmirA{here's another attempt}\maxP{Much better, thanks Amir!}

\paragraph{Global Mincut.} The objective in the global mincut problem is to return a $(s,t)$-mincut of smallest value over all $s,t \in V$. Since such a mincut can be extracted from a \ghtree straightforwardly, the global mincut problem is a special case of \ghtree{s}. 

For more than a decade, global mincut has served as a prominent example for showing the power of randomness in graph algorithms: a basic problem that has beautiful near-linear time randomized algorithms, namely the Karger-Stein algorithm via edge contractions \cite{karger1993n2} and Karger's tree packing algorithm \cite{karger2000minimum}, beating the best-known deterministic algorithms by a factor of $n$. 
A celebrated 2015 paper by Kawarabayashi and Thorup \cite{kawarabayashi2015deterministic} gave an $\tilde{O}(m)$ deterministic algorithm to the problem for \emph{unweighted, simple} graphs, while introducing novel expander-based arguments to the literature on mincuts.
Later on, the quest for a similar result for weighted graphs \cite{li2020deterministic, li2021deterministic, henzinger2024deterministic} has resulted in the invention of other influential techniques such as the \emph{Isolating Cuts} framework~\cite{li2020deterministic}\footnote{Independently discovered by \cite{AbboudKT21a}.}. 
Notably, both the expander-based arguements and the Isolating Cuts framework were key ingredients in the recent (randomized) results on \ghtree, further supporting the claim that one often discovers valuable insights when derandomizing algorithms. 

As mentioned before, the most recent \ghtree algorithms \cite{abboud2022breaking,abboud2023all} can be seen as augmentations of Karger's tree packing technique from global mincut to single-source mincuts. Given how productive the derandomization of Karger's algorithm has been, it is a natural to hope that our work will also have further impact.

\paragraph{Maxflow/Mincut.} Another special case of \ghtree{s} is to compute the $(s,t)$-mincut value $\lambda_G(s,t)$ for a fixed pair $s, t \in V$. It is well-known that this problem is dual to the $(s,t)$-maxflow problem asking for a flow that sends as many units of a commodity from $s$ to $t$ as possible. 

While the problem of finding maxflows/mincuts is one of the most well-studied graph problems, with many seminal results commonly taught in undergraduate- and graduate-level courses. An almost-linear time algorithm has only recently been obtained in 2022 \cite{chen2022maximum}. The algorithm computes maxflows via an interior point method (IPM) that reduces the problem to solving the so-called min-ratio cycle problem on a dynamic graphs. They then present an efficient data structure that solves the min-ratio cycle problem on dynamic graphs generated by the IPM. 

In \cite{van2023deterministic}, the algorithm was derandomized by showing that a deterministic data structure can be given that solves the min-ratio cycle problem on dynamic graphs generated by the IPM. Further research has recently achieved a deterministic min-ratio cycle data structure that works even in general dynamic graphs \cite{chen2023almost}. Besides making proofs more modular, this allowed them to extend the maxflow algorithm to work even in graphs undergoing edge \emph{insertions}. Similarly, a second novel deterministic maxflow algorithm was discovered to obtain an almost-linear total update time in graphs undergoing edge \emph{deletions} \cite{vdB2024decrMincost}. Thus, again in the context of flow algorithms, the interplay between determinism and broader algorithmic goals has been very fruitful.

Notably, all algorithms for \ghtree reduce to solving maxflow problems, and so the times stated in the introductory section are implicitly using the recent progress for the maxflow problem. In fact, only since the first deterministic algorithm in \cite{van2023deterministic} was given, the time required to compute \ghtree via the classic algorithm from Gomory and Hu decreased to $nm^{1+o(1)}$.

\paragraph{$k$-Connected Components.} In the $k$-connected component problem, the objective is to find the coarsest possible partition $\mathcal{C}$ of the vertex set $V$ of graph $G$ such that each partition set $C \in \mathcal{C}$ and $u,v \in C$ has $\lambda_G(u,v) \geq k$, or alternatively, that there is a set of at least $k$ edge-disjoint paths between $u$ and $v$ in $G$. This problem can be solved by constructing a Gomory-Hu tree and identifying the $k$-edge-connected components in the tree by removing edges of weight less than $k$. 

For $k=1$, the problem is trivial: simply use BFS or DFS to report the connected components of the graph in $O(m+n)$ time. But, somewhat surprisingly, the problem already becomes non-trivial when $k = 2$. Starting with the seminal work of Tarjan~\cite{Tarjan72} in 1972, there has been a large body of literature obtaining fast algorithms for $k\ge 2$ \cite{HopcroftT73a,Paton71, HopcroftT73b,NagamochiI92c,Tsin07, Tsin09}.\footnote{Technically, the algorithms in \cite{Tarjan72} and \cite{HopcroftT73a} return $k$-vertex-connected components for $k=2,3$. But, $k$-edge-connected components can also be obtained via these algorithms (e.g., see \cite{GalilI91}).} In fact, the first linear-time algorithm for $k=4$ was obtained as recently as 2021~\cite{NadaraRSS21, GeorgiadisIK21} and that for $k=5$ in 2024~\cite{Kosinas24}. Finally, the case of arbitrary $k$ was addressed a few months back in the work of Korhonen~\cite{Korhonen24}, who gave an $O(m) + k^{O(k^2)}n$-time algorithm for the problem. Clearly, this algorithm is inefficient when $k$ is super-constant, even for $k = O(\log n)$. In parallel to this line of work, Bhalgat~{\em et al.}~\cite{BhalgatHKP07} (see also \cite{HariharanKP07}) gave an $\tilde{O}(mk)$-time algorithm for this problem, based on what they called a {\em partial} Gomory-Hu tree. Their result has the advantage that it can be applied for super-constant values of $k$, but the algorithm is randomized. In summary, we note that in spite of much research on the problem over many decades, there is no deterministic algorithm for finding the $k$-edge-connected components of a graph that runs in (almost) linear time, for general $k$.

\paragraph{Previous Gomory-Hu tree algorithms.}
Prior to the aforementioned line of work, near-linear time \ghtree algorithms were known for restricted graph families such as planar graphs~\cite{HartvigsenM94,ArikatiCZ98,BorradaileSW15}, surface-embedded graphs~\cite{BorradaileENW16}, and bounded treewidth graphs~\cite{ArikatiCZ98}; these works did not use randomization at all. 
The randomized pivot selection idea was introduced by Bhalgat~{\em et al.}~\cite{BhalgatHKP07} inside their $O(mn)$-time algorithm for unweighted graphs. Subsequently, Abboud, Krauthgamer, and Trabelsi \cite{AbboudKT20b} capitalized on it to give a general reduction from \ghtree to single-source min-cuts, which was used in all subsequent developments.
As explained below, in this work, we make this reduction deterministic.
To our knowledge, the only previous result along these lines was by Abboud, Karuthgamer, and Trabelsi \cite{AbboudKT21b}, who put forth the \emph{dynamic pivot} technique and used it to derandomize their quadratic-time algorithm for simple graphs.
Rather than using structural insights to lead the algorithm to a good pivot, their method is based on an efficient algorithm for switching from one pivot to another (when the former is revealed as being bad).
Unfortunately, their technique does not seem general enough to be compatible with the breakthroughs for \emph{weighted} graphs; i.e. its applicability depends on the properties of the algorithm for single-source min-cuts. Moreover, it seems to inherently incur a quadratic factor and is therefore insufficient for getting almost-linear time bounds.
Finally, we would like to remark that the randomized pivot idea is also used in a recent paper aiming to make the recent developments more simple and practical \cite{Kolmogorov22}.
(We refer to the survey~\cite{Panigrahi16} for more background on \ghtree and \Cref{table:GHtree_cmp} for the state-of-the-art \ghtree algorithms in different settings.)

\begin{table}
\setlength{\extrarowheight}{2pt}
\begin{tabular}{|>{\centering\arraybackslash}p{4.5cm} |>{\centering\arraybackslash}p{2.5cm}|c|c|}
    \hline
    \textbf{Reference} & \textbf{Restriction} & \textbf{Det./Rand.} & \textbf{Time}\\
    \hline
    Gomory and Hu \cite{gomory1961multi} & Weighted & Det. & $O(n) \cdot T_{maxflow}(n, m)$\\
    \hline
    Abboud, Krauthgamer, and Trabelsi \cite{AbboudKT21b} & Simple unweighted & Det. & $n^{2+o(1)}$\\
    \hline
    Abboud, Krauthgamer, Li, Panigrahi, Saranurak, and Trabelsi \cite{abboud2022breaking} & Unweighted & Rand. & $m^{1+o(1)}$\\
    \hline
    % Abboud, Krauthgamer, Li, Panigrahi, Saranurak, and Trabelsi \cite{abboud2022breaking} & Weighted & Rand. & $\tilde{O}(n^2)$\\
    % \hline
    Abboud, Li, Panigrahi, and Saranurak \cite{abboud2023all} & Weighted & Rand. & $n^{1+o(1)}+\tilde{O}(1) \cdot T_{maxflow}(n, m)$\\
    \hline
    \textbf{This work} & Weighted & Det. & $m^{1+o(1)}$\\
    \hline
\end{tabular}
\caption{The state-of-the-art algorithms for \ghtree. $T_{maxflow}(n, m)$ denotes the running time to compute $(s, t)$-maxflow in a graph with $n$ vertices and $m$ edges, where in sparse graphs \cite{chen2022maximum} gives $T_{maxflow}(n, m)=m^{1+o(1)}$, while in moderately dense graphs \cite{van2021minimum} gives $T_{maxflow}(n, m) = \tilde{O}(m+n^{1.5})$.}
\label{table:GHtree_cmp}
\end{table}
%\AmirA{todo: rewrite this to reflect that this was also talking about randomized progress} As pointed out in the introduction, the classic algorithm by Gomory and Hu remained the best deterministic all-pairs mincuts algorithm up until this work for \emph{general} graphs. For restricted graph classes, such as planar graphs~\cite{HartvigsenM94,ArikatiCZ98,BorradaileSW15}, surface-embedded graphs~\cite{BorradaileENW16}, bounded treewidth graphs~\cite{ArikatiCZ98,AbboudKT20a}, and sparse/unweighted graphs~\cite{ArikatiCZ98,BhalgatHKP07,KargerL15} (see the survey~\cite{Panigrahi16}), faster algorithms are known.

\section{Overview}\label{sec:overview}

In this overview, we give a top-down description of our deterministic algorithm to compute a Gomory-Hu tree in a weighted graph in almost linear deterministic time. We start by giving a reduction from Gomory-Hu trees to solving the problem of \emph{single-source mincuts} (SSMC) in 
\Cref{subsec:overviewGHviaSSMC}. We then give a reduction that reduces the SSMC problem to the problem of finding \emph{guide trees} in \Cref{subsec:overviewSSMC}. Finally, we show how to construct guide trees efficiently in \Cref{subsec:overviewGuideTrees}. The technical part of this article is organized oppositely, i.e. in a bottom-up fashion.

\subsection{Computing Gomory-Hu Trees via SSMC \texorpdfstring{(see \Cref{sec:ght})}{}}
\label{subsec:overviewGHviaSSMC}

As in all previous algorithms for Gomory-Hu trees, we work with a terminal version where we compute the Gomory-Hu tree of $G$ only over a set of terminal $U \subseteq V$. This is convenient for computing a Gomory-Hu tree recursively (where artificial vertices are added). 

\begin{restatable}[Gomory-Hu $U$-Steiner Tree]{definition}{ghSteinerTree}\label{def:USteinerGomoryHu}
Given a graph $G=(V,E,w)$ and a set of terminals $U \subseteq V$, the Gomory-Hu $U$-Steiner tree is a weighted tree $T$ on the vertices $U$, together with a function
$f : V \mapsto U$ such that:
\begin{itemize}
    \item for all $s, t \in U$, consider any minimum-weight edge $(u, v)$ on the unique $st$-path in $T$. Let $U'$ be the vertices of the connected component of $T \setminus (u,v)$ containing $s$. Then, $f^{-1}(U') \subseteq V$ is an $(s,t)$-mincut, and its value is $w_T(u,v)$.
\end{itemize}
\end{restatable}

\paragraph{The original algorithm by Gomory and Hu.} We next describe how to compute a $U$-Steiner Gomory-Hu tree in deterministic almost linear time. We follow the basic divide-and-conquer already employed in the original paper by Gomory and Hu \cite{gomory1961multi}: first for vertices $s,t \in U$, we find an $(s,t)$-mincut $S \subseteq V$ (we use one side of the cut to identify it). Let $G_S$ be the graph obtained from $G$ contracting $V \setminus S$ into an artificial vertex. Then, find the Gomory-Hu $(U \cap S)$-Steiner tree $T_S$ on $G_S$. Equivalently, for $T = V \setminus S$, find a Gomory-Hu $(U \cap T)$-Steiner tree $T_{T}$ on $G_T$. The $U$-Steiner Gomory-Hu tree is then obtained from joining $T_S$ and $T_T$ by an $(s,t)$ edge with value equal to the $(s,t)$-mincut value $\delta_G(S) = \lambda_G(s,t)$. Correctness of this algorithm follows from the insight that mincuts are \emph{nesting}, that is for any $x,y \in U$, we can find an $(x,y)$-mincut $X$ such that: 
\begin{itemize}
    \item if $x,y \in S$, then $X \subseteq S$, and
    \item if $x,y \in T$, then $X \subseteq T$, and
    \item if $x \in S, y \in T$ (or vice versa), then either $X \subseteq S$ or $X \subseteq T$. In fact, either $X = S$ or $X$ is also an $(s,x)$-mincut or a $(y,t)$-mincut.
\end{itemize}

In the above algorithm, it can be ensured that the graphs at the same recursion level have a total of $O(m)$ edges. Thus, for recursion depth $d$, the algorithm runs in time $d \cdot m^{1+o(1)}$ using that an $(s,t)$-mincut can be found in deterministic time $m^{1+o(1)}$ on an $m$-edge graph. Unfortunately, the recursion depth can only be bounded by $O(n)$, which only yields an $nm^{1+o(1)}$ time algorithm.

\paragraph{Gomory-Hu trees in almost linear time.} While a first guess towards improving the algorithm above is to search for an $(s,t)$-mincut that is relatively balanced to improve the recursion depth, there are graphs that do not have any such mincuts (for example the star graph).

However, in \cite{BhalgatHKP07, AbboudKT20b} it was observed that for some constant $C > 1$, one can find a \emph{pivot} $r \in U$, together with a set $\mathcal{S}$ consisting of disjoint $(v,r)$-mincuts $S_v$ such that each $S_v$ contains at most $|U|/C$ terminals, and $V \setminus \cup_{S_v \in \mathcal{S}} S_v$ also only contains at most $|U|/C$ terminals. The disjointness of the sets $S_v \in \mathcal{S}$ implies that one can simulate multiple steps of the original Gomory-Hu algorithm in sequence using $\mathcal{S}$ at cost $O(m)$, and the recursive subproblems at this point have decreased by a constant fraction in size.

In this article, we give the first almost linear time deterministic algorithm to find such a pivot $r$ and family of disjoint $(v,r)$-mincuts as summarized above. By our discussion so far, this implies \Cref{thm:mainThm}.

\begin{informal lemma}[Decomposition Lemma (see \Cref{clm:GHTreeStep})] \label{informalLemma:GHTreeStep}
    There is an algorithm  returns a pivot $r \in U$ and a set $\mathcal{S}$ in deterministic $m^{1+o(1)}$ time such that
    \begin{enumerate}
        \item every triple $(S_v, v, r) \in \mathcal{S}$ has $v \in U \setminus \{r\}$ and $S_v$ is a $(v, r)$-mincut,
        \item the cuts $S_v$ in $\mathcal{S}$ are disjoint,
        \item and for every cut $S_v$, we have $\abs{S_v \cap U} = (1 - \Omega(1)) \abs{U}$, and the remaining cut $V_{rest} = V \setminus \bigcup_{(S_v, v, r)} S_v$ has $\abs{V_{rest} \cap U} = (1 - \Omega(1)) \abs{U}$.
    \end{enumerate}
\end{informal lemma}

\paragraph{The decomposition lemma.} For the rest of this section, we sketch the ideas required to obtain \Cref{informalLemma:GHTreeStep}. We start by discussing how to find a pivot vertex $r \in U$. Note that it is easy to construct graphs with vertex $r$ such that any family of disjoint $(v,r)$-mincuts $\mathcal{S}$ cannot satisfy the guarantees in \Cref{informalLemma:GHTreeStep}: consider the graph obtained from the union of $K_{n-1}$ and additional vertex $r$ attached with a single edge to some arbitrary vertex in $K_{n-1}$; let all vertices be terminals; clearly, $\{r\}$ and $V(K_{n-1})$ are the only $(v,r)$-mincuts for any $v \in V$, but these cuts are not balanced.

In \cite{BhalgatHKP07, AbboudKT20b}, it was observed that an $\Omega(1/C)$-fraction of the vertices of $U$ in any graph are suitable pivots, allowing them to simply sample for a pivot uniformly at random among the terminals $U$. 

Since we cannot exploit the same trick here, we instead exploit a new structural insight that was not observed so far: let $\tau^*$ be the largest positive integer such that some $\tau^*$-connected component $C$ of $G$ that has $|C \cap U| \geq \frac{3}{4}|U|$. Then, it turns out, that each vertex $r \in C_U = C \cap U$ is a suitable pivot (see \Cref{claim:goodPivot}).

This reduces the problem of detecting a pivot, to finding the threshold $\tau^*$ and the $\tau^*$-connected component $C$. To solve this problem, we binary search over values $\tau$ to find $\tau^*$. To this end, we design an algorithm that finds, for $\tau \leq \tau^*$ where $C''$ is the $\tau$-connected component with $|C'' \cap U| \geq \frac{3}{4}|U|$, a set $X$ of subpolynomial size such that $X \cap C''_U \neq \emptyset$. Then, we take each vertex $w \in X$, find all values $\lambda_G(w, v)$ for $v \in V$ using an SSMC data structure with source $w$ and compute the $\tau$-connected component $C_{w} = \{ v \in V \;|\; \lambda_G(v,w) \geq \tau\}$ containing $w$. We can then explicitly check if $|C_w \cap U| \geq \frac{3}{4}|U|$ which implies $C_w = C$. We show in the next section that the SSMC problem can be solved in deterministic almost linear time. Hence, computing the $(w,v)$-mincut values $\lambda_G(w,v)$ can be done in almost linear time for all vertices in the subpolynomially-sized set $X$. 

We further show that $X$ can be identified deterministically by initializing the set of active terminals $A$ by $U$, and while $A$ is not subpolynomially-sized, iteratively an expander decomposition w.r.t. $A$ is computed and in each cluster $X$ of the expander decomposition, we delete half the vertices in $X \cap A$ from $A$. This vastly generalizes the techniques from \cite{li2020deterministic}, where it was  shown that this strategy succeeds when $\tau$ is the value of the global mincut. 

Finally, we pick an arbitrary vertex $r \in C_U$, and show that the pivot detection algorithm from above can be used with the isolating cut lemma from \cite{li2020deterministic} to compute a set of disjoint $(v,r)$-mincuts $\mathcal{S}$, as described in \Cref{informalLemma:GHTreeStep}.

\subsection{SSMC \texorpdfstring{(see \Cref{sec:ssmc})}{}}
\label{subsec:overviewSSMC}

As mentioned above, we also give the first deterministic almost linear time algorithm for SSMC. 

\begin{theorem}[Singe Source Mincuts]
\label{thm:finalSSMC}
Given an undirected weighted graph $G=(V,E,w)$ and a source terminal $s$, we can compute $\lambda(s,t)$ for all vertices $t\neq s$ in deterministic time $m^{1+o(1)}$.
\end{theorem}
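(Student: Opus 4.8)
The plan is to follow the tree-packing-based approach to single-source mincuts introduced in \cite{abboud2023all,abboud2022breaking}, but replace every randomized component by a deterministic one. Recall the high-level structure: to compute $\lambda(s,t)$ for all $t$, it suffices to compute a collection of \emph{guide trees} $T_1,\dots,T_k$ with $k = n^{o(1)}$ such that for every $t \neq s$, some guide tree $T_i$ has an edge on the $s$-$t$ path whose removal induces an $(s,t)$-mincut. Given such guide trees, one extracts all the mincut values by a small number of maxflow/isolating-cut computations along the tree edges, each of which is deterministic and almost-linear via \cite{chen2022maximum,van2023deterministic,li2020deterministic}. So the whole theorem reduces to: \emph{deterministically construct a small family of guide trees in $m^{1+o(1)}$ time}.

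The guide trees in the randomized algorithm come from sampling $n^{o(1)}$ trees from a (fractional) packing of Steiner trees into $G$, where the packing is produced by a multiplicative-weights (MWU) routine. Here I would proceed in three steps. First, run the MWU framework to (implicitly) maintain a Steiner-tree packing of near-optimal value; since the support of the packing can be super-polynomial, I would represent it implicitly, maintaining the packing inside a dynamic-tree data structure \cite{sleator1981data} augmented as described in the overview, using the deterministic min-ratio-cycle / dynamic-tree machinery of \cite{kyng2023dynamic}. Second — and this is the crux — instead of \emph{sampling} trees from the packing (which is what needs randomness), I would extract from the implicit packing, in $m^{1+o(1)}$ time, a \emph{vertex sparsifier} $H$ that captures all the $(s,t)$-mincut structure simultaneously: the idea is that a good Steiner-tree packing certifies, via the support of the packing weighted by how often each tree "uses" each cut, enough information to build a small graph $H$ on $\mathrm{poly}(n^{o(1)})$-many representative vertices preserving all source mincuts of $G$ up to the needed approximation. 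Third, on the small sparsifier $H$ I can afford to deterministically enumerate / greedily construct a collection of $n^{o(1)}$ guide trees that provably hit every $(s,t)$-mincut, matching the guarantee the random sample achieved with high probability; then pull these guide trees back to $G$ via the sparsifier's projection map.

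Concretely, the steps in order are: (1) set up the Steiner-tree-packing LP and its MWU solver, proving the packing value is within $1+o(1)$ of $\lambda$-related quantities as in \cite{abboud2023all}; (2) implement the implicit-packing maintenance, showing the augmented dynamic-tree structure supports all MWU updates and the needed cut-usage queries in amortized $n^{o(1)}$ time per operation (this is where \cite{kyng2023dynamic} and the non-trivial augmentation of \cite{sleator1981data} enter); (3) from the final implicit packing, read off the vertex sparsifier $H$ and prove it preserves single-source mincuts; (4) on $H$, construct guide trees deterministically and prove the hitting property; (5) lift the guide trees to $G$ and run the deterministic almost-linear extraction (a subpolynomial number of deterministic maxflow and isolating-cut calls, each $m^{1+o(1)}$) to recover all values $\lambda(s,t)$. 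Summing the costs gives $m^{1+o(1)}$.

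I expect the main obstacle to be step (2)–(3): showing that the implicit representation maintained through the entire MWU run simultaneously (a) can be updated deterministically in subpolynomial amortized time under the adaptive sequence of updates the MWU generates, and (b) retains \emph{enough} global cut information — not just the packing value — so that a single vertex sparsifier can substitute for the random tree sample. In the randomized algorithm the sampling step is where concentration does the heavy lifting, decoupling "the packing is good on average" from "a few trees are good"; replacing that with a deterministic structural statement about the sparsifier is the part requiring genuinely new ideas, and will need the careful submodularity/nesting arguments about minimal mincuts together with the augmented dynamic-tree bookkeeping to control both correctness and running time.
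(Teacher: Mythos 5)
Your plan follows the same high-level architecture as the paper (MWU Steiner-subgraph packing maintained implicitly via augmented dynamic trees from \cite{kyng2023dynamic} and \cite{sleator1981data}, then a vertex sparsifier read off from the implicit packing, then deterministic guide-tree extraction, then mincut extraction via isolating cuts and maxflow), and you correctly flag steps (2)--(3) as the technical crux. However, there are two concrete gaps in how you connect the pieces.

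First, your stated guide-tree guarantee is too strong and hides a substantial algorithmic component. You assert that some guide tree has \emph{a single edge} on the $s$-$t$ path whose removal induces an $(s,t)$-mincut (i.e., $1$-respecting). The deterministic packing only yields a $16$-respecting set of guide trees (\Cref{thm:mainGuideTrees}), meaning the mincut may cross the tree in up to $16$ edges. Extracting $\lambda(s,t)$ from a $k$-respecting tree for constant $k\geq 2$ is itself a nontrivial recursion (\Cref{thm:SSMCviaGuideTree}), involving centroid decomposition, isolating cuts, a maxflow to the centroid, and hit-and-miss families to derandomize the branching (and a subtle ``sub-algorithm'' to handle the case where the source is a leaf). ``A small number of maxflow/isolating-cut computations along the tree edges'' does not cover this case-analysis, and without the recursion the proof does not go through.

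Second, you omit the boosting from partial to full SSMC. The guide-tree machinery only gives correct $\lambda(s,t)$ for those $t$ with $\lambda(s,t)\leq 1.1\,\lambda(U)$ (\Cref{thm:partialSSMC}); for $t$ with much larger connectivity the guide trees give no guarantee. The paper handles this with \Cref{algorithm:finalSSMC}: run \textsc{PartialSSMC}, commit the values for all $t$ with $\tilde\lambda(t)\leq 1.1\lambda_{\min}$, remove those terminals from $U$, and repeat; each round increases $\lambda(U)$ by a $1.1$ factor, so $O(\log(nW))$ rounds suffice. This step is necessary and missing from your outline. (Two smaller imprecisions: the sparsifier $G'$ lives on all of $U$ with $m^{1+o(1)}$ edges, not on $n^{o(1)}$ ``representative vertices''; and the guide-tree construction on $G'$ is not a vague greedy enumeration but the composition of Li's skeleton graph with Gabow's tree packing.)
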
 

In this article, we follow closely the randomized algorithm by \cite{zhang2021gomory} (which is heavily inspired by \cite{abboud2022breaking}, which is in turn a clever generalization of Karger's algorithm \cite{karger2000minimum}). This algorithm reduces the SSMC problem to computing a set of guide trees. Let us briefly define guide trees. 

\begin{definition}[Guide trees]\label{def:guideTrees}
For graph $G=(V,E,w)$, we say a cut $A \subseteq V$ is $k$-respected by a tree $T$ (with $V(T)\subseteq V(G)$) if there are at most $k$ edges in $T$ with exactly one endpoint in $A$. 

For terminals $U \subseteq V$ with source $s \in U$, a collection of trees $\mathcal{T}$ is called $k$-respecting set of guide trees, if for every $t \in U \setminus \{s\}$ with $\lambda_G(s,t) \leq 1.1 \lambda_G(U)$\footnote{$\lambda_G(U) = \min_{x,y\in U} \lambda_G(x,y)$.}, some tree $T \in \mathcal{T}$, $k$-respects some $(s,t)$-mincut.
\end{definition}

The main idea of the reduction is to recursively construct a $(k-1)$-respecting set of guide trees $\mathcal{T}'$ from a $k$-respecting set $\mathcal{T}$ (or extract the $(s,t)$-mincut value by rather direct computations). While this reduction is randomized in \cite{zhang2021gomory}, we show that it can be de-randomized using standard techniques. Since the blow-up factor in each such step is constant, we can only allow for the initial collection of guide trees to be $k$-respecting for $k$ being a constant.

Showing that such a collection of guide trees can be computed deterministically in almost linear time poses a central challenge (especially for weighted graphs!) that is addressed in the next section. 

\subsection{Guide Trees \texorpdfstring{(see \Cref{sec:guideTrees})}{}}
\label{subsec:overviewGuideTrees}

Finally, we sketch the ideas behind our last and technically most involved component: a deterministic algorithm to compute guide trees.

\begin{restatable}{theorem}{guidetree}
\label{thm:mainGuideTrees}
Given an undirected weighted graph $G=(V,E,w)$, a terminal set $U\subseteq V$, and a source terminal $s$, we can compute in deterministic time $m^{1+o(1)}$ a $16$-respecting set of guide trees $\mathcal{T}$ with size $|\mathcal{T}|=n^{o(1)}$. Moreover, all trees in $\mathcal{T}$ have vertex set $U$.
\end{restatable}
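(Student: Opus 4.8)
The plan is to follow the randomized construction of guide trees via Steiner tree packings (as in \cite{abboud2023all,zhang2021gomory}, which adapt Karger's tree-packing idea) and derandomize the one step where randomness is essential---the sampling of a small ``guide'' from the packing. First I would recall the key structural fact: if $\mathcal{P} = \{(T_i, y_i)\}$ is a fractional packing of $U$-Steiner trees of total weight $\Omega(\lambda_G(U))$ that (approximately) loads every edge to capacity, then for any terminal $t$ with $\lambda_G(s,t) \le 1.1\,\lambda_G(U)$, a uniformly random tree from the packing $2$-respects some $(s,t)$-mincut with constant probability; hence $O(\log n)$ independent samples yield, with high probability, a $2$-respecting guide tree for every such $t$ simultaneously. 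The randomized algorithm of \cite{abboud2023all} computes such a packing \emph{implicitly} via an MWU/Randomized-MWU framework because the support of the packing is much larger than $m$, then samples from the implicit representation. Our goal is to replace ``sample $n^{o(1)}$ trees'' by ``deterministically select $n^{o(1)}$ trees that collectively achieve the $2$-respecting guarantee for \emph{all} terminals,'' at the cost of raising the respecting parameter from $2$ to a fixed constant (here $16$) and the tree count to $n^{o(1)}$.

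The technical heart, and the main obstacle, is maintaining and then reading off enough information from the \emph{implicit} Steiner-tree packing to perform a deterministic selection. Here I would invoke the dynamic-tree / min-ratio-cycle data structure of \cite{kyng2023dynamic} to run the MWU deterministically, and---this is the crucial new idea sketched in the introduction---augment the link-cut-tree structure of \cite{sleator1981data} so that the data structure maintains an implicit representation from which one can extract, for \emph{every} potential $(s,t)$-mincut arising in the packing, the load/cut information needed. From this aggregated cut information I would build a \emph{vertex sparsifier} $H$ on (a superset of) the terminals $U$ of near-linear size that preserves, up to constant factors and up to the relevant threshold $1.1\,\lambda_G(U)$, all the $(s,t)$ cut values and---more importantly---the combinatorial structure of the low-respecting cuts of the packing. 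The point of passing to $H$ is that $H$ is small enough that we can afford to run a \emph{deterministic} subroutine on it: rather than sampling, we greedily/iteratively pick trees that cover the remaining ``uncovered'' terminals (those $t$ for which no already-selected tree low-respects an $(s,t)$-mincut), using a counting argument to show that each selected tree covers a constant fraction of the uncovered terminals, so $O(\log n)$ selections suffice---or, if the covering structure does not admit a clean greedy bound, a deterministic splitter/hit-and-miss family (\Cref{thm:hitAndMiss}) over the $n^{o(1)}$-sized family of candidate trees in $H$, which is exactly the regime ($a = n^{o(1)}$, $b = O(1)$) where that construction runs in almost-linear time.

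After the selection on $H$, I would lift the chosen trees back to $G$: each selected tree in the sparsifier corresponds (via the standard contraction/expansion used to build $H$) to an explicitly-constructible $U$-Steiner tree in $G$, which I then convert to a tree on exactly the vertex set $U$ by contracting Steiner (non-terminal) vertices along tree paths---this only decreases the number of tree edges crossing any cut, so a $c$-respected cut in $G$ stays $c$-respected, and the final bound of $16$ absorbs the constant-factor losses incurred by (i) the approximate packing, (ii) the constant-factor quality of the vertex sparsifier, and (iii) the contraction step. The running time is $m^{1+o(1)}$ for the MWU with the augmented data structure, $m^{1+o(1)}$ to build $H$, $n^{o(1)}$ overhead for the deterministic selection on $H$, and $m^{1+o(1)}$ to materialize the $n^{o(1)}$ output trees; summing gives $m^{1+o(1)}$.

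I expect the genuinely hard part to be the second step: specifying precisely \emph{which} cut information the augmented dynamic-tree structure must expose so that (a) it is maintainable under the MWU updates in amortized $n^{o(1)}$ time per update, and (b) it is rich enough that the vertex sparsifier $H$ provably preserves the ``$2$-respecting $(s,t)$-mincut exists in the packing'' property for every relevant $t$. Getting the quantitative tradeoff right---how approximate the packing may be, what approximation $H$ must guarantee, and how these feed into the final constant---will require care, but it is routine once the structural statement about $H$ is in hand.
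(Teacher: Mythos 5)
Your high-level outline matches the paper through the first two stages (deterministic MWU packing via the \cite{kyng2023dynamic} SSSP structure with an augmented link-cut-tree to track implicit Steiner trees, then extraction of a vertex sparsifier on $U$), but your final stage diverges from the paper's and, as written, has a real gap. You propose to deterministically \emph{select} a small sub-collection from the packing itself, either greedily (each chosen tree covering a constant fraction of still-uncovered terminals) or by a hit-and-miss family ``over the $n^{o(1)}$-sized family of candidate trees in $H$.'' The trouble is that no such small candidate family exists: the MWU packing has $m^{1+o(1)}$ trees in its support, and there is no a priori guarantee that any $n^{o(1)}$ of \emph{those} trees collectively $O(1)$-respect a mincut for every relevant terminal. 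The greedy variant would also require an oracle to recognize which terminals are already covered, and you give no argument for the claimed constant-fraction coverage per step. The hit-and-miss machinery is for selecting subsets of a ground set given size parameters, not for selecting trees with a semantic covering property; it does not plug in here.

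The paper avoids all of this by \emph{not} trying to reuse any tree from the packing. After building the vertex sparsifier $G'$ on $U$ (whose two properties are $\lambda_{G'}\geq\lambda_G(U)/4.1$ and $w_{G'}(A\cap U, B\cap U)\leq w_G(A,B)$ for every cut), it applies Li's skeleton-graph construction (\Cref{thm:skeletonGraph}) to $G'$ to produce an unweighted multigraph $H$ on $U$ with $m^{1+o(1)}$ edges and, crucially, edge connectivity $\lambda_H=n^{o(1)}$. It then runs Gabow's \emph{deterministic} tree-packing algorithm on $H$ to obtain a fresh feasible packing of exactly $\lambda_H$ spanning trees of $H$ (hence trees with vertex set $U$ automatically, no Steiner-vertex contraction step needed) of total value $\lambda_H/2$. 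The guide-tree property then follows from a pure averaging argument: for any relevant $t$, the cut $S'=S\cap U$ has $|\partial_H S'|\leq 8\lambda_H$, and since $\sum_T \operatorname{val}_T=\lambda_H/2$ and $\sum_T \operatorname{val}_T\cdot x_T\leq 8\lambda_H$, some $T$ has $x_T\leq 16$. This is the missing ingredient in your plan: the passage from ``vertex sparsifier'' to ``$n^{o(1)}$ guide trees'' is done by shrinking the connectivity (skeleton graph) so that an exact deterministic tree packing is itself small, and then using averaging rather than any greedy/hit-and-miss selection.
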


\paragraph{A Review of randomized Techniques.} The central contribution of \cite{abboud2022breaking} was to define and show the existence of good guide trees (and to construct them in randomized almost linear time). They show that:
\begin{itemize}
    \item There is a fractional  $\mathcal{P}$ of $U$-Steiner trees $T_1, T_2, \ldots, T_k$, that are trees that span the terminal set but not necessarily the entire vertex set of $G$, such that sampling $O(\log n)$ trees (proportional to their fractional values), yields a collection of $O(1)$-respecting guide trees.
    \item An (approximate) packing $\mathcal{P}$ can be found by using the multiplicative weights update (MWU) framework which reduces the problem to $m^{1+o(1)}$ calls to find the $U$-Steiner tree $T$ in $G$ of minimum length w.r.t. a length function $\ell$ (given by the MWU) that is monotonically increasing in each coordinate, and undergoes $m^{1+o(1)}$ updates.
    \item That the minimum-length $U$-Steiner tree problem can be solved via a decremental single-source shortest paths (SSSP) data structure by dynamizing Mehlhorn's algorithm \cite{mehlhorn1988faster}.
\end{itemize}
However, the above reduction glosses over many technical complications that one has to overcome, especially of a weighted graph: even an approximate fractional packing $\mathcal{P}$ has to consist of at least $\Omega(m)$ trees which implies that the trees cannot be stored explicitly (for $U = V$, the support size would be $\Omega(mn)$). Naturally, one would hope that instead the trees $T_1, T_2, \ldots, T_k$ that are packed as such that $T_i$ and $T_{i+1}$ differ only in $m^{o(1)}$ edges on average such that the sequence of changes $\Delta_2 = T_2 \setminus T_1, \Delta_3 = T_3 \setminus T_2, \ldots, \Delta_k = T_k \setminus T_{k-1}$ can be written down explicitly. 

But instead, \cite{abboud2023all} shows that it can determine if an edge $e$ is used in $T_i$ and report it with some pre-defined probability $p_e$. Setting this probability reasonably small, they show that it suffices to report to the MWU the edge $e$ if sampled and report that it was used roughly $p_e$ times\footnote{See also \cite{bernstein2022deterministic} for the details of this technique.} Since they only need to subsample $O(\log n)$ trees for the guide tree collection $\mathcal{T}$ in the end, they then use a lot of query time to construct the (implicitly) sampled trees explicitly. 

Note that our description above is still an extreme over-simplification, the data structure to implement the MWU algorithm requires over 50+ pages in \cite{abboud2023all}. A particular detail is that for technical reasons, the packing is for subgraphs $H_i$ instead of trees $T_i$. This is also the case for our algorithm, we will, however, ignore this technical detail henceforth.

\paragraph{Computing the packing deterministically.} In this article, we show that indeed, the tree $T_1$ and the difference sequence $\Delta_2, \Delta_3, \ldots, \Delta_k$ can essentially be outputted explicitly!

To achieve this goal, we show that using the decremental SSSP data structure from \cite{kyng2023dynamic} in-lieu of the data structure \cite{bernstein2022deterministic} that was used in \cite{abboud2023all} crucially simplifies the problem due to the way that the SSSP tree is maintained internally.

This yields the following sub-problem to be addressed: while the SSSP tree $M$ is explicitly outputted by the data structure and consists of edges in $G$, we could hope to glean the difference sequence from the difference sequence of the shortest paths tree $M$.

However, Mehlhorn's reduction \cite{mehlhorn1988faster} adapted by \cite{abboud2022breaking} does not show $M$ to be the next tree added to the fractional packing but instead only $M^U$, that is, the $U$-Steiner subtree of $M$. We show, however, that for dynamic forest $M$ undergoing $\Delta$ updates, we can isolate $M^U$ using only $\tilde{O}(\Delta)$ updates.

\begin{informal lemma}[\Cref{thm:augmentedSTtrees}]\label{informalLemma:augmentedSTtrees}
There is a deterministic algorithm $\mathcal{M}$ that, given a forest $M$ undergoing batches of edge insertions/deletions, and a terminal set $U\subseteq V(M)$ satisfying $U$ is connected at all times, maintains a forest $R \subseteq M$ containing $M^U$ as a connected component such that each edge update    to $M$ results in $O(\log n)$ edge updates to $R$ and takes $O(\log^2 n)$ time.
\end{informal lemma}

In fact, as sketched in the next paragraph, we can implement \Cref{informalLemma:augmentedSTtrees} by extending relatively standard techniques for dynamic tree data structures. 

\paragraph{Maintaining the $U$-Steiner Subtree in a forest via augmented dynamic trees.} Clearly, we cannot explicitly maintain the subtree $M^U$ in a separate data structure as an edge insertion or deletion to $M$ may lead to $\Omega(n)$ edge insertions or deletions to $M^U$: if $U = \{s,t\}$ and $M$ consists of $s,t$ and $P_n$, then $M$ could toggle between an edge $(s,t)$ existing, and $s$ being attached to the first vertex, and $t$ to the last vertex on the path, changing $M^U$ by $\Omega(n)$ edges every constant many updates to $M$.

In our implementation, we follow the approach of \cite{sleator1981data} to decompose $M$. Following their nomenclature, we refer to edges that are in $R \subseteq M$ as \emph{solid} and edges in $M \setminus R$ as dashed. Now consider processing edge updates to $M$ (from the batch of updates, so $U$ might be temporarily disconnected). 

For an edge $(u,v)$ deletion, it suffices to find the last edge on the intersection of all paths from $u$ (respectively $v$) to every terminal $u'$ ($v'$) in the subtree of $u$ ($v$) (after removing the edge $(u, v)$ from $M$) and to make it dashed. For an edge insertion $(u,v)$, where $(u,v)$ now appears on a $u'v'$-path in $M$ for $u'$ ($v'$) being a terminal in the subtree of $u$ ($v$), we need to convert each edge on the $u'v'$-path to a solid edge. Potentially, there are many dashed edges on such a path. However, by using arguments similar to the analysis in \cite{sleator1981data}, we can show that this operation turns at most $O(\log n)$ from dashed to solid on average.

\paragraph{Extracting a vertex sparsifier from the packing (via Euler tour trees).} Ideally, we would next use the computed packing $\mathcal{P}$, implicitly represented by the difference sequence computed above, directly to extract few guide trees. 

While it is not clear how to do so directly, we achieve this by first extracting a vertex sparsifier from the packing, and then showing that a set of guide trees can be found in the vertex sparsifier. The guarantees of our vertex sparsifier are summarized in the informal lemma below.

\begin{informal lemma}[\Cref{thm:steinerVertexSparsifier}]\label{informalLemma:steinerVertexSparsifier}
There is an algorithm that, given graph $G=(V,E, w)$ and terminal set $U\subseteq V$, constructs a graph $G'=(U,E_{G'},w_{G'})$ in $m^{1+o(1)}$ time such that, 
\begin{enumerate}
    \item Let $A$ be a cut in $G$ that crosses $U$. If $w_G(\partial_G A)=O(\lambda(U))$, then $A\cap U$ is an approximate (i.e. up to a constant) global mincut of $G'$.
    \item $G'$ has at most $m^{1+o(1)}$ edges.
\end{enumerate}
\end{informal lemma}

Consider first the following (inefficient) algorithm to extract a vertex sparsifier given the trees $T_1, T_2, \ldots, T_k$ from the packing (associated with values $\lambda_1, \lambda_2, \ldots, \lambda_k$):
\begin{enumerate}
    \item \label{enum:eulerTourStep} for each tree $T_i$, find an Euler tour of $T_i$ generated as follows: root the tree at an arbitrary vertex $r$, run a DFS from the root, and write down the ID of a vertex whenever it is visited in the DFS. This generates a sequence of vertex visits $r, v_1, v_2, \ldots, v_{x}, r$ where $v_i$ might be equal to $v_j$ even for $i \neq j$. The Euler tour is the non-simple cycle $C_i = (r,v_1), (v_1, v_2), \ldots, (v_{x-1}, v_{x}), (v_x, r)$. Finally, recursively contract on this cycle every edge that is not incident to two terminals in $U$. One can then show that $C_i$ (after scaling by $\lambda_i/2$) is a vertex sparsifier of $T_i$.
    \item Let $G'$ be the direct sum of cycles $C_i$. 
\end{enumerate}
Naturally, the algorithm above is not sufficient for our purpose as it requires going over the supports of $T_1, T_2, \ldots, T_k$ which might be to large which causes large runtime, and does not guarantee the size bound on the number of edges in $G'$.

However, we maintain the forest $R$ as in \Cref{informalLemma:augmentedSTtrees} in an dynamic Euler tour data structure described in \cite{henzinger1995rand}. This data structure makes it explicit that as the underlying tree is undergoing $\Delta$ updates, its corresponding Euler tour only undergoes $O(\Delta)$ updates. This allows us to merge the step \ref{enum:eulerTourStep} across trees and bundles multi-edges in $G'$ implicitly to reduce the support of $G'$ to the desired $m^{1+o(1)}$. 

\paragraph{Guide Trees via Vertex Sparsifiers} 
Finally, let $G'$ be a vertex sparsifier returned by \Cref{informalLemma:steinerVertexSparsifier}.
It remains to extract the set of guide trees from $G'$.
We will rely on a modified version of Gabow's tree packing algorithm (\cite{karger2000minimum}, \cite{gabow1991matroid}).
Let $H$ be an unweighted graph with $m_H$ edges and global edge connectivity $\lambda_H$.
Applying Gabow's algorithm on $H$ returns a feasible tree packing on $H$ with size $\lambda_H$ and packing value $\lambda_H/2$ in deterministic time $\tilde{O}(\lambda_Hm_H)$.
By the first property of $G'$ in \Cref{informalLemma:steinerVertexSparsifier}, we can prove that a tree packing of $G'$ with value $\Omega(\lambda_{G'})$ is our desired guide tree set via a straightforward arguement (see the end of \Cref{sec:guideTrees}). 
However, we cannot simply apply Gabow's algorithm to $G'$ since $G'$ is weighted and $\lambda_{G'}$ may not be subpolynomial.
To address this issue, we apply the algorithm in Theorem 1.3 of \cite{li2021deterministic} to $G'$ to construct a skeleton graph $H$ on $U$, which preserves a property similar to the first point in \Cref{informalLemma:steinerVertexSparsifier} (\Cref{thm:skeletonGraph}) with a desirable edge connectivity $\lambda_H=n^{o(1)}$.
Therefore, applying Gabow's tree packing algorithm on $H$ yields $n^{o(1)}$ guide trees defined in \Cref{def:guideTrees}.

\section{Preliminaries}

\paragraph{Graphs.} We denote graphs as tuples $G = (V, E)$ where $V$ is the vertex set, $E$ is the edge set. In this article all graphs are undirected. For a given graph $G$, we denote by $n_G$ the number of vertex in $G$, by $m_G$ the number of edges. 

For any subset $A \subset V$, we use $G[A]$ to denote the induced graph and $E[A]$ to denote the set of edges with both endpoints in $A$, i.e. $E[A] = \{ \{u,v\} \in E, u,v \in A\}$. For any subsets $A, B\subset V$, we denote by $E(A, B)$ the set of edges in $E$ with one endpoint in $A$ and the other in $B$. We define the shorthand $\partial A = E(A, A^c)$. 

Given $G=(V,E)$, we reserve $w_G$ to refer to the weight function of $G$ that maps edges in $G$ to integers in $[1, W_G]$ for some integer $W_G$ polynomially-bounded in $m$. We reserve $\ell_G$ to denote the length function of $G$ mapping edges to integers in $[1, L_G]$. For any set of edges $E'\subseteq E$, we extend $w_G$ (respectively $\ell_G$) to  $w_G(E')=\sum_{e\in E'} w_G(e)$ ($\ell_G(E') = \sum_{e \in E'} \ell_G(e)$), and sometimes write $G= (V,E,w_G)$ (respectively $G=(V,E,\ell_G)$) to stress that the graph is endowed with a weight (length) function.

For $G=(V,E)$ endowed with length function $\ell_G$, we define the distance $dist_G(u,v)$ to be the minimum length of any $uv$-path.  

We often omit the subscript $G$ when it is clear from the context.

\paragraph{Contractions.} For a graph $G=(V,E,w)$ and vertex set $S \subseteq V$, we denote by $G / S$, the graph obtained from $G$ by contracting all vertices in $S$ into a single super-vertex. We extend this notion to any collection $\mathcal{S} = \{S_1, S_2, \ldots, S_k\}$ of vertex-disjoint subsets of $V$, we denote by $G / \mathcal{S}$ the graph $(((G / S_1) / S_2) \ldots / S_k)$, i.e. the graph obtained from $G$ by contracting the vertices in each set in $\mathcal{S}$ into a single vertex.

\paragraph{Partition Sets.} For any partition $\mathcal{X}$ of some ground set $X$, we denote by $\mathcal{X}(x)$ the partition set in $\mathcal{X}$ that contains the item $x \in X$.

\paragraph{Cuts and Connectivity.} For $G=(V,E,w)$ and cut $S \subseteq V$, we denote by $\partial_G(S)$ the set of edges in $G$ with exactly one endpoint in $S$. We call $w(\partial_G(S))$ the value of the cut and often use the shorthand $\delta_G(S) = w(\partial_G(S))$. When the context is clear, we often write $\partial_G S$ instead of $\partial_G(S)$ and $\delta_G S$ instead of $\delta_G(S)$.

We say $S$ is an $(A,B)$-cut for $A, B \subseteq V$ if either $A \subseteq S$ and $B \cap S = \emptyset$ or $B \subseteq S$ and $A \cap S= \emptyset$. We denote by $\lambda_G(A,B)$ the connectivity of $A$ and $B$ that is the minimum value over all $(A,B)$-cuts. We say that an $(A, B)$ cut is an $(A,B)$-mincut if the value of the cut is minimal, i.e. equals $\lambda_G(A,B)$. We say that an $(A, B)$-mincut $S$ is vertex-minimal if $A \subseteq S$ and there is no other $(A,B)$-mincut $A \subseteq S'$ with $|S'| \leq |S|$. When $A, B$ are singleton sets, i.e. $A = \{a\}, B = \{b\}$, we often simply write $(a,b)$-cut, $(a,b)$-mincut and vertex-minimal $(a,b)$-mincut instead of $(A,B)$-cut, $(A,B)$-mincut and vertex-minimal $(A,B)$-mincut. Note that while $(A, B)$-mincuts are symmetric in $A, B$, minimal $(A, B)$-mincuts are not!

For a set of vertices $U\subseteq V$, we write $\lambda_G(U)=\min_{s,t\in U}\lambda_G(s,t)$, and we write $\lambda_G =\lambda_G(V)$.

\paragraph{Connected Components.} Given graph $G=(V,E,w)$, we reserve $\mathcal{C}_{G, \tau}$ to denote the $\tau$-connected components of $G$, i.e. for every $X, Y \in \mathcal{C}_{G, \tau}, X \neq Y$, we have for all $u,v \in X, x \in Y$ that $\lambda_G(u,v) \geq \tau$ and $\lambda_G(u,x) < \tau$. For $U \subseteq V$, we denote by $\mathcal{C}_{G, U, \tau}$ the $\tau$-connected components of $G$ w.r.t. $U$, i.e. $\mathcal{C}_{G, U, \tau} = \{ C \cap U \;|\; C \in \mathcal{C}_{G, \tau}\}$.

\paragraph{Submodularity/ Posi-Modularity of Cuts.} Given a finite set $S$, a function $f : 2^S \mapsto \mathbb{R}$ is \emph{submodular} if for all $A, B \subseteq S$, $f(A) + f(B) \geq f(A \cup B) + f(A \cap B)$. Alternatively, $f$ is submodular if for all $A \subseteq B \subseteq S$ and $s \in S$, $f(A + s) - f(A) \geq f(B + e) - f(B)$. If $f$ is further symmetric, i.e. for all $A \subseteq S$, $f(A) = f(S \setminus A)$, then it is \emph{posi-modular} which implies that $f(A) + f(B) \geq f(A \setminus B) + f(B \setminus A)$. It is well-known (and easy to prove) that the value of cuts (i.e. $\delta_G$) is a posi-modular function.

\paragraph{Connectivity Facts.} We state some facts about connectivity that relate to Gomory-Hu trees for general use throughout the paper. These facts are standard and essentially follow from submodularity and posi-modularity.

\begin{fact}\label{lma:legalDecompOfGH}[see \cite{gomory1961multi}]
For graph $G=(V,E,w)$ and any vertices $s,t \in V$, let $S$ be an $(s,t)$-mincut. For any vertices $u,v \in S$, there is a $(u,v)$-mincut $U$ with $U \subseteq S$.
\end{fact}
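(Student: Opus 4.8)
The plan is to prove \Cref{lma:legalDecompOfGH} by an uncrossing argument driven by submodularity of the cut function $\delta_G$. Fix the $(s,t)$-mincut $S$; since the statement is symmetric in $s$ and $t$, I may assume $s\in S$ and $t\notin S$. Let $u,v\in S$ be distinct (if $u=v$ there is nothing to prove), and note that $u,v\in S$ together with $t\notin S$ forces $t\neq u$ and $t\neq v$. I would start from an arbitrary $(u,v)$-mincut and, using that $\delta_G$ is symmetric, orient it as the side $W$ that does \emph{not} contain $t$; then $W$ is still a $(u,v)$-mincut, it separates $u$ from $v$, and after possibly renaming $u,v$ I may assume $u\in W$ and $v\notin W$.

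The heart of the argument is the observation that the two uncrossed sets are compatible with the cuts I care about: $W\cap S$ is a $(u,v)$-cut with $W\cap S\subseteq S$ (it contains $u$ since $u\in W\cap S$, and avoids $v$ since $v\notin W$), while $W\cup S$ is an $(s,t)$-cut (it contains $s\in S$, and avoids $t$ since $t\notin W$ and $t\notin S$). Consequently $\delta_G(W\cap S)\ge\lambda_G(u,v)=\delta_G(W)$ and $\delta_G(W\cup S)\ge\lambda_G(s,t)=\delta_G(S)$. Feeding $W$ and $S$ into the submodularity inequality $\delta_G(W)+\delta_G(S)\ge\delta_G(W\cup S)+\delta_G(W\cap S)$ and using the second bound gives $\delta_G(W\cap S)\le\delta_G(W)=\lambda_G(u,v)$; combined with the first bound, $W\cap S$ is a $(u,v)$-mincut, and it lies inside $S$, which is exactly the claim (take $U=W\cap S$).

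The step I expect to need the most care is the choice of orientation: one has to orient the $(u,v)$-mincut so that $t$ lands on the outside, since that is precisely what makes $W\cup S$ an honest $(s,t)$-cut and hence forces $\delta_G(W\cup S)\ge\delta_G(S)$. An equivalent route, which I would fall back on if the orientation bookkeeping gets awkward, is to keep an arbitrary orientation of a $(u,v)$-mincut $W$ and split into cases according to whether $t\in W$ or $t\notin W$; in the case $t\in W$ one invokes posi-modularity of $\delta_G$ and argues about $S\setminus W$ (a $(u,v)$-cut inside $S$) and $W\setminus S$ (an $(s,t)$-cut) instead. Either way the remainder is routine manipulation of the cut definitions, so the proof should be short.
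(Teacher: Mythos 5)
Your proof is correct, and it is exactly the standard uncrossing argument the paper is alluding to: the paper states this fact without proof, remarking only that the connectivity facts ``essentially follow from submodularity and posi-modularity,'' which is precisely what you use (submodularity of $\delta_G$ applied to $W$ and $S$ after orienting the $(u,v)$-mincut away from $t$, with the posi-modularity variant as a valid fallback for the complementary orientation).
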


\begin{fact}\label{fact:minimalCutsUnique}
For graph $G=(V,E,w)$ and disjoint sets $A, B \subseteq V$, the minimal $(A,B)$-mincut is unique.
\end{fact}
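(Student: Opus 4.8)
The plan is to combine submodularity (equivalently, posi-modularity) of the cut-value function $\delta_G$ with a short exchange argument on cardinalities. Fix disjoint nonempty $A, B \subseteq V$, let $\lambda := \lambda_G(A,B)$, and let $S$ be an $(A,B)$-mincut with $A \subseteq S$ of \emph{minimum cardinality} among all such cuts. The goal is to show that any other $(A,B)$-mincut $S'$ with $A \subseteq S'$ and $|S'| = |S|$ satisfies $S' = S$; this is exactly the assertion that the vertex-minimal $(A,B)$-mincut is unique.

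First I would verify that both $S \cap S'$ and $S \cup S'$ are again $(A,B)$-cuts. Indeed, $A \subseteq S$ and $A \subseteq S'$ give $A \subseteq S \cap S' \subseteq S \cup S'$, while $S \cap B = \emptyset$ gives $(S \cap S') \cap B = \emptyset$ and $S \cap B = S' \cap B = \emptyset$ gives $(S \cup S') \cap B = \emptyset$. Hence $\delta_G(S \cap S') \geq \lambda$ and $\delta_G(S \cup S') \geq \lambda$. Applying submodularity of $\delta_G$ to $S$ and $S'$ yields $2\lambda = \delta_G(S) + \delta_G(S') \geq \delta_G(S \cap S') + \delta_G(S \cup S') \geq 2\lambda$, so equality holds everywhere; in particular $S \cap S'$ is an $(A,B)$-mincut containing $A$.

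The cardinality step then finishes it: $S \cap S'$ is an $(A,B)$-mincut containing $A$ with $|S \cap S'| \leq |S|$, and $S$ was chosen of minimum cardinality among such cuts, so $|S \cap S'| = |S|$, forcing $S \cap S' = S$, i.e. $S \subseteq S'$. Combined with $|S'| = |S|$ this gives $S = S'$. (As a byproduct, the argument also shows that if $S_1 \subseteq S_2$ are two $(A,B)$-mincuts containing $A$ then every $(A,B)$-mincut containing $A$ that is sandwiched between them is a mincut, but only the equality case is needed here.)

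There is no substantive obstacle: the only points requiring care are invoking submodularity in the correct direction and checking that $S \cap S'$ and $S \cup S'$ really are $(A,B)$-cuts — both of which rely solely on $A$ lying inside each of $S, S'$ and $B$ being disjoint from each — after which the minimality of $|S|$ collapses the two cuts into one. If $A$ or $B$ is empty the statement is vacuous, so nonemptiness may be assumed in keeping with the cut terminology of the preliminaries.
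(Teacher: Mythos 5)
Your proof is correct, and it follows exactly the route the paper alludes to (the preliminaries state these facts ``essentially follow from submodularity and posi-modularity'' but give no explicit proof). The submodularity-plus-uncrossing argument --- verifying $S\cap S'$ and $S\cup S'$ are $(A,B)$-cuts, forcing $\delta_G(S\cap S')=\lambda$ by sandwiching, and then using $|S\cap S'|\le|S|=|S'|$ together with the minimality of $|S|$ to conclude $S=S'$ --- is the standard argument the paper has in mind, and you have carried it out cleanly.
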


We henceforth denote by $M_{G, A, B}$, the unique minimal $(A,B)$-mincut in $G$; by $M_{G, U, A, B}$ we denote $M_{G, A, B} \cap U$. We define $m_{G, A, B} = |M_{G, A, B}|$ and $m_{G, U, A, B} = |M_{G, U, A, B}|$. Again, we overload notation to allow vertex $a$ in-lieu of $A$ if $A$ is a singleton set containing only $a$; analogously for $B$.

\begin{fact}\label{fact:laminarFamily}
For graph $G=(V,E,w)$, and fixed vertex $r \in V$, the minimal $(v, r)$-mincuts $M_{G, v,r}$ for all $v \in V\setminus \{r\}$, form a laminar family, i.e. for any two such cuts, either one is contained in the other, or their intersection is empty. Similarly, for any set $U \subseteq V$, the sets $M_{G, U, v, r}$ for all $v \in V \setminus \{r\}$ form a laminar family. 
\end{fact}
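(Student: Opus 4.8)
The plan is to establish laminarity by the standard uncrossing argument for minimal mincuts, using submodularity and posi-modularity of the cut value function $\delta_G$, and then to read off the $U$-restricted statement for free. Fix the vertex $r$ and take any $v,w \in V \setminus \{r\}$; write $M_v := M_{G,v,r}$ and $M_w := M_{G,w,r}$. Recall that $v \in M_v$, $w \in M_w$, $r \notin M_v \cup M_w$, and (by \Cref{fact:minimalCutsUnique}) that $M_v$ is the unique $(v,r)$-mincut of minimum cardinality among those containing $v$, and likewise for $M_w$. The goal is to show $M_v$ and $M_w$ are nested or disjoint, and I would do a case split on whether $v \in M_w$ and whether $w \in M_v$.

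If $v \in M_w$, then $M_v \cap M_w$ is a $(v,r)$-cut (it contains $v$, not $r$) and $M_v \cup M_w$ is a $(w,r)$-cut (it contains $w$, not $r$), so submodularity of $\delta_G$ forces
\[
\delta_G(M_v) + \delta_G(M_w)\ \ge\ \delta_G(M_v\cap M_w) + \delta_G(M_v\cup M_w)\ \ge\ \lambda_G(v,r)+\lambda_G(w,r)\ =\ \delta_G(M_v)+\delta_G(M_w),
\]
hence equality throughout; in particular $M_v\cap M_w$ is a $(v,r)$-mincut containing $v$, and since $M_v\cap M_w \subseteq M_v$ and $M_v$ has minimum cardinality, $M_v\cap M_w = M_v$, i.e.\ $M_v\subseteq M_w$. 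The case $w \in M_v$ is symmetric and gives $M_w\subseteq M_v$. In the remaining case $v\notin M_w$ and $w\notin M_v$, the sets $M_v\setminus M_w$ and $M_w\setminus M_v$ are respectively a $(v,r)$-cut and a $(w,r)$-cut, so posi-modularity of $\delta_G$ (valid since $\delta_G$ is symmetric) gives
\[
\delta_G(M_v)+\delta_G(M_w)\ \ge\ \delta_G(M_v\setminus M_w) + \delta_G(M_w\setminus M_v)\ \ge\ \lambda_G(v,r)+\lambda_G(w,r)\ =\ \delta_G(M_v)+\delta_G(M_w),
\]
and again equality forces $M_v\setminus M_w$ to be a $(v,r)$-mincut containing $v$; by minimality $M_v\setminus M_w = M_v$, i.e.\ $M_v\cap M_w=\emptyset$. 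Thus in all cases $M_v, M_w$ are nested or disjoint.

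For the $U$-restricted family, I would just note that $M_{G,U,v,r} = M_{G,v,r}\cap U$ and that intersecting every member of a laminar family with a fixed set $U$ preserves laminarity (disjoint sets stay disjoint, nested sets stay nested), which is immediate. I do not expect a genuine obstacle here; the one point needing slight care is the cardinality bookkeeping in the minimality step — turning ``$X$ is a $(v,r)$-mincut with $X\subseteq M_v$ and $|X|\ge |M_v|$'' into ``$X=M_v$'' — but this is exactly the textbook uncrossing of minimal mincuts.
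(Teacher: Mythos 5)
Your proof is correct and is exactly the standard uncrossing argument via submodularity and posi-modularity that the paper invokes implicitly (it states only that the fact ``essentially follow[s] from submodularity and posi-modularity'' without spelling out the details). The case split, the equality chains, the minimality step, and the observation that intersecting a laminar family with a fixed set $U$ preserves laminarity are all sound.
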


\paragraph{Isolating Cuts Lemma.} The following Lemma is stated as given in \cite{abboud2022breaking} but was first introduced by \cite{li2020deterministic} (an alternative proof is given in  \cite{AbboudKT21a}). While all of these articles state the Lemma as a reduction to maximum flow on graphs of total size $\tilde{O}(m)$, we directly state the implied bound when using the deterministic almost-linear time maximum flow algorithm given in \cite{van2023deterministic}.

\begin{lemma}[Isolating Cuts Lemma]\label{lma:isoCuts}
There is an algorithm $\textsc{ComputeIsolatingCuts}$ that takes as input a graph $G = (V, E, w)$ and a collection $\mathcal{U}$ of disjoint
terminal sets $U_1, U_2, \ldots, U_h \subseteq V$. The algorithm then computes a collection of disjoint cuts $S_1, S_2, \ldots, S_h$ where for each $1 \leq i \leq h$, $S_i$ is a vertex-minimal $(U_i
, \cup_{j \neq i} U_j )$-mincut. The algorithm is deterministic and runs in time $m^{1+o(1)}$.
\end{lemma}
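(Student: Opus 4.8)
The plan is to follow the classical bit-labeling argument of Li and Panigrahi, observing that the only randomized or superlinear ingredient it uses is a handful of max-flow computations, each of which can now be replaced by the deterministic almost-linear-time max-flow algorithm of \cite{van2023deterministic}. First I would reduce to the case of singleton terminals: contract each $U_i$ into one vertex $u_i$ to obtain a graph $G'$ with terminal set $T=\{u_1,\dots,u_h\}$; since contraction preserves the edges crossing any cut consistent with the contraction, a vertex-minimal $(u_i,T\setminus\{u_i\})$-mincut in $G'$ pulls back to a vertex-minimal $(U_i,\bigcup_{j\ne i}U_j)$-mincut in $G$, and disjointness is preserved. So from now on it suffices to handle $T$ in $G'$.

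For the singleton problem I would proceed in three phases. \emph{Bit cuts:} assign the $u_i$ distinct binary labels of length $\ell=\lceil\log_2 h\rceil$; for each bit $b\in[\ell]$, contract the terminals with $b$-th label bit $0$ and those with $b$-th label bit $1$ into two vertices and compute, via deterministic max-flow, a min cut $C_b$ between them. \emph{Regions:} let $F=\bigcup_{b}\partial_{G'}C_b$ and let $R_i$ be the connected component of $G'-F$ containing $u_i$; since any two terminals differ in some bit and are thus separated by the corresponding $C_b$, we get $R_i\cap T=\{u_i\}$ and the $R_i$ are pairwise disjoint. \emph{Final cuts:} for each $i$, contract $V(G')\setminus R_i$ into a single vertex $x_i$, compute a max $(u_i,x_i)$-flow, and output $S_i$ as the set of vertices reachable from $u_i$ in the residual graph; disjointness of the output is immediate from $S_i\subseteq R_i$. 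For the running time, the bit cuts are $O(\log h)$ flow calls on $O(m)$-edge graphs, and for the final calls $\sum_i(|R_i|+1)=O(n)$ while $\sum_i|E(G'/(V\setminus R_i))|\le m+\sum_i|\partial_{G'}R_i|\le m+2|F|=O(m\log h)$, since each edge of $F$ is charged to at most the two regions containing its endpoints; thus all flows run on graphs of total size $\tilde{O}(m)$, and the whole algorithm is deterministic and runs in $m^{1+o(1)}$ time.

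The crux — the step I expect to take real work — is correctness, i.e.\ that $S_i$ equals the unique (by \Cref{fact:minimalCutsUnique}) vertex-minimal $(u_i,T\setminus\{u_i\})$-mincut $S_i^\star$. I would first establish the containment $S_i^\star\subseteq R_i$: fix a bit $b$ and let $A$ be the side of $C_b$ containing $u_i$; then $S_i^\star\cap A$ is still an isolating cut for $u_i$ (it keeps $u_i$ and drops every other terminal, as those lie outside $S_i^\star$), so $\delta(S_i^\star\cap A)\ge\delta(S_i^\star)$, and $S_i^\star\cup A\supseteq A$ still separates the two bit-classes, so $\delta(S_i^\star\cup A)\ge\delta(C_b)=\delta(A)$; submodularity of $\delta$ then forces $\delta(S_i^\star)+\delta(A)\ge\delta(S_i^\star\cap A)+\delta(S_i^\star\cup A)\ge\delta(S_i^\star)+\delta(A)$, hence equality throughout, so $S_i^\star\cap A$ is again a vertex-minimal isolating mincut and by uniqueness equals $S_i^\star$, i.e.\ $S_i^\star\subseteq A$; ranging over all $b$ gives $S_i^\star\subseteq R_i$. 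Given this, every $(u_i,x_i)$-cut in the contracted graph has its $u_i$-side inside $R_i$ and is therefore an isolating cut of value $\ge\delta(S_i^\star)$, while $S_i^\star$ itself realizes value $\delta(S_i^\star)$ there, so the min $(u_i,x_i)$-cut value equals $\delta(S_i^\star)$. The last subtlety is that the residual-reachable set from $u_i$ is not merely \emph{a} min $(u_i,x_i)$-cut but the vertex-minimal one: the $u_i$-sides of min $(u_i,x_i)$-cuts are closed under intersection (again submodularity with equality), so there is a unique inclusion-minimal such cut, it is contained in all others and hence has the fewest vertices, and it coincides with the residual-reachable set — which, together with $S_i^\star\subseteq R_i$ and uniqueness, must be $S_i^\star$.
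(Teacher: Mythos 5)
Your proposal reconstructs the standard Li--Panigrahi isolating-cuts argument; the paper itself does not prove \Cref{lma:isoCuts} but cites it (from \cite{li2020deterministic}, with alternatives in \cite{AbboudKT21a,abboud2022breaking}) and only notes that plugging in the deterministic max-flow of \cite{van2023deterministic} gives the stated deterministic $m^{1+o(1)}$ bound --- which is precisely the observation you make. The reduction to singletons, the bit-labeling cuts, the region construction, the final per-region flows, the uncrossing argument via posi-modularity, and the use of the residual-reachable set to extract the vertex-minimal side are all exactly the standard route.

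One step is glossed over: after showing $S_i^\star\subseteq A_b$ for every bit $b$, you conclude ``ranging over all $b$ gives $S_i^\star\subseteq R_i$.'' But you defined $R_i$ as the \emph{connected component} of $G'-F$ containing $u_i$, and $\bigcap_b A_b$ can be strictly larger than $R_i$ (it may span several components of $G'-F$). To close this you need the additional observation that $G'[S_i^\star]$ is connected, which follows from vertex-minimality: if $G'[S_i^\star]$ had a component $D\not\ni u_i$, then $S_i^\star\setminus D$ would be an isolating cut with $\delta(S_i^\star\setminus D)\le\delta(S_i^\star)$ and strictly fewer vertices, contradicting minimality. Once $S_i^\star$ is connected and contained in $\bigcap_b A_b$, no edge of $F$ has both endpoints in $S_i^\star$, so $S_i^\star$ lies in a single component of $G'-F$, which must be $R_i$. (Alternatively, define $R_i$ directly as $\bigcap_b A_b^{(i)}$; these sets are pairwise disjoint because distinct terminals differ in some bit, $\partial R_i\subseteq F$ so the charging argument for the running time is unchanged, and the connectedness issue disappears.)
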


\paragraph{Hit-And-Miss Theorems.} We use the following hit-and-miss theorem that is a special case of a theorem from \cite{cs2024deterministic}.

\begin{theorem}[see Definition 4 and Theorem 3.1 in \cite{cs2024deterministic}] \label{thm:hitAndMiss}
For any positive integers $N, a,b$ with $a \geq b$ and $b = O(1)$, there is an algorithm $\textsc{ConstructHitAndMissFamily}([N], a, b)$ that constructs a family $\mathcal{H}$ of \emph{hit-and-miss} hash functions such that for any $A \subseteq [N]$ with $|A| \leq a$ and $B \subseteq [N] \setminus A$ with $|B| \leq b$, there is some function $h \in \mathcal{H}$ such that $h(x) = 0$ for all $x \in A$ and $h(y) = 1$ for all $y \in B$. The algorithm constructs $\mathcal{H}$ to be of size $(a \cdot \log(N))^{O(b)}$ in deterministic time $N \cdot (a \cdot \log(N))^{O(b)}$. Moreover, when $a=O(1)$, the size of $\mathcal{H}$ is $O(\log N)$.
\end{theorem}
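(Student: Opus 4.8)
The plan is to build $\mathcal{H}$ from a two-level ``perfect-hashing then enumeration'' scheme that exploits the strong asymmetry between $a$ (which may be as large as $N^{o(1)}$) and $b$ (a fixed constant). The key observation is that it suffices to (i) spread $A\cup B$ into pairwise-distinct ``colors'' drawn from a color set of size $\mathrm{poly}(a+b,\log N)$, and then (ii) guess which $\le b$ colors are the ones hit by $B$; step (ii) is affordable by brute-force enumeration precisely because $b=O(1)$.

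Concretely, I would first set up a Reed--Solomon-style gadget. Pick a prime $q=\Theta\big((a+b)^2\log N\big)$, chosen with a large enough constant; it exists by Bertrand's postulate and can be found deterministically in time $\mathrm{poly}(a,\log N)$. Let $d=\lceil\log_q N\rceil$, so that $q^{d+1}\ge N$ and $d\le \log N+1$, and fix an injection identifying each $x\in[N]$ with a polynomial $p_x$ of degree $\le d$ over $\mathbb{F}_q$. For every pair $(\alpha,T)$ with $\alpha\in\mathbb{F}_q$ and $T\subseteq\mathbb{F}_q$, $|T|\le b$, define $h_{\alpha,T}\colon[N]\to\{0,1\}$ by $h_{\alpha,T}(x)=1$ iff $p_x(\alpha)\in T$, and set $\mathcal{H}=\{h_{\alpha,T}\}$. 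Then $|\mathcal{H}|=q\cdot\sum_{j\le b}\binom{q}{j}=q^{O(b)}=(a\log N)^{O(b)}$ using $a\ge b$, and the full table of values is writable in time $N\cdot(a\log N)^{O(b)}$, since each evaluation $h_{\alpha,T}(x)$ costs $\mathrm{poly}(\log N)$ once $p_x$ is decoded (Horner evaluation plus a membership test in a size-$O(1)$ set). For the ``moreover'' clause, when $a=O(1)$ this bound is only $\mathrm{polylog}(N)$; to get $O(\log N)$ I would instead invoke an $(N,a+b)$-universal set of size $2^{O(a+b)}\log N=O(\log N)$ (Naor--Schulman--Srinivasan), which by definition realizes every $\{0,1\}$-pattern on every $(a+b)$-subset, in particular the pattern ``$0$ on $A$, $1$ on $B$'' after padding $A\cup B$ up to size $a+b$.

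For correctness, fix disjoint $A,B\subseteq[N]$ with $|A|\le a$, $|B|\le b$. Two distinct polynomials of degree $\le d$ agree at $\le d$ points, so the number of $\alpha\in\mathbb{F}_q$ for which some pair in $A\cup B$ collides (some $p_x(\alpha)=p_{x'}(\alpha)$ with $x\neq x'$) is at most $\binom{|A\cup B|}{2}\,d\le \binom{a+b}{2}(\log N+1)<q$ by the choice of $q$. Hence there is $\alpha^\star$ with $\{p_x(\alpha^\star):x\in A\cup B\}$ pairwise distinct. Put $T^\star=\{p_y(\alpha^\star):y\in B\}$: then $|T^\star|=|B|\le b$ and, by distinctness, $T^\star$ is disjoint from $\{p_x(\alpha^\star):x\in A\}$. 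Therefore $h_{\alpha^\star,T^\star}(y)=1$ for all $y\in B$ and $h_{\alpha^\star,T^\star}(x)=0$ for all $x\in A$, which is exactly the required property.

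The main obstacle is the design decision in the first paragraph rather than any individual computation: a symmetric construction — an exact $(N,a+b)$-splitter, or sampling each bit of $h$ i.i.d. and derandomizing by conditional expectations — pays a factor $e^{\Omega(a)}$ or $N^{\Omega(1)}$ and is hopeless when $a=N^{o(1)}$, so one must break symmetry so that the enumeration cost $q^{O(b)}$ is polynomial in $a$ but with the exponent governed only by the tiny parameter $b$. Once the polynomial-code gadget is fixed, everything else (bounding the count of colliding evaluation points, enumerating the $\le b$ ``positive'' colors, and the deterministic searches for the prime $q$ and for $\alpha^\star$) is routine.
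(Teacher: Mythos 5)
The paper states \Cref{thm:hitAndMiss} as an external citation (Definition~4 and Theorem~3.1 of \cite{cs2024deterministic}) and supplies no in-text proof, so there is no internal argument to compare against; what you have written is a self-contained proof of the cited black box. Your construction is correct and is the standard one: the Reed--Solomon gadget (prime $q=\Theta((a+b)^2\log N)$, degree-$\le d$ polynomial encoding of $[N]$, and the observation that at most $\binom{a+b}{2}d<q$ evaluation points can cause a collision inside $A\cup B$) is exactly the textbook perfect-hashing step, and enumerating the $\le b$ ``positive'' colors $T\subseteq\mathbb{F}_q$ is what keeps the family size to $q^{O(b)}=(a\log N)^{O(b)}$; the size, construction time, and the disjointness of $T^\star$ from $\{p_x(\alpha^\star):x\in A\}$ all check out. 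Your fix for the ``moreover'' clause is also right --- the polynomial construction only gives $\mathrm{polylog}(N)$ when $a=O(1)$, and switching to an $(N,a+b)$-universal set of size $2^{O(a+b)}\log N=O(\log N)$ (Naor--Schulman--Srinivasan) recovers the stated bound --- so I would characterize your argument as the standard route for this result rather than a genuinely different one.
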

\section{Guide Trees}\label{sec:guideTrees}

In this section, we give the first almost-linear time deterministic construction of guide trees. The formal guarantees of our result are summarized in the statement below.

\guidetree*

\subsection{Vertex Sparsifier for Steiner Mincuts}\label{subsec:steinerVertexSparsifier}

To compute guide trees efficiently, we give the first almost-linear time deterministic construction of vertex sparsifiers for $U$-Steiner mincuts (that preserve cuts close in value to the mincut of smallest value separating any terminals in $U$). 

\begin{theorem}\label{thm:steinerVertexSparsifier}
There is an algorithm that, given graph $G=(V,E, w)$ and terminal set $U\subseteq V$, constructs a graph $G'=(U,E_{G'},w_{G'})$ in $m^{1+o(1)} \log W$ time such that, 
\begin{enumerate}
    \item $\lambda_{G'}\geq \lambda_{G}(U)/4.1$;
    \item For any cut $(A,B)$ of $G$, $w_{G'}(A\cap U,B\cap U)\leq w_G(A,B)$;
    \item $G'$ has at most $m^{1+o(1)} \log W$ edges.
\end{enumerate}
\end{theorem}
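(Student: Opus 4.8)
The plan is to realize $G'$ as the weighted union of the Euler-tour cycles of the trees in an \emph{approximate fractional $U$-Steiner tree packing} of $G$, which we compute only implicitly. Concretely, I would first run a multiplicative-weights-update (MWU) routine producing a packing $\mathcal{P}=\{(T_1,\lambda_1),\dots,(T_k,\lambda_k)\}$ of $U$-Steiner trees that is \emph{feasible}, i.e. $\sum_{i:\,e\in T_i}\lambda_i\le w_G(e)$ for every edge $e$, and whose value $\val(\mathcal{P})=\sum_i\lambda_i$ is $\Omega(\lambda_G(U))$; each MWU iteration asks for an (approximately) minimum-length $U$-Steiner tree against a monotone length function, which I would answer by dynamizing Mehlhorn's $2$-approximation through the decremental single-source shortest-paths data structure of \cite{kyng2023dynamic}. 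As in \Cref{informalLemma:augmentedSTtrees}, the trees $T_i$ are never stored explicitly: instead one maintains the $U$-Steiner subtree of the shortest-path forest under the batch of $m^{1+o(1)}$ edge updates with only $m^{1+o(1)}$ total structural changes, feeds those changes into a dynamic Euler-tour structure \cite{henzinger1995rand}, contracts on the tour every edge not joining two terminals, and accumulates the resulting multigraph cycle $C_i$ on $U$ — scaled by $\lambda_i/2$ — into $G'$, bundling parallel edges as they appear.

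\textbf{Cut domination (Property 2).} For a fixed $T_i$, its Euler tour traverses each tree edge exactly twice, so for any cut $(A,B)$ of $G$ the tour crosses $\partial_{T_i}(A)$ exactly $2\,|\partial_{T_i}(A)|$ times; after contracting non-terminal-incident edges, an edge of $C_i$ crosses $(A\cap U,B\cap U)$ only if the corresponding stretch of the tour crosses $\partial_{T_i}(A)$ an odd number of times, hence $|\partial_{C_i}(A\cap U)|\le 2\,|\partial_{T_i}(A)|$. Summing over $i$ and using feasibility of $\mathcal{P}$,
\[
  w_{G'}(A\cap U,\, B\cap U)=\sum_i\tfrac{\lambda_i}{2}\,\bigl|\partial_{C_i}(A\cap U)\bigr|\le\sum_i\lambda_i\,\bigl|\partial_{T_i}(A)\bigr|\le\sum_{e\in\partial_G(A)}w_G(e)=w_G(A,B).
\]

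\textbf{Connectivity lower bound (Property 1).} Each $C_i$ is a connected Eulerian multigraph on $U$ (it is the contraction of a closed walk visiting all of $U$, since $T_i$ spans $U$), so every nonempty proper $S\subsetneq U$ satisfies $|\partial_{C_i}(S)|\ge 2$. Thus $w_{G'}(\partial_{G'}S)=\sum_i\tfrac{\lambda_i}{2}|\partial_{C_i}(S)|\ge\sum_i\lambda_i=\val(\mathcal{P})$, and it remains to lower-bound $\val(\mathcal{P})$: the fractional $U$-Steiner tree packing number is at least $\lambda_G(U)/2$, and MWU driven by a $2$-approximate minimum-Steiner-tree oracle with $o(1)$ optimization error recovers a $(1/2-o(1))$ fraction of it, giving $\val(\mathcal{P})\ge(1-o(1))\lambda_G(U)/4\ge\lambda_G(U)/4.1$.

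\textbf{Running time, size, and the main obstacle (Property 3).} The MWU needs $m^{1+o(1)}\log W$ iterations (the $\log W$ arising from weight scaling / the range of $\lambda_G(U)$), each run in amortized $m^{o(1)}$ time by the decremental shortest-path structure. The step I expect to be the main obstacle is controlling the \emph{total} number of structural changes: the naive bound $\sum_i|T_i|=\Omega(mn)$ is useless, so one must exploit that consecutive shortest-path forests differ in few edges on average, that the $U$-Steiner-subtree extraction of \Cref{informalLemma:augmentedSTtrees} inflates this only by an $O(\log n)$ factor, and that the dynamic Euler-tour structure turns each tree edge change into $O(1)$ tour changes; this caps the number of cycle edges ever produced at $m^{1+o(1)}\log W$, and since parallel edges are merged into weighted edges on the fly, $|E_{G'}|\le m^{1+o(1)}\log W$ and the whole construction runs within the claimed time. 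Verifying that the $\lambda_i$-weighting, the self-loops and parallel edges created by contraction, and the interaction between batched updates and the Euler-tour/accumulation bookkeeping do not inflate any of these counts is the technical heart of the argument.
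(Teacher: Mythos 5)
Your proposal follows essentially the same route as the paper: an MWU that implicitly maintains an approximate fractional $U$-Steiner packing via the decremental SSSP structure of \cite{kyng2023dynamic} and a dynamically-maintained $U$-Steiner subtree (\Cref{informalLemma:augmentedSTtrees}), Euler-tour cycles $C_i$ accumulated into $G'$ with weight $\val(H_i)/2$, and the two cut arguments (tour crosses a tree cut edge exactly twice, giving Property~2; every $C_i$ is a connected Eulerian multigraph on $U$, giving $\ge 2$ crossings and hence Property~1). This matches the paper's \Cref{algorithm:MWU_impl}, \Cref{clm:miIsApproxSteinerPacking}, \Cref{algorithm:vertex_sparsifier}, and \Cref{clm:vertex_sparsifier_time} in both structure and in the key quantitative bounds.
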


\paragraph{Steiner-Subgraph Packings.} We obtain the vertex sparsifiers by packing $U$-steiner subgraphs. We define packings of such subgraphs below.

\begin{definition}[c.f. Definition 3.1 in \cite{abboud2022breaking}]
Let $G = (V, E, w)$ be an undirected weighted graph with a set of terminals $U \subseteq V$. A subgraph $H$ of $G$ is said to be a $U$-Steiner subgraph (or simply a Steiner subgraph if the
terminal set $U$ is unambiguous from the context) if all the terminals are connected in $H$. 
\end{definition}

\begin{definition}[c.f. Definition 3.2. in \cite{abboud2022breaking}]\label{def:packingSteinerSubgraph}
A $U$-Steiner-subgraph packing $\mathcal{P}$ is a collection of $U$-Steiner subgraphs $H_1, \ldots , H_k$,
where each subgraph $H_i$
is assigned a value $\val(H_i) > 0$. If all $\val(H_i)$ are integral, we say that
$\mathcal{P}$ is an integral packing. Throughout, a packing is assumed to be fractional (which means that it
does not have to be integral), unless specified otherwise. The value of the packing $\mathcal{P}$ is the total
value of all its Steiner subgraphs, denoted $\val(\mathcal{P}) = \sum_{H \in \mathcal{P}} \val(H)$. We say that $\mathcal{P}$ is feasible if
$\forall e \in E, \sum_{H \in \mathcal{P}: e \in H} \val(H) \leq w(e)$.
\end{definition}

We next state a natural LP to pack $U$-Steiner-subgraphs and its dual program. Here $\mathcal{H}$ denotes the set of all $U$-Steiner subgraphs of $G$. For the dual program, we define $\ell(H) = \sum_{e \in H} \ell(e)$.

\begin{center}
\begin{tabular}{|c|c|}
\hline 
$\begin{array}{crccc}
(\boldsymbol{P}_{LP})\\
\max & \sum_{H\in\mathcal{H}}\val(H)\\
\text{s.t.} & \sum_{H\in\mathcal{H} : e\in H}\val(H) & \le & w(e) & \forall e\in E\\
 & \val(H) & \ge & 0 & \forall  H\in\mathcal{H}
\end{array}$  & $\begin{array}{crccc}
(\boldsymbol{D}_{LP})\\
\min & \sum_{e\in E}w(e)\ell(e)\\
\text{s.t.} & \ell(H) & \ge & 1 & \forall H\in\mathcal{H}\\
 & \ell(e) & \ge & 0 &  \forall e\in E 
\end{array}$\tabularnewline
\hline 
\end{tabular}
\par\end{center}

The dual LP can be interpreted as minimizing over length functions $\ell$ to find a non-negative function that assigns each $U$-Steiner subgraph of $G$ total lengths at least $1$ while minimizing the weighted length over all edges. 

Interpreting $w$ as a capacity and $val$ as a flow, the primal LP formulations is strongly reminiscent of the $st$-max-flow formulation as being a set of $st$-paths with $val$ units of flow sent along each path such that the capacity on each edge is respected. For this formulation, at the cost of an $(1+\epsilon)$-approximation,  the well-known Garg-Künnemann MWU framework reduces the  $st$-max flow problem to a sequence of relaxed dual programs that amount to dynamically computing $st$-shortest paths under an increasing length function. This sequence of problems has recently been shown to be solvable via modern shortest paths data structures near-linear time. Thus, yielding approximate $st$-max flow (see \cite{bernstein2022deterministic}).

\cite{abboud2022breaking} exploits this similarity and shows a similar reduction to a sequence of relaxed dual solves. In particular, they suggest the following algorithm. 

\begin{algorithm}[h!]
  \caption{A $(\gamma+O(\protect\epsilon))$-approximation algorithm for Steiner subgraph
packing}\label{algorithm:MWU}
  \SetKwFor{RepTimes}{repeat}{times}{end}
  \SetKwInOut{Input}{Input}
  \SetKwInOut{Output}{Output}
  \SetKwFor{While}{while}{do}{end}
  \SetKwFunction{ExtractMax}{ExtractMax}
  \SetKw{KwLet}{let}
  \Input{Undirected edge-weighted graph $G=(V,E,w)$, terminal set $U\subseteq V$, and
accuracy parameter $0<\epsilon<1$, termination threshold $\tau \in [(1-\epsilon), 1]$.}
  \Output{Feasible $U$-Steiner subgraph packing ${\cal P}$}
  \BlankLine
   $\mathcal{P} \gets \emptyset$ and $\delta\gets (2m)^{-1/\epsilon}$ \\
   \lForEach{$e\in E$}{${\ell}(e) \gets \delta/w(e)$}
   \While{$\sum_{e\in E}w(e)\ell(e)<\tau$}{
      $H\gets$ a $\gamma$-approximate minimum $\ell$-length $U$-Steiner
subgraph\label{enu:oracle} \\
     $v\gets\min\{w(e):\ e\in E(H)\}/ \alpha$ \\
     	add $H$ with $\val(H)\gets v$ into the packing $\mathcal{P}$ \label{enu:pack} \\    \lForEach{$e\in E(H)$}{$\ell(e)\gets\ell(e)\cdot \textsc{Exp}(\frac{\epsilon v}{w(e)})$\label{enu:addlength}}  
    }
  \textbf{return} a scaled-down packing $\mathcal{P}'$, where $\val(H)\gets\val(H)/\log_{1+\epsilon}(\frac{1+\epsilon}{\delta})$ for all $H\in\mathcal{P}$\label{enu:scaling} 
\end{algorithm}

Here, we use the notion of
\emph{$\gamma$-approximate minimum $\ell$-length $U$-Steiner subgraph $H$} which is defined to be a graph $H \subseteq G$ with length $\ell(H) \leq \gamma\cdot\min_{H'\in\mathcal{H}}\ell(H')$. We use the following result from \cite{abboud2022breaking}.

\begin{lemma}[c.f. Lemma 3.3 and 3.10 in \cite{abboud2022breaking}]
\label{lem:approx bound}
The scaled-down packing ${\cal P}'$ that is returned by \Cref{algorithm:MWU} for any $\alpha \geq 1$ is feasible and satisfies $\val(\mathcal{P}') \geq \lambda(U)/2(\gamma + \Theta(\epsilon))$ (for any input $\tau \in [1-\epsilon, 1])$.
\end{lemma}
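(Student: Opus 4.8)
The plan is to run the standard width-bounded Garg--K\"onemann multiplicative-weights analysis on the primal/dual pair $(\boldsymbol{P}_{LP})$/$(\boldsymbol{D}_{LP})$, and then convert the resulting bound on the packing optimum into the stated bound in terms of $\lambda_G(U)$. Write $\beta^\star$ for the common optimum of $(\boldsymbol{P}_{LP})$ and $(\boldsymbol{D}_{LP})$; the one consequence of LP duality I would use repeatedly is that for every length function $\ell\ge 0$ one has $\min_{H\in\mathcal{H}}\ell(H)\le \sum_{e\in E} w(e)\ell(e)/\beta^\star$, since dividing such an $\ell$ by $\min_H\ell(H)$ yields a feasible solution of $(\boldsymbol{D}_{LP})$. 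So the argument splits into three pieces: (i) the structural inequality $\beta^\star\ge \lambda_G(U)/2$; (ii) feasibility of the scaled-down packing $\mathcal{P}'$; and (iii) $\val(\mathcal{P}')\ge \beta^\star/(\gamma+\Theta(\epsilon))$.

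For (i), which I expect to be the conceptual heart, the upper bound $\beta^\star\le\lambda_G(U)$ is immediate because every $U$-Steiner subgraph crosses every Steiner cut, so feasibility caps the packing value by the value of any such cut. The matching lower bound $\beta^\star\ge \lambda_G(U)/2$ is a Steiner analogue of the Nash--Williams--Tutte spanning-tree packing theorem: one shows that a fractional packing of $U$-Steiner trees of value $\lambda_G(U)/2$ exists, e.g.\ via the duality characterization $\beta^\star=\min_{\ell\ge 0}\sum_e w(e)\ell(e)/\min_H\ell(H)$ together with the fact that for any $\ell$ a minimum-$\ell$-length $U$-Steiner tree is a $2$-approximation of the minimum-$\ell$-length spanning tree of $U$ in the shortest-path metric induced by $\ell$. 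This is precisely Lemma~3.3 of \cite{abboud2022breaking}, which I would simply reuse.

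For (ii), the plan is to track, for each edge $e$, the quantity $\ell(e)$ through the run: it starts at $\delta/w(e)$ and is multiplied by $\textsc{Exp}(\epsilon v/w(e))$ whenever a subgraph $H\ni e$ of value $v$ is packed. The only place the hypothesis $\alpha\ge 1$ enters is that $v=\min_{e'\in E(H)}w(e')/\alpha\le w(e)$ for every $e\in E(H)$ (the covering LP has bounded width), so each such update multiplies $\ell(e)$ by at most $\textsc{Exp}(\epsilon)$. Since the loop exits the first time $\sum_e w(e)\ell(e)\ge\tau\le 1$, just before the last iteration $w(e)\ell(e)<1$, hence $w(e)\ell(e)<\textsc{Exp}(\epsilon)$ at termination. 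Writing the final $\ell(e)$ as $(\delta/w(e))\cdot\textsc{Exp}\!\left(\tfrac{\epsilon}{w(e)}\sum_{H\in\mathcal{P}:\,e\in H}\val(H)\right)$ and taking logarithms, this stopping bound rearranges to $\sum_{H\in\mathcal{P}:\,e\in H}\val(H)\le w(e)\log_{1+\epsilon}\frac{1+\epsilon}{\delta}$ --- the final scaling factor of \Cref{algorithm:MWU} is exactly calibrated for this --- so dividing every value by $\log_{1+\epsilon}\frac{1+\epsilon}{\delta}$ makes the total load on every $e$ at most $w(e)$, and $\mathcal{P}'$ is feasible.

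For (iii), let $D(\ell)=\sum_e w(e)\ell(e)$ and let $\ell_i$ be the length function after iteration $i$ (packing $H_i$ with value $v_i$). Using the first-order estimate $\textsc{Exp}(x)\le 1+(1+O(\epsilon))x$ for $x\in[0,\epsilon]$, then the $\gamma$-approximation guarantee $\ell_{i-1}(H_i)\le\gamma\min_H\ell_{i-1}(H)$, then the duality bound above, one gets $D(\ell_i)\le D(\ell_{i-1})\exp\!\left((1+O(\epsilon))\,\epsilon\gamma v_i/\beta^\star\right)$; telescoping from $D(\ell_0)=m\delta$ and using that $D$ reaches at least $\tau$ at the end yields $\tau\le m\delta\,\exp\!\left((1+O(\epsilon))\,\epsilon\gamma\,\val(\mathcal{P})/\beta^\star\right)$, i.e.\ $\val(\mathcal{P})\ge \beta^\star\ln(\tau/(m\delta))/\big((1+O(\epsilon))\epsilon\gamma\big)$. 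Substituting $\delta=(2m)^{-1/\epsilon}$ and $\tau\ge 1-\epsilon$ gives $\ln(\tau/(m\delta))\ge(1-O(\epsilon))\tfrac{1}{\epsilon}\ln(2m)$ while $\log_{1+\epsilon}\frac{1+\epsilon}{\delta}\le(1+O(\epsilon))\tfrac{1}{\epsilon^2}\ln(2m)$, so dividing, $\val(\mathcal{P}')\ge(1-O(\epsilon))\beta^\star/\gamma=\beta^\star/(\gamma+\Theta(\epsilon))$ (using $\gamma=O(1)$); combining with (i) gives the claim. The arithmetic here is routine bookkeeping; the parts that need genuine care are the structural inequality in (i) and the width/stopping argument in (ii).
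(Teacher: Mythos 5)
Your plan is correct, and it is the same route that the paper implicitly relies on: the paper does not reprove this lemma in the body but instead cites Lemmas 3.3 and 3.10 of \cite{abboud2022breaking} and informally notes that the three modifications (termination at $\tau\in[1-\epsilon,1]$, the $\textsc{Exp}(\cdot)$ update in place of $(1+\cdot)$, and scaling the packed value by $\alpha$) do not affect the bound, whereas you carry out the full Garg--K\"onemann analysis explicitly and absorb those modifications into the bookkeeping. Your three pieces line up exactly with what that prior argument does, and your arithmetic checks out, including the duality bound $\min_H \ell(H) \le \sum_e w(e)\ell(e)/\beta^\star$, the telescoping of $D(\ell_i)$, and the final substitution of $\delta=(2m)^{-1/\epsilon}$ and $\tau\ge 1-\epsilon$.

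One small omission worth closing: your width/stopping argument in (ii) treats $H$ as a simple subgraph in which each edge is updated once, so the only use of $\alpha\ge 1$ is $v=\min_{e'}w(e')/\alpha\le w(e)$. The paper, however, explains the factor $\alpha$ precisely in terms of allowing $H$ to be a multigraph with edge $e$ appearing $k\le\alpha$ times, in which case the accumulated exponent in one iteration is $\epsilon k v/w(e)$. Your bound survives, since $k v/w(e)\le\alpha\cdot (w(e)/\alpha)/w(e)=1$, and the feasibility bookkeeping should then track $\sum_{H\in\mathcal{P}}\mathrm{mult}_H(e)\cdot\val(H)$ rather than $\sum_{H\ni e}\val(H)$ as you wrote; but the conclusion is unchanged. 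It would be cleanest to state the width bound directly as ``the exponent accumulated on $\ell(e)$ in a single iteration is at most $\epsilon$'' rather than appealing only to $v\le w(e)$, so that the multigraph case is covered without extra words.
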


Our statement of \Cref{algorithm:MWU} and \Cref{lem:approx bound} slightly differs from the statements in \cite{abboud2022breaking} in the following ways:
\begin{itemize}
    \item In \cite{abboud2022breaking} the algorithm uses $\tau = 1$, however, it is straightforward to show that any value $\tau \in [(1-\epsilon), 1]$ suffices to achieve the lower bound on the packing value of $\mathcal{P}'$ and does not affect feasibility of the solution.
    
    \item We increase variables $\ell(e)$ by factor $\textsc{Exp}(\frac{\epsilon v}{w(e)})$ instead of $(1+\frac{\epsilon v}{w(e)})$. This allows us to characterize the length function $\ell(e)$ after many increases $v_1, v_2, \ldots, v_k$ by $\ell(e) = \textsc{Exp}(\epsilon \cdot \sum_{i=1}^k v_i)$.
    
    The change affects the analysis of the MWU only very slightly since $1 + x \approx e^x$ for $x$ being small (in our case, we always have $x \leq \epsilon$). See \cite{bernstein2022deterministic} for a similar MWU analysis with the adapted length function update rule.
    
    \item For technical reasons, we scaled down the amount of value $v$ added in each iteration by a factor $\alpha$. We require this change as we allow for the rest of the section for $U$-Steiner subgraphs  $H$ to be multi-graphs, i.e. to select edges of $G$ multiple times, where edge $e' \in E(G)$ occurs in subgraph $H$ up to $\alpha$ times. 
    % \Weixuan{Why do we need the upper bound for the number of copies here? Feasibility follows from $k\geq 1$; approximated optimality of the packing should follow from the approximated minimality of the $l$-length $U$-Steiner subgraph.}
    
    This ensures that every update of an edge $e'$ that occurs $k \leq \alpha$ times in $H$ increases the length $\ell(e')$ by $\textsc{Exp}(\frac{\epsilon \cdot k \cdot v}{w(e')}) \leq \textsc{Exp}(\frac{\epsilon \cdot \alpha \cdot w(e') / \alpha}{{w(e')}}) = \textsc{Exp}(\epsilon)$ per iteration of the outer while-loop. Thus the exponent increases is bounded by at most $\epsilon$ as previously promised, which is important to ensure correctness of the MWU analysis which exploits that $1+2x \geq e^x$ for $x \leq 1$. Naturally, decreasing $v$ by $\alpha$ in every iteration results in slower convergence of the algorithm, however, as in our application $\alpha$ is of subpolynomial size, this slowdown is acceptable.
\end{itemize}

It remains to implement \Cref{algorithm:MWU} efficiently and for small $\gamma$ and to extract a vertex sparsifier from a packing $\mathcal{P}$.

\paragraph{An Efficient Implementation of  \Cref{algorithm:MWU}.} \Cref{algorithm:MWU} requires for efficient implementation a data structure that maintains a $\gamma$-approximate minimum $\ell$-length $U$-Steiner subgraph where $\ell$ is point-wise monotonically increasing over time.

One of the key observations in \cite{abboud2022breaking} is that one can reduce the problem of maintaining a $(2+O(\tau))$-approximation of the minimum $\ell$-length $U$-Steiner subgraph to the problem of maintaining $(1+\tau)$-approximate single-source shortest-paths (SSSP) in a graph undergoing edge deletions. They then use a data structure for this problem by Bernstein, Probst Gutenberg and Saranurak \cite{bernstein2022deterministic} which maintains an approximate SSSP tree in a graph $H$ that is not a subgraph of $G$ but can be embedded into $G$ (although with very high congestion). We use the following data structure by Kyng, Meierhans and Probst Gutenberg \cite{kyng2023dynamic} in-lieu which achieves almost the same guarantees but additionally, maintains the SSSP tree $T$ in a forest graph $F$ that embeds with low congestion $\alpha$ into the original graph $G$ w.r.t. $\ell$.\footnote{Another way of thinking about this is that $F$ has up to $\alpha =m^{o(1)}$ copies of the vertices and edges in $G$ that are patched together such that each path in $F$ maps to a path in $G$.}  This is crucial to obtain an efficient deterministic implementation of \Cref{algorithm:MWU} as the convergence of the MWU algorithm now scales linearly in $\alpha$. We now give formal definitions of embeddings of low congestion and state the theorem from \cite{kyng2023dynamic} that we use in our algorithm.

\begin{definition}[Graph Embedding]
Given two graphs $H, G$, and a vertex map $\Pi_{V(H) \mapsto V(G)}$ that maps every vertex in $H$ to a vertex in $G$, we say that a map $\Pi_{H \mapsto G}$ is a graph embedding of $H$ into $G$ if it maps each $e = (u,v) \in H$ to a $xy$-path $\Pi_{H \mapsto G}(e)$ in $G$ for $x = \Pi_{V(H) \mapsto V(G)}(u)$, and $y = \Pi_{V(H) \mapsto V(G)}(v)$. 
\end{definition}

\begin{definition}[Edge Congestion of Paths and Embeddings]\label{def:graphEmbeddingEdgeCong}
Given a set of paths $P_1, P_2, \ldots, P_k$ in graph $G$, we define the edge congestion induced by a collection of paths for an edge $e \in E(G)$ by \[\econg( \{ P_i \}_{i \in [1, k]}, e) = \sum_{i \in [1, k]} \sum_{e' \in P_i} \vecone[e = e'].\]  We define the edge congestion by $\econg( \{ P_i \}_{i \in [1, k]}) = \max_{e \in E} \econg( \{ P_i \}_{i \in [1, k]}, e)$. We define the congestion of a graph embedding $\Pi_{H \mapsto G}$ by $\econg(\Pi_{H \mapsto G}) = \econg(\im(\Pi_{H \mapsto G}))$.
\end{definition}
\begin{comment}
\begin{definition}[Hierarchical Forest/ Tree]\label{def:hierarchicalTree}
Given a graph $G$ and a vertex set $X \subseteq V(G)$, we say a forest $F$ associated with vertex maps $\Pi_{X \mapsto V(F)}$, $\Pi_{V(F) \mapsto V(G)}$ and graph embedding $\Pi_{F \mapsto G}$ if $X \subseteq \im(\Pi_{V(F) \mapsto V(G)})$ is a \emph{hierarchical forest} over $X$. If $X = V(G)$, we also say that $F$ is a hierarchical forest over $G$.

We often refer to the $V(T)$ as a \emph{node set} to distinguish from the vertex set $V(G)$. We say a node $x \in V(T)$ is identified with vertex $v \in V(G)$ if $v = \Pi_{V(F) \mapsto V(G)}(x)$.

We also say that $F$ is a hierarchical tree if $F$ is a tree graph. We say that $F$ is a \emph{flat} hierarchical forest/ tree if $\Pi_{F \mapsto G}$ is a flat graph embedding. 
\end{definition}
\end{comment}

\begin{theorem}[c.f. Theorem 5.1 in \cite{kyng2023dynamic}]\label{thm:mainSSSPGeneral} Given graph $G = (V,E, \ell)$ with $L \leq n^4$ that undergoes a sequence of $\Delta$ edge length increases, a dedicated source vertex $s \in V$, and an accuracy parameter $\epsilon = \Omega(1/\polylog(m))$. Then, there is an algorithm that maintains a forest $F$ along with injective vertex maps $\Pi_{V(G) \mapsto V(F)}, \Pi_{V(F) \mapsto V(G)}$ (such that $ \Pi_{V(F) \mapsto V(G)}\circ  \Pi_{V(G) \mapsto V(F)}=\operatorname{id}_{V(G)}$) and an embedding $\Pi_{F \mapsto G}$ that maps each edge in $F$ to a single edge in $G$ such that, for some $\gamma_{SSSP} = e^{O(\log^{83/84} m \log\log m)}$, at any time:
\begin{enumerate}
    \item \label{prop:distancesPreserved} for every $v \in V$, if $\dist_G(s,v) < \infty$\footnote{In \cite{kyng2023dynamic}, this condition is stated as $\dist_G(s,v) \leq n^5$ but for $L \leq n^4$, these conditions are equivalent.}, then $\Pi_{F \mapsto G}(\pi_F(\Pi_{V(G) \mapsto V(F)}(s), \Pi_{V(G) \mapsto V(F)}(v))) \leq (1+\epsilon) \dist_G(s,v)$, i.e. the unique path between the two nodes in $F$ (denoted by $\pi_F(\cdot,\cdot)$) that vertices $s$ and $v$ are mapped to has length at most $(1+\epsilon)\dist_G(s,v)$, and
    \item \label{prop:ssspLowCongest} $\econg(\Pi_{F \mapsto G}) \leq \gamma_{SSSP}$.
\end{enumerate}
The algorithm maintains $F$ and the associated maps $\Pi_{V(G) \mapsto V(F)}, \Pi_{V(F) \mapsto V(G)}$ and $\Pi_{F \mapsto G}$ explicitly (in fact, $\Pi_{V(G) \mapsto V(F)}$ is constant) and the total number of changes to $F$ and these maps is at most $m \cdot \gamma_{SSSP}$. The algorithm runs in time $O(\Delta) + m \cdot \gamma_{SSSP}$.
\end{theorem}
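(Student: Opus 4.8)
Since edge–length increases only increase distances, this is a decremental problem, and the plan is the standard one for deterministic decremental SSSP, augmented so that the maintained object is an explicit forest embedding with low congestion. Fix a depth $L = (\log m)^{\Theta(1)}$ and build a chain $G = G_0, G_1, \dots, G_L$ in which $G_{j+1}$ is a distance-preserving vertex sparsifier of $G_j$: decompose the edges of $G_j$ into clusters of small effective diameter $h = m^{o(1)}$, each carrying a set of boundary vertices; for every cluster maintain a low-congestion \emph{routing forest} realising boundary-to-boundary distances up to a $(1+\epsilon)$ factor; then contract each cluster onto its boundary, keeping an edge between two boundary vertices of length their within-cluster approximate distance, together with the original inter-cluster edges. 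Arrange the decomposition so that $\abs{V(G_{j+1})} \le \abs{V(G_j)}/\kappa$ for $\kappa = m^{\Theta(1/L)}$; then $\abs{V(G_L)} = m^{o(1)}$ and an exact SSSP tree of $G_L$ from the image of $s$ can simply be recomputed after each update. The forest $F$ is then obtained by \emph{un-sparsifying}: starting from the SSSP tree of $G_L$, replace each level-$j$ sparsifier edge by the corresponding path through the level-$(j-1)$ cluster routing forest, so after $L$ substitutions every edge of $F$ is (a copy of) an edge of $G_0 = G$, which defines $\Pi_{F \mapsto G}$, while the bottom-level real vertices of $G$ supply the (constant) map $\Pi_{V(G) \mapsto V(F)}$. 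The congestion of $\Pi_{F\mapsto G}$ is the product over the $L$ levels of the per-cluster routing congestion $\gamma_0 = m^{o(1)}$, and choosing $L$ to balance $\gamma_0^{L}$ against the cost of building the decomposition (which itself recursively invokes flow/SSSP subroutines) is what pins the bound to $\gamma_{SSSP} = e^{O(\log^{83/84} m\log\log m)}$.

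\paragraph{The sparsification step.} To produce each $G_{j+1}$ deterministically and dynamically I would invoke a length-weighted \emph{boundary-linked expander decomposition}: a deterministic near-linear-time routine partitioning the edges of $G_j$ into clusters that are expanders at length scale $h$ relative to their boundary, with total boundary size $\abs{V(G_j)}/\kappa$, and which is \emph{stable} in the sense that an edge-length increase touches only $m^{o(1)}$ clusters before the next scheduled rebuild. Expansion at scale $h$ gives two properties at once: a boundary-to-boundary routing of stretch $1+\epsilon$ (so $G_{j+1}$ preserves $G_j$-distances up to $1+\epsilon$) and congestion at most $\gamma_0$ (so lifting a $G_{j+1}$-edge back into $G_j$ congests each $G_j$-edge by at most $\gamma_0$). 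These distortions compound multiplicatively across the $L$ levels; rescaling $\epsilon \leftarrow \epsilon/(2L)$ keeps the end-to-end stretch at the required $1+\epsilon$, while the congestions multiply to $\gamma_0^{L} = \gamma_{SSSP}$, and one checks the compounded stretch still satisfies property~\ref{prop:distancesPreserved} and that no spurious finite distance is created (clusters never merge components, so $\dist_G(s,v) < \infty$ is preserved through the chain).

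\paragraph{Handling edge-length increases by lazy rebuilding.} Monotonicity of lengths is used twice. First, a cluster routing forest need only change when some boundary-to-boundary distance inside it has grown by a $(1+\epsilon)$ factor; over the length range $[1, n^4]$ this happens $O(\epsilon^{-1}\log n)$ times per pair, which bounds the bottom-level recourse. Second, an increase to $\ell(e)$ lies in one bottom-level cluster, so locally updating (or rebuilding) that cluster emits $m^{o(1)}$ boundary-to-boundary length increases into $G_1$, which propagate upward. A batched rebuilding schedule — rebuild level $j$ in full once it has absorbed $\Theta(\abs{V(G_j)}\,\epsilon^{-1})$ updates — makes the amortised cost per update $m^{o(1)}$, bounds the total number of changes to $F$ and to the maps $\Pi_{V(G)\mapsto V(F)}, \Pi_{V(F)\mapsto V(G)}, \Pi_{F\mapsto G}$ by $m\cdot\gamma_{SSSP}$, and gives total running time $O(\Delta) + m\cdot\gamma_{SSSP}$ after accounting for the $O(\log n)$-per-change overhead of splicing cluster routing forests inside a link-cut/top-tree structure~\cite{sleator1981data} that represents $F$ explicitly.

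\paragraph{Main obstacle.} The crux is the deterministic dynamic length-weighted boundary-linked expander decomposition achieving \emph{simultaneously} size reduction by $\kappa = m^{\Theta(1/L)}$, low-congestion $(1+\epsilon)$-approximate internal routing, small effective diameter, near-linear construction time, and stability under edge-length increases; it is precisely the tension between these parameters and the recursion depth needed to build the decomposition that forces the $m^{o(1)}$ slack and fixes the $\log^{83/84} m$ exponent. A secondary, more structural point is verifying that un-sparsification yields an actual \emph{forest} with the single-edge-to-single-edge embedding: this relies on each $G_{j+1}$ being a minor of $G_j$ (clusters are contracted to supernodes before being passed upward, so sparsifier edges are internally vertex-disjoint), on the cluster routing objects being genuine trees so that substituting tree paths for edges along an acyclic structure cannot create a cycle, and on bounding, per cluster, how many higher-level edge-substitutions route through a fixed $G$-edge — which is exactly the congestion accounting that yields $\gamma_{SSSP}$.
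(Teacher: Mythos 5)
There is nothing in the paper to compare your argument against: Theorem~\ref{thm:mainSSSPGeneral} is not proved in this paper at all. It is imported as a black box from Theorem~5.1 of \cite{kyng2023dynamic}, the only in-paper content being the footnoted observation that for $L \leq n^4$ the condition $\dist_G(s,v) \leq n^5$ of the original statement is equivalent to $\dist_G(s,v) < \infty$, plus the later (separate) reduction that handles lengths outside $[1,n^4]$ by rounding and periodic rebuilding. So the correct ``proof'' here is a citation, which is what the paper gives.

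Judged as a standalone attempt to re-derive the cited result, your proposal is a plausible roadmap of the relevant machinery (a hierarchy of vertex sparsifiers built from length-weighted boundary-linked expander decompositions, lazy rebuilding exploiting monotonicity of the lengths, and un-sparsification to obtain an explicit forest embedding), but it is not a proof. The entire technical weight sits on the component you yourself label the ``main obstacle'': a deterministic, dynamic, length-weighted, boundary-linked expander decomposition that simultaneously achieves polynomial size reduction, $(1+\epsilon)$-stretch internal routing at $m^{o(1)}$ congestion, small effective diameter, near-linear construction time, and stability under length increases. That object is essentially the main result of \cite{kyng2023dynamic} (and the deterministic decremental-SSSP line it builds on), and nothing in your sketch constructs it; in particular the specific bound $\gamma_{SSSP}=e^{O(\log^{83/84} m \log\log m)}$ cannot be extracted from your outline, since it arises from the concrete parameter balancing inside that data structure rather than from a generic choice of $L$ and $\kappa$. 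You also assert, rather than establish, the two properties this paper actually leans on downstream: that the maintained object is a genuine forest each of whose edges maps to a \emph{single} edge of $G$ (your acyclicity/minor argument across levels is only gestured at, and must survive cluster rebuilds and splicing), and that the \emph{total} recourse to $F$ and to the maps is $m\cdot\gamma_{SSSP}$ independently of $\Delta$, which matters here because the MWU application feeds in up to $m^{1+o(1)}\log W$ length increases. As written, the proposal defers exactly the hard part, so it does not substitute for the citation.
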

\begin{remark}\label{rmk:distanceEstimates}
The algorithm can, within the same asymptotic time, explicitly maintain for each vertex $v$ with $\dist_G(s,v) \leq \infty$, an approximate distance estimate $\hat{d}(v)$ such that at all times $\dist_G(s,v) \leq \hat{d}(v) \leq (1+\epsilon)^2 \dist_G(s,v)$ and otherwise sets $\hat{d}(v) = \infty$.
\end{remark}

Note that while $F$ is a forest graph, the only relevant tree in $F$ is the tree $T_s$ that contains the source vertex $s$ as all vertices $v \in V$ with $\dist_G(s,v) < \infty$ are required to have their image in the same tree $T_s$ by the guarantees stated in \ref{prop:distancesPreserved} of \Cref{thm:mainSSSPGeneral}. Since the map $\Pi_{V(G) \mapsto V(F)}$ is constant, we henceforth identify the vertices $V$ with the vertices in $\Pi_{V(G) \mapsto V(F)}(V)$, i.e. we often use $v$ to represent both the vertex $v \in V(G)$ and the vertex $\Pi_{V(G) \mapsto V(F)}(v)$. Analogously, we often write $w(e)$ for edge $e \in F$ in-lieu of $w(\Pi_{F \mapsto G}(e))$.

We now give the algorithmic description of the MWU implementation using the data structure from \Cref{thm:mainSSSPGeneral}. While \Cref{thm:mainSSSPGeneral} assumes all edge weights to be in $[1, n^4]$, our description applies the data structure to a graph with weights in $\{0\} \cup [1, m^{O(\log(n)/\epsilon)}]$ for some constant $\epsilon > 0$. Towards the end of the section, we show that this limitation of the data structure can be remedied without increasing the runtime asymptotically by a relatively straightforward reduction. We thus ignore this technicality until then.

\begin{definition}
For graph $G$ and length function $\tilde{\ell}$, we denote by $G_{ \tilde{\ell}} = (V \cup \{s^*\}, E \cup \{ (s^*, u) \;|\; u \in U\}, \tilde{\ell})$ the graph $G$ with an additional dummy source $s^*$ and a zero length edge from $s^*$ to each vertex $u \in U$.
\end{definition}

Throughout our implementation, we will maintain a length function $\tilde{\ell}$ to approximate the 'true' length function $\ell$, which is not maintained by our algorithm but used in our theoretical analysis.
With the length function $\tilde{\ell}$, we also maintain:
\begin{itemize}
    \item a data structure $\mathcal{D}$ as in \Cref{thm:mainSSSPGeneral} (without the restriction on lengths, as we discuss above) on the graph $G_{\tilde{\ell}}$ undergoing edge length increases. We let henceforth $F$ denote the forest (undergoing edge/vertex insertions and deletions) maintained by the data structure $\mathcal{D}$ and $\hat{d}(v)$ the distance estimate for vertex $v \in V$.
    \item  an auxiliary graph $A = (\hat{V}, \hat{E})$ and length function $\hat{\ell}$ defined as follows:
    \begin{itemize}
        \item $\hat{V}$ consists of the vertex set $V(F)$ except for the dummy vertex $s^*$ and all of its copies, i.e. the set $V(F) \setminus \Pi^{-1}_{V(F) \mapsto V(G_{\tilde{\ell}})}(s^*)$,
        \item $\hat{E}$ consists of two types of edges (associated with lengths $\hat{\ell}$):
        \begin{itemize}
            \item each edge $e$ in the induced graph $F[\hat{V}]$ is added to $\hat{E}$ and assigned length $0$ under $\hat{\ell}$, and
            \item for each edge $e = (u,v) \in E$, we add an edge $\hat{e} = (\Pi_{V(G) \mapsto V(F)}(u), \Pi_{V(G) \mapsto V(F)}(v))$ with length $\hat{\ell}(\hat{e}) = \hat{d}(u) + \tilde{\ell}(e) + \hat{d}(v)$.
        \end{itemize}
    \end{itemize}
    We let $\Pi_{A \mapsto G}$ denote the graph embedding that maps each edge $e$ in $A$ that originated from the induced graph $F[\hat{V}]$ to $\Pi_{F \mapsto G_{\tilde{\ell}}}(e)$ and every edge $\hat{e}$ that originated from an edge $e \in E$ back to the edge $e$.
    \item a minimum spanning forest (MSF) $M$ of $A$ maintained by running the algorithm of Holm et al. \cite{holm2001poly} on the fully-dynamic graph $A$.\footnote{Technically speaking, the algorithm in \cite{holm2001poly} is only described for maintaining a minimum spanning tree and for fully-dynamic graphs undergoing only edge updates, not vertex updates. Both of these limitations can however be lifted without an asymptotic increase in running time by standard techniques.} 
\end{itemize}
The auxiliary graph plays a similar role to the helper graph defined in \cite{abboud2022breaking}. 
We show below that the minimum spanning forest of $A$ approximates the minimum $U$-Steiner subgraph of $G$ w.r.t. $\ell$. Note that since $F$ contains multiple copies of each edge in $G$, and since we maintain the MST on a graph containing $F$, we might have each edge in $G$ occurring multiple times in $M$ (however, at most $\gamma_{SSSP}$ times). The claim below shows that a subgraph $M^U$ of $M$ exists that when mapped back to $G$, i.e. $\ell(\Pi_{A \mapsto G}(M^U)$ forms a good approximation of the minimum $\ell$-length $U$-Steiner subgraph. In fact, even if one maps $M^U$ point-wise, which means that copies of edges $e$ in $G$, are accounted for with cost $\ell(e)$ times the number of occurrences in $M^U$, the length remains competitive. 

\begin{claim}\label{clm:miIsApproxSteinerSubgraph}
For $F,A,M$ defined above, let $M^U$ be the graph obtained from $M$ by deleting every edge that is not on a path between two vertices in $\Pi_{V(G_{\tilde{\ell}}) \mapsto V(F)}(U)$. 
If for all edges $e \in E,$ we have $\frac1{1+\epsilon}\ell(e) \leq \tilde{\ell}(e) \leq (1+\epsilon)\ell(e)$ for some $\epsilon\in (0,1)$, then $\Pi_{A \mapsto G}(M^U)$ is a $(2+\Theta(\epsilon))$-approximate minimum $\ell$-length $U$-Steiner subgraph. Moreover, $\sum_{e \in M^U} \ell(\Pi_{A \mapsto G}(e))$ is no more than $(2+\Theta(\epsilon))$ times the cost of the minimum $\ell$-length $U$-Steiner subgraph.
\end{claim}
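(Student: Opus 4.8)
The plan is to recast the construction as Mehlhorn's sparse distance-network heuristic for Steiner trees, run on approximate data, and then bound its quality in the usual way, paying an extra $(1+\epsilon)^{O(1)}$ for the approximations. The structural observation that makes this work is that all type-$1$ edges of $A$ (those inherited from $F[\hat V]$) have $\hat\ell$-length $0$, so by Kruskal the minimum spanning forest $M$ contains \emph{every} type-$1$ edge, i.e. all of $F[\hat V]$, and then a set of type-$2$ edges connecting its components. Since every non-dummy edge of $G_{\tilde\ell}$ has positive $\tilde\ell$-length while the edges at $s^*$ have length $0$, \Cref{prop:distancesPreserved} forces the $F$-path from $s^*$ to each terminal $t\in U$ to be a single dummy edge, so the components of $F[\hat V]$ are exactly the subtrees of $F$ hanging below copies of $s^*$, each rooted at the canonical image $\Pi(t)$ of a single terminal, with $\hat d(t)=0$. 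Call these components \emph{regions}, think of them as approximate Voronoi cells, and let $\nu(v)$ be the terminal whose region contains $\Pi(v)$ (so $\dist_{\tilde\ell}(\nu(v),v)\le(1+\epsilon)\min_{t\in U}\dist_{\tilde\ell}(t,v)$ by \Cref{prop:distancesPreserved}). Contracting each region to its root turns $M$ into the minimum spanning tree of the multigraph $\hat N$ on $\Pi(U)$ that has, for each $(u,v)\in E$ with $\nu(u)\ne\nu(v)$, an edge $(\Pi(\nu(u)),\Pi(\nu(v)))$ of length $\hat d(u)+\tilde\ell(u,v)+\hat d(v)$; this is precisely Mehlhorn's network, with $\tilde\ell$ and the approximate nearest-terminal distances $\hat d$ in place of the exact ones, and $M^U$ is exactly its un-contraction.

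First I would dispose of the easy part, that $\Pi_{A\mapsto G}(M^U)$ is a genuine $U$-Steiner subgraph. By \Cref{prop:distancesPreserved} every vertex of $G$ is reachable from $s^*$ in $G_{\tilde\ell}$, so all canonical images lie in the component of $s^*$ and are partitioned among the regions; since $G$ is connected, the type-$2$ edges connect all regions, so $A$ is connected, $M$ is a spanning tree, and $M^U$ is a subtree of $M$ connecting all of $\Pi(U)$. Because $\Pi_{V(G)\mapsto V(F)}$ is injective into the non-$s^*$ nodes, every edge of $M^U$ embeds to a genuine edge of $G$, so $\Pi_{A\mapsto G}(M^U)$ is a connected subgraph of $G$ containing $U$, as required.

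For the cost bound it suffices, since $\ell(\Pi_{A\mapsto G}(M^U))\le\sum_{e\in M^U}\ell(\Pi_{A\mapsto G}(e))$, to prove the single inequality $\sum_{e\in M^U}\ell(\Pi_{A\mapsto G}(e))\le(1+\epsilon)^{2}\,\hat\ell(M^U)$ together with $\hat\ell(M^U)=\mathrm{MST}(\hat N)\le(2+\Theta(\epsilon))\,\OPT$, where $\OPT$ is the length of the minimum $\ell$-length $U$-Steiner subgraph. For the first inequality I would split $M^U$ into its type-$2$ and type-$1$ edges. A type-$2$ edge $\hat e$ from $(u,v)\in E$ contributes $\ell(u,v)\le(1+\epsilon)\tilde\ell(u,v)\le(1+\epsilon)\hat\ell(\hat e)$. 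Inside a region $R$ rooted at $\Pi(t_R)$, the type-$1$ edges retained in $M^U$ form the minimal subtree of the ($F$-)shortest-path tree of $R$ spanning $\Pi(t_R)$ together with the ``exit points'' $z$ of $R$ (endpoints of type-$2$ edges of $M^U$ landing in $R$), so every such edge lies on a root-to-exit-point path; hence the total $\ell$-length of those edges is at most $\sum_{z}(1+\epsilon)^2\hat d(z)$ over the exit points $z$ of $R$ (using that this path, in $F$, has $\tilde\ell$-length at most $(1+\epsilon)\dist_{\tilde\ell}(s^*,z)\le(1+\epsilon)\hat d(z)$, and $\ell\le(1+\epsilon)\tilde\ell$). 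Summing over all regions, and noting that a type-$2$ edge of $M^U$ has its two endpoints in two distinct regions (else it would close a cycle with $F[\hat V]$), each type-$2$ edge $\hat e$ from $(u,v)$ is charged exactly its two terms $\hat d(u)$ and $\hat d(v)$, so the type-$1$ contribution is at most $(1+\epsilon)^2\sum_{\text{type-}2\ \hat e\in M^U}(\hat d(u)+\hat d(v))$, and adding the type-$2$ contribution telescopes to at most $(1+\epsilon)^2\hat\ell(M^U)$. For $\mathrm{MST}(\hat N)\le(2+\Theta(\epsilon))\OPT$ I would run the textbook argument: double a minimum $\ell$-length $U$-Steiner tree and take an Euler tour, a closed walk of length $2\,\OPT$ through all terminals; the sequence of region crossings it induces is a connected subgraph of $\hat N$ whose weight is at most $(1+\Theta(\epsilon))$ times the walk's length, the slack absorbing $\tilde\ell\le(1+\epsilon)\ell$, $\hat d(x)\le(1+\epsilon)^2\dist_{\tilde\ell}(s^*,x)$, and the fact that $\nu$ is only an approximate nearest-terminal assignment.

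The main obstacle, and where I would be most careful, is the type-$1$ charging step. It hinges on correctly identifying the regions as the $\hat\ell$-zero components of $A$ and then establishing: (i) each region root is a terminal image with $\hat d=0$; (ii) no type-$2$ edge of $M^U$ has both endpoints in one region; (iii) the edges of $M^U$ inside a region are exactly those on root-to-exit-point paths in the $F$-shortest-path tree of that region, so each can be charged, with one $(1+\epsilon)$ factor from $\ell$ versus $\tilde\ell$ and one from $F$ being only an approximate shortest-path tree, against the $\hat d$-terms of the type-$2$ edges below it, each type-$2$ edge being used for its two endpoints only. One should also verify the routine points that the MSF behaves as described under contracting zero-length edges and under vertex updates; these are standard. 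All the $(1+\epsilon)$ factors compose into the promised $(2+\Theta(\epsilon))$.
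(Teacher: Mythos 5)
Your proposal is correct and follows essentially the same route as the paper's proof: both identify the zero-$\hat\ell$ components of $A$ as Voronoi-like regions rooted at terminal images with $\hat d=0$, both reduce the quality bound to the classical $2\cdot\OPT$ bound for Mehlhorn's distance network (the paper via the transitive-closure MST lemma of \cite{abboud2022breaking}, you via doubling and an Euler tour, which is the same argument), and both then compare $\sum_{e\in M^U}\ell(\Pi_{A\mapsto G}(e))$ to $\hat\ell(M^U)$ by decomposing $M^U$ into the inter-region (type-$2$) edges and root-to-exit-point type-$1$ paths and charging the latter against the $\hat d$-terms of the former. The minor imprecision that \emph{every} component of $F[\hat V]$ is a subtree below $s_F$ (there may be trees in $F$ disjoint from $T_s$) is harmless, since the definition of $M^U$ discards those components, exactly as the paper's proof handles them.
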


\begin{proof}
    We first note that it suffices to prove the results for $\tilde{\ell}$ as it is a $(1+\epsilon)$-approximation of $\ell$. Throughout the proof, we denote by $\dist_{G}$ the distance function on $G$ w.r.t. $\tilde{\ell}$ and denote by $\OPT$ the length of the minimum $U$-Steiner subgraph of $G$ w.r.t. $\tilde{\ell}$.
    
    We start by constructing a spanning forest of $A$ with length no more than $(2+\Theta(\epsilon))\OPT$.
    Let $V_F$, $U_F$, and $s_F$ be the images of $V$, $U$, and $s^*$ under the vertex map $\Pi_{V(G_{\tilde{\ell}})\to V(F)}$, respectively. Let $T_s$ be the tree in $F$ that contains $s_F$ and we consider $s_F$ as the root of $T_s$. 
    By \Cref{thm:mainSSSPGeneral}, all vertices in $V_F$ are connected to $s_F$ in $F$, so $V_F\subseteq T_s$. 

    We further claim that $U_F$ is exactly the set of children of $s_F$. Indeed, we have
        \begin{enumerate}[label=(\arabic*), nosep]
            \item By construction of $G_{\tilde{\ell}}$, the edge connecting $s^*$ and any vertex in $U$ has length 0.
            Therefore, by Theorem \ref{thm:mainSSSPGeneral}, any $u_F$ in $U_F$ is connected to $s_F$ by a path in $F$, whose image in $G$ has zero length. As edges in $E$ have positive lengths, the path $\pi_{F}(s_F,u_F)$ must be a single edge. So vertices in $U_F$ are children of $s_F$ in $F$.
            \item By Theorem \ref{thm:mainSSSPGeneral}, there cannot be an edge in $F$ connecting $s_F$ and $\Pi_{V(G)\to V(F)}(x)$ for $x\in V\setminus U$, because $\Pi_{F\to G}$ maps each edge to an edge in $G$, by Theorem \ref{thm:mainSSSPGeneral}.
        \end{enumerate}
    So the children set of $s_F$ in $T_s$ is $U_F$. By construction, $A$ can thus be decomposed into (1) subtrees of $T_s$, each subtree rooted at a vertex $u_F\in U_F$ and (2) trees other than $T_s$ in $F$. Since the lengths in $A$ of edges in these trees are set to $0$, any two vertices in the same tree are at distance $0$ in $A$. 
    
    With such a partition, we can easily extend any $U_F$-Steiner tree in $A$ to a spanning forest by adding zero-length edges originating from $F[\hat{V}]$. It then remains to construct a $U_F$-Steiner tree in $A$ with length no more than $(2+\Theta(\epsilon))\opt$.
    Let $G^*[U]=(U, E_{G^*[U]}, \ell_{G^*[U]})$ be the transitive closure of $U$ in $G$, i.e., for any vertices $u,v\in U$, there is an edge $(u,v)\in E_{G^*[U]}$ with length $\dist_G(u,v)$. By Lemma 4.2 in \cite{abboud2022breaking}, the MST of $G^*[U]$ has length at most $2\opt$ (This lemma can be proven by simply replacing edges in the MST of $G^*[U]$ by their corresponding shortest paths in $G$). 
    We will prove the following: 
    \begin{itemize}
        \item For any bipartition $(U',U'')$ of $U$ and any $u\in U',v\in U''$, there exists a path $P$ in $A$ connecting $\Pi_{V(G)\to V(F)}(U')$ and $\Pi_{V(G)\to V(F)}(U'')$ with length $(1+\Theta(\epsilon))\dist_{G}(u,v)$. Moreover, $P\cap U_F$ only contains the two endpoints of $P$. 
    \end{itemize}
    To see why this statement leads to a $U_F$-Steiner tree in $A$ with our desired length, we start with $T=\text{MST}_{G^*[U]}$ and let $e=(u,v)$ be an arbitrary edge in $T$. Removing $e$ from $T$ results in a bipartition of $U$. 
    Let $P$ be one such path that connects the two components with length $(1+\Theta(\epsilon))\dist_{G}(u,v)$. Suppose the two endpoints are $\Pi_{V(G)\to V(F)}(u'),\Pi_{V(G)\to V(F)}(v')$. We then update the $T$ by replacing the edge $e$ with a new edge $(u',v')$. By construction, the new $T$ is still a spanning tree of $G^*[U]$. We can repeat until we replace all original edges from $\text{MST}_H$ with new edges. As $T$ is always a spanning tree of $G^*[U]$ throughout the process, the collection of corresponding paths forms a $U_F$-Steiner tree in $A$ with length no more than $(1+\Theta(\epsilon))\ell_{G^*[U]}(\text{MST}_{G^*[U]})\leq  (2+\Theta(\epsilon))\OPT$.
    We then prove the bullet point. 
    Consider $u\in U',v\in U''$, let $(u = t_1, t_2, \ldots, t_r = v)$ be a shortest path connecting $u$ and $v$ in $G$. Let $u_F,v_F$ and $t_{j,F}$ be the images of $u,v$ and $t_j$ under the vertex map $\Pi_{V(G)\to V(F)}$, respectively. We denote the root of the subtree (of $T_s$) containing $t_{j,F}$ by $u_{j,F}$. 
    Let $u_j=\Pi_{V(F)\to V(G_{\tilde{\ell}})}(u_{j,F})\in U$, then there exists some $1\leq j< r$ such that $u_{j}\in U',u_{j+1}\in U''$.
    Let $P_1=\pi_{F}(u_{j,F},t_{j,F})$ and $P_2=\pi_{F}(t_{j+1,F},u_{j+1,F})$ be paths in $F$; let $e'=(t_{j},t_{j+1})$ be the edge in $G$ and $\hat{e}'$ be its induced edge in $A$. Then $P_1-\hat{e}'-P_2$ is a path in $A$ connecting $U'$ and $U''$, and its length is \begin{align*}
        \hat{\ell}(e') &= \hat{d}(t_{j,F})+\tilde{\ell}(e')+\hat{d}(t_{j+1,F})\\
        &\leq (1+\epsilon)^2(\dist_{G_{\tilde{\ell}}}(s^*,t_j)+\tilde{\ell}(e')+\dist_{G_{\tilde{\ell}}}(s^*,t_{j+1}))\\
        &= (1+\epsilon)^2(\dist_{G_{\tilde{\ell}}}(u,t_j)+\tilde{\ell}(e')+\dist_{G_{\tilde{\ell}}}(v,t_{j+1}))\\
        &= (1+\epsilon)^2\dist_{G_{\tilde{\ell}}}(u,v)
    \end{align*}
    This concludes the construction of a spanning forest of $A$ with length no more than $(2+\Theta(\epsilon))\opt$. 

    However, $M$ might not be the same as our construction. Note that $M^U$ has the same length as $M$ (otherwise we can extend $M^U$ to a spanning forest without increasing its length, leading to a spanning forest with length smaller than $M$). So by minimality of $M$, we have \[
    \hat{\ell}(M^U) = \hat{\ell}(M) \leq (2+\Theta(\epsilon))\opt.
    \]
    It remains to prove that mapping $M^U$ to $G$ does not increase the length too much, i.e., \[
    \sum_{e\in M^U}\tilde{\ell}(\Pi_{A\to G}(e))\leq (1+\Theta(\epsilon))\hat{\ell}(M^U).
    \]
    We do this by decomposing $E(M^U)$ into edges originating from $E$ and paths in $F[\hat{V}]$.
    Recall our partition of $A$.
    As $M^U$ is obtained by deleting edges in $M$ not on paths connecting vertices in $U$, it does not contain vertices in trees other than $T_s$.
    Let $\hat{E}_M$ be the edges in $M^U$ that connect distinct subtrees of $T_s$. Note that $\hat{E}_M$ is exactly the set of edges with non-zero lengths (i.e., edges originating from edges in $E$), since vertices inside any subtree can be connected by zero-length paths in $A$. 
    We then observe that the edge set $\hat{E}_M$ forms a spanning tree of $U_F$ if we contract each subtree of $T_s$ to its root; otherwise, we can replace a redundant edge in $\hat{E}_M$ with existing edges and zero-length paths in subtrees, hence decreasing the length of $M^U$, leading to a contradiction.
    $M^U$ thus consists of \begin{enumerate}[label=(\arabic*), nosep]
        \item $|U|-1$ edges in $\hat{E}_M$ connecting different subtrees;
        \item for each edge $\hat{e}=(x_F,y_F)$ connecting two subtrees rooted at $u_F$ and $v_F$, two paths $\pi_{A}(x_F,u_F)$ and $\pi_{A}(y_F,v_F)$ in $F[\hat{V}]$.
    \end{enumerate}
    Let $E_M$ be the edge set in $G$ inducing $\hat{E}_M$ in $A$ and $\mathcal{Q}$ be the set of such paths. We then have \[
    \sum_{\hat{e}\in M^U}\tilde{\ell}(\Pi_{A\to G}(\hat{e}))=\sum_{e\in E_M}\tilde{\ell}(e)+\sum_{P\in \mathcal{Q}} \tilde{\ell}(\Pi_{A\to G}(P)).
    \]
    On the other hand, we have \begin{align*}
        \hat{\ell}(M^U)&=\sum_{\hat{e}\in \hat{E}_M}\hat{\ell}(\hat{e})=\sum_{e\in E_M}\left(\tilde{\ell}(e) + \sum_{x\in e} \hat{d}(x)\right)\geq \sum_{e\in E_M}\left(\tilde{\ell}(e) + \sum_{x\in e} \dist_{G_{\tilde{\ell}}}(s^*,x)\right)\\
        &\geq \sum_{e\in E_M} \tilde{\ell}(e)+\frac{1}{1+\epsilon}\sum_{P\in \mathcal{Q}} \tilde{\ell}(\Pi_{A\to G}(P)) \geq \frac{1}{1+\epsilon}\left(\sum_{e\in E_M} \tilde{\ell}(e)+\sum_{P\in \mathcal{Q}} \tilde{\ell}(\Pi_{A\to G}(P))\right),
    \end{align*}
    where the two inequalities follow from Theorem \ref{thm:mainSSSPGeneral}. Combining the above lines gives us the desired inequality \[
    \sum_{e\in M^U}\tilde{\ell}(\Pi_{A\to G}(e))\leq (1+\Theta(\epsilon))\hat{\ell}(M^U).
    \]
    \end{proof}

Unfortunately, we cannot output $M^U$ as defined in \Cref{clm:miIsApproxSteinerSubgraph} explicitly in each query as the total support of all such graphs might be too big. Instead, we use an implicit representation. In particular, we store $M$ in a dynamic forest data structure with augmented functionality that we henceforth denote by $\mathcal{M}$. 

\begin{restatable}{theorem}{augSTtree}\label{thm:augmentedSTtrees}
Given graph $A = (\hat{V}, \hat{E})$ endowed with weight function $w_A: \hat{E} \mapsto \mathbb{R}^+$, length function $\ell_A: \hat{E} \mapsto \mathbb{R}^+$, congestion threshold function $\tau_A : \hat{E} \mapsto \mathbb{R}^+$, and subgraph $M \subseteq A$, both undergoing batches of edge insertions/deletions constrained such that the updates ensure that $M \subseteq A$ and remains a forest at all times. Let terminal set $U \subseteq \hat{V}$ satisfying $U$ is connected at all times and denote by $M^U$ again the subgraph of $M$ with edge $e \in M$ being in $M^U$ iff it is on a path between two terminals in $U$.  There is a data structure $\mathcal{M}$ that processes any edge update to $M$ or $A$, maintains a field $\Delta \val(\hat{e})$ for each edge $\hat{e}$ in graph $A$, and can additionally implement the following operations:
\begin{itemize}
    \item $\textsc{ReturnSteinerTreeLength}()$: Returns $\ell_A(M^U)$.
    \item $\textsc{ReturnSteinerTreeMinWeight}()$: Returns $\min_{\hat{e} \in M^U} w(\hat{e})$.
    \item $\textsc{AddValueOnSteinerTree}(\Delta)$: Sets $\Delta \val(\hat{e}) \gets \Delta \val(\hat{e}) + \Delta$ for each $\hat{e} \in M^U$.
    \item $\textsc{GetCongestedEdge}()$: Returns an edge $\hat{e}$ in $A$ with $\Delta \val(\hat{e}) \geq \tau(\hat{e})$ if there is any.
    \item $\textsc{GetValue}(\hat{e})$: Returns $\Delta \val(\hat{e})$.
    \item $\textsc{ResetValue}(\hat{e})$: Sets $\Delta \val(\hat{e}) \gets 0$.
    % \item {\color{blue}$\textsc{ReturnVertexSparsifier}()$: Returns a vertex sparsifier of $G$ on $U$.}
\end{itemize}
For $\hat{V},\hat{E}$ and the update sequence being of size $\text{poly}(m)$, the data structure requires $O(|\hat{V}|+|\hat{E}|)$ initialization time and any processing of an update to $A$ or $M$ or any other data structure operation takes time $O(\log^2 m)$. 
\end{restatable}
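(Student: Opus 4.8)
The plan is to realize $\mathcal{M}$ as a \emph{compressed} dynamic-tree data structure on the forest $M$, in the spirit of the solid/dashed decomposition of \cite{sleator1981data}. The backbone is a link-cut structure over a contracted copy of $M^U$: its nodes are the terminals of $U$ together with the branching vertices of $M^U$, and each of its edges is a maximal \emph{branch-free, terminal-interior-free piece} of $M^U$ (a path of $M$ all of whose internal vertices have degree $2$ in $M^U$ and are not terminals). Each such piece is itself stored as a balanced search tree over its constituent $M$-edges, augmented so that in $O(\log m)$ time we can report $\sum \ell_A$ along the piece, the minimum $w$ along the piece, add a common offset to the $\Delta\val$-field of every edge of the piece (lazy tags, pushed down on demand), and binary-search along the piece for the first edge meeting a monotone predicate. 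In parallel we keep an ordinary dynamic-tree structure over all of $M$, augmented with the number of terminals on each side of every edge, which lets us in $O(\log m)$ time decide whether a given side of an edge contains a terminal and locate the first branch/terminal vertex along any path. With this representation the whole-tree operations are immediate: $\textsc{ReturnSteinerTreeLength}()$ and $\textsc{ReturnSteinerTreeMinWeight}()$ are single root aggregations over the backbone; $\textsc{AddValueOnSteinerTree}(\Delta)$ is a single lazy offset at the backbone root; $\textsc{GetValue}(\hat e)$ and $\textsc{ResetValue}(\hat e)$ locate $\hat e$'s piece and push down / overwrite its tags (or read/overwrite an explicitly stored value when $\hat e$ is currently outside $M^U$, in which case its $\Delta\val$ is frozen); and $\textsc{GetCongestedEdge}()$ is answered by also maintaining the maximum of $\Delta\val(\hat e)-\tau(\hat e)$ over $M^U$ as a root aggregate (a uniform lazy offset shifts all these values equally) together with an auxiliary max-heap on $\Delta\val-\tau$ over the edges currently outside $M^U$, reporting whichever value is $\ge 0$.

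Next I would handle an update to $M$. The crucial observation is that although a single edge insertion or deletion can change the edge set of $M^U$ by $\Omega(n)$ edges (the pathological toggling example from the overview), it changes the \emph{compressed} $M^U$ by only $O(1)$ pieces. For a deletion of an edge $(u,v)\in M^U$ (the case $(u,v)\notin M^U$ touches only the all-of-$M$ structure): $M^U$ loses the piece containing $(u,v)$ and, at $u$ and at $v$, possibly one "stem" --- the maximal branch-free piece incident to $u$ (resp.\ $v$) that, after the deletion, no longer lies on any terminal--terminal path; each stem is identified with one navigation query, is a single compressed edge to be cut off and discarded, and leaves at most one newly degree-$2$ non-terminal vertex that triggers one \emph{merge} of two pieces (a merge of two balanced search trees). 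Dually, an insertion of $(u,v)$ re-inserts the new edge together with two branch-free "connector" paths --- extracted as opaque pieces, with their aggregates, from the all-of-$M$ structure --- possibly \emph{splitting} an existing piece at each attachment point. In every case the update reduces to $O(1)$ link/cut/split/merge operations on the backbone, each costing $O(\log m)$ at the top level with $O(\log m)$-time search-tree sub-operations, for $O(\log^2 m)$ per update; with a worst-case link-cut implementation (biased search trees) this is worst-case, and with splay trees it is amortized, which suffices since the whole update sequence has size $\mathrm{poly}(m)$ and the bound is in any case absorbed into the caller's $m^{1+o(1)}$ budget. Initialization builds $M$, its terminal counts, $M^U$ and its compressed form by a single traversal, in $O(|\hat V|+|\hat E|)$ time.

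The part I expect to be the main obstacle is the bookkeeping that keeps the compressed backbone, the per-edge $\Delta\val$ tags, and the outside-$M^U$ heap mutually consistent through split and merge operations --- in particular ensuring that (i) an edge's $\Delta\val$ always equals its base value plus every offset ever applied while it was part of $M^U$, which forces careful push-down whenever a piece is split or whenever an edge leaves $M^U$; (ii) the characterization of the $O(1)$ stems and connectors affected by an $M$-update is complete, so that the invariant "$M^U$ is exactly the edge set carried by the backbone" is preserved --- this is a short but fiddly case analysis resting on the nesting/laminarity of subtree terminal sets; and (iii) the temporary disconnection of $U$ inside a batch is handled, which is fine because mid-batch the backbone is simply a forest of compressed trees rather than one, all the surgery above is local, and the operations are only invoked at batch boundaries where $U$ is connected. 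Once these invariants are established, correctness of $\textsc{ReturnSteinerTreeLength}$, $\textsc{ReturnSteinerTreeMinWeight}$, $\textsc{AddValueOnSteinerTree}$, $\textsc{GetCongestedEdge}$, $\textsc{GetValue}$ and $\textsc{ResetValue}$ is immediate from the definition of $M^U$ and the aggregation, and the stated time bounds follow from the $O(\log m)$ cost of each link-cut and search-tree primitive.
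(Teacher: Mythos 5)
Your approach is correct for the stated operations but takes a genuinely different route from the paper's, and it trades away a property the paper needs downstream. You compress $M^U$ into a topological skeleton whose super-edges are opaque branch-free pieces, each a balanced search tree with lazy tags, so that an $M$-update touches only $O(1)$ pieces. Your key structural claim---that a stem (the prefix of edges that drop out of $M^U$ at $u$) never crosses a branching vertex of the old $M^U$ and hence is always a sub-path of the single piece containing $(u,v)$, with at most one merge at the far end---is correct (a branching vertex $w$ would have $\geq 3$ terminal-bearing directions, but on a stem interior it has at most two, a contradiction), and the dual analysis for insertions (two connectors, at most one split per attachment point) is likewise sound. The paper does essentially the opposite: it never compresses, but instead maintains an explicit relaxed superset $R\supseteq M^U$ (the ``solid'' edges of a Sleator--Tarjan solid/dashed decomposition) under the invariant that each component of $R$ is either a component of $M^U$ or a terminal-free path, and bounds the total number of solid/dashed flips over $k$ updates by $O(k\log m)$ via the heavy/light splice potential. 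Both schemes meet the $O(\log^2 m)$ per-operation bound (yours by $O(1)$ backbone surgeries, each a link-cut operation with a nested BST split/merge; the paper's amortized via the splice count), so as a proof of the theorem as literally stated your proposal works, modulo the bookkeeping you flag and one you omit: when a piece leaves $M^U$ you must move it into the outside-$M^U$ structure as an opaque block carrying a subtree-max of $\Delta\val - \tau$, not edge-by-edge, or $\textsc{GetCongestedEdge}$ blows the budget.

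The real issue is \Cref{remark:aug_lct}, stated immediately after the theorem and used explicitly in \Cref{algorithm:vertex_sparsifier}: the paper's construction additionally outputs $R$ \emph{explicitly} with only $O(k\log m)$ total edge insertions/deletions across the whole run, and \Cref{clm:vertex_sparsifier_time} uses exactly this count to bound the size of the vertex sparsifier $G'$ built from $R$'s Euler tour. Your compressed scheme cannot produce that stream: in the overview's toggling example, $M^U$ itself genuinely changes by $\Omega(n)$ edges every $O(1)$ updates, so any explicit list of changes to $M^U$ (or to any $R = M^U$) is $\Omega(n)$ per update even though your compressed view changes by $O(1)$ pieces. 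The paper dodges this precisely by keeping $R$ strictly larger than $M^U$---pruned stems stay solid as cached terminal-free paths and are reattached for free later---and it is the splice potential argument that bounds the resulting churn. If you wanted your proposal to support the rest of the paper you would have to add the same caching idea (never detach pieces, just mark them as tails), at which point the $O(k\log m)$ bound on cache churn would again require the paper's potential analysis; so your route is valid for the theorem in isolation but does not replace the paper's proof in context.
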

\begin{remark} \label{remark:aug_lct}
In fact, we are proving that we can explicitly output a graph $R$ undergoing batches of edge insertions/deletions such that at all times $R \subseteq M$ and $M^U$ is contained in $R$ as a connected component. Further, if $M$ undergoes $k$ edge updates, then the total number of edge updates to $R$ across all batches is only $O(k \log m)$.
\end{remark}

We show how to implement $\mathcal{M}$ by augmenting the classic dynamic forest data structures of Sleator and Tarjan \cite{sleator1981data} in 
\Cref{subsec:implAugSTtree} and henceforth assume the result where we give each edge $\hat{e} \in A$ weight $w_A(\hat{e}) = w(\Pi_{A \mapsto G}(\hat{e})$, length $\ell_A(\hat{e}) = \tilde{\ell}(\Pi_{A \mapsto G}(\hat{e}))$ and set the threshold $\tau(\hat{e}) = w_A(\hat{e})/(4 \cdot \gamma_{SSSP})$. The terminal set is $\Pi_{V(G_{\tilde{\ell}}) \mapsto V(F)}(U)$. 

Given all the different data structures, we are now ready to give an efficient implementation of \Cref{algorithm:MWU}. The pseudocode is given in \Cref{algorithm:MWU_impl}. 

\begin{algorithm}[h!]
  \caption{A $(\gamma+O(\protect\epsilon))$-approximation algorithm for Steiner subgraph
packing}\label{algorithm:MWU_impl}
  \SetKwFor{RepTimes}{repeat}{times}{end}
  \SetKwInOut{Input}{Input}
  \SetKwInOut{Output}{Output}
  \SetKwFor{While}{while}{do}{end}
  \SetKwFunction{ExtractMax}{ExtractMax}
  \SetKw{KwLet}{let}
  \Input{Undirected edge-weighted graph $G=(V,E,w)$, terminal set $U\subseteq V$, and
accuracy parameter $0<\epsilon<1$}
  \Output{Implicitly-Represented feasible $U$-Steiner subgraph packing ${\cal P}$}
  \BlankLine
   $\mathcal{P} \gets \emptyset$ and $\delta\gets (2m)^{-1/\epsilon}$ \\
   {$\alpha \gets 4\gamma_{SSSP}$}\\
   \lForEach{$e\in E$}{$\tilde{\ell}(e) \gets \delta/w(e)$}
   Initialize graph $G_{\tilde{\ell}}$, data structure $\mathcal{D}$ on $G_{\tilde{\ell}}$ yielding forest $R$, auxiliary graph $A$ from $R$, and a dynamic forest data structure $\mathcal{M}$ yielding MSF $M$.\\
   \While{$\sum_{e \in E} w(e) \tilde{\ell}(e) < 1$\label{enu:BeginInnerLoop} }{
            $v \gets \mathcal{M}.\textsc{ReturnSteinerTreeMinWeight}() {/ \alpha}$.\\
            Let $t$ be the number of updates to the graph $A$ since its initialization\\
            \tcc{Implicitly add $P = \Pi_{A \mapsto G}(M^U)$ for $M$ after $t$ updates to the underlying graph $A$ where issued, with $\val(P) = v$.}
            $\mathcal{P} \gets \mathcal{P} \cup \{ (t, v) \}$.\label{enu:AddPacking}\\
            $\mathcal{M}.\textsc{AddValueOnSteinerTree}(v)$\label{enu:AddValueOnSteinerTree}.\\
            
            % \ForEach{$e \in \mathcal{M}.\textsc{ReturnCongestedEdges}()$\label{enu:ReturnCongestedEdges}}{
            \While{$\mathcal{M}.\textsc{GetCongestedEdge}()$ returns an edge $e$\label{enu:ReturnCongestedEdges}}{
                $e' \gets \Pi_{A \mapsto G}(e)$.\\
                $x \gets 0$.\\
                \ForEach{$e'' \in \Pi^{-1}_{A \mapsto G}(e')$}{
                    $x \gets x + \mathcal{M}.\textsc{GetValue}(e'')$; $\mathcal{M}.\textsc{ResetValue}(e'')$.
                    \label{enu:PreImage}
                }
                $\tilde{\ell}(e') \gets \tilde{\ell}(e') \cdot \textsc{Exp}(\frac{\epsilon \cdot x}{w(e')})$. \label{enu:UpdateEdgeLength}
            }
            Update $G_{\tilde{\ell}}$, data structure $\mathcal{D}$, forest $R$, auxiliary graph $\hat{G}$, the MSF $M$, and a dynamic forest data structure $\mathcal{M}$ on $M$. \label{enu:EndInnerLoop} 
        }
    \textbf{return} $M$ and a scaled-down (implicitly represented) packing $\mathcal{P}' = \left\{ \left(t,v / {\left(\log\left(\frac{(1+\epsilon)^2}{\delta}\right)/\epsilon\right)}\right) \;|\; (t,v) \in \mathcal{P}\right\}$\label{enu:scaling_impl} 
\end{algorithm}

\paragraph{Analyzing the Implementation of  \Cref{algorithm:MWU}.} 

We finish this section by proving that our implementation yields a $(4 + \Theta(\epsilon))$-approximation of $\lambda(U)$, is efficient and that the implicit representation is useful to extract a vertex sparsifier.

\begin{claim}\label{clm:miIsApproxSteinerPacking}
The implicit packing $\mathcal{P}'$ produced by the implementation above of \Cref{algorithm:MWU} is feasible and its value attains at least a $1/(4 + \Theta(\epsilon))$-fraction of the value $\lambda(U)$ for $\epsilon\in (0,1)$.
\end{claim}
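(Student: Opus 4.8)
The plan is to show that \Cref{algorithm:MWU_impl} is nothing but a \emph{lazily implemented} run of the abstract \Cref{algorithm:MWU}, applied to \emph{multi}-graph $U$-Steiner subgraphs, with oracle approximation factor $\gamma = 2 + \Theta(\epsilon)$ furnished by \Cref{clm:miIsApproxSteinerSubgraph}, multi-edge parameter $\alpha = 4\gamma_{SSSP}$, and termination threshold essentially equal to $1$; the claim then follows from \Cref{lem:approx bound}. To make this precise, fix the sequence of pairs $(H_i, v_i)$ that the implementation actually generates, where $H_i := \Pi_{A \mapsto G}(M^U)$ is the (multi)graph implicitly added to $\mathcal{P}$ in iteration $i$ of the outer \textbf{while}-loop (\Cref{enu:AddPacking}) and $v_i$ is the value it receives, and write $k_i(e')$ for the number of copies of an edge $e' \in E$ inside $H_i$; by \Cref{thm:mainSSSPGeneral} (the congestion of $\Pi_{F \mapsto G}$ is at most $\gamma_{SSSP}$) we have $k_i(e') = O(\gamma_{SSSP})$. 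Let $\ell$ denote the \emph{true} length function that \Cref{algorithm:MWU} would have maintained on this sequence, that is $\ell(e') = (\delta/w(e'))\cdot \textsc{Exp}\bigl(\tfrac{\epsilon}{w(e')}\sum_i v_i k_i(e')\bigr)$; this $\ell$ — not the maintained $\tilde\ell$ — is the length function to which \Cref{clm:miIsApproxSteinerSubgraph} and \Cref{lem:approx bound} will be applied.

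The technical core is the invariant, proved by induction over iterations of the outer \textbf{while}-loop, that $\tfrac{1}{1+\epsilon}\ell(e') \le \tilde\ell(e') \le \ell(e')$ for all $e' \in E$ at the start of each iteration. The upper bound holds because $\tilde\ell$ only ever trails $\ell$ (the implementation defers length increases). For the lower bound, $\tilde\ell(e')$ and $\ell(e')$ differ exactly by the value that has been charged to the copies $\hat e'' \in \Pi^{-1}_{A \mapsto G}(e')$ but not yet flushed: after the inner loop of \Cref{enu:ReturnCongestedEdges} every edge $\hat e$ of $A$ has $\Delta\val(\hat e) < \tau(\hat e) = w_A(\hat e)/(4\gamma_{SSSP})$ by the guarantees of \Cref{thm:augmentedSTtrees}, and there are $O(\gamma_{SSSP})$ copies of $e'$, so the total un-flushed value on $e'$ is $O(w(e'))$, in fact below $w(e')/2$ with the precise bookkeeping; hence $\ell(e')/\tilde\ell(e') \le \textsc{Exp}(\Theta(\epsilon)) \le 1 + \Theta(\epsilon)$, which after optionally rescaling $\epsilon$ by a constant is the hypothesis of \Cref{clm:miIsApproxSteinerSubgraph}. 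One also checks that the single addition of $v_i \le (\min_{e''\in H_i} w(e''))/\alpha \le w(e')/(4\gamma_{SSSP})$ in \Cref{enu:AddValueOnSteinerTree} cannot break the bound before the flushing loop in \Cref{enu:ReturnCongestedEdges} restores it.

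Granted the invariant, \Cref{clm:miIsApproxSteinerSubgraph} applies at the start of every iteration, so each $H_i$ is a $(2+\Theta(\epsilon))$-approximate minimum $\ell$-length $U$-Steiner subgraph, with value $v_i = (\min_{\hat e\in M^U} w_A(\hat e))/\alpha = (\min_{e'\in H_i} w(e'))/(4\gamma_{SSSP})$ exactly as in \Cref{enu:pack} of \Cref{algorithm:MWU}. The deferred length increase in \Cref{enu:ReturnCongestedEdges}--\Cref{enu:UpdateEdgeLength} multiplies $\tilde\ell(e')$ by $\textsc{Exp}(\epsilon x/w(e'))$ with $x = \sum_{\hat e''\in\Pi^{-1}_{A\mapsto G}(e')}\Delta\val(\hat e'')$, so summed over the run the accumulated exponent on $\tilde\ell(e')$ equals $\tfrac{\epsilon}{w(e')}\sum_i v_i k_i(e')$ — precisely what \Cref{algorithm:MWU} applies to $\ell(e')$ — and since $v_i k_i(e') = (w(e')/(4\gamma_{SSSP}))\cdot O(\gamma_{SSSP}) = O(w(e'))$, and in fact $\le w(e')$ with the precise constants, the per-iteration exponent increase of the abstract algorithm is at most $\epsilon$, which is the precondition for the MWU analysis behind \Cref{lem:approx bound} (the last of the three modifications discussed after that lemma). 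Finally, since the \textbf{while}-guard (\Cref{enu:BeginInnerLoop}) is $\sum_e w(e)\tilde\ell(e) < 1$, at termination $\sum_e w(e)\ell(e) \ge \sum_e w(e)\tilde\ell(e) \ge 1$, while at the start of the last iteration $\sum_e w(e)\tilde\ell(e) < 1$ and one iteration scales every length by at most $\textsc{Exp}(\Theta(\epsilon))$, so $\sum_e w(e)\ell(e) \in [1, 1+\Theta(\epsilon)]$ at termination; thus this is a valid execution of \Cref{algorithm:MWU} with a termination threshold within a $(1+\Theta(\epsilon))$-factor of the permitted interval, and \Cref{lem:approx bound} gives that the scaled-down multigraph packing is feasible with value at least $\lambda(U)/(2(2+\Theta(\epsilon))) = \lambda(U)/(4+\Theta(\epsilon))$. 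The scale-down factor $\log((1+\epsilon)^2/\delta)/\epsilon$ in \Cref{enu:scaling_impl} equals the factor $\log_{1+\epsilon}((1+\epsilon)/\delta)$ of \Cref{enu:scaling} up to a $(1+\Theta(\epsilon))$-factor (using $\ln(1+\epsilon) = \epsilon(1-\Theta(\epsilon))$ and $\ln(1/\delta) = \Theta(\epsilon^{-1}\log m) \gg \epsilon$), so the value bound survives; and feasibility as a multigraph packing is strictly stronger than feasibility in the sense of \Cref{def:packingSteinerSubgraph}, which proves the claim.

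I expect the main obstacle to be exactly the accuracy invariant of the second paragraph: one must verify that the single choice $\tau(\hat e) = w_A(\hat e)/(4\gamma_{SSSP})$ together with $\alpha = 4\gamma_{SSSP}$ simultaneously (i) keeps the un-flushed value per $G$-edge below a constant fraction of $w(e')$, so that $\tilde\ell$ remains a $(1+\Theta(\epsilon))$-approximation of $\ell$ as \Cref{clm:miIsApproxSteinerSubgraph} requires, and (ii) keeps the per-iteration exponent increase of the abstract algorithm at most $\epsilon$, so that \Cref{lem:approx bound} applies — all while the auxiliary graph $A$, and hence the multigraph $\Pi_{A\mapsto G}(M^U)$ and its edge-multiplicities $k_i(e')$, is rebuilt from iteration to iteration as the SSSP data structure $\mathcal{D}$ restructures the forest $F$ underneath it. A secondary point, delegated to \Cref{thm:augmentedSTtrees}, is that the counters $\Delta\val(\hat e)$ and the semantics of $\textsc{GetCongestedEdge}$ behave correctly when edges of $A$ are deleted and re-inserted between iterations.
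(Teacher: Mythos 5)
Your proof follows essentially the same route as the paper's: define the ``true'' length function $\ell$ that the abstract \Cref{algorithm:MWU} would maintain, show $\tilde\ell \le \ell \le (1+\epsilon)\tilde\ell$ by bounding the un-flushed value per $G$-edge using the threshold $\tau(\hat e)=w_A(\hat e)/(4\gamma_{SSSP})$ and the $\gamma_{SSSP}$ congestion bound, then invoke \Cref{clm:miIsApproxSteinerSubgraph} and \Cref{lem:approx bound}. You are somewhat more explicit about the termination-threshold and scale-factor bookkeeping, but the core invariant and its justification match the paper's argument.
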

\begin{proof}
    Let $\{M_j^U\}_{j=1}^k$ be the sequence of all $M^U$ produced by Algorithm \ref{algorithm:MWU_impl}. For each edge $e$ in $E$, we define $\ell(e, 0) = \delta/w(e)$ and for $j > 0$, \[
    \ell(e,j) = \ell(e,j-1)\cdot \exp\left(\frac{k_j v_j\epsilon}{{\alpha} w(e)}\right),
    \]
    where $k_j=\left|\Pi_{A\to G}^{-1}(e)\cap M_j^U\right|$ denotes the number of copies of $e$ in $M_j^U$, and $v_j = \min_{e'\in \Pi_{A\to G}(M_j^U)}w(e')$.
    We denote by $\tilde{\ell}(\cdot,j)$ the current length function $\tilde{\ell}$ in Algorithm \ref{algorithm:MWU_impl} after (implicitly) producing $M_j^U$ and executing Lines \ref{enu:AddValueOnSteinerTree}-\ref{enu:EndInnerLoop}. 
    
    Observe that the only difference between $\tilde{\ell}$ and $\ell$ is that we store the values on the edges of the graph $A$ and ``flush'' into $\tilde{\ell}(\Pi_{A\to G}(\hat{e}))$ when we detect that the accumulated value on $\hat{e}$ is too large (larger than $w(\Pi_{A\to G}(\hat{e}))/({4}\gamma_{SSSP})$), while we update all $\ell(e)$ each time after we produce a new $M_j^U$. Therefore, $\tilde{\ell}(e,j)$ is equal to $\ell(e,j)$ if $e$ is updated in \Cref{enu:UpdateEdgeLength} of \Cref{algorithm:MWU_impl}. Moreover, we have $\tilde{\ell}(e,j)\leq\ell(e,j)$ for any $e,j$.
    We then prove that $\ell(e,j)/\tilde{\ell}(e,j)\leq 1+\epsilon$. In fact, we do not update $\tilde{\ell}(e)$ after the algorithm produces $M_j^U$ if and only if $\Pi_{A\to G}^{-1}(e)$ does not return any edge that is returned by any call of $\mathcal{M}.\textsc{GetCongestedEdge}()$ in the same outer while-loop iteration. 

    Thus, after any outer while-loop iteration, the amount of total value of the edges in $\Pi_{A\to G}^{-1}(e)$ is strictly smaller than \[
    |\Pi_{A\to G}^{-1}(e)|\cdot \frac{w(e)}{{4}\gamma_{SSSP} }\leq \frac{w(e)}{4},
    \]
    where we use the low congestion property in \Cref{thm:mainSSSPGeneral}. Since this is also true at the beginning of every outer while-loop iteration, and since the maximum increase in value throughout an iteration by adding the current $M^U$ as a packing is increased by at most $v \leq w(e) / \alpha$ and again $e$'s image is at most $\gamma_{SSSP}$ times in $M^U$, the total amount of value $x$ across all edges in $\Pi_{A\to G}^{-1}(e)$ is at most $\frac{w(e)}{2}$ at any point in time.

    Thus, the difference between $\tilde{\ell}$ and $\ell$, is at most
    \begin{align*}
        \frac{\ell(e,j)}{\tilde{\ell}(e,j)}=\exp\left(\frac{\epsilon x}{w(e)}\right)<\exp\left(\frac{\epsilon}2\right)<1+\epsilon.
    \end{align*}
    Therefore, $\tilde{\ell}$ is a $(1+\epsilon)$-approximation of $\ell$. By \Cref{clm:miIsApproxSteinerSubgraph}, $\sum_{e\in M_j^U}\ell(\Pi_{A\to G}(e),j-1)$ is upper bounded by $(2+\Theta(\epsilon))$ times the cost of the minimum $\ell(\cdot,j-1)$-length $U$-Steiner subgraph.
    
    The reasoning above also tells us that the technical condition introduced by the MWU analysis underlying \Cref{lem:approx bound} is satisfied which requires that no increase in length on a coordinate $e$ should exceed $e^{\epsilon}$. We can therefore conclude by \Cref{lem:approx bound} that the final packing outputted by the algorithm is feasible and attains $1/(4+\Theta(\epsilon)) \cdot \lambda(U)$.
\end{proof}

\begin{claim}\label{clm:sparsifierTime}
The above implementation of \Cref{algorithm:MWU} takes for any constant $\epsilon > 0$, deterministic time $m^{1+o(1)} \log W$.
\end{claim}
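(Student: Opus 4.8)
The plan is to walk through the data-structure pipeline of \Cref{algorithm:MWU_impl} and charge every unit of work to one of three quantities: the number of outer while-loop iterations, the number of \emph{flushes} (the inner-loop congested-edge events that actually modify $\tilde{\ell}$ in the length-update line), and the recourse of the dynamic objects $R$, $A$, and $M$. I would first bound the number of outer iterations by the potential argument already used in \Cref{clm:miIsApproxSteinerPacking} together with the Garg--K\"onemann analysis behind \Cref{lem:approx bound}. Each $\tilde{\ell}(e)$ starts at $\delta/w(e)$, is increased only while $w(e)\tilde{\ell}(e)<1$, and a single increase multiplies it by at most $e^{\epsilon/2}$; hence $\tilde{\ell}(e)<e^{\epsilon/2}/w(e)$ at all times, so the total value ever accumulated on the (at most $\gamma_{SSSP}$) copies of $e$ is $O(w(e)\log(m)/\epsilon^2)$. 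Since in each outer iteration the value $v=w(e^*)/\alpha$ added to the copies of the minimum-weight edge $e^*$ of $M^U$ is at least $w(e^*)/\alpha$, each $e\in E$ can be the minimum-weight edge in only $O(\alpha\log(m)/\epsilon^2)$ iterations, giving $O(\alpha m\log(m)/\epsilon^2)$ outer iterations; and the same computation with the flush threshold $\tau(\hat e)=w_A(\hat e)/(4\gamma_{SSSP})=w(e')/\alpha$ playing the role of the increment bounds the number of flushes of any edge $e'\in E$ by $O(\alpha\log(m)/\epsilon^2)$, so the total number of flushes, which equals the number $\Delta$ of edge-length increases handed to $\mathcal{D}$, is also $O(\alpha m\log(m)/\epsilon^2)$. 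Since $\alpha=4\gamma_{SSSP}=e^{O(\log^{83/84}m\log\log m)}=m^{o(1)}$ and $\epsilon$ is a constant, both counts are $m^{1+o(1)}$.

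Next I would add up the per-operation costs. Each outer iteration makes $O(1)$ calls to $\mathcal{M}$ ($\textsc{ReturnSteinerTreeMinWeight}$ and $\textsc{AddValueOnSteinerTree}$), each $O(\log^2 m)$ by \Cref{thm:augmentedSTtrees}, plus $O(1)$-time maintenance of the running sum $\sum_e w(e)\tilde{\ell}(e)$ used in the loop guard; over all iterations this is $m^{1+o(1)}$. Each flush additionally scans $\Pi^{-1}_{A\mapsto G}(e')$, of size at most $\gamma_{SSSP}$ by the low-congestion guarantee of \Cref{thm:mainSSSPGeneral}, doing an $O(\log^2 m)$-time $\textsc{GetValue}/\textsc{ResetValue}$ on each, so each flush (together with the preceding $\textsc{GetCongestedEdge}$ call and the length update) costs $O(\gamma_{SSSP}\log^2 m)$, for $m^{1+o(1)}$ in total. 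Handing the $\Delta=m^{1+o(1)}$ length increases to $\mathcal{D}$ costs $O(\Delta)+m\cdot\gamma_{SSSP}=m^{1+o(1)}$ by \Cref{thm:mainSSSPGeneral}, and, again by that theorem and \Cref{rmk:distanceEstimates}, generates at most $m\cdot\gamma_{SSSP}=m^{1+o(1)}$ structural changes to the forest $R=F$ and its associated maps, and re-reports the distance estimates $\hat d(v)$.

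The main obstacle, and the step I would treat most carefully, is bounding the recourse of the auxiliary graph $A$, which in turn controls the recourse of $M$ (maintained by \cite{holm2001poly}) and the cost of feeding those updates to $\mathcal{M}$. An edge of $A$ is either an edge of $F[\hat V]$, which changes $O(1)$ times per structural change of $R$ and hence $m^{1+o(1)}$ times overall, or an edge $\hat e$ induced by some $e=(u,v)\in E$ with $A$-length $\hat d(u)+\tilde{\ell}(e)+\hat d(v)$, which changes only when $\tilde{\ell}(e)$ is flushed (already charged, $m^{1+o(1)}$ times) or when $\hat d(u)$ or $\hat d(v)$ changes. Here I would use that $G_{\tilde{\ell}}$ undergoes only edge-length increases, so distances from $s^*$ are monotone non-decreasing; maintaining, in place of the raw estimate, a value $\hat d(v)$ snapped up to the nearest power of $1+\epsilon$ — still an $O(1)$-approximation of $\mathrm{dist}_{G_{\tilde{\ell}}}(s^*,v)$, so \Cref{clm:miIsApproxSteinerSubgraph} still applies up to changing constants — this snapped value is monotone and lies in a set of size $O(\log_{1+\epsilon}(\mathrm{poly}(m)\cdot W/\delta))=O(\log m)$ for constant $\epsilon$ and polynomially bounded $W$. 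Thus it changes $O(\log m)$ times per vertex, and the number of $A$-updates triggered by distance changes is $\sum_v\deg_G(v)\cdot O(\log m)=\tilde O(m)$. Combining the three sources, $A$ and $M$ undergo $m^{1+o(1)}$ updates, each processed by \cite{holm2001poly} and by $\mathcal{M}$ in polylogarithmic time. Finally, \Cref{thm:mainSSSPGeneral} is stated only for weights in $[1,n^4]$ whereas $G_{\tilde{\ell}}$ carries weights up to $m^{O(\log(n)/\epsilon)}$; the reduction given at the end of this section removes this restriction at the price of an $O(\log W)$ factor, which is where the $\log W$ in the statement comes from. Summing all contributions yields the claimed deterministic running time $m^{1+o(1)}\log W$; I expect the distance-recourse accounting to be the only genuinely delicate point, everything else being a routine charging argument.
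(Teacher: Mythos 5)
Your iteration and flush counts mirror the paper's own proof almost verbatim: the length $\tilde{\ell}(e)$ starts at $\delta/w(e)$, a flush of $e$ raises it by a factor of at least $e^{\epsilon/\alpha}$ (since the congestion threshold is $w(e)/\alpha$), it cannot exceed $1+\epsilon$ before the loop guard trips, and the preimage $\Pi^{-1}_{A\mapsto G}(e)$ has size at most $\gamma_{SSSP}$ --- giving $m^{1+o(1)}\log W$ outer iterations and flushes, matching the paper. Where you genuinely go beyond the paper is in the recourse of the auxiliary graph $A$: the paper dispenses with everything downstream of $\mathcal{D}$ by asserting ``it is not hard to see that the remaining operations are subsumed by this time bound,'' whereas you correctly observe that a change in a single distance estimate $\hat{d}(v)$ alters the $\hat{\ell}$-length of all $\deg_G(v)$ induced edges of $A$ incident to $v$, and that \Cref{rmk:distanceEstimates} gives no per-vertex bound on how often $\hat{d}(v)$ changes. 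Your remedy --- round $\hat{d}(v)$ up to the nearest power of $1+\epsilon$, use that $G_{\tilde{\ell}}$ is decremental so distances and hence the rounded estimate are monotone, conclude $O(\log m)$ distinct values per vertex and therefore $\tilde{O}(m)$ total induced-edge updates to $A$, at the price of a constant-factor loss absorbed by \Cref{clm:miIsApproxSteinerSubgraph} --- is sound and gives a more self-contained argument than the paper does. One caveat you should keep in mind if you write this up: the rounded estimates must be the values that actually define $\hat{\ell}$ (and hence the graph on which \cite{holm2001poly} maintains the MSF), otherwise your recourse bound is tracking a different object than the algorithm manipulates; so the snapping is a modification to the algorithm, not merely a bookkeeping device in the analysis, and it must be checked once against the proof of \Cref{clm:miIsApproxSteinerSubgraph}, which you correctly note goes through with degraded constants.
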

\begin{proof}
We first show that in \Cref{algorithm:MWU_impl}:
\begin{enumerate}
    \item The while-loop in \Cref{enu:BeginInnerLoop}-\Cref{enu:EndInnerLoop} terminates in $m^{1+o(1)} \log W$ steps.
    \item Throughout the while-loop in \Cref{enu:BeginInnerLoop}-\Cref{enu:EndInnerLoop}, the total number of times that \Cref{enu:PreImage} is executed is $m^{1+o(1)} \log W$.
\end{enumerate}
1. Recall our definition of $\ell$ in \Cref{clm:miIsApproxSteinerPacking}. For each iteration step, $\ell$ is updated for edges in $\Pi_{A\to G}(M_U)$. Moreover, for the minimum-weight edge $e \in E$ with some image $\Pi^{-1}_{A \mapsto G}(e)$ in $M^U$, the length $\ell(e)$ is increased by factor $\exp(\epsilon{/\alpha})$. For any edge $e\in E$, its initial length (w.r.t. $\ell$) is $\delta/w(e)$ and once its length exceeds $1+\epsilon$, we have that $\tilde{\ell}(e) > 1$ from the arguments in \Cref{clm:miIsApproxSteinerPacking} which implies that the outer while-loop condition is no longer met, and thus the outer while-loop terminates. Thus, we have for each edge $e \in E$, at most ${\alpha\cdot}\log\left(\frac{w(e)(1+\epsilon)}{\delta}\right)/\epsilon +1 = m^{o(1)} \log W$ iterations where $e$ is the minimum-weight edge in $M^U$. Since there are $m$ edges, the claim follows.\\
2. By \Cref{thm:mainSSSPGeneral}, the size of $\Pi^{-1}_{A\to G}(e)$ is at most $\gamma_{SSSP}=m^{o(1)}$ for any $e\in E$. It then suffices to show the total number of edges returned by $\mathcal{M}.\textsc{GetCongestedEdge}()$ is $m^{1+o(1)}$.
Let $e$ be an edge returned by $\mathcal{M}.\textsc{GetCongestedEdge}()$ in \Cref{enu:ReturnCongestedEdges} and let $e'=\Pi_{A\to G}(e)$ be its image in $G$. Then $\tilde{\ell}(e')$ is increased by at least $\exp(\epsilon/{4}\gamma_{SSSP})$ in \Cref{enu:UpdateEdgeLength}. For any edge $e'\in E$, its initial length (w.r.t. $\tilde{\ell}$) is $\delta/w(e)$ and it is no longer updated once its length exceeds $1$ (as otherwise the outer while-loop terminates). Thus, it is updated for at most $O\left(\log\left(\frac{w(e)}{\delta}\right)\cdot {4}\gamma_{SSSP}/\epsilon\right)=m^{o(1)}$ times throughout the algorithm. As each edge returned by $\mathcal{M}.\textsc{GetCongestedEdge}()$ leads to an increase of the length $\tilde{\ell}$ for the pre-image of the edge in $G$ by factor $e^{\epsilon/\alpha}$, the total number of such edges is at most $m^{1+o(1)} \log W$.\\

Given these two properties, we can conclude from \Cref{thm:mainSSSPGeneral} that the total time required by the data structure is $m^{1+o(1)} \log W$. It is not hard to see that the remaining operations are subsumed by this time bound.
\end{proof}

Let us also address the shortcoming of \Cref{thm:mainSSSPGeneral} that allows only for lengths in $[1, n^4]$. However, since we are looking for a minimum length $U$-Steiner Subgraph w.r.t. $\tilde{\ell}$, and $\tilde{\ell}$ is monotonically increasing over time, it is easy to see that the minimum $U$-Steiner Subgraph is of monotonically increasing total length. The way we use the SSSP data structure, if currently, the minimum $U$-Steiner Subgraph has total length $X$, then we can still find an approximate solution on the graph where all edge weights are rounded up to the nearest multiple of $\epsilon \cdot X/m$ and edges of weight more than $X$ are omitted from the graph. By appropriately scaling the lengths, it is not hard to see that they can be mapped to range $[1, n^3]$. Finally, we use the operation  $\textsc{ReturnSteinerTreeLength}()$ from the data structure $\mathcal{M}$ (see \Cref{thm:augmentedSTtrees}) to get a $4+\Theta(\epsilon)$ approximation of $X$ in polylogarithmic time, in each iteration of the outer while-loop. Thus, whenever we detect that the value of $X$ has at least doubled since initialization of the current data structure, we can rebuild the data structure with newly rounded edges/ omitted edges as described above. This causes an additional $\log_2 Wn$ number of rebuilds of the data structure. By carefully analyzing with the bounds above, the additional time required is however only $m^{1+o(1)} \log W$ and thus already subsumed in the runtime that we obtained in \Cref{clm:sparsifierTime}.

\paragraph{Extracting a Vertex Sparsifier from the Packing.} 
Using dynamic tree structures, we can further extract a vertex sparsifier of $G$.
We will use a slightly modified version of the Euler tour data structure introduced in \cite{henzinger1995rand}. 
An Euler tour of a tree $T$ with $n'$ vertices is a cycle of $2n'-2$ vertex symbols, obtained from running a depth-first search (DFS) from an arbitrary vertex as the root, and record each vertex $u$ visited (including repetitive visits, but except for the last visit of the root) as an occurrence $o_u$. 
Each edge is visited twice (once in each direction) and each vertex of degree $d$ is recorded $d$ times. 
Given a forest with the Euler tours of trees maintained in binary search trees, adding/removing an edge can be done in $O(\log n')$ time using standard binary search tree techniques.

We will extract a vertex sparsifier of $G$ based on the sequence of $\mathcal{M}.\textsc{AddValueOnSteinerTree}(\Delta)$ operations and edge updates to $R$ generated by \Cref{thm:augmentedSTtrees} according to \Cref{remark:aug_lct} obtained from running \Cref{algorithm:MWU_impl}.
% We record all the operations made by $\mathcal{M}$ in \Cref{algorithm:MWU_impl} on the data structure $\mathcal{R}$, including links, cuts, and executions of the operation $\textsc{AddValueOnSteinerTree}$. Then, we replay those operations on an empty instance $\mathcal{R}'$ of $\mathcal{R}$ yielding a forest $R'$ to extract the vertex sparsifier.
We maintain the Euler tours $E'$ of trees in $R$ in a binary search tree data structure $\mathcal{E}$.
% We also for each tree in $R$ mark the occurrences of all the terminals in the Euler tours in $\mathcal{E}$ and maintain a cycle $C$ only containing those occurrences in another binary search tree $\mathcal{C}$. 
Let $C$ be cycles obtained from repeatedly contracting on $E'$ every edge that is not incident to two occurrences of terminals in $U$. 
We also maintain $C$ in a binary search tree data structure $\mathcal{C}$.
For every edge $(o_u, o_v) \in C$, we maintain a value $\Delta\val'(o_u, o_v)$ in $\mathcal{C}$. 
We add that each edge insertion/deletion to $R$ incurs $O(1)$ edge insertions/deletions to $E'$ and $C$.
Let $G'$ be an initially empty graph on $U$.

% Whenever we encounter a link or a cut operation in the playback, we perform the same operation to $\mathcal{R}'$.

\begin{comment}
Whenever we receive $\mathcal{M}.\textsc{AddValueOnSteinerTree}(\Delta)$, we intuitively add the only\footnote{By then, $C$ only contains one cycle since $U$ is connected.} cycle $C$ with weight $\Delta/2$ to $G'$ after the mapping $\Pi_{A \mapsto G}$.
% : For each pair $(o_u, o_v)$ of adjacent occurrences in $\mathcal{C}$ that represents the tree $M^U$ in $R'$, it adds an edge $(\Pi_{A \mapsto G}(u), \Pi_{A \mapsto G}(v))$ with weight $\Delta/2$ to $G'$ representing the tree path from $u$ to $v$ in $M^U$. 
Since one such operation may add $\Omega(n)$ edges to $G'$ which takes $\Omega(n)$ time, we postpone the insertion of an edge $\Pi_{A \mapsto G}(e)$ to $G'$ for some $e \in C$ until $e$ is removed from $C$ after some operation to $R$. 
Specifically, we maintains for every edge $(o_u, o_v)$ in $C$ a value $\Delta\val'(o_u, o_v)$ being the sum of weights of the edge $(\Pi_{A \mapsto G}(u), \Pi_{A \mapsto G}(v))$ that has not yet been added to $G'$. 
% When adjacent occurrences $(o_a, o_b)$ in $\mathcal{E}$ is split by some operation on $R$, we find the last  and  flush the edge $(u, v)$ to $G'$ by adding an edge $(u, v)$ with weight $\Delta\val'(o_u)$ and setting $\Delta\val'(o_u) \gets 0$.
When we receive $\mathcal{M}.\textsc{AddValueOnSteinerTree}(\Delta)$, we actually add $\Delta/2$ to all $\Delta\val'(o_u, o_v)$ in $C$.
When an edge $(o_u, o_v)$ is removed from $C$ by some operation to $R$, we flush the edge $(o_u, o_v)$ in $C$ by adding an edge $(\Pi_{A \mapsto G}(u), \Pi_{A \mapsto G}(v))$ to $G'$ with weight $\Delta\val'(o_u, o_v)$ and reseting $\Delta\val'(o_u, o_v)$ to $0$.

Finally, we flush all the edges from $C$ to $G'$, scale down the edge weights of $G'$ by a factor of $\log \frac{(1+\epsilon)^2}{\delta}$, and return $G'$ as the desired vertex sparsifier.

We summarize the above algorithm in \Cref{algorithm:vertex_sparsifier}.
\end{comment}

We give the algorithm in \Cref{algorithm:vertex_sparsifier}.

% \begin{algorithm}[h!]
%   \caption{$\mathcal{M}.\textsc{ReturnVertexSparsifier}()$: Return the vertex sparsifier}
%   \label{algorithm:extract_sparsifier}
%   Invoke $\mathcal{M}.\textsc{Flush}$ on all pairs of adjacent occurrences in $\mathcal{C}$.\\
%   Scale down the weights of all edges in $G'$ by a factor of $\log \left(\frac{(1+\epsilon)^2}{\delta}\right)/\epsilon$.\\
%   Return $G'$.
% \end{algorithm}

\begin{algorithm}[h!]
  \caption{Extraction of vertex sparsifier from the Steiner subgraph packing}
  \label{algorithm:vertex_sparsifier}
  \SetKwInOut{Input}{Input}
  \SetKwInOut{Output}{Output}
  % \Input{The inputs of \Cref{algorithm:MWU_impl}}
  % \Output{Vertex sparsifier $G'$ of $G$ that has vertex set $U$}
  % Run \Cref{algorithm:MWU_impl} with the inputs and record all operations made by $\mathcal{M}$ in \Cref{algorithm:MWU_impl} on $\mathcal{R}$.\\
  % Initializes $\mathcal{E}$ maintaining the Euler tours of trees in $R$, $\mathcal{C}$ maintaining the terminals in the Euler tours,
  Initialize $\mathcal{E}$ and $\mathcal{C}$ defined above, and $G'$ as an empty graph on $U$.\\
  Run \Cref{algorithm:MWU_impl} to obtain a sequence of $\mathcal{M}.\textsc{AddValueOnSteinerTree}(\Delta)$ operations and edge updates to graph $R$ generated by \Cref{thm:augmentedSTtrees} according to \Cref{remark:aug_lct}.\\
   \ForEach{operation $op$ received}{
    \If{$op$ is an operation to $R$}{
        Perform the operation $op$.\\
        Whenever some edge $(o_u, o_v)$ is removed from $C$, flush the edge $(o_u, o_v)$ in $C$ by adding an edge $(\Pi_{A \mapsto G}(u), \Pi_{A \mapsto G}(v))$ with weight $\Delta\val'(o_u, o_v)$ to $G'$ and setting $\Delta\val'(o_u, o_v) \gets 0$.
    }
    \ElseIf{$op$ is $\mathcal{M}.\textsc{AddValueOnSteinerTree}(\Delta)$}{
        Add $\Delta/2$ to $\Delta\val'(o_u, o_v)$ for every edge $(o_u, o_v)$ in $C$.
    }
   }
  Flush all edges in $C$ to $G'$.\\
  Scale down the weights of all edges in $G'$ by a factor of $\log \left(\frac{(1+\epsilon)^2}{\delta}\right)/\epsilon$.\\
  \Return $G'$.
\end{algorithm}

\begin{claim} \label{clm:vertex_sparsifier_time}
    The number of edges in $G'$ is $m^{1+o(1)} \log W$ and \Cref{algorithm:vertex_sparsifier} takes deterministic time $m^{1+o(1)} \log W$.
\end{claim}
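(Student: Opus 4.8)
The plan is to bound (i) the number of operations that running \Cref{algorithm:MWU_impl} feeds into \Cref{algorithm:vertex_sparsifier}, and (ii) the time spent on each operation; the bound on $|E(G')|$ will then drop out as a byproduct, since $G'$ only ever grows when an edge is flushed out of $C$.

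First I would count operations. Each iteration of the outer while-loop of \Cref{algorithm:MWU_impl} issues exactly one call $\mathcal{M}.\textsc{AddValueOnSteinerTree}(v)$ (\Cref{enu:AddValueOnSteinerTree}), and the proof of \Cref{clm:sparsifierTime} bounds the number of these iterations by $m^{1+o(1)}\log W$ (each edge of $G$ is the minimum-weight edge of $M^U$ in at most $m^{o(1)}\log W$ of them, and there are $m$ edges). For the edge updates to $R$: each edge update applied to the auxiliary graph $A$ takes $\Omega(1)$ time inside \Cref{algorithm:MWU_impl}, so by \Cref{clm:sparsifierTime} the total number of $A$-updates, including those caused by the $O(\log(Wn))$ rebuilds that handle the length-range restriction of \Cref{thm:mainSSSPGeneral}, is $m^{1+o(1)}\log W$; every such update changes the minimum spanning forest $M$ by $O(1)$ edges, and then \Cref{remark:aug_lct} converts $k$ edge updates to $M$ into $O(k\log m)$ edge updates to the explicitly-maintained forest $R$. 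Hence $R$ undergoes $m^{1+o(1)}\log W$ edge updates, and — as recorded just before \Cref{algorithm:vertex_sparsifier} — each of them triggers $O(1)$ edge insertions/deletions in the Euler tours $E'$ and in the contracted cycle $C$; in particular the number of edge removals from $C$ over the entire run is $m^{1+o(1)}\log W$.

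Next I would bound the per-operation cost, where the only genuinely delicate point is $\mathcal{M}.\textsc{AddValueOnSteinerTree}(\Delta)$: taken literally it adds $\Delta/2$ to $\Delta\val'$ of \emph{every} edge of $C$, which is far too slow. The remedy is that whenever such an operation arrives $U$ is connected, so $C$ consists of a single cycle, and maintaining $\mathcal{C}$ as a balanced binary search tree carrying a lazy range-add tag (equivalently, storing one global additive offset for the whole cycle) performs this bulk update in $O(\log m)$ time while still supporting, in $O(\log m)$ time, the point reads of $\Delta\val'(o_u,o_v)$ needed at a flush. Granting this, processing an edge update to $R$ costs $O(\log m)$ for the $O(1)$ induced updates to $\mathcal{E}$ and $\mathcal{C}$, plus $O(\log m)$ for each flush caused by an edge leaving $C$ (push the lazy tags down to read $\Delta\val'$, append one edge to $G'$ via the explicitly maintained map $\Pi_{A \mapsto G}$, reset the value); summed over the $m^{1+o(1)}\log W$ updates to $R$ and the at most $m^{1+o(1)}\log W$ flushes this totals $m^{1+o(1)}\log W$, and the $m^{1+o(1)}\log W$ calls $\textsc{AddValueOnSteinerTree}$ cost $O(\log m)$ each.

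Finally the closing lines: ``flush all edges in $C$'' iterates over the edges of $C$ at termination, at $O(\log m)$ each, and $|E(C)|$ at that point equals the number of terminal occurrences on the Euler tour of the Steiner component $M^U$ of $R$, which is at most $2|V(M^U)|-2 \le 2|V(A)| \le 2|V(F)| = m^{1+o(1)}$ by the bound $|V(F)| \le n + 1 + m\cdot\gamma_{SSSP}$ from \Cref{thm:mainSSSPGeneral}; so this step adds $m^{1+o(1)}$ time and at most $m^{1+o(1)}$ edges to $G'$. Together with the $m^{1+o(1)}\log W$ edges contributed by the flushes during the run, $|E(G')| = m^{1+o(1)}\log W$, and the concluding weight-rescaling step is $O(|E(G')|) = m^{1+o(1)}\log W$; adding the $m^{1+o(1)}\log W$ running time of \Cref{algorithm:MWU_impl} itself (\Cref{clm:sparsifierTime}) yields the claim. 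The one real obstacle is the $\textsc{AddValueOnSteinerTree}$ bulk update above; the rest is routine dynamic-tree bookkeeping once the operation counts of the second paragraph are established.
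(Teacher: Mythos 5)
Your proof is correct and follows the same broad outline as the paper's: bound the number of operations fed into \Cref{algorithm:vertex_sparsifier} by the $m^{1+o(1)}\log W$ running time of \Cref{algorithm:MWU_impl}, bound each operation's cost by $O(\log n)$, and bound the final flush separately. Where you genuinely add something is the observation that $\textsc{AddValueOnSteinerTree}(\Delta)$, as written, touches every edge of $C$ and is thus far too slow if executed literally. The paper's own proof never mentions this — it simply asserts that ``each operation to \ldots $\mathcal{C}$ takes $O(\log n)$ time'' — tacitly assuming a lazy range-add (or a single global offset) on the BST representing the cycle, which is exactly the fix you supply, including the pushdown at flush time. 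This is an implementation detail the paper glosses over, and your treatment is more complete. Your accounting of the operation chain ($A$-updates $\to$ $O(1)$ MSF changes $\to$ $O(\log m)$ changes to $R$ via \Cref{remark:aug_lct}) is also more explicit than the paper's one-line assertion that running time bounds operation count; both are valid. Finally, your bound on the terminal flush (via $|E(C)| \le 2|V(M^U)| - 2 \le 2|V(F)| = m^{1+o(1)}$) differs cosmetically from the paper's, which instead charges one $G'$-edge per surviving edge of $R$ and folds it into the same $3\times$(operations to $R$) bound; both give the right answer.
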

\begin{proof}
    First, since \Cref{algorithm:MWU_impl} takes time $m^{1+o(1)} \log W$ by \Cref{clm:sparsifierTime}, there are $m^{1+o(1)} \log W$ operations to $R$.

    Each edge update to $R$ may cut the Euler tours at at most $2$ places and hence add at most $2$ edges to $G'$. Also, \Cref{algorithm:vertex_sparsifier} adds one edge to $G'$ for each remaining edge in $R$ after all operations are performed. So the number of edges in $G'$ is at most $3$ times the number of operations to $R$, which is $m^{1+o(1)} \log W$.
    
    Flushing an edge to $G'$ and each operation to $R$ and the Euler tour data structures $\mathcal{E}$ and $\mathcal{C}$ takes $O(\log n)$ time. So the running time of the above operations takes $m^{1+o(1)} \log W$ time.
    Plus, flushing all the edges in $C$ to $G'$ takes time at most $O(\log n)$ times the number of edges in $G'$, which is also $m^{1+o(1)} \log W$.
    Hence \Cref{algorithm:vertex_sparsifier} has running time $m^{1+o(1)} \log W$.
\end{proof}

\begin{proof}[Proof of \Cref{thm:steinerVertexSparsifier}]
    % Outline: \begin{enumerate}
    %     \item Time analysis: \Cref{clm:sparsifierTime} + time analysis of \Cref{algorithm:sparsifier}
    %     \item Number of edges: argue that the number of edges in $G'$ is no more than the number of edges involved in the maximum packing, then bound it.
    %     \item Cut property: Thatchaphol's analysis.
    % \end{enumerate}
    % \wuwei{Maybe not using graph embedding is simpler here?}
    We choose a sufficiently small $\epsilon > 0$ such that, in \Cref{clm:miIsApproxSteinerPacking}, the value of the feasible implicit packing $\mathcal{P}'$ is at least $\lambda(U)/4.1$.
    
    We run \Cref{algorithm:vertex_sparsifier} to construct a vertex sparsifier $G'$ on $U$ as the algorithm for \Cref{thm:steinerVertexSparsifier}. The running time of the algorithm is $m^{1+o(1)} \log W$ by \Cref{clm:sparsifierTime} and \Cref{clm:vertex_sparsifier_time}. The number of edges in $G'$ is $m^{1+o(1)} \log W$ by \Cref{clm:vertex_sparsifier_time}.

    Then we prove the two cut properties of $G'$. It suffice to show that the properties (1) and (2) hold if \Cref{algorithm:vertex_sparsifier} flushes all the edges in the cycle $C$ to $G'$ after each $\mathcal{M}.\textsc{AddValueOnSteinerTree}$ operation.

    To prove property (1), for each cut $(A, B)$ of $G'$ and each $U$-Steiner subgraph $H_i$ in the packing $\mathcal{P'}$, since \Cref{enu:AddValueOnSteinerTree} of \Cref{algorithm:MWU_impl} invokes $\mathcal{M}.\textsc{AddValueOnSteinerTree}(v_i)$ where $v_i = \val(H_i)\log \left(\frac{(1+\epsilon)^2}{\delta}\right)/\epsilon$ and thus \Cref{algorithm:vertex_sparsifier} adds a cycle containing all terminals to $G'$, there exist $2$ paths in the cycle between adjacent terminal occurrences such that each path crosses the cut $(A, B)$ odd times that corresponds to an edge added to $G'$ crossing the cut $(A, B)$ and has weight $\val(H_i)/2$ after scaling down. Hence $\lambda_{G'} \geq \sum_{H \in \mathcal{P}'}\val(H) \geq \lambda_G(U)/4.1$.

    For (2), consider any cut $(A, B)$ of $G$. 
    For each $U$-Steiner subgraph $H_i$ in the packing $\mathcal{P}'$ and the corresponding cycle $C_i$ added to $G'$ by \Cref{algorithm:vertex_sparsifier}, for each edge in $H_i$ crossing the cut $(A, B)$, $C_i$ goes across the cut $(A, B)$ $2$ times and hence \Cref{algorithm:vertex_sparsifier} adds at most $2$ edges to $G'$, each with weight $\val(H_i)/2$ and crosses the cut $(A \cap U, B \cap U)$.
    % An edge $e = (u, v) \in G'$ with weight $w'(e)$ crossing the cut $(A \cap U, B \cap U)$ corresponds to an unique path $P$ between two terminals in some cycle $C$ of weight $w'(e)$ that cross the cut $(A, B)$ in $G$, so it contributes at least $w'(e)$ to an edge in $G$ that across the cut $(A, B)$ in some packing. 
    % Since each edge $e'$ in a packing $H_i$ that crosses the cut $(A, B)$ in $G$ is contained in two such edges $e$ with weight $w'(e)/2$, 
    By the definition of a feasible packing, $w_{G'}(A \cap U, B \cap U) \leq \sum_{H \in \mathcal{P}'}\sum_{e\in H, e \in E(A, B)}\val(H) \leq \sum_{e \in E(A, B)}w(e) = w_G(A, B)$.
\end{proof}

\subsection{Implementation of the Augmented Dynamic Tree Data Structure}
\label{subsec:implAugSTtree}

In this section, we prove that the augmented dynamic tree data structure can be implemented efficiently. We restate the theorem for convenience.

\augSTtree*

% \maxP{High-Level structure could be:}

% \wuweiI{Also need to maintain the edges in $A \setminus M$ and their $\Delta\val$.}
\paragraph{Dealing with $A \setminus M$.}
Since the edges in $A \setminus M$ are not affected by $\mathcal{M}.\textsc{ReturnSteinerTreeLength}$, \\$\mathcal{M}.\textsc{ReturnSteinerTreeMinWeight}$ and $\mathcal{M}.\textsc{AddValueOnSteinerTree}$, we can store them with their $\Delta\val$ in a set so that we can get an edge $e \in A \setminus M$ with $\Delta\val(e) \geq \tau(e)$ as well as add/remove an edge in $O(\log n)$ time. 

Hence, in the following, we will focus on storing the edges in $M$ and maintaining their $\Delta\val$ and ignore the edges in $A \setminus M$.

\paragraph{Representing $M$ and graph $M^U \subseteq R \subseteq M$.}

% \maxP{Say what they are.}

We first note that though $U$ is guaranteed to be connected between the batches of edge updates, $U$ can be temporarily disconnected within each batch of updates.
To support operations on $M^U$, we maintain a subgraph $R \supseteq M^U$ of $M$ on the same vertex set $V(R) = V(M)$ that satisfies the invariant \Cref{property:aug_dynamic_tree} using a dynamic forest data structure $\mathcal{R}$.
We also maintain $M$ in a dynamic forest data structure $\mathcal{S}$.
When we refer to an edge, we are by default referring to an edge in $M$.
\begin{property} \label{property:aug_dynamic_tree}
    Each connected component of $R$ either is a connected component of $M^U$, or is a path that does not contain any terminal vertex. 
\end{property}
% Note that the terminal set is constant.

Every edge $e$ in $\mathcal{R}$ has a value $\Delta\val_R(e)$ and every edge $e$ in $\mathcal{S}$ has a value $\Delta\val_M(e)$.
For each edge $e \in R$, we maintain $\Delta\val_R(e) = \Delta\val(e)$ and set $\Delta\val_M(e) = 0$. For the rest of the edges $e \in M \setminus R$, we let $\Delta\val_M(e) = \Delta\val(e)$. When an edge $e$ is added to or removed from $R$, we update $\Delta\val_R(e)$ and $\Delta\val_M(e)$ correspondingly.

We adapt the concept of solid and dashed edges from \cite{sleator1981data}, and they are not related to the implementation of the underlying dynamic forest data structures in our context. 
We say an edge $e$ is solid if $e \in R$ and dashed otherwise. 
We call each connected component in $R$ a solid tree.
For each edge, we mark whether it is solid in $\mathcal{S}$.
When we say we convert an edge $e$ to solid, we mean to mark $e$ as solid in $\mathcal{S}$ and add $e$ to $R$. Similarly, converting an edge $e$ to dashed means to mark $e$ as dashed in $\mathcal{S}$ and remove $e$ from $R$.
We will only convert one edge to solid or dashed at a time.

Each tree in $M$ is viewed as a rooted tree at some vertex. We maintain the root of the solid tree containing $u$ as $\mathcal{S}.\textsc{Root}(u)$. The root of a solid tree containing $u$ can be changed to $u$ by invoking the $\mathcal{S}.\textsc{Evert}(u)$ operation without changing the properties of the edges (e.g. whether they are solid or dashed), unlike in \cite{sleator1981data} changing the root of a tree using $\textsc{Expose}$ may convert some solid edges to dashed and vice versa. 

The operations on $\mathcal{S}$ and $\mathcal{R}$ below are standard and are straightforward to implement using the techniques from \cite{sleator1981data}. Each of the following operations can be implemented in $O(\log n)$ time.

$\mathcal{S}$ supports $\textsc{Link}$, $\textsc{Cut}$ and the following operations:
\begin{itemize}
    \item $\textsc{GlobalLCA}(u)$: Returns the lowest common ancestor of all terminals in the tree containing $u$ with respect to the root $\textsc{Root}(u)$, or $\textsc{Null}$ if there is no terminal.
    \item $\textsc{MarkSolid}(u, v)$: Marks the edge $(u, v)$ as solid. This operation does not add the edge to $R$.
    \item $\textsc{MarkDashed}(u, v)$: Marks the edge $(u, v)$ as dashed. This operation does not remove the edge from $R$.
    \item $\textsc{Evert}(u)$: Sets the root of the tree containing $u$ to $u$. 
    \item $\textsc{Root}(u)$: Returns the root of the tree containing $u$.
    \item $\textsc{Parent}(u)$: Returns the parent of $u$ with respected to the root $\mathcal{S}.\textsc{Root}(u)$ if it exists, or $\textsc{Null}$ otherwise.
    \item $\textsc{Highest}(u)$: Returns the last vertex on the $u$-to-$\mathcal{S}.\textsc{Root}(u)$ path that is reachable from $u$ by only using solid edges.
    % \item $\textsc{CountTerminalsInTree}(u)$: returns the number of terminal vertices in the tree containing $u$.
    \item $\textsc{GetValue}(e)$: Returns $\Delta\val_M(e)$.
    \item $\textsc{ResetValue}(e)$: Sets $\Delta\val_M(e) \gets 0$.
    \item $\textsc{GetCongestedEdge}()$: Returns an edges $e$ in $M$ with $\Delta\val_M(e) \geq \tau(e)$.
\end{itemize}
Besides, $\mathcal{R}$ supports $\textsc{Link}$, $\textsc{Cut}$ and the following operations:
\begin{itemize}
    \item $\textsc{GetSomeIncidentEdges}(u)$: Returns the set of edges incident to $u$ in $R$. If there are more than $10$ edges incident to $u$, returns arbitrary $10$ edges of them.
    \item $\textsc{GetValue}(e)$: Returns $\Delta\val_R(e)$.
    \item $\textsc{ResetValue}(e)$: Sets $\Delta\val_R(e) \gets 0$.
    \item $\textsc{GetCongestedEdge}()$: Returns an edge $e$ in $R$ with $\Delta\val_R(e) \geq \tau(e)$.
\end{itemize}
$\mathcal{R}$ also supports the following operations whenever $M^U$ exactly characterizes one of the connected components in $R$:
\begin{itemize}
    \item 
    $\textsc{ReturnSteinerTreeLength}()$: Returns $\ell_A(M^U)$.
    \item $\textsc{ReturnSteinerTreeMinWeight}()$: Returns $\min_{e \in M^U} w(e)$. 
    \item $\textsc{AddValueOnSteinerTree}(\Delta)$: Sets $\Delta\val_R(e) \gets \Delta\val_R(e) + \Delta$ for every edge $e$ in $M^U$.
\end{itemize}

\paragraph{Implementation of $\mathcal{M}$.}

We implement each operation of $\mathcal{M}$ as follows:
\begin{itemize}
    \item $\mathcal{M}.\textsc{ReturnSteinerTreeLength}()$: Returns $\mathcal{R}.\textsc{ReturnSteinerTreeLength()}$.
    \item $\mathcal{M}.\textsc{ReturnSteinerTreeMinWeight()}$: Returns $\mathcal{R}.\textsc{ReturnSteinerTreeMinWeight()}$.
    \item $\mathcal{M}.\textsc{AddValueOnSteinerTree}(\Delta):$ Invokes $\mathcal{R}.\textsc{AddValueOnSteinerTree}(\Delta)$.
    \item $\mathcal{M}.\textsc{GetCongestedEdge}()$: If either $\mathcal{S}.\textsc{GetCongestedEdge}()$ or $\mathcal{R}.\textsc{GetCongestedEdge}()$ returns an edge, return the edge.
    \item $\mathcal{M}.\textsc{GetValue}(e)$: Returns $\mathcal{R}.\textsc{GetValue}(e)$ if $e \in R$ or $\mathcal{S}.\textsc{GetValue}(e)$ otherwise.
    \item $\mathcal{M}.\textsc{ResetValue}(e)$: Invokes $\mathcal{R}.\textsc{ResetValue}(e)$ if $e \in R$ or $\mathcal{S}.\textsc{ResetValue}(e)$ otherwise.
\end{itemize}

We also adapt the operations $\textsc{Expose}$ and $\textsc{Splice}$ operation from \cite{sleator1981data}.
For an edge $(u, \mathcal{S}.\textsc{Parent}(u))$, we say that the edge leaves $u$ and enters $\mathcal{S}.\textsc{Parent}(u)$.
We implement the operations $\mathcal{M}.\textsc{Splice}$, $\mathcal{M}.\textsc{Expose}$, $\mathcal{M}.\textsc{Link}$ and $\mathcal{M}.\textsc{Cut}$ as \Cref{algorithm:splice}, \Cref{algorithm:expose}, \Cref{algorithm:m_link} and \Cref{algorithm:m_cut}, respectively.

\begin{algorithm}[h!]
  \caption{$\mathcal{M}.\textsc{Splice}(u)$}
  \label{algorithm:splice}
  $v \gets \mathcal{S}.\textsc{Parent}(u)$.\label{line:splice_v}\\
  \If{$\mathcal{S}.\textsc{Highest}(v) \neq \mathcal{S}.\textsc{Root}(v)$}
  {
    $E_1 \gets \mathcal{R}.\textsc{GetSomeIncidentEdges}(v)$.\\
    Convert the solid edges in $E_1$ that enters $v$ to dashed.\\
    $E_2 \gets \mathcal{R}.\textsc{GetSomeIncidentEdges}(\mathcal{S}.\textsc{Highest}(v))$.\\
    Convert the other solid edges in $E_2$ that enters $\mathcal{S}.\textsc{Highest} (v)$ to dashed except for the edge on $v$-to-$\mathcal{S}.\textsc{Highest}(v)$ tree path.}
  Convert the dashed edge $(u, v)$ to solid.
\end{algorithm}

\begin{algorithm}[h!]
  \caption{$\mathcal{M}.\textsc{Expose}(u)$}
   \label{algorithm:expose}
  \While{$\mathcal{S}.\textsc{Highest}(u) \neq \mathcal{S}.\textsc{Root}(u)$}{$\mathcal{M}.\textsc{Splice}(\mathcal{S}.\textsc{Highest}(u))$.}
\end{algorithm}

\begin{algorithm}[h!]
  \caption{$\mathcal{M}.\textsc{Link}(u, v)$}
   \label{algorithm:m_link}
  $\mathcal{S}.\textsc{Evert}(u)$.\\
  $u' \gets \mathcal{S}.\textsc{GlobalLCA}(u)$.\\
  $\mathcal{S}.\textsc{Evert}(v)$.\\
  $v' \gets \mathcal{S}.\textsc{GlobalLCA}(v)$.\\
  $\mathcal{S}.\textsc{Link}(u, v)$.\\
  \If{$u' \neq \textsc{Null}$ and $v' \neq \textsc{Null}$}{
    $\mathcal{S}.\textsc{Evert}(u').$\\
    $\mathcal{S}.\textsc{Expose}(v').$
  }
\end{algorithm}

\begin{algorithm}[h!]
  \caption{$\mathcal{M}.\textsc{Cut}(u, v)$}
  \label{algorithm:m_cut}
  \If{$(u, v) \in R$}{
    Convert the edge $(u, v)$ to dashed.
  }
  $\mathcal{S}.\textsc{Cut}(u, v)$.\\
  $\mathcal{S}.\textsc{Evert}(u)$.\\
  $u' \gets \mathcal{S}.\textsc{GlobalLCA}(u)$.\\
  $\mathcal{S}.\textsc{Evert}(v)$.\\
  $v' \gets \mathcal{S}.\textsc{GlobalLCA}(v)$.\\
  \If{$u' \neq \textsc{Null}$ and $v' \neq \textsc{Null}$}{
    \If{$u' \neq u$}{
        $\mathcal{S}.\textsc{Evert}(u)$.\\
        Convert the edge $(u', \mathcal{S}.\textsc{Parent}(u'))$ to dashed.
    }
    \If{$v' \neq v$}{
        $\mathcal{S}.\textsc{Evert}(v)$.\\
        Convert the edge $(v', \mathcal{S}.\textsc{Parent}(v'))$ to dashed.
    }
  }
\end{algorithm}

\paragraph{Analysis.}

% \maxP{End High-Level structure}

We now prove the correctness of our implementation.

% \begin{claim} \label{clm:connected_comp}
%     Each connected component in $R$ is either a connected component in $M^U$ or a path.
% \end{claim}
% \begin{proof}
%     A connected component $C$ in $R$ that is not a connected component in $M^U$ is separated from a larger connected component only as a part of some solid path in $\textsc{Expose}$, or as a part of a path between two $\textsc{GlobalLCA}$s in $\textsc{Cut}$ because if we set $u$ as the root then each terminal in the solid tree containing $u$ is in the subtree rooted at $u'$ and same for $v$.
% \end{proof}

\begin{claim}
    The invariant \Cref{property:aug_dynamic_tree} holds after $\mathcal{M}.\textsc{Link}$ (\Cref{algorithm:m_link}) is called.
\end{claim}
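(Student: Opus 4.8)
The plan is the usual invariant-maintenance argument: assume \Cref{property:aug_dynamic_tree} holds for $R$ just before the call to $\mathcal{M}.\textsc{Link}(u,v)$, and show it holds again afterward, splitting into cases according to whether $u'$ and $v'$ are \textsc{Null}. Two preliminary observations keep the bookkeeping under control. First, $\mathcal{S}.\textsc{Evert}$ never changes which edges are solid and $\mathcal{S}.\textsc{Link}$ adds its new edge as dashed (and makes no other solid/dashed changes), so the only modifications to $R$ during the call come from the conditional block, i.e. from $\mathcal{S}.\textsc{Expose}(v')$. Second, writing $T_u,T_v$ for the two $M$-trees merged by the call and $T=T_u\cup T_v\cup\{(u,v)\}$, we have $v'\neq\textsc{Null}$ iff $T_v$ contains a terminal, and when both $u',v'\neq\textsc{Null}$ the new Steiner subtree decomposes as
\[ M^U\cap T \;=\; (M^U\cap T_u)\;\cup\;(M^U\cap T_v)\;\cup\;\pi_T(u',v'), \]
where $\pi_T(u',v')$ is the $u'$--$v'$ path in $T$, it meets $M^U\cap T_u$ only at $u'$ and $M^U\cap T_v$ only at $v'$, its interior vertices carry no terminal, and $M^U\cap T_u, M^U\cap T_v$ are vertex-disjoint. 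This identity is where the semantics of $\mathcal{S}.\textsc{GlobalLCA}$ enters: after $\mathcal{S}.\textsc{Evert}(u)$, $u'$ is the lowest common ancestor of all terminals of $T_u$, hence the unique vertex at which $M^U\cap T_u$ exits toward $u$, and symmetrically for $v'$; so inside $T$ the minimal subtree spanning all terminals is obtained from the two old Steiner subtrees by adjoining the path linking their exit vertices.

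The cases $u'=\textsc{Null}$ or $v'=\textsc{Null}$ are immediate: the conditional block is skipped, $R$ is unchanged, and if (say) $T_v$ has no terminal then $M^U\cap T=M^U\cap T_u$ (a minimal Steiner subtree of vertices all inside $T_u$ stays inside $T_u$), while every $R$-component inside $T_v$ was already, by hypothesis, a terminal-free path since $M^U\cap T_v=\emptyset$; hence \Cref{property:aug_dynamic_tree} is preserved verbatim. For the remaining case $u',v'\neq\textsc{Null}$, by hypothesis $M^U\cap T_u$ and $M^U\cap T_v$ are precisely two connected components of $R$ (degenerating to a single terminal when the corresponding tree has exactly one terminal), and every other $R$-component inside $T$ is a terminal-free path. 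The core of the argument is a secondary induction over the loop of $\mathcal{S}.\textsc{Expose}(v')$, executed after $\mathcal{S}.\textsc{Evert}(u')$ so that $u'$ is the root of $T$. Writing $\pi_T(u',v')=(p_0=u',p_1,\dots,p_k=v')$, the loop runs $\mathcal{M}.\textsc{Splice}(p_k),\dots,\mathcal{M}.\textsc{Splice}(p_1)$, and I would show that just before $\mathcal{M}.\textsc{Splice}(p_i)$ the solid component of $v'$ equals $(M^U\cap T_v)\cup\{(p_k,p_{k-1}),\dots,(p_{i+1},p_i)\}$, the component $M^U\cap T_u$ is still intact, $p_{i-1}$ lies in a terminal-free component, and all other $R$-components inside $T$ are still terminal-free paths. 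For the inductive step, $\mathcal{M}.\textsc{Splice}(p_i)$ prunes only solid edges incident to $p_{i-1}$ and to $\mathcal{S}.\textsc{Highest}(p_{i-1})$; since $(p_{i-1},p_{i-2})$ is still dashed at that moment one has $\mathcal{S}.\textsc{Highest}(p_{i-1})=p_{i-1}$, and since $p_{i-1}$ is a non-terminal interior vertex of $\pi_T(u',v')$ all pruned edges lead into terminal-free subtrees of $T$, so only terminal-free paths get split and no edge of $M^U\cap T_u$ or $M^U\cap T_v$ is removed. In the last iteration $b=p_0=u'=\mathcal{S}.\textsc{Root}(u')$, so the pruning branch of \Cref{algorithm:splice} is skipped and $M^U\cap T_u$ survives; afterward the solid component of $v'$ is $(M^U\cap T_v)\cup\pi_T(u',v')\cup(M^U\cap T_u)=M^U\cap T$, and the loop halts because $\mathcal{S}.\textsc{Highest}(v')=u'=\mathcal{S}.\textsc{Root}(v')$. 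Hence the components of $R$ after the call are $M^U\cap T$, the untouched components $M^U\cap T_x$ for every other $M$-tree $T_x$, and terminal-free paths — which is exactly \Cref{property:aug_dynamic_tree} for the updated $M$.

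Two minor points round this off. Since $\mathcal{R}.\textsc{GetSomeIncidentEdges}$ returns only up to $10$ edges, one must observe that each vertex pruned above lies, at that moment, in a terminal-free path of $R$ and hence has solid degree at most $2$, while $(p_i,p_{i-1})$ is dashed, so all solid edges that must be converted to dashed are indeed among the returned ones. And the degenerate sub-cases — $u$ or $v$ itself a terminal (forcing $u'=u$ or $v'=v$, i.e. an empty stem), or $T_u$ or $T_v$ with exactly one terminal (forcing $M^U\cap T_u=\emptyset$ or $M^U\cap T_v=\emptyset$) — are subsumed by the same argument, the decomposition of $M^U\cap T$ and the secondary induction remaining valid when one of the pieces is a single vertex or empty. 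The step I expect to be the real work is this secondary induction on the \textsc{Expose}/\textsc{Splice} loop: it is where the precise semantics of $\textsc{Highest}$, $\textsc{Parent}$ and the two prune steps all have to line up so that the solid path is re-threaded exactly along $\pi_T(u',v')$ while never breaking a component of $M^U$.
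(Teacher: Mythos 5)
Your plan follows the paper's proof in outline — same split on whether $u'$ or $v'$ is \textsc{Null}, same observation that the new Steiner subtree is the two old ones joined by $\pi_T(u',v')$, same framework of inducting over the $\textsc{Splice}$ iterations of $\textsc{Expose}(v')$, and the same resolution of the $\textsc{GetSomeIncidentEdges}$ concern via solid degree at most $2$. However, the inductive invariant you propose does not match what $\textsc{Expose}$ actually does, and the key step you rely on is false. You assert that the loop runs $\textsc{Splice}(p_k),\dots,\textsc{Splice}(p_1)$, i.e.\ visits every vertex of $\pi_T(u',v')$, and that at the moment of $\textsc{Splice}(p_i)$ one has $\textsc{Highest}(p_{i-1})=p_{i-1}$ because $(p_{i-1},p_{i-2})$ is ``still dashed.'' Neither holds. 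Before the $\textsc{Link}$, a terminal-free path in $R$ may run along a segment of $\pi_T(u',v')$ — the interior edges of that path in $T_u$ (between $u$ and $u'$) or $T_v$ (between $v'$ and $v$) are perfectly entitled to be solid, because \Cref{property:aug_dynamic_tree} only asserts these components are terminal-free paths, not that they avoid the $u'$--$v'$ route. So $(p_{i-1},p_{i-2})$ can be solid from the outset, in which case $\textsc{Highest}(p_{i-1})$ overshoots $p_{i-1}$ and the loop jumps past several $p_j$ at once. Your claim that the edge is ``still'' dashed implicitly assumes it was dashed at the start, which is exactly the case your invariant does not cover.

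The conclusion you want — that each $\textsc{Splice}$ only severs edges of terminal-free paths and never touches an edge of $M^U\cap T_u$ or $M^U\cap T_v$ — is correct, but the argument must account for the jump. The paper does this by looking at the terminal-free solid path through $v''=\textsc{Parent}(\cdot)$ and $\textsc{Highest}(v'')$ and noting that its intersection with $\pi_T(u',v')$ is exactly the segment $v''$--$\textsc{Highest}(v'')$ (so the path has one endpoint hanging off each side of that segment); $\textsc{Splice}$ then prunes one edge at $v''$ and one at $\textsc{Highest}(v'')$, both into terminal-free subtrees. Your invariant should be restated so that the solid component of $v'$ grows by a (possibly multi-edge) segment $[q_i,q_{i+1}]$ of $\pi_T$ per iteration, with $q_{i+1}=\textsc{Highest}(\textsc{Parent}(q_i))$, and the pruned edges are at $\textsc{Parent}(q_i)$ and $q_{i+1}$, both interior to $\pi_T$ and hence both on terminal-free paths. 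With that correction the argument closes, but as written the invariant is unprovable.
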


\begin{proof}
    If either $u'$ or $v'$ is $\textsc{Null}$, then not both connected components containing $u$ and $v$ contained terminals before adding $(u, v)$ to $M$. In this case, both $M^U$ and $R$ remain unchanged.

    Otherwise, only the $u'$-to-$v'$ path is added to $M^U$, and $\textsc{Expose}$ adds this path to $R$. Then we show that every edge incident to this path are removed from $R$, except for the edges incident to $u'$ or $v'$, because $u'$ and $v'$ were in $M^U$. By \Cref{property:aug_dynamic_tree}, every solid tree that did not contain any terminal must be a path. Let $v''$ be the vertex $v$ in \Cref{line:splice_v}. If $\mathcal{S}.\textsc{Highest}(v'') \neq \mathcal{S}.\textsc{Root}(v'')$, every such path $P$ containing $v''$ must started at some vertex $u_1$, went through $v''$ and $\mathcal{S}.\textsc{Highest}(v'')$ and ended at some vertex $u_2$ where $P$ intersects the $u$-to-$\mathcal{S}.\textsc{Root}(u)$ path is exactly the $v''$-to-$\mathcal{S}.\textsc{Highest}(v'')$ path. Hence we only need to remove at most one edge incident to $v''$ and at most one edge incident to $\mathcal{S}.\textsc{Highest}(v'')$ from $R$. Also, there were at most $2$ edges incident to each of $v''$ and $\mathcal{S}.\textsc{Highest}(v'')$, so $\mathcal{R}.\textsc{GetSomeIncidentEdges}$ would return all of them. Finally, the solid tree containing $u_1$ is still a path after $\textsc{Link}$ if $u_1$ is not on the $u$-to-$\mathcal{S}.\textsc{Root}(u)$ path, and the same for $u_2$.
\end{proof}

\begin{claim}
    The invariant \Cref{property:aug_dynamic_tree} holds after $\mathcal{M}.\textsc{Cut}$ (\Cref{algorithm:m_cut}) is called.
\end{claim}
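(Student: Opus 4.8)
The plan is to run the same kind of forest-bookkeeping argument used for $\mathcal{M}.\textsc{Link}$, but tracking which edges \emph{leave} $M^U$ rather than which \emph{enter} it. Fix the tree $T$ of $M$ containing $u$ and $v$ immediately before the call; deleting $(u,v)$ splits $T$ into the subtree $M_u\ni u$ and the subtree $M_v\ni v$. Every line of \Cref{algorithm:m_cut} --- the initial dashing of $(u,v)$, the $\mathcal{S}.\textsc{Cut}$, the $\mathcal{S}.\textsc{Evert}$s, and the two subsequent ``convert to dashed'' steps --- acts only on edges of $T$, so every solid tree disjoint from $T$ is untouched and \Cref{property:aug_dynamic_tree} is preserved there for free. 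After $\mathcal{S}.\textsc{Cut}(u,v)$ the sides $M_u$ and $M_v$ lie in different trees of $M$ and are modified independently, so it suffices to verify the invariant restricted to $M_u$; the $M_v$ side is identical with the roles of $u,v$ and $u',v'$ swapped.

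The core step is a structural description of how $M^U$ changes inside $M_u$. Write $M'=M\setminus\{(u,v)\}$. Using that in any tree the edge set $M^U\cap T$ equals the minimal subtree (Steiner tree) spanning the terminals of $T$ --- hence connected, and by the inductive hypothesis a single connected component of the current solid graph $R$ --- I would show: (i) if $M_u$ contains no terminal then no edge of $M_u$ lies in $M^U$ and $(M')^U\cap M_u=\emptyset$; (ii) if $M_u$ contains a terminal but $M_v$ does not, then already $M^U\cap M_u=(M')^U\cap M_u$ and $(u,v)\notin M^U$; (iii) if both sides contain a terminal, then $(u,v)\in M^U$ and
\[
M^U\cap M_u \;=\; \bigl((M')^U\cap M_u\bigr)\cup\pi ,
\]
where $u'=\mathcal{S}.\textsc{GlobalLCA}(u)$ is the lowest common ancestor (with $M_u$ rooted at $u$) of all terminals of $M_u$, equivalently the top vertex of the Steiner tree $(M')^U\cap M_u$, and $\pi$ is the $u$-to-$u'$ path in $M_u$; moreover $\pi$ and $(M')^U\cap M_u$ meet exactly at $u'$, no vertex of $\pi$ other than $u'$ is a terminal, and inside $M^U\cap M_u$ the interior vertices of $\pi$ have degree $2$ while $u$ has degree $1$. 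All of this is a consequence of the LCA property: every terminal of $M_u$ lies in the subtree rooted at $u'$, so no terminal-carrying branch hangs off $\pi$ above $u'$.

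It then remains to check that \Cref{algorithm:m_cut} exactly realizes these changes. In case (i), $u'=\textsc{Null}$, the conditional block is skipped, and the only change to $R$ inside $M_u$ is the possible dashing of $(u,v)$, which at worst splits one terminal-free path of $R$ into two; so all $R$-components in $M_u$ stay terminal-free paths, matching $(M')^U\cap M_u=\emptyset$. In case (ii), $v'=\textsc{Null}$, the block is again skipped, $M^U\cap M_u=(M')^U\cap M_u$ survives intact as an $R$-component, and dashing $(u,v)$ again only splits a terminal-free path. In case (iii), $(u,v)\in M^U\subseteq R$, so the first line dashes it, splitting the $R$-component $M^U\cap T$ into $M^U\cap M_u$ and $M^U\cap M_v$; the guard $u'\ne u$ is exactly the condition that $\pi$ is non-degenerate (when $u'=u$ the stub is empty and $M^U\cap M_u$ already equals $(M')^U\cap M_u$), and when it holds the block dashes the single edge $(u',\mathcal{S}.\textsc{Parent}(u'))$ --- the unique edge of $M^U\cap M_u$ joining $\pi$ to the Steiner part --- splitting $M^U\cap M_u$ into the component $(M')^U\cap M_u$ (no solid edge now sticks out at $u'$, since its only edge toward $u$ has just been removed, so this is a maximal solid component, i.e.\ a component of $(M')^U$) and the residual stub $\pi$ minus its top edge, which by the structural description is a terminal-free path. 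Since $\mathcal{S}.\textsc{Evert}$ never changes which edges are solid, combining the (symmetric) analyses on $M_u$ and $M_v$ yields \Cref{property:aug_dynamic_tree} for the updated $R$.

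I expect the structural step to be the main obstacle: proving the decomposition $M^U\cap M_u=\bigl((M')^U\cap M_u\bigr)\cup\pi$ together with the ``no branch / no interior terminal'' shape of $\pi$, and then carefully matching the degenerate configurations ($u'=u$; one of $M_u,M_v$ terminal-free; whether $(u,v)$ was already in $R$; an isolated terminal when $M_u$ holds a single terminal, which is still a valid $0$-edge component of $(M')^U$) against the guards in \Cref{algorithm:m_cut}. Everything after that is routine bookkeeping about how the components of a forest change when a single edge is deleted.
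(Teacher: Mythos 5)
Your proposal is correct and takes essentially the same approach as the paper's (much terser) proof: the paper also splits on whether both $u',v'$ are non-\textsc{Null}, observes that in the degenerate cases $(u,v)\notin M^U$ so the only effect on $R$ is to split a terminal-free path, and in the main case notes that the edges leaving the $M^U$-component are exactly the $u'$-to-$v'$ path, of which $\textsc{Cut}$ dashes the two boundary edges plus $(u,v)$, leaving two terminal-free stubs. Your write-up simply makes explicit the structural decomposition $M^U\cap M_u=\bigl((M')^U\cap M_u\bigr)\cup\pi$ and the verification that no solid branch can hang off the interior of $\pi$, which the paper's proof leaves implicit.
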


\begin{proof}
    If either $u'$ or $v'$ is $\textsc{Null}$, then the edge $(u, v)$ were not in $M^U$ and we only need to remove $(u, v)$ from $R$ if it was in $R$. Then the solid tree containing $u$ before $\textsc{Cut}$ must be a path, and can only be split into paths after removing the edge $(u, v)$.

    Otherwise, the edge $(u, v)$ was in $M^U$. Then we only need to remove the $u'$-to-$v'$ path from $R$. So $\textsc{Cut}$ removes the first and the last edge on the $u'$-to-$v'$ path and the edge $(u, v)$ from $R$ and splits the rest of the path into paths.
\end{proof}

For the running time, we denote $\textsc{Size}(v)$ as the size of the subtree in $M$ rooted at $v$ with respect to the root $\mathcal{S}.\textsc{Root}(v)$. We say a edge $(v, \mathcal{S}.\textsc{Parent}(v))$ is heavy if $2 \cdot \textsc{Size}(v) > \textsc{Size}(\mathcal{S}.\textsc{Parent}(v))$, or light otherwise.
Let $k$ be the total number of operations of $\textsc{Link}$ and $\textsc{Cut}$.
Observe that the number of light edges on the path from any vertex to its root is small.

\begin{observation}
    For every vertex $v$, the number of light edges on the $v$-to-$\mathcal{S}.\textsc{Root}(v)$ path is $O(\log n)$.
\end{observation}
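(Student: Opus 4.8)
The statement to prove is the observation that for every vertex $v$, the number of light edges on the $v$-to-$\mathcal{S}.\textsc{Root}(v)$ path is $O(\log n)$. This is a classic heavy-path / heavy-light decomposition argument, so the plan is straightforward.

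\textbf{Approach.} The plan is to track the subtree size $\textsc{Size}(\cdot)$ along the path from $v$ up to the root, and show that each light edge at least halves this quantity. Concretely, enumerate the vertices on the $v$-to-$\mathcal{S}.\textsc{Root}(v)$ path as $v = x_0, x_1, \dots, x_r = \mathcal{S}.\textsc{Root}(v)$, where $x_{i+1} = \mathcal{S}.\textsc{Parent}(x_i)$, so that $\textsc{Size}(x_0) \le \textsc{Size}(x_1) \le \dots \le \textsc{Size}(x_r) \le n$. By definition, the edge $(x_i, x_{i+1})$ is light exactly when $2 \cdot \textsc{Size}(x_i) \le \textsc{Size}(x_{i+1})$.

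\textbf{Key steps.} First I would observe that $\textsc{Size}(x_0) \ge 1$ since a subtree contains at least its root. Next, whenever $(x_i, x_{i+1})$ is a light edge we have $\textsc{Size}(x_{i+1}) \ge 2 \cdot \textsc{Size}(x_i)$, and for heavy edges we still have the trivial bound $\textsc{Size}(x_{i+1}) \ge \textsc{Size}(x_i)$ because $x_i$'s subtree is contained in $x_{i+1}$'s subtree. Therefore, if there are $\ell$ light edges on the path, telescoping these inequalities from $x_0$ up to $x_r$ gives $n \ge \textsc{Size}(x_r) \ge 2^{\ell} \cdot \textsc{Size}(x_0) \ge 2^{\ell}$. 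Taking logarithms yields $\ell \le \log_2 n$, which is $O(\log n)$ as claimed. I would also note the minor point that $\textsc{Size}$ is computed with respect to the root $\mathcal{S}.\textsc{Root}(v)$, which is consistent along the entire path in question, so there is no ambiguity in the definition.

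\textbf{Main obstacle.} There is essentially no serious obstacle here; this is a textbook heavy-light counting argument. The only thing to be careful about is making sure the containment $\textsc{Size}(x_i) \le \textsc{Size}(x_{i+1})$ is justified (the subtree rooted at a child is a subset of the subtree rooted at its parent), and that the base case $\textsc{Size}(v) \ge 1$ holds. If one wanted to be pedantic, one could also remark that the bound is independent of the dynamic operation count $k$, i.e. it holds at every point in time regardless of how many $\textsc{Link}$/$\textsc{Cut}$ operations have occurred, since it is a purely structural property of the current forest $M$ and its chosen rooting.
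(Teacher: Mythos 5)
Your proof is correct and is the standard heavy-light counting argument from Sleator--Tarjan; the paper states this observation without proof precisely because it is this textbook fact, so there is nothing to compare against and nothing missing.
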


\begin{claim} \label{clm:num_splices}
    There are at most $O(k \log n)$ splices.
\end{claim}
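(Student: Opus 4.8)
## Proof Proposal for Claim \ref{clm:num_splices}

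The plan is to amortize the number of splices against the total number of $\textsc{Link}$ and $\textsc{Cut}$ operations using the standard heavy-light potential argument from \cite{sleator1981data}, adapted to our setting. I would define the potential $\Phi$ to be the total number of \emph{dashed light} edges in $M$ (equivalently, light edges $(v, \mathcal{S}.\textsc{Parent}(v))$ with $(v,\mathcal{S}.\textsc{Parent}(v)) \notin R$). By the preceding observation, every root-to-vertex path contains $O(\log n)$ light edges, and there are at most $n$ vertices, so $\Phi = O(n \log n)$ at all times; more importantly, each individual $\textsc{Link}$ or $\textsc{Cut}$ (and the $\textsc{Evert}$ calls they trigger, which do not change edge weights or solid/dashed status) can change the classification of edges as heavy/light only along a single root path, hence can increase $\Phi$ by at most $O(\log n)$.

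Next I would analyze a single call to $\mathcal{M}.\textsc{Splice}(u)$ from inside an $\textsc{Expose}$. A splice converts the dashed edge $(u, \mathcal{S}.\textsc{Parent}(u))$ to solid and converts $O(1)$ previously-solid edges (those incident to $v = \mathcal{S}.\textsc{Parent}(u)$ and to $\mathcal{S}.\textsc{Highest}(v)$, as guaranteed by Property \ref{property:aug_dynamic_tree} and the fact that $\mathcal{R}.\textsc{GetSomeIncidentEdges}$ returns all of them since there are at most two of each) to dashed. The key case split is on whether the newly-solidified edge $(u,v)$ is heavy or light. If $(u,v)$ is light, then it was a dashed light edge before and is now solid, so $\Phi$ decreases by $1$ from that edge; the $O(1)$ edges turned dashed increase $\Phi$ by at most $O(1)$, but — crucially — these edges were solid and hence did not contribute to $\Phi$, so the net effect here must be bounded carefully: I would instead define $\Phi$ to count only light edges that are dashed, so that turning a solid light edge dashed increases $\Phi$ by $1$ and turning a dashed light edge solid decreases it by $1$; turning heavy edges solid/dashed never changes $\Phi$. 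Then a splice whose new solid edge $(u,v)$ is light pays for itself by decreasing $\Phi$ by $1$ while the $O(1)$ collateral conversions increase $\Phi$ by at most $O(1)$; a splice whose new solid edge $(u,v)$ is heavy does not decrease $\Phi$, but I charge it to the fact that each $\textsc{Expose}$ traverses a path from $\mathcal{S}.\textsc{Root}$ downward and can encounter at most $O(\log n)$ heavy-to-light transitions along that path, so at most $O(\log n)$ splices per $\textsc{Expose}$ can be of the "heavy" type.

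Combining: each of the $k$ operations triggers $O(1)$ calls to $\textsc{Expose}$ and increases $\Phi$ by $O(\log n)$; each $\textsc{Expose}$ performs $O(\log n)$ "heavy" splices directly, and every "light" splice is paid for by a unit drop in $\Phi$ (up to $O(1)$ slack). Since $\Phi \ge 0$ always and the total increase of $\Phi$ over the whole sequence is $O(k\log n)$ (from the operations) plus $O(1)$ per splice, a standard telescoping gives that the total number of light splices is $O(k \log n)$, and the total number of heavy splices is $O(k \log n)$ as well, for a grand total of $O(k\log n)$ splices.

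The main obstacle I anticipate is getting the accounting for the collateral dashing (the $O(1)$ edges turned dashed in each $\mathcal{M}.\textsc{Splice}$) to interact cleanly with the potential: one must verify that Property \ref{property:aug_dynamic_tree} really does guarantee only $O(1)$ such edges per splice (so that $\mathcal{R}.\textsc{GetSomeIncidentEdges}$, which caps at $10$, always sees all of them), and that these conversions cannot be adversarially arranged to pump $\Phi$ back up faster than splices consume it. This is exactly the subtle point in the Sleator–Tarjan analysis, here complicated by the fact that solid trees that contain no terminal are forced to be \emph{paths} rather than arbitrary subtrees; I expect that this path structure is what keeps the collateral bounded by $O(1)$, and I would lean on it explicitly in the bound, mirroring the two previous claims' use of Property \ref{property:aug_dynamic_tree}.
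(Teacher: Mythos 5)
Your proposal inverts the correct amortization, and as a result the key step — the claim that ``at most $O(\log n)$ splices per \textsc{Expose} can be of the heavy type'' — is not justified and is in fact false in the worst case. The heavy-light decomposition bounds the number of \emph{light} edges on any root path by $O(\log n)$; it says nothing about the number of heavy edges, and in particular a single \textsc{Expose} can traverse $\Omega(n)$ dashed heavy edges, triggering $\Omega(n)$ heavy splices in one operation. So there is no direct per-\textsc{Expose} bound on heavy splices. Symmetrically, your potential $\Phi$ (count of dashed light edges) is the wrong one to carry the load: light splices are already bounded directly by $O(\log n)$ per \textsc{Expose} without any potential, while heavy splices — the ones that actually need amortization — leave $\Phi$ unchanged. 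You also correctly flagged the second issue yourself: the $O(1)$ collateral edges each splice converts from solid to dashed can be light, in which case they \emph{increase} $\Phi$ by $O(1)$ per splice, so the claimed unit decrease from a light splice is swallowed and the telescoping does not close.

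The paper's proof uses the opposite pairing, and it is worth seeing why that pairing is forced. The potential is $\#hs$, the number of \emph{heavy solid} edges; a (normal) heavy splice, by making a heavy dashed edge solid, necessarily \emph{increases} $\#hs$ by $1$, so the count of such splices is bounded by the maximum value of $\#hs$ (at most $\#\text{links}-\#\text{cuts}$) plus all the decreases $\#hs$ suffers from other sources (light splices: $O(1)$ each; \textsc{Link}/\textsc{Cut}: $O(1)$ each; \textsc{Evert}: $O(\log n)$ each). This works precisely because the collateral dashing fits the potential cleanly: when $(u,v)$ is heavy, the other children's edges into $v$ must be light, so dashing them does not touch $\#hs$; and at $\mathcal{S}.\textsc{Highest}(v)$ the same holds when the path edge is heavy. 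The one leaky case — the path edge at $\mathcal{S}.\textsc{Highest}(v)$ is light — is exactly why the paper introduces the ``special'' heavy splice category and bounds it directly by $O(\log n)$ per \textsc{Expose} (again using the $O(\log n)$ light edges on the path). To repair your argument you would need to (a) drop the per-\textsc{Expose} bound on heavy splices, (b) replace $\Phi$ with a potential that heavy splices provably increase (such as $\#hs$), and (c) introduce the special/normal split or some equivalent device to make the collateral dashing non-adversarial for that potential; at that point you would have reproduced the paper's proof.
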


\begin{proof}
    We say a $\textsc{Splice}$ is light if the edge $(u, \mathcal{S}.\textsc{Parent}(u))$ is light, and heavy otherwise. In addition, we say a $\textsc{Splice}$ is special if the edge on the $u$-to-$\mathcal{S}.\textsc{Root}(u)$ path entering $\mathcal{S}.\textsc{Highest}(\mathcal{S}.\textsc{Parent}(u))$ is light, and normal otherwise. Let $\#hs$ be the number of heavy solid edges in the tree and $\#normal\_heavy\_splices, \#light\_splices, \#links, \#cuts, \#eversions, \#exposes$ be the numbers of normal heavy splices, light splices, links, cuts, eversions and exposes, respectively.

    One $\textsc{Expose}$ operation causes at most $O(\log n)$ light splices and at most $O(\log n)$ special heavy splices. Each normal heavy splice increases $\#hs$ by $1$, each special heavy splice does not decrease $\#hs$, and each light splice decreases $\#hs$ by at most $O(1)$.

    One $\mathcal{M}.\textsc{Link}$, apart from invoking other functions, only increases the subtree size of $u$, so at most one solid edge incident to $u$ is converted from heavy to light. Hence $\#hs$ is decreased by at most one.

    One $\mathcal{M}.\textsc{Cut}$, apart from invoking other functions, only decreases the subtree size of $u$, so no edge is converted from heavy to light. Since we remove the edge $(u, v)$ and it may be heavy and solid, $\#hs$ is decreased by at most one.

    For $\textsc{Evert}$, changing the root from the original root $r$ to the new root $u$ may convert at most $O(\log n)$ heavy edges from solid to dashed. Since only the subtree sizes of the vertices on the tree path from $r$ to $u$ are changed, only the heavy edges along the path or   incident to the path may be converted from solid to dashed. The path contains at most $O(\log n)$ light edges after the eversion, and there are at most $O(\log n)$ heavy edges incident to the path before the eversion since each of such heavy edge corresponds to a light edge on the path, so one $\textsc{Evert}$ may decrease $\#hs$ by at most $O(\log n)$.

    Therefore, $\#links - \#cuts \geq \#hs \geq \#normal\_heavy\_splices - \#light\_splices - O(\log n)\cdot (\#links + \#cuts + \#eversions)$, so $\#normal\_heavy\_splices \leq \#light\_splices + O(\log n)\cdot (\#links + \#cuts + \#eversions) = O(k \log n)$. 
    
    Hence, the total number of splices is $O(k \log n)$.
\end{proof}

\begin{claim} 
    Each operation of $\mathcal{M}$ takes $O(\log^2 n)$ time.
\end{claim}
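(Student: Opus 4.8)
The plan is to charge every $\mathcal{M}$-operation to a constant number of underlying $\mathcal{S}$- and $\mathcal{R}$-operations, each costing $O(\log n)$, except for $\mathcal{M}.\textsc{Link}$, whose dominant cost is a call to $\textsc{Expose}$; for $\textsc{Expose}$ we invoke the amortized splice count already established in \Cref{clm:num_splices}. So the statement will be proved in the amortized sense, which is the best possible here since (as the potential argument behind \Cref{clm:num_splices} shows) a single $\textsc{Expose}$ may trigger $\omega(\log n)$ normal-heavy splices.

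First I would dispatch the ``passive'' operations. $\mathcal{M}.\textsc{ReturnSteinerTreeLength}$, $\mathcal{M}.\textsc{ReturnSteinerTreeMinWeight}$, and $\mathcal{M}.\textsc{AddValueOnSteinerTree}$ each forward to a single call on $\mathcal{R}$ (legitimately, because by \Cref{property:aug_dynamic_tree}, whenever $U$ is connected the component of $R$ carrying the terminals is exactly $M^U$), hence $O(\log n)$. For $\mathcal{M}.\textsc{GetValue}(e)$ and $\mathcal{M}.\textsc{ResetValue}(e)$ we test in $\mathcal{S}$ whether $e$ is marked solid — an $O(\log n)$ query — and then forward to exactly one of $\mathcal{R}$ or $\mathcal{S}$, again $O(\log n)$. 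Finally $\mathcal{M}.\textsc{GetCongestedEdge}$ returns the first of $\mathcal{S}.\textsc{GetCongestedEdge}()$, $\mathcal{R}.\textsc{GetCongestedEdge}()$, or the balanced-search-tree kept for the edges of $A\setminus M$ that yields an edge with $\Delta\val(e)\ge \tau(e)$; all three are $O(\log n)$.

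Next I would bound a single $\textsc{Splice}$ (\Cref{algorithm:splice}). It performs $O(1)$ pointer queries on $\mathcal{S}$ ($\textsc{Parent}$, $\textsc{Root}$, $\textsc{Highest}$), at most two calls to $\mathcal{R}.\textsc{GetSomeIncidentEdges}$ — each returning at most $10$ edges by definition — and then converts a constant number of edges between solid and dashed, where each conversion consists of a $\textsc{MarkSolid}/\textsc{MarkDashed}$ on $\mathcal{S}$, one $\mathcal{R}.\textsc{Link}$ or $\mathcal{R}.\textsc{Cut}$, and the $O(1)$ bookkeeping that moves the edge's value between $\Delta\val_R$ and $\Delta\val_M$. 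The correctness proof already establishes that the $\le 10$ returned edges contain every edge that must be recolored, so the per-splice work really is $O(1)$ tree operations, i.e. $O(\log n)$ time.

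Then the main step. $\mathcal{M}.\textsc{Cut}$ (\Cref{algorithm:m_cut}) performs only $O(1)$ operations on $\mathcal{S}$ ($\textsc{Evert}$, $\textsc{Cut}$, $\textsc{GlobalLCA}$) together with $O(1)$ edge conversions and contains no loop, so it is $O(\log n)$ in the worst case; $\mathcal{M}.\textsc{Link}$ (\Cref{algorithm:m_link}) performs $O(1)$ $\mathcal{S}$-operations plus one call to $\textsc{Expose}$ (\Cref{algorithm:expose}). The loop body of $\textsc{Expose}$ costs $O(\log n)$ — one $\textsc{Highest}$ vs.\ $\textsc{Root}$ comparison plus one $\textsc{Splice}$ — and the number of iterations of a given $\textsc{Expose}$ call equals its number of splices plus one. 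By \Cref{clm:num_splices} the total number of splices over a sequence containing $k$ $\textsc{Link}/\textsc{Cut}$ calls is $O(k\log n)$; adding one extra iteration per $\textsc{Expose}$ (hence per $\textsc{Link}$) leaves $O(k\log n)$ loop iterations in total, and therefore $O(k\log^2 n)$ total time spent inside $\textsc{Expose}$. Summing with the $O(\log n)$ worst-case cost of every other operation, a sequence of $q$ operations of which $k$ are $\textsc{Link}/\textsc{Cut}$ runs in $O(k\log^2 n + q\log n) = O(q\log^2 n)$ time, i.e.\ amortized $O(\log^2 n)$ per operation. The one genuinely delicate point is precisely this amortization: the individual $\textsc{Expose}$ calls are not $O(\log^2 n)$ in the worst case, and it is the potential $\#hs$ (the number of heavy solid edges) from \Cref{clm:num_splices} that makes the amortized bound go through, so no additional argument beyond assembling the pieces above is needed.
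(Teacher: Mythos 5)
Your proof is correct and follows essentially the same route as the paper's: the only nontrivial piece is bounding the total work inside $\textsc{Splice}$, which you do via the $O(k\log n)$ splice count from \Cref{clm:num_splices} together with the $O(\log n)$ per-splice cost, and every other operation is a constant number of $O(\log n)$ dynamic-tree calls. You are also right to point out explicitly that the $O(\log^2 n)$ bound is amortized — a single $\mathcal{M}.\textsc{Link}$ can trigger many splices — whereas the paper's proof (and the phrasing in \Cref{thm:augmentedSTtrees}) leave the amortized nature implicit; this observation is accurate and makes the argument a bit more careful than the paper's own write-up.
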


\begin{proof}
% $\mathcal{M}.\textsc{ReturnCongestedEdges}$ invokes $\mathcal{S}.\textsc{ReturnCongestedEdges}$ to obtain a set of size $s_1$ in time $(s_1+1) \cdot O(\log n)$ and $\mathcal{R}.\textsc{ReturnCongestedEdges}$ to obtain a set of size $s_2$ in time $(s_2+1) \cdot O(\log n)$, and returns a set of size $s_1+s_2$ in total running time $(s_1 + s_2 + 1) \cdot O(\log n)$.

There are $O(k \log n)$ splices by \Cref{clm:num_splices}. 
Each $\textsc{Splice}$ adds one edge to $R$, so the numbers of edges ever added to $R$ and ever removed from $R$ are both $O(k \log n)$.
Each $\textsc{Splice}$ takes $O(\log n)$ time.
Hence $\textsc{Splice}$ has a running time of $O(k \log^2 n)$ over all invocations.

For the rest of the operations, it is clear that each operation takes time $O(\log n)$.
Therefore, the running time is $O(\log^2 n)$ per operation.
\end{proof}

\subsection{Guide trees}

In this section, we construct for given graph $G$ a $16$-respecting set of guide trees of size $n^{o(1)}$. Let $U$ be the set of terminals, and $s \in U$ a dedicated source. Given the Steiner vertex sparsifiers from last section, we closely follow the approach in \cite{li2021deterministic} to extract the guide trees:
\begin{itemize}
    \item we first compute vertex sparsifier $G'$ of $G$,
    \item then apply Li's algorithm \cite{li2021deterministic} on $G'$ to produce a skeleton graph $H$ with $m_{G'}^{1+o(1)}$ edges that of global edge connectivity $\lambda_H=n^{o(1)}$, such that for any $t\in U\setminus \{s\}$ and any $(s,t)$-mincut $S$ in $G$, $S\cap U$ is still an approximate (up to a constant factor) mincut in $H$, and
    \item finally apply Gabow's tree packing algorithm \cite{gabow1991matroid} on $H$ to obtain an approximate maximum tree packing on $H$ with size $\lambda_H=n^{o(1)}$. A similar analysis to Lemma 2.3 in \cite{karger2000minimum} shows the tree packing is a desired set of guide trees.
\end{itemize}
We start by stating Li's result to produce a skeleton graph. 
\begin{theorem}[c.f. Theorem 1.5 in \cite{li2021deterministic}]\label{thm:skeletonGraph}
    For any weighted input graph $G=(V,E,w)$, and constants $c>1$, $\epsilon\in (0,1]$, we can compute in deterministic time $\epsilon^{-4}n^{o(1)}m$ an unweighted (multi)graph $H$ with $m^{1+o(1)}$ edges and some weight $W=\epsilon^4\lambda_G/n^{o(1)}$ such that \begin{enumerate}
        \item For any cut $\emptyset\subsetneq S\subsetneq U$ with cut size no more than $c\lambda_G$ of $G$, we have $W\cdot |\partial_H S|\leq (1+\epsilon) \cdot w(\partial_G S)$.
        \item For any cut $\emptyset\subsetneq S\subsetneq U$ of $G$, we have $W\cdot |\partial_H S|\geq (1-\epsilon)\lambda_G$,
    \end{enumerate} 
    where $\lambda_G=\min_{s,t\in V}\lambda_G(s,t)$ is the edge connectivity of $G$.
\end{theorem}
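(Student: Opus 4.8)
The plan is to reconstruct the construction of \cite{li2021deterministic}. First I would obtain a constant-factor deterministic estimate $\hat\lambda$ of $\lambda_G$ — this can be done by a simpler subroutine or bootstrapped inside the construction — and fix the target scale $W = \Theta(\epsilon^2\hat\lambda/\log n)$ (the extra $\epsilon^2$ and $n^{o(1)}$ slack ends up in the final $W = \epsilon^4\lambda_G/n^{o(1)}$). Since weights are polynomially bounded, I would first replace $G$ by a deterministic cut sparsifier $G_1$ with $\tilde O(n/\epsilon^2)$ weighted edges preserving every cut to within a $(1\pm\epsilon)$ factor; this brings the edge count down to near-linear while only perturbing the two claimed inequalities by $(1\pm\epsilon)$ factors that can be absorbed by adjusting constants. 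From here the task is purely to ``unweight'' $G_1$ at scale $W$.

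The core is a deterministic analogue of Karger's skeleton. Randomly, one forms $H$ by placing, for each edge $e$, $\mathrm{Binom}(w_{G_1}(e),p)$ unit copies with $p = 1/W = \Theta(\log n/(\epsilon^2\lambda_G))$; bucketing cuts of $G$ by value into ranges $[2^i\lambda_G, 2^{i+1}\lambda_G)$, using the cut-counting bound (at most $n^{2\alpha}$ cuts of value $\le\alpha\lambda_G$) and a Chernoff bound per bucket, a union bound shows every cut $S$ concentrates to $(1\pm\epsilon)\,p\,w_{G_1}(\partial S)$; property~1 is the upper side for cuts of value $\le c\lambda_G$, and property~2 is the lower side over all cuts. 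To derandomize the sampling I would invoke expander decomposition: repeatedly peel off $\phi$-expander clusters for a suitable $\phi = n^{-o(1)}$, so that inside each cluster every cut of value at most $c\lambda_G$ has one side of small volume; hence there are only near-linearly many ``relevant'' cut constraints per cluster, and on them one runs the method of conditional expectations with the Chernoff tail as a pessimistic estimator to deterministically round the sampling so that all relevant constraints hold simultaneously. The inter-cluster edges have total weight reduced by a $\phi$ factor, so one recurses on them; the recursion has $n^{o(1)}$ depth, and at the end the sampled pieces are glued together.

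For the running time, the deterministic cut sparsifier, each expander decomposition, and each round of conditional expectations are all near-linear in the current edge count, the recursion depth and the size blow-up per level are $n^{o(1)}$, and the two $\epsilon^{-2}$ factors (skeleton density and sparsifier size) compound to $\epsilon^{-4}$, giving $\epsilon^{-4}n^{o(1)}m$ overall. Feeding the global mincut of $G$ into property~1 together with property~2 then yields $\lambda_H = W^{-1}\Theta(\lambda_G) = \epsilon^{-4}n^{o(1)}$, exactly the small, unweighted connectivity needed for Gabow's packing in the next step.

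The main obstacle I expect is the derandomization inside the expanders: one must show that the expander structure really does cut the number of relevant near-minimum cuts down to a quantity over which a polynomial-time pessimistic estimator is affordable, and that composing the expander decompositions recursively on the contracted inter-cluster graph does not accumulate error in the $(1\pm\epsilon)$ guarantees or blow up the skeleton's edge count — this careful bookkeeping is precisely where the $n^{o(1)}$ factors hidden in $W$ and in the $m^{1+o(1)}$ bound are consumed.
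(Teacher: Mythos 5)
This theorem is imported in the paper as a black box from Li's deterministic-mincut paper; the paper's entire ``proof'' is the one-sentence remark that immediately follows the statement, observing that Li's argument already yields the two-sided guarantee not just for global mincuts but for all cuts that are ``unbalanced'' in Li's sense, and that cuts of value $O(\lambda_G)$ are such cuts. You instead attempt to reconstruct Li's construction from scratch, which is a genuinely different and far more ambitious route than the citation-plus-remark the paper actually takes. Your outline captures the right ingredients — a scale $W$ set by a constant-factor estimate of $\lambda_G$, a Karger-style skeleton at sampling rate $1/W$, expander decomposition, a recursion on inter-cluster edges — but the step you yourself flag as the main obstacle is left as a gap rather than resolved, and it is the crux.

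Concretely, you appeal to ``near-linearly many relevant cut constraints per cluster'' in order to run conditional expectations with the Chernoff tail as a pessimistic estimator. The expander guarantee controls the demand-volume of the smaller side of every cut of value at most $c\lambda_G$ inside a cluster, but it does not bound the number of such cuts: cut-counting gives $n^{O(c)}$ of them, and aggregating a pessimistic estimator over even $n^{O(1)}$ constraints is not obviously compatible with the $\epsilon^{-4}n^{o(1)}m$ time budget you need. This is exactly where Li's proof does substantial structural work (Kawarabayashi--Thorup-style trimming and contraction inside expanders so that the family of constraints is never enumerated), and a sketch that simply posits conditional expectations over cut constraints does not by itself establish the theorem. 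Separately, you never engage with the one thing the paper actually adds to Li's Theorem~1.5 — that the upper bound in property~1 extends from global mincuts to all cuts of value at most $c\lambda_G$ because such cuts are unbalanced in Li's analysis; that observation is the intended, short proof here and is worth stating explicitly even if the full reconstruction of the skeleton is out of scope.
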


We remark that Li's original statement only states the upper bound for global mincuts of $G$. However, Li has shown in the proof that for all ``unbalanced cuts'' $S$, including any cut $S$ with a cut size $O(\lambda_G)$, it holds that $|w(\partial_{G}S)-W\cdot |\partial_H S||\leq \epsilon w(\partial_{G}S)$ The modified version \Cref{thm:skeletonGraph} then follows.

We denote by $G'$ the vertex sparsifier obtained from applying the algorithm in \Cref{thm:steinerVertexSparsifier} to $G, U$ and $s$. Then, we apply \Cref{thm:skeletonGraph} to graph $G'$ with $c=5$ and $\epsilon=0.2$ and denote the resulting graph by $H$.
%As $G'$ has at most $m^{1+o(1)}$ edges, the edge number of $H$ is also bounded by $m^{1+o(1)}$. By \Cref{thm:skeletonGraph}, $\lambda_H\leq 6\lambda_{G'}/W=n^{o(1)}.$
Recall that a tree packing is a subgraph packing where all subgraphs are trees. 
We then refer to Section 4.1 in \cite{karger2000minimum} (a modified version of Gabow's tree packing algorithm for directed graphs \cite{gabow1991matroid}) to construct a feasible tree packing of $H$ with value $\lambda_{H}/2$. And we claim that the tree packing is our desired guide tree set in \Cref{thm:mainGuideTrees}.
\begin{theorem}[c.f. Section 4.1 in \cite{karger2000minimum}]\label{thm:treePacking}
    We can compute a feasible tree packing $\mathcal{T}$ of $H$ with value $\lambda_H/2$ and size $|\mathcal{T}|=\lambda_H$ in deterministic time $\tilde{O}(m_H\lambda_H)$.
\end{theorem}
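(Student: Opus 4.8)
} The plan is to reduce the desired (fractional) spanning‑tree packing of the unweighted multigraph $H$ to an \emph{integral arborescence packing} in an associated directed multigraph, and then invoke the deterministic algorithm of Gabow in the form stated in Section~4.1 of \cite{karger2000minimum}. Concretely, first build the bidirected multigraph $\vec{H}$ obtained from $H$ by replacing every (undirected) edge by two anti‑parallel arcs; note that $\vec{H}$ has exactly $2m_H$ arcs and the same vertex set $U$ as $H$. Fix an arbitrary root $r\in U$ (if $|U|\le 1$ the statement is trivial, so assume $|U|\ge 2$).

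Second, I would observe that the maximum number of arc‑disjoint spanning out‑arborescences of $\vec{H}$ rooted at $r$ is exactly $\lambda_H$. By Edmonds' arborescence‑packing theorem this maximum equals $\min_{\emptyset\neq S\subseteq U\setminus\{r\}}\rho_{\vec{H}}(S)$, where $\rho_{\vec{H}}(S)$ is the number of arcs of $\vec{H}$ entering $S$. Since each undirected edge of $H$ crossing the cut $(S,U\setminus S)$ contributes exactly one arc into $S$, we have $\rho_{\vec{H}}(S)=|\partial_H S|$; hence the minimum over all nonempty $S\subsetneq U$ avoiding $r$ equals $\min_{\emptyset\neq A\subsetneq U}|\partial_H A|=\lambda_H$ (the ``$\ge$'' direction holds for every $S$ by definition of edge connectivity, and equality is witnessed by the side of a global mincut of $H$ that does not contain $r$). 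In particular one need not know $\lambda_H$ in advance: running Gabow's augmenting‑path procedure until no further arborescence can be added yields a maximum packing, which by the above has size $\lambda_H$.

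Third, apply the deterministic arborescence‑packing routine of \cite{gabow1991matroid} (as described and used in Section~4.1 of \cite{karger2000minimum}) to $\vec{H}$ with root $r$; it produces $\lambda_H$ arc‑disjoint $r$‑arborescences in deterministic time $\tilde{O}(|E(\vec{H})|\cdot\lambda_H)=\tilde{O}(m_H\lambda_H)$. Forgetting the orientations of each arborescence gives a spanning tree of $H$ on $U$; let $\mathcal{T}$ be the resulting collection of $\lambda_H$ trees. Finally, set $\val(T)=1/2$ for every $T\in\mathcal{T}$: since each undirected edge $e$ of $H$ has only two arcs in $\vec{H}$ and the arborescences are arc‑disjoint, $e$ lies in at most two trees of $\mathcal{T}$, so $\sum_{T\in\mathcal{T}\,:\,e\in T}\val(T)\le 2\cdot\tfrac12=1=w_H(e)$ and the packing is feasible; its value is $\lambda_H\cdot\tfrac12=\lambda_H/2$ and its size is $|\mathcal{T}|=\lambda_H$, as required.

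I expect the only delicate point to be the bookkeeping for the directed multigraph: one must check that bidirection keeps the arc count at $2m_H$ (parallel undirected edges simply become parallel arcs) and that the cited version of Gabow's algorithm indeed runs in $\tilde{O}(mk)$ on directed multigraphs with $m$ arcs when packing $k$ arborescences. An equivalent route that avoids directed graphs — which I would also mention as a remark — is to run a deterministic Nash--Williams/Tutte‑type edge‑disjoint spanning‑tree packing algorithm on the doubled multigraph $2H$ (which has edge connectivity $2\lambda_H$ and $2m_H$ edges); this yields $\lambda_H$ edge‑disjoint spanning trees of $2H$ which, projected back to $H$, use each edge at most twice, and the same assignment $\val\equiv 1/2$ finishes the proof.
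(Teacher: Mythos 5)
Your proposal is correct and reconstructs exactly the argument from Section 4.1 of \cite{karger2000minimum} that the paper cites without re-proving: bidirect $H$, observe via Edmonds' arborescence-packing theorem that the maximum number of arc-disjoint $r$-arborescences equals $\lambda_H$, run Gabow's deterministic routine in $\tilde{O}(m_H\lambda_H)$ time, and project back to undirected trees with value $\tfrac12$ each so that feasibility follows from arc-disjointness. The only small imprecision is cosmetic: you should say $\min_{\emptyset\neq S\subsetneq U}|\partial_H S|=\lambda_H$ (the minimum over \emph{all} nontrivial cuts is $\lambda_H$, and restricting to $r\notin S$ loses nothing because one side of a global mincut avoids $r$), which you do in fact note parenthetically, so the argument is sound.
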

\begin{proof}[Proof of  \Cref{thm:mainGuideTrees}.]
    By construction, $H$ is a graph defined on $U$, so a tree packing on $H$ consists of trees defined on $U$.
    By \Cref{thm:treePacking}, the size of $\mathcal{T}$ is $\lambda_H$, which is upper bounded by $6\lambda_{G'}/W=n^{o(1)}$, according to \Cref{thm:skeletonGraph}. Regarding the running time, it remains to bound the running time of the tree packing algorithm in \Cref{thm:treePacking}.
    We have $m_H=m^{1+o(1)}$ by \Cref{thm:steinerVertexSparsifier} and \Cref{thm:skeletonGraph}, so the running time in \Cref{thm:treePacking} is $\tilde{O}(m_H\lambda_H)=m^{1+o(1)}$.

    It remains to prove that $\mathcal{T}$ is a set of guide trees. Let $t\in U\setminus\{s\}$ satisfy $\lambda_G(s,t)\leq 1.1\lambda_G(U)$ and let $(S,V\setminus S)$ be a $(s,t)$-mincut in $G$.
    It follows from \Cref{thm:steinerVertexSparsifier} that \[
    w_{G'}(S\cap U,U\setminus S)\leq w_G(S,V\setminus S)=\lambda_G(s,t)\leq 1.1\lambda_G(U)\leq 1.1 \cdot 4.1\lambda_{G'}<5\lambda_{G'}.
    \] 
    %We will show there exists some tree $T\in \mathcal{T}$ $16$-respects $S$. 
    Let $S'=S\cap U$. As $T$ is defined on $U$, it suffices to find some $T\in \mathcal{T}$ $16$-respects $S'$. 
    By \Cref{thm:skeletonGraph} with $\epsilon=0.2$, we have $|\partial_H S'|\leq 6\lambda_{G'}/W$ and $\lambda_H\geq 0.8\lambda_{G'}/W$, so $|\partial_H S'|\leq 8\lambda_H$.
     For $T\in \mathcal{T}$, we denote by $\operatorname{val}_T$ the value of $T$ and denote by $x_T$ the number of edges crossing $(S',U\setminus S')$. By feasibility, we have $\sum_{T\in \mathcal{T}}\operatorname{val}_T x_T\leq |\partial_HS'|\leq 8\lambda_H.$ Moreover, the value of the tree packing $\mathcal{T}$ is $\sum_{T\in \mathcal{T}}\operatorname{val}_T=\lambda_H/2$, so the weighted average of $\{x_T\}_{T\in \mathcal{T}}$ is \[
     \frac{\sum_{T\in \mathcal{T}}\operatorname{val}_T x_T}{\sum_{T\in \mathcal{T}}\operatorname{val}_T}\leq 16.
     \]
     Therefore, there exists at least one $T\in \mathcal{T}$ with $x_T\leq 16$, concluding our proof.
\end{proof}

\section{Single-Source Mincuts via Guide Trees}
\label{sec:ssmc}
In this section, we give a deterministic almost-linear time algorithm for single-source mincuts, which proves \Cref{thm:finalSSMC}.
We propose an algorithm (\Cref{thm:SSMCviaGuideTree}) for computing mincuts given one guide tree that de-randomizes the algorithm in \cite{abboud2022breaking}. 
%In particular, we retain the first four deterministic steps of the six-step algorithm proposed in Section 2 of \cite{abboud2022breaking} and de-randomize its last two steps. 
The general idea follows from \cite{abboud2022breaking} and \cite{zhang2021gomory}.
A combination of the guide trees algorithm (\Cref{thm:mainGuideTrees}) and \Cref{thm:SSMCviaGuideTree} then immediately leads to the following partial version of SSMC, by taking the minimum of all returned functions $\tilde{\lambda}$.

\begin{theorem}[Partial Singe Source Mincuts]
\label{thm:partialSSMC}
Given an undirected weighted graph $G=(V,E,w)$, a set of terminals $U\subseteq V$, and a source terminal $s$, we can compute in deterministic $m^{1+o(1)}$ time a function $\tilde{\lambda}:U\setminus \{s\}\to \mathbb{N}\cup\{+\infty\}$ such that $\tilde{\lambda}(t)\geq \lambda(s,t)$ for any $t\in U\setminus \{s\}$ and that $\tilde{\lambda}(t)= \lambda(s,t)$ if $\lambda(s,t)\leq 1.1\lambda (U)$.
\end{theorem}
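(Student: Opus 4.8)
The plan is to combine the two main ingredients already assembled in the paper: the deterministic construction of a $16$-respecting set of guide trees $\mathcal{T}$ of size $n^{o(1)}$ from \Cref{thm:mainGuideTrees}, and a (to-be-invoked) single-guide-tree procedure \Cref{thm:SSMCviaGuideTree} that, given one tree $T$ with $V(T) = U$, deterministically computes in $m^{1+o(1)}$ time a function $\tilde{\lambda}_T : U \setminus \{s\} \to \mathbb{N} \cup \{+\infty\}$ with $\tilde{\lambda}_T(t) \ge \lambda(s,t)$ for all $t$, and $\tilde{\lambda}_T(t) = \lambda(s,t)$ whenever some $(s,t)$-mincut is $O(1)$-respected by $T$. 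First I would recall from \Cref{def:guideTrees} that a $16$-respecting set of guide trees has the property that for every $t \in U \setminus \{s\}$ with $\lambda_G(s,t) \le 1.1\,\lambda_G(U)$, there is some $T \in \mathcal{T}$ that $16$-respects an $(s,t)$-mincut. Thus, running the single-tree procedure of \Cref{thm:SSMCviaGuideTree} on each $T \in \mathcal{T}$ and taking the coordinatewise minimum $\tilde{\lambda}(t) := \min_{T \in \mathcal{T}} \tilde{\lambda}_T(t)$ yields the desired function.

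The argument for correctness splits into the two asserted guarantees. For the upper-soundness direction, every $\tilde{\lambda}_T$ satisfies $\tilde{\lambda}_T(t) \ge \lambda(s,t)$ by the guarantee of \Cref{thm:SSMCviaGuideTree} (each value reported is the size of some genuine $(s,t)$-cut, or $+\infty$), so their minimum also satisfies $\tilde{\lambda}(t) \ge \lambda(s,t)$. For the exactness direction, fix $t \in U \setminus \{s\}$ with $\lambda_G(s,t) \le 1.1\,\lambda_G(U)$; by the guide-tree property there exists $T \in \mathcal{T}$ that $16$-respects some $(s,t)$-mincut $S$. Since $16 = O(1)$ lies within the range handled by \Cref{thm:SSMCviaGuideTree}, that invocation returns $\tilde{\lambda}_T(t) = \lambda(s,t)$, and combined with $\tilde{\lambda}_{T'}(t) \ge \lambda(s,t)$ for all other $T' \in \mathcal{T}$, the minimum is exactly $\lambda(s,t)$.

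For the running time, \Cref{thm:mainGuideTrees} constructs $\mathcal{T}$ deterministically in $m^{1+o(1)}$ time with $|\mathcal{T}| = n^{o(1)}$, and each of the $|\mathcal{T}|$ applications of \Cref{thm:SSMCviaGuideTree} costs $m^{1+o(1)}$ deterministic time; the total is $n^{o(1)} \cdot m^{1+o(1)} = m^{1+o(1)}$, and the final coordinatewise minimum over $n^{o(1)}$ functions on $|U| \le n$ entries is negligible. Determinism is preserved throughout since both building blocks are deterministic.

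The genuinely substantive content is not in this theorem itself — which is essentially a black-box composition — but in \Cref{thm:SSMCviaGuideTree}, the deterministic derandomization of the single-guide-tree min-cut extraction of \cite{abboud2022breaking,zhang2021gomory}. The main obstacle there (and hence the part one must be careful to have in hand before this theorem is meaningful) is the recursive reduction from a $k$-respecting guide tree to a $(k-1)$-respecting one: the randomized version samples a partition of the tree edges, and derandomizing it while keeping the per-level blow-up a constant (so that starting from $k=16$ suffices) and the total time $m^{1+o(1)}$ requires care. Assuming \Cref{thm:SSMCviaGuideTree} as stated, however, the proof of \Cref{thm:partialSSMC} is the short composition sketched above.
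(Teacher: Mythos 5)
Your proposal is exactly the paper's own proof: construct the $16$-respecting guide-tree set via \Cref{thm:mainGuideTrees}, run \Cref{thm:SSMCviaGuideTree} with $k=16$ on each tree, and take the coordinatewise minimum, with correctness following from the lower-bound guarantee of each $\tilde{\lambda}_T$ plus the existence of a $16$-respecting tree for every $t$ with $\lambda(s,t)\le 1.1\lambda(U)$. The running-time bookkeeping also matches, so there is nothing to add.
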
 
In \Cref{subsec:finalSSMC}, we remove the restriction $\lambda(s,t)\leq 1.1\lambda(U)$ in our proposed algorithm, which proves \Cref{thm:finalSSMC}.

\subsection{Partial Single-Source Mincuts}
\label{subsec:partialSSMC}
In this section, we provide a deterministic algorithm for the partial SSMC problem, hence proving \Cref{thm:partialSSMC}. The major step is to compute the mincuts for vertices whose mincuts are $k$-respected by a given tree, which we summarize in \Cref{thm:SSMCviaGuideTree}.

\begin{restatable}{theorem}{SSMCviaGuideTree}
\label{thm:SSMCviaGuideTree}
Let $k\geq 2$ be a fixed integer, $G=(V,E,w)$ be an undirected weighted graph, and $s\in V$ be a vertex. Given a tree $T$ defined on some subset of $V$ such that $s\in V(T)$, we can compute in deterministic $m^{1+o(1)}$ time a function $\tilde{\lambda}:V(T)\setminus \{s\}\to \mathbb{N}\cup\{+\infty\}$ such that $\tilde{\lambda}(t)\geq \lambda(s,t)$ for any $t\neq s$ in $T$ and that $\tilde{\lambda}(t)= \lambda(s,t)$ if $T$ $k$-respects some $(s,t)$-mincut. 
\end{restatable}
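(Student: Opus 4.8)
The plan is to reprove the statement along the lines of the randomized algorithm of \cite{abboud2022breaking} (refined in \cite{zhang2021gomory}), isolating the single place where randomness is used and replacing it by a deterministic splitter family. The argument is an induction on $k$. Fix $t$ and let $S$ be an $(s,t)$-mincut that is $k$-respected by $T$ (when no such $S$ exists there is nothing to prove beyond the trivial upper bound). Deleting the $\le k$ tree edges crossed by $S$ partitions $T$ into $\le k+1$ pieces, and $S\cap V(T)$ is exactly the union of the pieces lying on $t$'s side, with the piece containing $s$ excluded. Hence the task decomposes into: for each of the few possible ``shapes'' a $k$-respected cut can take, compute the minimum $(s,t)$-cut of $G$ whose restriction to $V(T)$ has that shape. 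Since any cut consistent with such a shape is in particular an $(s,t)$-cut, every value the algorithm ever produces is $\ge \lambda(s,t)$, so the inequality half of the guarantee is automatic; the content is never to overshoot when a $k$-respected $(s,t)$-mincut exists, which will follow from the branches being exhaustive.

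Base case $k=1$: root $T$ at $s$; a $1$-respected $(s,t)$-cut has the form $(D_u, V(T)\setminus D_u)$ where $D_u$ is the subtree hanging below some ancestor $u$ of $t$. The family $\{D_u\}_{u\in V(T)}$ is laminar, so combining the Isolating-Cuts Lemma (\Cref{lma:isoCuts}) with a balanced recursive bisection of $T$ (split $V(T)$ into halves along the tree, run isolating cuts to evaluate the boundary cuts, recurse into each half) evaluates $\lambda_G(D_u, V(T)\setminus D_u)$ for all $u$ with only $O(\log n)$ max-flow/isolating-cut calls, each running in $m^{1+o(1)}$ time by \cite{van2023deterministic}; a tree-path minimum then gives $\tilde{\lambda}(t)=\min_{u\text{ anc of }t}\lambda_G(D_u, V(T)\setminus D_u)$. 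Inductive step $k\to k-1$: a $k$-respected $(s,t)$-mincut either crosses $\le k-1$ tree edges -- handled by invoking the statement with parameter $k-1$ on the same tree $T$ -- or crosses exactly $k$. For the latter, the randomized algorithm picks a random balanced bipartition of the tree edges so that, with probability $2^{-O(k)}=\Omega(1)$ (constant since $k=O(1)$), the $k$ crossed edges are split nontrivially between the two sides; conditioned on this, one recursively finds on each side a cut that is $(k-1)$-respected by a suitably contracted copy of $T$, and combines the two partial cuts. I would replace the random bipartition by the deterministic hit-and-miss/splitter family of \Cref{thm:hitAndMiss} applied to the edge set of $T$, with parameters chosen so that for \emph{every} $k$-subset some member of the family splits it in the prescribed way; this family has size $(\log n)^{O(k)}=n^{o(1)}$ and is constructible in almost-linear time. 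Taking, for every $t$, the minimum of all values produced over all branches and over the $k-1$ recursion yields the desired $\tilde{\lambda}$.

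For the running time, the recursion over $k$ has depth $O(k)=O(1)$; the branching at each level is the size of the splitter family, $n^{o(1)}$; each leaf runs the $k=1$ subroutine ($O(\log n)$ deterministic max-flows), and each internal node performs a contraction and $O(\log n)$-time bookkeeping per tree edge, so the whole recursion runs in $m^{1+o(1)}$ time. The step I expect to be the main obstacle is the de-randomization of the ``split the $k$ crossed edges'' step: the events that make a bipartition good are defined relative to the unknown cut $S$ -- and many terminals $t$, hence many cuts, must be served simultaneously by one family -- so one needs a family that is at once small ($n^{o(1)}$), almost-linear-time constructible, and provably good for all $k$-respected cuts; on top of this one must verify, by an uncrossing/submodularity argument, that restricting to a side of the bipartition and contracting really produces a residual instance in which the relevant cut is $(k-1)$-respected and still optimal. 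A secondary subtlety is the $k=1$ base case, where the point is precisely that the laminar family of subtree-cuts can be evaluated with only $O(\log n)$ flow computations rather than one per vertex of $T$.
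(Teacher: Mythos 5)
Your proposal has the right derandomization tools (isolating cuts, the hit-and-miss family, deterministic max-flow), but the recursive structure you propose is genuinely different from the paper's, and the step you yourself flag as ``the main obstacle'' is a real gap, not just a technicality.

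The paper does \emph{not} derandomize a ``random balanced bipartition of $E(T)$.'' Its recursion is anchored at the \emph{centroid} $c$ of $T$: root $T$ at $c$, let $T_0$ be the subtree containing $s$ and $T_1,\dots,T_l$ the others, run isolating cuts once on the vertex sets of these subtrees (plus $\{s\},\{c\}$), and then perform a three-way case analysis on where the $\le k$ crossed edges land relative to $T_0$ and the subtree $T_i$ containing $t$. Case 1 (all crossed edges incident to a single subtree) is handled by contracting the complement of an isolating cut and recursing with the \emph{same} $k$ on a tree of at most half the real vertices. Case 2 (some crossed edge incident to a ``third'' subtree $T_j\notin\{T_0,T_i\}$) is where the hit-and-miss family enters, but it is applied to the set of \emph{subtrees} $\{T_1,\dots,T_l\}$, with parameters $(N,1,1)$ (i.e.\ \Cref{lem:HMcorollary}, an $O(\log N)$-size family distinguishing pairs), to delete some subtrees and produce a tree that $(k-1)$-respects the cut. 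Case 3 (crossed edges only in $T_0$ and $T_i$) requires a separate, more delicate sub-algorithm $(G,T,s,k)^*$ that tracks cuts crossing the unique tree edge at $s$; your proposal has no analogue of this.

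Your inductive step as written — bipartition $E(T)$ so the crossed edges split nontrivially, recurse ``on each side'' of the bipartition, and ``combine the two partial cuts'' — is not well-defined: a subset of tree edges is not a sub-instance of the problem, and contracting the non-selected side would merge vertices across the (unknown) cut $S$, destroying the $(k-1)$-respecting structure. You would also need to certify that the combination of whatever you find on the two sides is an $(s,t)$-cut of value $\lambda(s,t)$, for which you have no uncrossing argument. The paper avoids all of this by never splitting the instance across the cut edges; every recursive call is either a graph contraction along one isolating cut or a deletion of subtrees from $T$. Your $k=1$ base case is a reasonable observation but is not what the paper does either (the paper's base case is $|V(T)|<100$, and isolating cuts are used at every recursion level, not only at the bottom). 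In short, you have identified the correct derandomization primitives, but the decomposition that makes the induction go through — centroid, isolating cuts on the subtree partition, the Case-1/2/3 analysis, and the auxiliary $\eta$-variant for Case 3 — is missing, and the replacement you sketch does not close the argument.
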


Our algorithm follows the ideas of the randomized algorithm in Section 2 of \cite{abboud2022breaking} and the randomized algorithm in Section 2 of \cite{zhang2021gomory}.
The main tools for de-randomization are the Isolating Cuts Lemma (\Cref{lma:isoCuts}) and the following corollary of the Hit-And-Miss Theorem (\Cref{thm:hitAndMiss}). 

\begin{lemma}\label{lem:HMcorollary}
    Let $X$ be a set with $N$ elements. We can construct in deterministic time $\tilde{O}(N)$ a family of sets $\mathcal{F}$ such that for any two distinct elements $x,y\in X$, there exists some set $A\in \mathcal{F}$ satisfying $x\in A$ and $y\notin A$. Moreover, the size of the set family is $O(\log N)$.
\end{lemma}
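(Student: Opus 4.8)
# Proof Proposal for Lemma~\ref{lem:HMcorollary}

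The plan is to derive this directly from the Hit-And-Miss Theorem (\Cref{thm:hitAndMiss}) by choosing the parameters appropriately. Observe that the statement we want is exactly the ``separating family'' guarantee: for every ordered pair of distinct elements $(x,y)$, some set in the family contains $x$ but not $y$. This is precisely what the hit-and-miss construction gives when we take the ``hit'' set to be a singleton and the ``miss'' set to be a singleton. First I would set $N$ to be the size of the ground set $X$ (identifying $X$ with $[N]$), and invoke $\textsc{ConstructHitAndMissFamily}([N], a, b)$ with $a = 1$ and $b = 1$. Note that $a \geq b$ and $b = O(1)$, so the hypotheses of \Cref{thm:hitAndMiss} are satisfied.

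The next step is to translate the guarantee. \Cref{thm:hitAndMiss} gives a family $\mathcal{H}$ of hash functions $h : [N] \to \{0,1\}$ such that for any $A \subseteq [N]$ with $|A| \leq 1$ and any $B \subseteq [N] \setminus A$ with $|B| \leq 1$, there is some $h \in \mathcal{H}$ with $h \equiv 0$ on $A$ and $h \equiv 1$ on $B$. Given distinct $x, y \in X$, I would apply this with $A = \{y\}$ and $B = \{x\}$ (these are disjoint since $x \neq y$), obtaining $h \in \mathcal{H}$ with $h(y) = 0$ and $h(x) = 1$. Now define, for each $h \in \mathcal{H}$, the set $S_h = \{z \in X : h(z) = 1\}$, and let $\mathcal{F} = \{S_h : h \in \mathcal{H}\}$. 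Then $x \in S_h$ and $y \notin S_h$, which is exactly the desired property.

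Finally I would check the quantitative bounds. Since $a = 1 = O(1)$, the ``moreover'' clause of \Cref{thm:hitAndMiss} gives $|\mathcal{H}| = O(\log N)$, hence $|\mathcal{F}| = O(\log N)$. The construction time is $N \cdot (a \cdot \log N)^{O(b)} = N \cdot (\log N)^{O(1)} = \tilde{O}(N)$, and converting each hash function to its preimage set $S_h$ takes $O(N)$ time per function, contributing only an additional $O(N \log N) = \tilde{O}(N)$ overall. This completes the proof.

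There is essentially no obstacle here; the only thing to be careful about is the direction of the ``hit'' versus ``miss'' convention (in \Cref{thm:hitAndMiss}, $h = 0$ on the large set $A$ and $h = 1$ on the small set $B$), so one must map the element we want \emph{included} to the role of $B$ and the element we want \emph{excluded} to the role of $A$ — a trivial but easy-to-invert bookkeeping point. The lemma is really just a clean packaging of the $a=b=1$ special case for downstream use in the single-source mincut algorithm.
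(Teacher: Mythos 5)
Your proof is correct and follows essentially the same route as the paper's: both invoke \Cref{thm:hitAndMiss} with singleton hit/miss sets and then take preimage sets of the resulting hash functions. The only cosmetic difference is that you observe you only need the preimage $h^{-1}(1)$ for each $h$ (by putting the element to include in the role of $B$), whereas the paper puts both preimages of each $h$ into $\mathcal{F}$; both yield $|\mathcal{F}| = O(\log N)$ and the identical time bound.
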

\begin{proof}
    We may assume $X=[N]$. Let $\mathcal{H}$ be the function class constructed by \Cref{thm:hitAndMiss}. Let $\mathcal{F}=\{h^{-1}(i):h\in \mathcal{H},\, i\in [2]\}$. For any $x,y\in X$, let $h\in \mathcal{H}$ be a function satisfying $h(x)\neq h(y)$, then we have either $x\in h^{-1}(1),y\notin h^{-1}(1)$ or $x\in h^{-1}(2),y\notin h^{-1}(2)$.
\end{proof}

\paragraph{The Deterministic SSMC Algorithm} We are now ready to present the deterministic SSMC algorithm given a guide tree. Recall that the input is a graph $G=(V,E,w)$, a positive integer $k$, a vertex $s\in V$, and a tree $T$ containing $s$. The goal is to output a function $\tilde{\lambda}:V(T)\setminus\{s\}\to \mathbb{N}$ such that for any $t \in V(T) \setminus \{s\}$, $\tilde{\lambda}(t) \geq \lambda(s,t)$ and if $T$ $k$-respects some $(s,t)$-mincut, then we have $\tilde{\lambda}(t) =  \lambda(s,t)$.

For a vertex $t$ and a number $x$, we formally write $\update(t,x):\tilde{\lambda}(t)\leftarrow \min\{\tilde{\lambda}(t),x\}$. We denote our main algorithm by $(G,T,s,k)$, which consists of the following Steps 1-5. We also need a sub-algorithm consisting of Steps 1, 2, $3^*$, 4, $5^*$, which we denote by $(G,T,s,k)^*$. The sub-algorithm is used to tackle the case that $s$ is a leaf of $T$.

\begin{description}
\item[$1$.]  Set $\tilde{\lambda}(t)=\infty$ for $t\in V(T)$. Directly compute $\lambda(s,t)$ for $t\in V(T)$ when $|V(T)|<100$.
\item[$2$.] Find the \textit{centroid} of $T$ (denoted by $c$) such that each subtree has at most $|V(T)|/2$ vertices. If $s\neq c$, $\update(c,\lambda(s,c))$ by running the deterministic $(s,c)$-maxflow algorithm from \cite{van2023deterministic}. Root $T$ at $c$, let $T_0$ be the subtree containing $s$ (if it exists), and let $\{T_i\}_{i=1}^l$ denote other subtrees. Apply the isolating cuts algorithm to $\{V(T_0)\setminus \{s\}, V(T_1),\dots,V(T_l),\{s\},\{c\}\}$. Let $W_i$ denote the resulting cut containing $V(T_i)$ (or $V(T_0)\setminus\{s\}$ when $i=0$). 
\item[$3$.] Contract $\cup_{i=1}^l W_i$ to one vertex (labeled as $c'$) and denote $G$, $T$ after contraction by $G_0$, $T_0^{(1)}$, then call $(G_0,T_0^{(1)},s,k)$. For each $i>0$, contract $V\setminus W_i$ to one vertex (labeled as $s$) and denote $G$, $T$ after contraction by $G_i$, $T_i^{(1)}$, then call $(G_i,T_i^{(1)},s,k)$. 
\item[$3^*$.] Contract $\cup_{i=1}^l W_i$ to one vertex (labeled as $c'$) and denote $G$, $T$ after contraction by $G_0$, $T_0^{(1)}$, then call $(G_0,T_0^{(1)},s,k)^{*}$.
\item[$4$.] Apply \Cref{lem:HMcorollary} to $\{T_1,\dots,T_l\}$ and denote the resulting set family by $\mathcal{F}$. For each $A\in \mathcal{F}$, we delete subtrees in $A$ from $T$ and obtain a new tree. We denote the set of all such resulting trees by $\mathcal{T}$. Recursively call $(G,T^{(2)},s,k-1)$ for each $T^{(2)}\in \mathcal{T}$.
\item[$5$.] Execute this step only if $s\neq c$. Let $T^{(3)}$ be the tree obtained by removing $V(T_0)\setminus\{s\}$ from $T$ and adding the edge $(s,c)$. Call $(G,T^{(3)},s,k-1)$ and $(G,T^{(3)},s,k)^*$.
\item[$5^*$.] Execute this step only if $s\neq c$. Let $T^{(3)}$ be the tree obtained by removing $V(T_0)\setminus\{s\}$ from $T$ and adding the edge $(s,c)$. Let $z$ be one maximizer of current $\tilde{\lambda}$ over $\cup_{i=1}^lV(T_i)$. Let $T'$ be the subtree of $T$ that contains $z$ (for convenience we may assume $T'=T_1$). Contract $\cup_{i\geq 0,i\neq 1}W_i$ to the centroid $c$ and denote $G$, $T$ after contraction by $G'$, $T^{(4)}$. Call $(G,T^{(3)},s,k-1)$ and $(G',T^{(4)},s,k)^*$.
\end{description}

For vertices $s,t\in V(T)$, let $\lambda_{G,T,k}(s,t)$ denote the minimum cut value over all $(s,t)$-cuts that are $k$-respected by $T$. When $s$ is a leaf of $T$ (i.e., $\operatorname{deg}_T(s)=1$), let $\eta_{G,T,k}(s,t)$ denote the minimum cut value over all $(s,t)$-cuts that are $k$-respected by $T$ and crossed by the edge incident to $s$ in $T$. We set $\lambda_{G,T,k}(s,t)$ and $\eta_{G,T,k}(s,t)$ to be $\infty$ if such cuts do not exist. The following claim shows the correctness of our algorithm.
\begin{claim}\label{clm:correctenessSSMC}
    Let $k\geq 2$ be a fixed integer, $G=(V,E,w)$ be an undirected weighted graph, and $s\in V$ be a vertex. 
    Let $T$ be a tree defined on some subset of $V$ such that $s\in V(T)$. We have
    \begin{enumerate}
        \item The output of $(G,T,s,k)$ satisfies $\lambda(s,t)\leq\tilde{\lambda}(t)\leq \lambda_{G,T,k}(s,t)$ for $t\in V(T)\setminus\{s\}$.
        \item When $\operatorname{deg}_T(s)=1$, the output of $(G,T,s,k)^*$ satisfies $\lambda(s,t)\leq\tilde{\lambda}(t)\leq \eta_{G,T,k}(s,t)$ for $t\in V(T)\setminus\{s\}$.
    \end{enumerate}
\end{claim}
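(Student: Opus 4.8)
The plan is to prove \Cref{clm:correctenessSSMC} by simultaneous induction on the pair $(k, |V(T)|)$, establishing both statements together since the main algorithm and the starred sub-algorithm call each other recursively. The lower bound $\lambda(s,t) \le \tilde\lambda(t)$ is the easy direction: every value written via $\update$ comes either from an exact $(s,c)$-maxflow computation, or from a recursive call on a contracted graph $G_i$ (which only decreases cut values since contraction can only increase connectivity, so any $(s,t)$-cut in $G_i$ lifts to an $(s,t)$-cut of the same value in $G$), or from a recursive call with decremented $k$ on a tree $T^{(2)}$ or $T^{(3)}$ obtained by deleting subtrees (again cut values of $(s,t)$-cuts in the smaller graph are genuine cut values in $G$). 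Hence the invariant $\tilde\lambda(t) \ge \lambda(s,t)$ is preserved throughout; I would dispatch this with one short paragraph covering each of the five steps.

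The substance is the upper bound $\tilde\lambda(t) \le \lambda_{G,T,k}(s,t)$ (resp.\ $\le \eta_{G,T,k}(s,t)$). Fix $t$ and an $(s,t)$-cut $(S, V\setminus S)$ achieving $\lambda_{G,T,k}(s,t)$, so $T$ has at most $k$ edges crossing $S$, i.e.\ at most $k$ edges with exactly one endpoint in $S \cap V(T)$. Root $T$ at the centroid $c$. The key case analysis is on where $t$ lies and how the $\le k$ crossing edges of $T$ are distributed among the subtrees $T_0, T_1, \dots, T_l$ hanging off $c$ (with $T_0$ the one containing $s$, if $s \ne c$; if $s = c$ there is no $T_0$). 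First, if $t = c$, Step 2 computes $\lambda(s,c)$ exactly via maxflow and we are done, so assume $t \in V(T_i)$ for some $i$. The cases to handle:
\begin{itemize}
\item If all crossing edges of $T$ lie inside $T_i$ and $i \ge 1$ (so $s \notin V(T_i)$, meaning $s$ is on the other side $V \setminus S$ or $S$ is entirely within $T_i$'s side): the isolating cut $W_i$ in Step 2 is a vertex-minimal $(V(T_i), \text{rest})$-mincut, and a standard submodularity/uncrossing argument shows $W_i \subseteq S$ (or one may uncross $S$ with $W_i$), so recursing on $G_i = G / (V \setminus W_i)$ with the same $k$ (Step 3) recovers the cut; here $|V(T_i)| \le |V(T)|/2$ so induction on graph size applies.
\item If all crossing edges lie in $T_0$ (the subtree containing $s$, when $s \ne c$): symmetric, handled by the $G_0$ call in Step 3 / $3^*$ after contracting $\cup_{i\ge1} W_i$.
\item If the crossing edges are spread across two or more subtrees among $T_1, \dots, T_l$ but not $T_0$, i.e.\ there is at least one subtree $T_j$ ($j \ge 1$, $j \ne i$) with no crossing edge: then $S$ either fully contains $V(T_j)$ or is disjoint from it. Step 4 uses \Cref{lem:HMcorollary} to produce a family $\mathcal F$ of subsets of $\{T_1, \dots, T_l\}$ such that some $A \in \mathcal F$ separates "$T_i$" from "$T_j$" — i.e.\ contains $T_i$ and excludes $T_j$ (or vice versa); deleting the subtrees in $A$ from $T$ strictly decreases the number of crossing edges (we kill the part of the cut inside the deleted subtrees' direction, or rather: restricting to $T^{(2)} = T$ minus subtrees in $A$, the cut $S$ is now $(k{-}1)$-respected because at least one of the subtrees carrying a crossing edge was removed while $t$'s subtree is retained), so the recursive call $(G, T^{(2)}, s, k{-}1)$ in Step 4 recovers it by induction on $k$. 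The precise bookkeeping of which subtree gets deleted vs.\ kept, and checking that $t$ stays in the tree and $s$ stays in the tree, is the fiddly part.
\item If $s = c$, the crossing edges of $T$ near $c$ all emanate from $c$ itself; a variant of the above, where $s$ plays the role of the "retained anchor", applies.
\item If $t \in V(T_0)$ (same subtree as $s$): recursing on $G_0$, $T_0^{(1)}$ with the same $k$ handles this; the crossing edges inside $T_0$ are unchanged after contracting the other $W_i$'s.
\end{itemize}

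The genuinely hard part, and where I expect to spend the most effort, is Step 5 / $5^*$: the case where $s$ is (or becomes) a leaf of the relevant tree, which forces the $\eta$ bound and the maximizer trick. When $s$ is a leaf, the single tree-edge at $s$ may or may not cross $S$; if it does, the $\eta$-machinery of $5^*$ is needed, where one picks $z$ maximizing the current $\tilde\lambda$ over $\cup_i V(T_i)$, argues (via the guide-tree / respecting structure, analogous to Lemma~2.x of \cite{abboud2022breaking} and \cite{zhang2021gomory}) that the mincut structure is "nested" enough that recursing on the contracted graph $G'$ with $T^{(4)}$ in the starred mode correctly propagates the value to all of $\cup_i V(T_i)$ simultaneously, and if it does not cross $S$ then removing $V(T_0) \setminus \{s\}$ and adding edge $(s,c)$ (forming $T^{(3)}$) preserves the cut while decrementing $k$, handled by $(G, T^{(3)}, s, k{-}1)$. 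I would isolate the key structural lemma — that among vertices whose mincut-from-$s$ is $k$-respected with the $s$-edge crossing, the one of maximum connectivity "dominates" in the sense needed to contract everything else to $c$ — state and prove it separately, then plug it in. Throughout, the induction is well-founded because every recursive call either decreases $k$ (Steps 4, 5) or keeps $k$ fixed while at least halving $|V(T)|$ (Steps 2–3 via the centroid property), and the base cases $|V(T)| < 100$ (Step 1, exact computation) and the implicit $k$ bottoming out are handled directly. I would present the induction skeleton first, then the lower-bound paragraph, then walk through the upper-bound cases in the order above, flagging Step 5/$5^*$ as the crux.
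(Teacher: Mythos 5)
Your proposal matches the paper's high-level strategy: simultaneous induction on $(k,|V(T)|)$, trivial lower bound via contraction monotonicity, and a three-way case analysis on where the $\le k$ crossing tree-edges of a witness cut are distributed relative to the subtrees $T_0, T_i$ hanging off the centroid. The case taxonomy you sketch (all crossing edges in one subtree $\Rightarrow$ Step~3; some crossing edge in a subtree other than $T_0,T_i$ $\Rightarrow$ Step~4 via hit-and-miss; crossing edges in exactly $T_0$ and $T_i$ $\Rightarrow$ Step~5) is exactly the paper's Cases~1--3. However, two specific points are glossed over in a way that would cause trouble.

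First, the induction well-ordering. You say ``simultaneous induction on $(k,|V(T)|)$,'' but Step~5's call $(G,T^{(3)},s,k)^*$ keeps $k$ unchanged and may also leave $|V(T^{(3)})|=|V(T)|$ (when $T_0=\{s\}$). So a naive double induction is not well-founded: the main algorithm at $(k,r)$ can call the sub-algorithm at exactly $(k,r)$. The paper fixes this by proving the two statements in a specific order at each $(k,r)$: first establish that the sub-algorithm is correct at $(k,r)$ assuming both are correct at all strictly smaller $(k',r')$, and only then establish the main algorithm at $(k,r)$ assuming \emph{additionally} that the sub-algorithm is correct at $(k,r)$. You need to make this asymmetry explicit; otherwise the induction does not close.

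Second, the Step~$5^*$/Case~3 argument. You propose to isolate a structural ``dominance'' lemma about the maximizer $z$, but that is not what makes the argument work, and I don't see how to state such a lemma cleanly. The paper's argument is a proof by contradiction that exploits algorithmic ordering: when the witness cut's only crossing edge in $T_0$ is $(s,p)$ and all other crossing edges sit in $T_i$, one shows that if the subtree $T_1$ containing the maximizer $z$ were \emph{not} $T_i$, then the cut $A$ would also be an $(s,z)$-cut $k$-respected by $T$ and crossing $(s,p)$ (since $z\in V(T_1)\subseteq V\setminus A$), giving $\eta_{G,T,k}(s,z)\le w(\partial A)<\tilde\lambda(t)$; and because $T_1\neq T_i$, that bound on $\tilde\lambda(z)$ has \emph{already} been achieved by Step~4 (which runs before Step~$5^*$), contradicting $z$ being the maximizer of the \emph{current} $\tilde\lambda$. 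This argument leans on the order in which the steps execute and on the previously-established Case~2 guarantee, not on a standalone nesting lemma about mincuts. Replacing it with a structural lemma is not obviously possible and certainly not what the algorithm is designed around; you would need to rework Step~$5^*$'s analysis around this contradiction instead.
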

\begin{proof}
Let $r=|V(T)|$. We will prove by induction on $r,k$. The base case ($r<100$, $k\geq 1$) is completed by Step 1. Note that each recursion always strictly decreases either $r$ or $k$ except from $(G,T^{(3)},s,k)^*$ in Step 5, which moves from the main algorithm to the sub-algorithm. It then suffices to prove the following induction step:
\begin{enumerate}[label=(\alph*), nosep]
    \item If the main algorithm succeeds for inputs with $k'<k$, $r'\leq r$ or $k'\leq k$, $r'<r$, and the sub-algorithm succeeds for inputs with $k'\leq k$, $r'\leq r$, then $(G,T,s,k)$ succeeds.
    \item If both algorithms succeed for inputs with $k'<k$, $r'\leq r$ or $k'\leq k$, $r'<r$, then $(G,T,s,k)^*$ succeeds.
\end{enumerate}
So for the induction step, we may assume both algorithms succeed for inputs with $k'<k$, $r'\leq r$ or $k'\leq k$.\\
\textbf{Lower bound in (a) and (b):} Throughout the algorithm, the only way we modify the graph is contraction, which can only increase $\lambda(s,t)$. So the lower bound $\tilde{\lambda}(t)\geq \lambda(s,t)$ trivially holds.\\
\textbf{Upper bound in (a):} Following our above analysis about the induction step, we further assume the sub-algorithm succeeds when $k'=k,r'=r$ in this upper bound proof.\\
Let $t\in V(T)$, we may assume $\lambda_{G,T,k}(s,t)<\infty$. Let $(A,V\setminus A)$ be a $(s,t)$-cut that is $k$-respected by $T$, such that $w(\partial A)=\lambda_{G,T,k}(s,t)$. 
When $t=c$, Step 2 correctly outputs $\lambda(s,t)$. We may therefore assume $t\neq c$. Suppose $T_i$ contains $t$.
Let $C$ denote the set of edges in $T$ crossing $A$, which we call cut edges. 
Since the path connecting $s,t$ in $T$ is cut by $A$, at least one edge in the path is in $C$, which is either incident to $T_0$ or $T_i$. We can then divide the proof into three cases according to the locations of edges in $C$:
\begin{description}
        \item[Case 1:] All cut edges are incident to one subtree (can only be $T_0$ or $T_i$).
        \item[Case 2:] There exists at least one cut edge incident to a subtree that is neither $T_0$ nor $T_i$.
        \item[Case 3:] $T_0\neq T_i$. There exist and only exist cut edges incident to $T_0$ and cut edges incident to $T_i$.
    \end{description}
For Case 1, if all cut edges are only incident to $T_0$, then $\cup_{i=1}^l V(T_i)$ must be in the same side of the cut $(A,V\setminus A)$. We may assume $\cup_{i=1}^l V(T_i)\subseteq A$. Note that the  modified cut $(\cup_{i=1}^l W_i\cup A, V\setminus (\cup_{i=1}^l W_i\cup A))$ is also $k$-respected by $T$ since vertices in $V(T)$ remain intact compared with $(A,V\setminus A)$. By submodularity, $(\cup_{i=1}^l W_i\cup A, V\setminus (\cup_{i=1}^l W_i\cup A))$ is also minimal among all $(s,t)$-cuts $k$-respected by $T$. So the contraction does not affect the cut value and hence $\lambda_{G_0,T^{(1)}_0,k}(s,c')=\lambda_{G,T,k}(s,t)$. Similarly, if all cut edges are only incident to $T_i$ and $T_i\neq T_0$, we have $\lambda_{G_i,T^{(1)}_i,k}(s,t)=\lambda_{G,T,k}(s,t)$. Therefore, Step 3 outputs correctly in this case.\\
For Case 2, suppose some cut edge is incident to a subtree $T_j$ that is neither $T_0$ nor $T_i$. By \Cref{lem:HMcorollary}, there exists a tree $T^{(2)}$ in the tree family computed in Step 4 such that $T^{(2)}$ (rooted at $c$) has $T_i$ as its subtree but does not contain $T_j$. Therefore, $T^{(2)}$ $(k-1)$-respects the cut $A$. Step 4 then succeeds for this case.\\
It remains to tackle Case 3. Since $t$ is not in $T_0$, the tree $T^{(3)}$ in Step 5 still contains $t$. When $s,c$ are in the same side of $(A,V\setminus A)$, all original cut edges incident to $T_0$ are removed, and the newly added edge $(s,c)$ is not a cut edge in $T^{(3)}$, so $T^{(3)}$ contains at most $k-1$ cut edges and $(G,T^{(3)},s,k-1)$ works for $t$. When $s,c$ are not in the same side, $s$ has an edge in $T^{(3)}$ and the edge $(s,c)$ is cut by $A$, so $\eta_{G,T^{(3)},k}(s,t)=\lambda_{G,T^{(3)},k}(s,t)$ and the sub-algorithm $(G,T^{(3)},s,k)^*$ works.\\
\textbf{Upper bound in (b):} Let $t\in V(T)$. Again, we may assume $\eta_{G,T,k}(s,t)< \infty$ and $t\neq c$. Let $(A,V\setminus A)$ be a $(s,t)$-cut that attains $\eta_{G,T,k}(s,t)$ and satisfies the corresponding conditions. Let $(s,p)$ be the only edge connecting $s$ in $T$, then the edge is cut by $A$. 
We still consider the same three cases as in the proof for (a). For the first case, all edges must be incident to $T_0$, so Step $3^*$, the partial version of Step 3, suffices. The argument for Case 2 remains exactly the same as the argument for the upper bound in (a), so we omit it here.\\
For case 3, if there exists more than one edge incident to $T_0$, then $T^{(3)}$ contain at most $k-1$ cut edges and $(G,T^{(3)},s,k-1)$ works for $t$. In the rest of the proof, we assume the only cut edge in $T$ incident to $T_0$ is $(s,p)$. We further assume that $\tilde{\lambda}(t)>w(\partial A)=\eta_{G,T,k}(s,t)$ at the current state (otherwise we are done). We claim that $i=1$ in Step $5^*$. As the cut edges in $T$ are either the edge $(s,p)$ or incident to $T_i$, we have $\{c\},V(T_0)\setminus\{s\},V(T_j)\subseteq V\setminus A$, for $j\neq i$. Assume $i\neq 1$, then $z\in V(T_1)\subseteq V\setminus A$. So $A$ is a $(s,z)$-cut that is $k$-respected by $T$ and cuts $(s,p)$, which implies $\eta_{G,T,k}(s,z)\leq w(\partial A)<\tilde{\lambda}(t)$. On the other hand, $T_1\neq T_i$ implies that $\tilde{\lambda}(z)\leq \eta_{G,T,k}(s,z)$ has been achieved in Step 4, contradicting the fact that $z$ is the maximizer of current $\tilde{\lambda}$ over $\cup_{j=1}^lV(T_j)$. Therefore, we have $i=1$.
Since $\{c\},V(T_0)\setminus \{s\}, \cup_{j=2}^lV(T_j)$ are contained in $V\setminus A$, the cut $(A\setminus (\cup_{j\geq 0,j\neq i}W_j\cup \{c\}),(V\setminus A)\cup (\cup_{j\geq 0,j\neq i}W_j\cup \{c\}))$ is still minimal over all $(s,t)$ cuts $k$-respected by $T$ and crossed by $(s,p)$. Thus, the contraction in Step $5^*$ does not affect the value $\eta_{G,T,k}(s,t)$ and $(G',T^{(4)},s,k)^*$ succeeds.
\end{proof}

For time analysis, we note that the de-randomization step (Step 4) recursively calls the main algorithm $O(\log n)$ times with parameter no larger than $(m,n,r,k-1)$, which is the same as its randomized counterpart in \cite{zhang2021gomory}. The other modifications of the algorithm in \cite{zhang2021gomory} do not affect the time analysis.
Therefore, our recursive formula and analysis are exactly the same as Section 2.4 of \cite{zhang2021gomory}, so we omit it here.

Combining the above algorithm with the guide trees algorithm in \Cref{sec:guideTrees} immediately gives the $\textsc{PartialSSMC}$ algorithm.
\begin{proof}[Proof of \Cref{thm:partialSSMC}]
    Let $\mathcal{T}$ be the tree set obtained by applying \Cref{thm:mainGuideTrees} to $(G,U,s)$. For each $T\in \mathcal{T}$, apply \Cref{thm:SSMCviaGuideTree} to $T$ and denote the resulting function by $\tilde{\lambda}_T$. Let \[
    \tilde{\lambda}(t)=\min_{T\in \mathcal{T}}\tilde{\lambda}_T(t)\text{ for }t\in U\setminus\{s\}.
    \]
    By \Cref{thm:SSMCviaGuideTree}, we have $\tilde{\lambda}(t)\geq \lambda(s,t)$. Moreover, suppose $t\in U\setminus\{s\}$ satisfies $\lambda(s,t)\leq 1.1\lambda(U)$, then there exists some tree $T\in \mathcal{T}$ $16$-respecting some $(s,t)$-mincut in $G$ by \Cref{thm:mainGuideTrees}. Therefore, $\tilde{\lambda}(t)\leq \tilde{\lambda}_T(t)=\lambda(s,t)$ by \Cref{thm:SSMCviaGuideTree}. Since the size of $\mathcal{T}$ is $n^{o(1)}$, the overall time is still $m^{1+o(1)}$.
\end{proof}
\subsection{Single-Source Mincuts}
\label{subsec:finalSSMC}
 We conclude \Cref{sec:ssmc} by removing the restriction in our partial SSMC algorithm, which leads to an unconditional SSMC algorithm.

\begin{algorithm}[h!]
  \caption{Single-source mincuts algorithm}\label{algorithm:finalSSMC}
  \SetKwFor{RepTimes}{repeat}{times}{end}
  \SetKwInOut{Input}{Input}
  \SetKwInOut{Output}{Output}
  \SetKwFor{While}{while}{do}{end}
  \SetKwFunction{ExtractMax}{ExtractMax}
  \SetKw{KwLet}{let}
  \Input{Undirected edge-weighted graph $G=(V,E,w)$, source terminal $s\in V$}
  \Output{$\{\lambda(s,t):t\in V\setminus\{s\}\}$}
  \BlankLine
   $U\gets V$. \\
   \For{$i=1,\dots,\lceil\log_{1.1}(nW)\rceil$}{
   $\tilde{\lambda}\gets\textsc{PartialSSMC}(G,U,s)$.\\
   $\lambda_{\min}\gets\min_{t\in U\setminus\{s\}}\tilde{\lambda}(t)$.\\
   $A\gets\{t\in U\setminus\{s\}:\tilde{\lambda}(t)\leq 1.1\lambda_{\min}\}$.\\
   \lForEach{$t\in A$}{$\hat{\lambda}(t)=\tilde{\lambda}(t)$}
   $U\gets U\setminus A$. \label{enu:assignMincutValue}\\
   \lIf{$U=\{s\}$}{\textbf{break}}
   }
    \textbf{return} $\{\hat{\lambda}(t):t\in V\setminus\{s\}\}$
\end{algorithm}
\begin{proof}[Proof of \Cref{thm:finalSSMC}]
We need to show that \Cref{enu:assignMincutValue} in \Cref{algorithm:finalSSMC} always assign the correct value to $\hat{\lambda}(t)$ (i.e., $\lambda(s,t)$), and that $U$ contains only $s$ after the for loop. Note that $\lambda(U)=\min_{t\in U\setminus\{s\}}\lambda(s,t)$, since the mincut that cuts $U$ is also a $(s,t)$-mincut for any vertex $t\in U$ lying in the different cut side than $s$. In any of the iteration step, let $t^*\in U\setminus\{s\}$ be some vertex satisfying $\lambda(s,t^*)=\lambda(U)$. We then have $\lambda_{\min}\leq\tilde{\lambda}(t^*)=\lambda(s,t^*)=\lambda(U)$ by \Cref{thm:partialSSMC}. On the other hand, we have $\lambda_{\min}\geq \lambda(U)$ since $\tilde{\lambda}(t)\geq\lambda(s,t)$ for any $t\in U\setminus\{s\}$. Therefore, $\lambda_{\min}=\lambda(U)$.
For any $t\in A$, we have $\lambda(s,t)\leq\tilde{\lambda}(t)\leq 1.1\lambda_{\min}=1.1\lambda(U)$, so $\tilde{\lambda}(t)=\lambda(s,t)$ by \Cref{thm:partialSSMC}. Thus, all assignments in \Cref{enu:assignMincutValue} are correct.

Due to the above argument, we have $A=\{t\in U\setminus\{s\}:\lambda(s,t)\leq 1.1\lambda(U)\}$, so any vertex $t$ in $U\setminus A$ other than $s$ satisfies $\lambda(s,t)>1.1\lambda(U)$ as long as $\{s\}\subsetneq U\setminus A$. Therefore, $\lambda(U\setminus A)>1.1\lambda(U)$, i.e., $\lambda(U)$ increases by at least a 1.1 factor after each iteration step.
As $\max_{t\in V\setminus\{s\}}\lambda(s,t)\leq nW$, we have $U=\{s\}$ after the for loop.

For time analysis, we mark that the input graph to $\textsc{PartialSSMC}$ is always $G$ and we only have $O(\log (nW))$ calls to $\textsc{PartialSSMC}$, so the overall running time is still $m^{1+o(1)}$.
\end{proof}

\section{Gomory-Hu Tree via Single-Source Mincuts}
\label{sec:ght}
In this section, we propose our deterministic algorithm for Gomory-Hu tree, hence proving \Cref{lma:bootstrapping}, which immediately implies our main theorem \Cref{thm:mainThm}, for $U = V$.

\begin{restatable}{lemma}{bootstrap} \label{lma:bootstrapping}
There is an algorithm $\textsc{ComputeGHSteinerTree}(G, U)$ taking as inputs graph $G=(V,E,w)$ and set $U \subseteq V$ that returns a Gomory-Hu $U$-Steiner tree $(T,f)$. The algorithm runs in time $m^{1+o(1)}$.
\end{restatable}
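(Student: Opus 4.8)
\textbf{Proof plan for \Cref{lma:bootstrapping}.}

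The plan is to implement the divide-and-conquer strategy of Gomory and Hu, but using the Decomposition Lemma (\Cref{informalLemma:GHTreeStep}, i.e.\ \Cref{clm:GHTreeStep}) to guarantee a logarithmic recursion depth. First I would handle the base case: if $|U| \le 1$ the Steiner tree is trivial (a single node or empty), and if $|U|$ is bounded by a constant we can afford to run the classic Gomory-Hu algorithm directly, invoking the deterministic almost-linear-time maxflow algorithm of \cite{van2023deterministic} a constant number of times. For the inductive step, I would call the decomposition algorithm on $(G,U)$ to obtain a pivot $r \in U$ and a family $\mathcal{S}$ of triples $(S_v, v, r)$ where the $S_v$ are disjoint vertex-minimal $(v,r)$-mincuts, each containing at most a $(1-\Omega(1))$-fraction of $U$, and with $V_{\mathrm{rest}} = V \setminus \bigcup_v S_v$ also containing at most a $(1-\Omega(1))$-fraction of $U$.

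Next I would set up the recursive subproblems exactly as sketched in the overview. For each $(S_v,v,r)\in\mathcal{S}$, form the graph $G_v = G / (V\setminus S_v)$ by contracting the complement of $S_v$ into a single non-terminal vertex $x_v$, with terminal set $U_v = (U \cap S_v)$, and recursively compute $(T_v, f_v) = \textsc{ComputeGHSteinerTree}(G_v, U_v)$. Also form $G_{\mathrm{rest}} = G / \mathcal{S}$ by contracting each $S_v$ into a non-terminal vertex $y_v$, with terminal set $U_{\mathrm{rest}} = U \cap V_{\mathrm{rest}}$ (which contains $r$), and recursively compute $(T_{\mathrm{rest}}, f_{\mathrm{rest}})$. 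Then assemble the output tree $T$ by taking the disjoint union of $T_{\mathrm{rest}}$ and all the $T_v$, and for each $v$ adding an edge between $v \in V(T_v)$ and $f_{\mathrm{rest}}(y_v) \in V(T_{\mathrm{rest}})$ (or simply $r$ if $y_v$ maps to $r$) with weight $\lambda_G(v,r) = \delta_G(S_v)$; the map $f: V \to U$ is defined by $f(u) = f_v(u)$ if $u \in S_v$ and $f(u) = f_{\mathrm{rest}}(u)$ otherwise, using that the $S_v$ partition part of $V$ and $V_{\mathrm{rest}}$ covers the rest.

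Correctness follows from the nesting property of mincuts (\Cref{lma:legalDecompOfGH} together with \Cref{fact:minimalCutsUnique} and \Cref{fact:laminarFamily}): for any $s,t\in U$, if both lie in a single $S_v$, an $(s,t)$-mincut can be taken inside $S_v$ and is preserved by $T_v$; if both lie in $V_{\mathrm{rest}}$, similarly; and if they are separated by some $S_v$, then either $S_v$ itself is an $(s,t)$-mincut (captured by the connecting edge of weight $\delta_G(S_v)$), or the $(s,t)$-mincut is also a $(v,r)$- or $(s,t')$-type mincut that is captured in one of the recursive instances --- this is the standard argument, and since the $S_v$ in $\mathcal{S}$ are disjoint and each is a $(v,r)$-mincut, we are simulating several legal Gomory-Hu contraction steps simultaneously, which is valid by an inductive application of the laminar/posi-modularity structure. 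I would spell this out by verifying the single defining property in \Cref{def:USteinerGomoryHu} for each pair $s,t$. For the running time, the key points are: (i) the graphs produced at a single recursion level have total size $O(m)$, since the $S_v$ are disjoint so $\sum_v |E(G_v)| + |E(G_{\mathrm{rest}})| = O(m)$ (each original edge appears in at most two of these graphs); (ii) the decomposition lemma reduces $|U|$ by a constant factor in every child, so the recursion depth is $O(\log n)$; hence the total work is $O(\log n)$ levels each costing $m^{1+o(1)}$ (dominated by the one decomposition-lemma call per node, summed over a level), giving $m^{1+o(1)}$ overall. The main obstacle I anticipate is the bookkeeping in the assembly step and the correctness proof: one must carefully track how the function $f$ composes across contracted vertices (a non-terminal super-vertex $x_v$ in $G_v$ gets mapped by $f_v$ to some terminal in $U_v$, and one must check this is consistent with the global $f$ and that the connecting edges land on the correct endpoints), and one must confirm that contracting $V \setminus S_v$ genuinely preserves all relevant mincuts among terminals in $S_v$ --- this relies on $S_v$ being a $(v,r)$-mincut and the nesting fact, and is the delicate part of the argument.
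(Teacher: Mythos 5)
Your proposal is correct and follows the same divide-and-conquer structure as the paper's $\textsc{GHTree}$/$\textsc{Combine}$ algorithm, with \Cref{clm:GHTreeStep} supplying the balanced family of disjoint minimal $(v,r)$-mincuts that guarantees $O(\log n)$ recursion depth and hence $m^{1+o(1)}$ total time. The one technical deviation is in the assembly step, where you attach the connecting edge at $v$ rather than at $f_v(x_v)$ as in \Cref{alg:Combine}; this is in fact equivalent, since $S_v$ being a $(v,r)$-mincut forces $\lambda_{G_v}(v,x_v)=\delta_G(S_v)$ and hence $\lambda_{T_v}(v,f_v(x_v))\ge\delta_G(S_v)$, so both choices yield the same minimum-weight edge on every cross-piece tree path (a small side note: an original edge can appear in up to three of the contracted graphs at a level, not two, but the $O(m)$ per-level bound still holds).
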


The rest of this section is organized as follows. In \Cref{subsec:expanderDecomp}, we introduce our main tool, expander decompositions, and prove several important insights. In \Cref{subsec:minimalGHT}, we introduce the concept of rooted minimal Gomory-Hu Steiner tree and prove some of its properties. We present and analyze our deterministic Gomory-Hu algorithm in \Cref{subsec:detection}-\Cref{subsec:finalGHT}. In \Cref{subsec:detection}, we propose the algorithm for detecting the desired $\tau^*$-connected component. We then propse the algorithms for the case where $|U\setminus C|$ is large and the case where $|U\setminus C|$ is small, respectively, in \Cref{subsec:decomposition1} and \Cref{subsec:decomposition2}. Finally, we put the detection phase and decomposition phase together in \Cref{subsec:finalGHT} to get the final Gomory-Hu tree algorithm.

\subsection{Expander Decompositions and \texorpdfstring{$\tau$}{tau}-Connectivity}
\label{subsec:expanderDecomp}

We first present a strengthening of classic expander decompositions that is alluded to in \cite{li2025congestion} and that can be derived by applying techniques from \cite{saranurak2019expander} to the deterministic expander decomposition in \cite{li2021deterministicExpa}.

\begin{definition}\label{def:phiExpander}
Given $G=(V,E,w)$,  $U \subseteq V$ and parameter $\phi > 0$, we say for any cut $X \subseteq V$ that it is \emph{$\phi$-expanding with respect to $U$} if $w(\partial X) \geq \phi \cdot \min\{ |X \cap U|, | U \setminus X|\}$. We say $G$ is a $\phi$-expander with respect to $U$ if every cut $X \subseteq V$ is $\phi$-expanding with respect to $U$. 

We say a partition $\mathcal{X}$ of $V$ is a $\phi$-expander decomposition w.r.t. $U$ if for some $\gamma_{expDecomp} = e^{O(\log^{4/5}(m)\log\log(Wm))}$ (where $W$ is the largest weight in $G$):
\begin{itemize}
    \item for every $X \in \mathcal{X}$, $G[X]$ is $\phi$-expander w.r.t. $X \cap U$, and
    \item for the flow problem where each vertex $v \in V$ for $v \in X \in \mathcal{X}$ has $w(E(\{v\}, \cup_{Y \in \mathcal{X}, Y \neq X} Y))$ units of source mass and each vertex $u \in U$ has $\phi \cdot \gamma_{expDecomp}$ units of sink capacity and vertices $v \in V \setminus U$ have no sink capacity, there exists a feasible flow $f$ on the network $G$ where each edge $e$ has capacity $w(e) \cdot \gamma_{expDecomp}$.
\end{itemize}
\end{definition}

\begin{theorem}\label{thm:expanderDecomp}
There is a deterministic algorithm that, given graph $G = (V, E,w)$, $U \subseteq V$ and parameter $\phi > 0$, computes a $\phi$-expander
decomposition w.r.t. $U$ in time $m^{1+o(1)}$.
\end{theorem}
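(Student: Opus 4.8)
The plan is to reduce the statement to known deterministic expander-decomposition machinery in the ordinary (non-Steiner) setting, and then to upgrade the guarantee to the terminal-weighted version of \Cref{def:phiExpander} by a short combinatorial/flow argument. First I would recall that the deterministic almost-linear-time expander decomposition of Li and Saranurak (the variant in \cite{li2021deterministicExpa}, as well as the boosting framework of \cite{saranurak2019expander}) produces, for a weighted graph $G$ and parameter $\phi$, a partition $\mathcal{X}$ of $V$ such that each $G[X]$ is a $\phi$-expander in the usual sense and the total weight of inter-cluster edges is at most $\phi \cdot \gamma \cdot \mathrm{vol}(G)$ for a subpolynomial $\gamma$. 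The first step is therefore to run this algorithm not on $G$ directly but on a modified graph that encodes the terminal set $U$: I would attach to each terminal $u \in U$ a gadget (for instance a large star, or a single self-loop-like volume contribution) of total weight roughly $\phi \cdot \gamma_{expDecomp}$, so that ``conductance with respect to volume'' in the modified graph corresponds to ``sparsity with respect to $|X \cap U|$'' in $G$. Running the known deterministic decomposition on this modified graph and then deleting the gadget edges yields the partition $\mathcal{X}$; the per-cluster expansion guarantee of \Cref{def:phiExpander} (every cut $X' \subseteq X$ satisfies $w(\partial_{G[X]} X') \ge \phi \cdot \min\{|X' \cap U|, |(X\setminus X') \cap U|\}$) follows because a sparse cut with respect to $U$ inside some $G[X]$ would pull back to a low-conductance cut in the modified graph, contradicting the expander guarantee there.

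The second step is to establish the routing (feasible-flow) property, which is the genuinely new part of the statement. Here I would use exactly the ``expander routing / flow characterization'' idea alluded to in \cite{li2025congestion} and made available by \cite{saranurak2019expander}: the standard expander decomposition can be made to come with an embedding certificate, namely a routing of the inter-cluster edges into the terminals through $G$ with congestion $\gamma_{expDecomp}$ on every edge and load $\phi \cdot \gamma_{expDecomp}$ on every terminal. Concretely, for each cluster $X$, the source mass at $v$ is $w(E(\{v\}, V\setminus X))$, its total over $v \in X$ is the boundary weight of $X$, and one shows by a sequence of max-flow computations (or by the cut-matching / boosting argument internal to the decomposition algorithm) that this boundary mass can be absorbed at the terminals of $X$ at rate $\phi \cdot \gamma_{expDecomp}$ within $G[X]$, using edge capacities scaled up by $\gamma_{expDecomp}$; summing these per-cluster flows (they are vertex-disjoint across clusters since they stay inside $G[X]$) gives the required global feasible flow $f$ in \Cref{def:phiExpander}. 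The key quantitative input is that the boundary mass of each cluster is at most $\phi \cdot \gamma_{expDecomp} \cdot |X \cap U|$, which is precisely what the sink capacities can absorb; this in turn is guaranteed by choosing the gadget weights and the overhead parameter consistently, and it is where the subpolynomial factor $\gamma_{expDecomp} = e^{O(\log^{4/5}(m)\log\log(Wm))}$ enters.

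The third step is the running time, which should be immediate: the modified graph has $O(m)$ edges and polynomially bounded weights, so the deterministic expander decomposition runs in $m^{1+o(1)}$ time, and the additional $m^{1+o(1)}$-time max-flow calls needed to produce the routing certificate are subsumed (using \cite{van2023deterministic} for deterministic almost-linear maxflow), so the total is $m^{1+o(1)}$.

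The main obstacle I anticipate is the second step: turning the ``edge-count'' boundary bound of the off-the-shelf decomposition into the ``per-terminal feasible flow'' guarantee of \Cref{def:phiExpander}, i.e. certifying that each cluster's boundary mass can actually be routed to and absorbed by its own terminals with the stated congestion and sink loads. This requires either opening up the internals of the deterministic decomposition algorithm to extract the embedding it implicitly builds, or running an extra layer of (deterministic, almost-linear-time) flow routines on top of it and arguing the congestion bounds compose correctly — and getting the polylog/subpolynomial overheads to line up (so that the same $\gamma_{expDecomp}$ works for both the expansion and the routing guarantees) is the delicate bookkeeping. The other routine-but-nontrivial point is verifying that the terminal gadget does not distort cuts: one must check that gadget edges contribute negligibly to $w(\partial X)$ for the cuts of interest, which holds because only $O(1)$ gadget edges per cut-crossed terminal are added and their weight is comparable to the sparsity threshold $\phi$.
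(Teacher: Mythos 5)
Your proposal identifies the right crux --- \Cref{thm:LS} (Corollary~2.5 of \cite{li2021deterministicExpa}) only bounds the \emph{total} inter-cluster edge weight, whereas \Cref{def:phiExpander} additionally demands a feasible routing of each cluster's boundary into the terminals --- but the step you use to bridge that gap is asserted rather than proved and, as stated, is false. You claim that ``the boundary mass of each cluster is at most $\phi\cdot\gamma_{expDecomp}\cdot|X\cap U|$ \ldots this in turn is guaranteed by choosing the gadget weights and the overhead parameter consistently.'' No such per-cluster boundary bound follows from the decomposition, with or without a terminal gadget: \Cref{thm:LS} controls only $\sum_X w(\partial X)\le\alpha\phi\,\mathbf{d}(V)$, and it is entirely possible for a single cluster $X$ to contain very few terminals yet have boundary weight far exceeding $\phi\gamma_{expDecomp}|X\cap U|$, in which case $X$ has no sink capacity to absorb its own boundary no matter how you route. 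Nor does the cut-matching/boosting embedding certificate rescue you: it witnesses expansion inside the cluster, not that the boundary mass fits the terminals' budget. So ``opening up the internals'' or running extra flow routines, as you suggest, cannot close the argument unless they also \emph{modify} the partition.

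That modification is exactly what the paper supplies in \Cref{sec:expDecompAppendix}, via a deterministic trimming-and-recursion step in the spirit of \cite{saranurak2019expander}. After calling \Cref{thm:LS} with the demand vector $\mathbf{d}(v)=1$ for $v\in U$ and $\mathbf{d}(v)=0$ otherwise (which, incidentally, makes your terminal gadget unnecessary since \Cref{thm:LS} already supports arbitrary demand vectors), the algorithm groups the clusters into two sides $(A,B)$, sets up a single $s$--$t$ max-flow instance on $G[B]$ with source edges of weight $\frac{1}{12\alpha}w(E(\{v\},V\setminus B))$ and sink edges of weight $\frac{\phi}{2}\mathbf{d}(v)$, and uses the resulting min-cut $S$ to trim $B$ to $B'=B\setminus S$. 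The max-flow itself is the flow certificate for that recursion level (\Cref{clm:trim}, item~3); item~1 shows $\mathbf{d}(B\setminus B')\le\mathbf{d}(V)/6$, so the recursion on $G[A']$ (and on $G[B']$ when $B$ is not a single cluster) has depth $O(\log(nW))$; and item~2 shows $G[B']$ remains an expander up to a constant loss in $\phi$. Summing the per-level flow certificates and absorbing the $O(\log(nW))$ overhead into $\gamma_{expDecomp}$ yields the global feasible flow required by \Cref{def:phiExpander}. Without this trimming step your Step~2 does not go through; with it, the expansion and running-time parts of your proposal are essentially in order.
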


We defer the proof to \Cref{sec:expDecompAppendix}.

Note the following facts about expander decompositions (along with their relatively straightforward proofs). 

\begin{fact}\label{fact:expandersAndCCs}
Given graph $G=(V,E,w)$, $U \subseteq V$, $\tau$ and $\psi > 0$. Let $\mathcal{X}$ be any $\psi \cdot \tau$-expander decomposition w.r.t. $U$. The following properties hold:
\begin{enumerate}
    \item Let $E_{crossing}=\cup_{X\in \mathcal{X}}\partial_G X$ denote the edges crossing clusters, then $|E_{crossing}|\leq \psi\cdot\tau\cdot \gamma_{expDecomp}\cdot|U|/2$. 
    \item $|\{ X \in \mathcal{X} | w(\partial X) \geq \tau\}| \leq \gamma_{expDecomp} \cdot \psi \cdot |U|$, i.e. the number of clusters in $\mathcal{X}$ that do not form a cut of value less than $\tau$ is bounded by $\gamma_{expDecomp} \cdot \psi \cdot |U|$.
   
    \item \label{prop:overlapBound} For any $X, Y \subseteq V$, we define the crossing w.r.t. $U$ as $\textsc{Crossing}_U(X, Y) =  \min\{ |(X \cap Y) \cap U|, |(X \cap U) \setminus Y|\}$. For $\mathcal{X}$, we then have 
    \begin{equation}\label{eq:smallOverlap}
        \textsc{Crossing}_U(Y, \mathcal{X}) = \sum_{X \in \mathcal{X}} \textsc{Crossing}_U(Y, X) \leq w(\partial_G Y)/(\tau \cdot \psi).
    \end{equation}
     \item For any $Y \subseteq V$ where $Y\cap U$ is $\tau$-connected, the number of clusters in $\mathcal{X}$ that intersect $Y$, i.e. $|\{ X \in \mathcal{X} \;|\; X \cap Y \cap U \neq \emptyset\}|$, is at most $w(\partial_G Y)/(\tau \cdot \psi) + \max\{1, \gamma_{expDecomp} \cdot \psi \cdot |Y \cap U| + \gamma_{expDecomp} \cdot w(\partial_G Y)/\tau\}$.
\end{enumerate}
\end{fact}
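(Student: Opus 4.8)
The plan is to read everything off the flow certificate built into \Cref{def:phiExpander} (instantiated with $\phi=\psi\cdot\tau$), together with the $\phi$-expander property of each piece $G[X]$.

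\textbf{Items 1 and 2.} First I would account for the source mass of the embedding flow: each crossing edge $e=\{a,b\}$ with $a,b$ in distinct clusters contributes $w(e)$ to the source mass of both $a$ and $b$, so the total source mass is $2\,w(E_{crossing})=\sum_{X\in\mathcal{X}}w(\partial_G X)$. Since the flow is feasible and the only sinks are the vertices of $U$, each with capacity $\psi\tau\gamma_{expDecomp}$, we get $2\,w(E_{crossing})\le \psi\tau\gamma_{expDecomp}|U|$; as all weights are at least $1$, this gives item~1. For item~2, every cluster with $w(\partial_G X)\ge\tau$ contributes at least $\tau$ to $\sum_{X}w(\partial_G X)\le \psi\tau\gamma_{expDecomp}|U|$, so there are at most $\psi\gamma_{expDecomp}|U|$ of them.

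\textbf{Item 3.} For each $X\in\mathcal{X}$ I would apply the $\psi\tau$-expansion of $G[X]$ with respect to $X\cap U$ to the cut $Y\cap X$ inside $X$, giving $w\big(E_{G[X]}(Y\cap X,\,X\setminus Y)\big)\ge \psi\tau\cdot\min\{|Y\cap X\cap U|,\,|(X\cap U)\setminus Y|\}=\psi\tau\cdot\textsc{Crossing}_U(Y,X)$. The edge sets $E_{G[X]}(Y\cap X,X\setminus Y)$ live in pairwise disjoint induced subgraphs and each is contained in $\partial_G Y$, so summing over $X$ yields $\psi\tau\cdot\textsc{Crossing}_U(Y,\mathcal{X})\le w(\partial_G Y)$, which is exactly \eqref{eq:smallOverlap}.

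\textbf{Item 4.} Write $\mathcal{X}_Y=\{X\in\mathcal{X}:X\cap Y\cap U\neq\emptyset\}$; if $|\mathcal{X}_Y|\le 1$ the bound is trivial via the $\max\{1,\cdot\}$ term, so assume $|\mathcal{X}_Y|\ge 2$. Partition $\mathcal{X}_Y$ into the \emph{small} clusters, where $|X\cap Y\cap U|\le|(X\cap U)\setminus Y|$ (so $\textsc{Crossing}_U(Y,X)=|X\cap Y\cap U|\ge 1$), and the \emph{big} clusters, where $|X\cap Y\cap U|>|(X\cap U)\setminus Y|$. By item~3, the number of small clusters is at most $\sum_X\textsc{Crossing}_U(Y,X)\le w(\partial_G Y)/(\tau\psi)$, the first term of the bound. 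For the big clusters I would invoke $\tau$-connectivity of $Y\cap U$: since $|\mathcal{X}_Y|\ge 2$, for every $X\in\mathcal{X}_Y$ there is a vertex $u'\in Y\cap U$ outside $X$, and picking $u\in X\cap Y\cap U$, the cut $(X\cap Y,\,V\setminus(X\cap Y))$ separates $u$ from $u'$, so $w(\partial_G(X\cap Y))\ge\tau$. Splitting $\partial_G(X\cap Y)$ into edges to $X\setminus Y$ and edges to $V\setminus X$ gives $w(E(X\cap Y,X\setminus Y))+w(E(X\cap Y,V\setminus X))\ge\tau$ for each big $X$. Summing over big clusters: the first terms are disjoint subsets of $\partial_G Y$ and so sum to at most $w(\partial_G Y)$, while $\sum_{X}w(E(X\cap Y,V\setminus X))$ is exactly the source mass contributed by vertices in $\bigcup_{X\text{ big}}(X\cap Y)\subseteq Y$, which by flow conservation over the region $Y$ is at most $(\text{sink capacity inside }Y)+(\text{capacity of }\partial_G Y)\le \psi\tau\gamma_{expDecomp}|Y\cap U|+\gamma_{expDecomp}w(\partial_G Y)$. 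Hence the number of big clusters is at most $\gamma_{expDecomp}\psi|Y\cap U|+\gamma_{expDecomp}w(\partial_G Y)/\tau$ up to a constant factor (absorbable into $\gamma_{expDecomp}$), giving the second term. Adding the two cases completes item~4.

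The main obstacle is item~4, and within it the big case: a big cluster need not be contained in $Y$, so one must be careful to (i) charge the $\tau$-lower bound coming from $\tau$-connectivity only to edges one can genuinely account for — namely the edges $E(X\cap Y,X\setminus Y)$, which are disjoint across clusters and lie in $\partial_G Y$, and the source mass that leaves $X$ — and (ii) apply the flow-localization step (source mass inside $Y$ minus sink absorbed inside $Y$ equals net flow across $\partial_G Y$, which is at most $\gamma_{expDecomp}w(\partial_G Y)$) to the correct region. Items~1--3 are routine once the source-mass/sink-capacity bookkeeping and the per-cluster expansion are set up.
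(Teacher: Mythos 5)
Your proofs of items 1--3 are essentially the paper's: you account for source mass/sink capacity of the embedding flow for item~1, divide by $\tau$ for item~2 (the paper does this slightly differently, via $w(E_{crossing}) \ge \ell\tau/2$, but the computation is the same), and apply per-cluster expansion plus edge-disjointness for item~3.

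For item~4, your route genuinely diverges, and I think for the better. The paper splits $\mathcal{X}_Y$ into ``crossing'' clusters (both $X\cap Y\cap U$ and $(X\cap U)\setminus Y$ nonempty), charged to item~3, and ``proper-subset'' clusters ($X\cap U\subseteq Y\cap U$), charged to the flow. For the latter, the paper argues $w(\partial_G X)\ge\tau$ by $\tau$-connectivity and then asserts ``since all edges in $\partial X$ are incident to $Y$, each such cluster adds at least $\tau$ units of source mass on vertices in $Y$.'' This step is not literally justified: $X\cap U\subseteq Y\cap U$ does not force $X\subseteq Y$ (non-terminal vertices of $X$ may lie outside $Y$), so some of the $w(\partial_G X)$ source mass can sit on $X\setminus Y$ and escape the accounting over $Y$. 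Your ``small/big'' split and the use of the cut $X\cap Y$ rather than $X$ — so that $w(\partial_G(X\cap Y))\ge\tau$ decomposes into $E(X\cap Y,X\setminus Y)\subseteq\partial_G Y$ plus $E(X\cap Y,V\setminus X)$, which \emph{is} exactly source mass placed on vertices of $Y$ — sidesteps this issue cleanly. The price is one extra $w(\partial_G Y)/\tau$ term (you cannot drop the $E(X\cap Y,X\setminus Y)$ part), but since $\gamma_{expDecomp}\ge 1$ this is at most a factor of $2$ and gets swallowed by the $O(\cdot)$ in the exponent defining $\gamma_{expDecomp}$, exactly as you note. So your argument is correct, slightly less tight in a way that is immaterial here, and arguably patches a real gap in the paper's write-up of item~4.
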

\begin{proof}
We prove the facts one by one:
\begin{enumerate}
    \item  Each edge $e \in E_{crossing}$ contributes $w(e)$ source mass to each of its endpoints and thus the total source mass in the flow problem described in \Cref{def:phiExpander} is $2 \cdot w(E_{crossing})$. Since the total amount of sink capacity is at most $\psi \cdot \tau \cdot \gamma_{expDecomp} \cdot |U|$, we have by the existence of a feasible flow that $2 \cdot w(E_{crossing}) \leq \psi \cdot \tau \cdot \gamma_{expDecomp} \cdot |U|$.
    \item Let $E_{crossing} = \cup_{X \in \mathcal{X}} \partial_G X$. Let $\ell$ be the number of sets $X \in \mathcal{X}$ with $w(\partial X) \geq \tau$. Since $\mathcal{X}$ is a partition of $V$ and thus at most two clusters $X,Y \in \mathcal{X}$ are incident to the same edge, we have that $w(E_{crossing}) \geq \ell \cdot \tau/2$.

    Thus, we have that 
    \[\ell \leq  \frac{\gamma_{expDecomp}  \cdot \psi \cdot \tau \cdot |U|/2}{\tau/2} = \gamma_{expDecomp} \cdot \psi \cdot |U|.
    \]
    \item Clearly, $X \cap Y$ and $X \setminus Y$ partition $X$. For $X \in \mathcal{X}$, we further have that $G[X]$ is a $(\psi \cdot \tau)$-expander w.r.t. $U$ and thus 
    \[
         \min\{ |(X \cap Y) \cap U|, |(X \cap U) \setminus Y|\} \cdot (\psi \cdot \tau) \leq w( E(X \cap Y, X \setminus Y)).
    \]
    Taking summation on both sides, we have \[
     \psi \cdot \tau\cdot \textsc{Crossing}_U(Y,\mathcal{X}) \leq \sum_{X\in \mathcal{X}}w( E(X \cap Y, X \setminus Y)) \leq w( \partial_G Y),
    \]
    where the last inequality follows from the fact that $\mathcal{X}$ is a partition of $V$.
    Dividing by $\psi\cdot \tau$ on both sides yields the desired result.
    \item Note that the number of crossing clusters $X$ of $Y$, i.e. clusters where $X \cap Y \cap U$ and $(X \cap U) \setminus Y$ are both non-empty, is upper bounded by $\textsc{Crossing}_U(X,Y)$. Thus, by Property \ref{prop:overlapBound}, it suffices to bound the number of clusters $X$ that intersect with $Y \cap U$ but have $(X \cap U) \setminus Y = \emptyset$ by $\gamma_{expDecomp} \cdot (\psi \cdot |Y \cap U| + w(\partial_G Y)/\tau)$. 
    
    In the case where some cluster $X$ has $X \cap U = Y \cap U$, we can trivially bound this number by one. We thus henceforth assume that all such sets $X$ are proper subsets of $Y$ w.r.t. $U$, i.e. that $X \cap U \subsetneq Y \cap U$.

    Finally, to bound the number of such clusters, we use the flow problem in \Cref{def:phiExpander}. For each such cluster $X$ being a proper subset of $Y$ w.r.t. $U$, since $Y\cap U$ is $\tau$-connected, we have that $w(\partial_G X) \geq \tau$. Since all edges in $\partial X$ are incident to $Y$, each such cluster $X$ adds at least $\tau$ units of source mass on vertices in $Y$. Thus, for $\ell$ such clusters, we have a total of at least $\ell \cdot \tau$ source mass placed on vertices in $Y$. 
    
    On the other hand, since each edge $e$ receives capacity $w(e) \cdot \gamma_{expDecomp}$ in the flow network, we have that at most $w(\partial_G Y) \cdot \gamma_{expDecomp}$ units of source flow are routed to sinks outside of $Y$ by the max-flow min-cut theorem. Since the flow is feasible, we thus have that the remaining $\ell \cdot \tau - w(\partial_G Y) \cdot \gamma_{expDecomp}$ units of flow are routed to vertices in $Y$. But the sink capacity of vertices in $Y$ is at most $\psi \cdot \tau \cdot \gamma_{expDecomp} \cdot |Y \cap U|$. Thus, $\ell \cdot \tau - w(\partial_G Y) \cdot \gamma_{expDecomp} \leq \psi \cdot \tau \cdot \gamma_{expDecomp} \cdot |Y \cap U|$. Thus, $\ell \leq \gamma_{expDecomp} \cdot \psi \cdot |Y \cap U| + \gamma_{expDecomp} \cdot w(\partial_G Y)/\tau$, as desired.
\end{enumerate}
\end{proof}

\subsection{Minimal Gomory-Hu trees}\label{subsec:minimalGHT}

For graph $G = (V, E, w)$, set of terminals $U \subseteq V$ and vertices $r,v \in U$, let $M_{G,v,r}$ denote the $v$-side vertex-minimal $(v,r)$-mincut and let $m_{G,U,r}(v)=|M_{G,v,r}\cap U|$.
Recall that we denote by $\mathcal{C}_{G, U, \tau}$ the $\tau$-connected components of $G$ w.r.t. $U$.

\begin{definition}[Rooted minimal Gomory-Hu Steiner tree]\label{def:minimalGomoryHu}
Given a graph $G = (V, E, w)$ and a
set of terminals $U \subseteq V$, a rooted minimal Gomory-Hu $U$-Steiner tree is a Gomory-Hu $U$-Steiner tree on
$U$, rooted at some vertex $r \in U$, with the following additional property:
\begin{itemize}
    \item For all $t \in U\setminus \{r\}$, consider the minimum-weight edge $(u, v)$ on the unique $rt$-path in $T$; if there are multiple minimum weight edges, let $(u, v)$ denote the one that is closest to $t$. Let $U'$ be the vertices of the connected component of $T \setminus (u, v)$ containing $t$. Then, $f^{-1}(U')\subseteq V$ is a minimal $(t,r)$-mincut, i.e. $f^{-1}(U') = M_{G, t,r}$, and its value is $w_T(u, v)$.
\end{itemize}     
\end{definition}
We note that the rooted minimal $U$-Steiner Gomory-Hu tree always exists for any $r\in U\subseteq V$.
\begin{fact}[c.f. Theorem A.8 in \cite{abboud2022breaking}]
\label{fact:rootedMinimalGHT}
For any graph $G=(V,E,w)$, $U \subseteq V$ and vertex $r \in U$, there exists a minimal Gomory-Hu $U$-Steiner tree rooted in $r$.
\end{fact}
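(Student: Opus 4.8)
The statement is essentially Theorem~A.8 of \cite{abboud2022breaking}, and I would reprove it by induction on $|U|$, following the classical Gomory-Hu recursion but always splitting along a \emph{vertex-minimal} mincut anchored at the fixed pivot $r$. The base case $|U|=1$ is the one-vertex tree. For the inductive step I would use the laminar family of \Cref{fact:laminarFamily}: pick $v\in U\setminus\{r\}$ whose vertex-minimal $(v,r)$-mincut $S:=M_{G,v,r}$ is \emph{inclusion-maximal} among $\{M_{G,u,r}:u\in U\setminus\{r\}\}$. Maximality together with laminarity give $r\notin S$ and $M_{G,u,r}\subseteq S$ for every $u\in U\cap S$; write $\lambda:=\delta_G(S)=\lambda_G(v,r)$.

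Next, split as in Gomory and Hu. On the $r$-side, let $G_T:=G/S$ with terminal set $U_T:=U\setminus S$ (which still contains $r$), and recurse to obtain a rooted minimal Gomory-Hu $U_T$-Steiner tree $(T_T,f_T)$. On the $v$-side, let $G_S:=G/(V\setminus S)$ with super-vertex $y$; the natural pivot here is the non-terminal $y$, so I would either (i) prove the lemma in the mildly stronger form that permits an arbitrary vertex as pivot, or (ii) recurse on $G_S$ with terminal set $(U\cap S)\cup\{y\}$ rooted at $y$ and then suppress $y$ — which is legitimate because the choice of $v$ forces $M_{G_S,v,y}=S$, i.e.\ $v$'s block is all of $U\cap S$, so $y$ has a unique child $v$ and can be deleted after reattaching $v$ with edge weight $\lambda$. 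Either route yields a rooted minimal Gomory-Hu $(U\cap S)$-Steiner tree realizing, for each $t\in U\cap S$, the cut $M_{G_S,t,y}$. The two structural facts to verify here are: (a) contraction of $V\setminus S$ does not disturb the small cuts inside $S$, precisely $M_{G_S,t,y}=M_{G,t,r}$ and $\lambda_{G_S}(t,y)=\lambda_G(t,r)$ for $t\in U\cap S$ — here one uses $M_{G,t,r}\subseteq S$ and uncrosses any $(t,y)$-cut of $G_S$ against $S$ by submodularity; and symmetrically (b) contraction of $S$ does not disturb cuts between two terminals of $V\setminus S$, which is exactly \Cref{lma:legalDecompOfGH}.

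Finally I would assemble $T:=T_S\cup T_T$ joined by the edge $(v,r)$ with $w_T(v,r):=\lambda$, and let $f$ agree with $f_S$ on the preimage of $S$ and with $f_T$ on the preimage of $V\setminus S$ (mapping the two super-vertices to $v$, resp.\ $r$). Checking that $(T,f)$ is a Gomory-Hu $U$-Steiner tree is routine: for a pair $s,t$ on opposite sides of $(v,r)$ the minimum-weight edge on the $st$-path is either $(v,r)$ itself (reading off the cut $S$) or an edge of $T_S$ or $T_T$, and \Cref{lma:legalDecompOfGH} lets the corresponding mincut be taken inside $S$ or inside $V\setminus S$. For the extra minimality clause of \Cref{def:minimalGomoryHu}: a vertex $t\in U\setminus S$ inherits its guarantee from $(T_T,f_T)$; for $t\in U\cap S$ we have $\lambda_G(t,r)\le\lambda$, so the minimum-weight edge closest to $t$ on the $rt$-path lies strictly inside $T_S$ (the "closest to $t$" tie-break handles $\lambda_G(t,r)=\lambda$, using $M_{G,t,r}\subsetneq S$ for $t\ne v$), and by induction that edge cuts out exactly $M_{G_S,t,y}=M_{G,t,r}$. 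The main obstacle is not the recursion skeleton but the bookkeeping around the non-terminal pivot $y$ and claim (a): showing that \emph{vertex-minimal} mincuts, not merely mincut \emph{values}, survive the contraction, for which submodularity and the uniqueness of minimal mincuts (\Cref{fact:minimalCutsUnique}) do the real work.
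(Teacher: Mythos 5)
The paper does not reprove this fact---it cites Theorem~A.8 of \cite{abboud2022breaking}---so there is no internal proof to compare against. Your skeleton (induct on $|U|$, split along an inclusion-maximal vertex-minimal $(v,r)$-mincut $S=M_{G,v,r}$, recurse on $G/S$ and $G/(V\setminus S)$, reassemble) is the right strategy, and the structural claim that contraction preserves minimal mincuts on each side is correct and provable exactly as you sketch. However, the step in your option~(ii) that suppresses $y$ has a genuine gap.

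You assert that in $T_S$ (rooted at $y$) ``$y$ has a unique child $v$.'' What is actually forced is that $y$ has a \emph{unique} child $c$ with $f_S^{-1}(y)=\{y\}$ and $M_{G_S,c,y}=S$; but $c$ need not equal the $v$ you fixed in advance, since several terminals in $U\cap S$ can share $M_{G,\cdot,r}=S$. Joining along $(v,r)$ rather than $(c,r)$ can destroy even the plain Gomory--Hu property. Concretely, take $V=\{r',r,u,v_1,v_2\}$, $U=V$, edges $(r',r)$ of weight $5$, $(r,u)$ of weight $1$, and $(u,v_1),(u,v_2)$ of weight $3$. Then $M_{G,u,r}=M_{G,v_1,r}=M_{G,v_2,r}=S=\{u,v_1,v_2\}$, so $v=v_1$ is a legitimate choice, yet the unique child of $y$ in $T_S$ is $u$. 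Your assembly produces the path $r'\!-\!r\!-\!v_1\!-\!u\!-\!v_2$ (weights $5,1,3,3$); removing the edge $(v_1,u)$ gives the cut $\{u,v_2\}$ of weight $4>3=\lambda_G(u,v_1)$, so the output is not a Gomory--Hu tree. A second problem with option~(ii): when $V\setminus S=\{r\}$, the recursive call has $|U_S|=|U|$, so the induction does not terminate.

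Both problems vanish if on the $v$-side you recurse on $G_S=G/(V\setminus S)$ with terminal set $U\cap S$ and pivot $v$ itself (a terminal of $G_S$, so no generalization of \Cref{def:minimalGomoryHu} is needed), and then join $T_S$ to $T_T$ by the edge $(v,f_T(x_S))$ of weight $\delta_G(S)$, where one checks $f_T(x_S)=r$ by the argument that $x_S$ lies in no $M_{G_T,u,r}$. What must then be verified is that $M_{G_S,t,v}=M_{G,t,r}$ for every $t\in(U\cap S)\setminus\{v\}$ with $M_{G,t,r}\subsetneq S$ (use $v\notin M_{G,t,r}$, \Cref{fact:laminarFamily}, and the uncrossing you already sketch), and that for $t$ with $M_{G,t,r}=S$ the attachment edge is the unique minimum-weight edge closest to $t$, which follows from $\lambda_{G_S}(t,v)>\lambda_G(v,r)$ via the same posi-modularity argument.
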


We henceforth denote fix one minimal Gomory-Hu $U$-Steiner tree rooted in $r$ and denote it by $T_{G, U, r}$ along with function $f_{G, U, r}$, or simply $T_{U,r}$ along with function $f_{U, r}$ if $G$ is clear from context. If $U = V$, we simply write $T_{r}$ along with function $f_{r}$ to denote the minimal Gomory-Hu Tree rooted in $r$.

\begin{definition}
For any graph $G=(V,E,w)$, $U \subseteq V$ and vertex $r \in U$, for every vertex $v \in U$, we denote by $c_{G, U, r}(v)$ the vertex in $T_{G, U, r}$ such that the subtree rooted at $c_{G, U, r}(v)$ consists exactly of the vertices of the minimal $(v,r)$-mincut $M_{G, v,r}$, i.e. $f^{-1}_{G,U,r}(V(T_{G, U,r}[c_{G, U, r}(v)]))$ is the minimal $(v,r)$-mincut.
\end{definition}

\begin{fact}\label{fact:isolatingCutsWorkIfGHSubtreeIsHitOnce}
For any graph $G=(V,E,w)$, $U \subseteq V$ and vertices $r,v \in U$, for any set $A \subseteq U$ with $A \cap M_{G, v, r} = \{v\}$ and $r \in A \setminus \{v\}$, the algorithm $\textsc{ComputeIsolatingCuts}$ on $A$ from \Cref{lma:isoCuts} returns the minimal $(v, r)$-mincut $M_{G, v, r}$ (together in a collection of various cuts).
\end{fact}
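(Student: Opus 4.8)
\textbf{Proof plan for Fact~\ref{fact:isolatingCutsWorkIfGHSubtreeIsHitOnce}.}
The plan is to show that under the stated hypotheses, the vertex-minimal $(v,r)$-mincut $M_{G,v,r}$ is in fact a vertex-minimal $(\{v\}, A\setminus\{v\})$-mincut, and then invoke the guarantee of \Cref{lma:isoCuts} directly. Recall that $\textsc{ComputeIsolatingCuts}$ on the family $\mathcal{U} = \{\{a\} : a \in A\}$ returns, for each $a \in A$, a vertex-minimal $(\{a\}, A \setminus \{a\})$-mincut; so it suffices to identify the cut returned for the singleton $\{v\}$ with $M_{G,v,r}$.

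First I would record the containment facts that follow from the hypothesis $A \cap M_{G,v,r} = \{v\}$ and $r \in A \setminus \{v\}$: the set $S := M_{G,v,r}$ satisfies $v \in S$ and $(A \setminus \{v\}) \cap S = \emptyset$, so $S$ is a valid $(\{v\}, A\setminus\{v\})$-cut. Its value is $\delta_G(S) = \lambda_G(v,r)$. Since any $(\{v\}, A\setminus\{v\})$-cut is in particular a $(v,r)$-cut (because $r \in A\setminus\{v\}$), the minimum value of a $(\{v\}, A\setminus\{v\})$-cut is at least $\lambda_G(v,r)$; combined with the fact that $S$ achieves value $\lambda_G(v,r)$, we conclude $\lambda_G(\{v\}, A\setminus\{v\}) = \lambda_G(v,r)$ and $S$ is a $(\{v\}, A\setminus\{v\})$-mincut.

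It remains to show $S$ is the \emph{vertex-minimal} such mincut, i.e.\ that no $(\{v\}, A\setminus\{v\})$-mincut $S'$ with $v \in S'$ has $|S'| < |S|$. Here is the key step, and I expect it to be the only non-routine point: take any such $S'$. It is a $(v,r)$-mincut, but not necessarily minimal, so a priori it need not be contained in $S$. I would apply posi-modularity of $\delta_G$ to $S$ and $S'$: $\delta_G(S) + \delta_G(S') \ge \delta_G(S\setminus S') + \delta_G(S'\setminus S)$. Both $S\setminus S'$ and $S'\setminus S$ contain $v$? No --- only one of them need contain $v$, and in fact $v \in S\cap S'$. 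The right move is instead submodularity applied to $S$ and $S'$: $S \cap S'$ is a $(v,r)$-cut (contains $v$, excludes $r$) and $S \cup S'$ is a $(v,r)$-cut, so $\delta_G(S\cap S') \ge \lambda_G(v,r)$ and $\delta_G(S\cup S') \ge \lambda_G(v,r)$; with $\delta_G(S) + \delta_G(S') \ge \delta_G(S\cap S') + \delta_G(S\cup S')$ and $\delta_G(S) = \delta_G(S') = \lambda_G(v,r)$, all four inequalities are tight, so $S \cap S'$ is also a $(v,r)$-mincut. But $v \in S\cap S' \subseteq S = M_{G,v,r}$ forces $S\cap S' = M_{G,v,r} = S$ by uniqueness and minimality of $M_{G,v,r}$ (\Cref{fact:minimalCutsUnique}); hence $S \subseteq S'$ and $|S'| \ge |S|$. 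This shows $S = M_{G,v,r}$ is the vertex-minimal $(\{v\}, A\setminus\{v\})$-mincut, and therefore it is exactly the cut $\textsc{ComputeIsolatingCuts}$ returns for $\{v\}$, completing the proof.
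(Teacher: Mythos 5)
Your proof is correct and follows essentially the same plan as the paper's: identify $M_{G,v,r}$ as a valid $(\{v\},A\setminus\{v\})$-cut, observe that any such cut is also a $(v,r)$-cut to pin down the mincut value, and then argue that the vertex-minimal isolating cut must coincide with $M_{G,v,r}$. The one small stylistic difference: the paper obtains vertex-minimality directly (any $(\{v\},A\setminus\{v\})$-mincut contains $v$, avoids $r$, and has value $\lambda_G(v,r)$, so it is a $(v,r)$-mincut and hence by definition cannot have fewer vertices than $M_{G,v,r}$) and then invokes submodularity only at the end to conclude uniqueness; you instead run the submodularity argument up front to prove the stronger containment $M_{G,v,r}\subseteq S'$, which yields both vertex-minimality and uniqueness in one stroke. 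Both routes are standard and correct — yours is a touch more work but also more self-contained.
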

\begin{proof}
From \Cref{lma:isoCuts}, we have that $\textsc{ComputeIsolatingCuts}(A)$ returns (among other sets) a set $S$ that is a minimal $(v, A \setminus \{v\})$-mincut. Since $M_{G, v, r}$ contains $v$ but no other vertex in $A$, we have that it is a $(v, A \setminus \{v\})$-mincut. 

Since every other $(v, A \setminus \{v\})$ cut $S$ also has $v \in A$ and $r \not\in s$, the  $(v, A \setminus \{v\})$-mincut value cannot be smaller than $w_G(\partial_G M_{G, v,r})$. Again, since $v \in S, r \not\in S$ (since it is in $A \setminus \{v\}$), there cannot be a $(v, A \setminus \{v\})$-mincut $S$ with $|S| < |M_{G, v, r}|$ as this would contradict that $M_{G, v, r}$ is a minimal $(v, r)$-mincut. 

The fact then follows from submodularity.
\end{proof}

\begin{fact}\label{fact:xtorootPathinGHtree}
For any graph $G=(V,E,w)$, $U \subseteq V$ and vertex $r \in U$, we have
\begin{enumerate}
    \item \label{subfact:inCC} for any threshold $\tau > 0$, $x \in \mathcal{C}_{G, U, \tau}(r)$, then the $x$-to-$r$ path $P$ in $T_{G, U, r}$ does not cross the cut $\partial_{T_{G, U,r}} \mathcal{C}_{G, U, \tau}(r)$. Further, every edge on $P$ is of weight at least $\tau$.
    \item \label{subfact:notInCC} for any threshold $\tau > 0$, $x \not\in \mathcal{C}_{G, U, \tau}(r)$, then the $x$-to-$r$ path in $T_{G, U, r}$ contains exactly one edge $e$ in $\partial_{T_{G, U,r}} \mathcal{C}_{G, U, \tau}(r)$. Further, $e$ is of weight less than $\tau$. 
    \item \label{subfact:monotonePathCC} for any $x \in U \setminus\{r\}$ and vertex $w$ on the $x$-to-$r$ path in $T_{G, U, r}$, we have $\lambda_G(v, r) \leq \lambda_G(w, r)$.
    \item \label{subfact:monotonePathCutvertexCC} for any $x \in U \setminus\{r\}$ and vertex $w$ on the $x$-to-$r$ path in $T_{G, U, r}$, we have that $c_{G, U, r}(w)$ is on the $c_{G, U, r}(x)$-to-$r$ path.
    \item \label{subfact:monotonePathSizeCC} for any $x \in U \setminus\{r\}$ and vertex $w$ on the $x$-to-$r$ path in $T_{G, U, r}$, we have $M_{G,x,r}\subseteq M_{G,w,r}$, hence $m_{G,U,r}(x) \leq m_{G,U,r}(w)$.
    \item \label{subfact:MCnesting} for any $x,w \in U \setminus\{r\}$, we have either $M_{G,x,r}\subseteq M_{G,w,r}$, $M_{G,w,r}\subseteq M_{G,x,r}$, or $M_{G,x,r}\cap M_{G,w,r}=\emptyset$.
    \item \label{subfact:cleanCutAtCC} $w(\partial_Gf_{G, U, r}^{-1}(\mathcal{C}_{G, U, \tau}(r))) \leq w_{T_{G, U, r}}(\partial_{T_{G, U, r}} \mathcal{C}_{G, U, \tau}(r))$.
    \item \label{subfact:contractGH} for any $U' \subseteq U$ with $r \in U'$, $T_{G, U', r}$ can be obtained from $T_{G, U, r}$ via a sequence of edge contractions.
\end{enumerate}
\end{fact}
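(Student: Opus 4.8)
The plan is to derive all eight statements from two things already available: (i) the \emph{fundamental Gomory--Hu identity}, namely that for $u,v\in U$ the value $\lambda_G(u,v)$ equals the minimum weight of an edge on the unique $u$-$v$ path in $T_{G,U,r}$, and that cutting $T_{G,U,r}$ at any such minimum-weight edge and taking $f_{G,U,r}^{-1}$ of the two sides yields an $(u,v)$-mincut of that value; and (ii) laminarity of minimal $(v,r)$-mincuts (\Cref{fact:laminarFamily}), together with uniqueness of minimal mincuts (\Cref{fact:minimalCutsUnique}) and the defining subtree-cut property of the rooted minimal tree (\Cref{def:minimalGomoryHu}, whose existence is \Cref{fact:rootedMinimalGHT}). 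I would begin by recording the structural observation that the $\tau$-connected components of $G$ w.r.t.\ $U$ are exactly the connected components of $T_{G,U,r}$ after deleting every edge of weight $<\tau$: two terminals on the same component have a tree path with all edges $\ge\tau$, hence are $\ge\tau$-connected; two on different components have a $<\tau$ edge on their path; and any edge of $T_{G,U,r}$ on the boundary of $\mathcal{C}_{G,U,\tau}(r)$ must itself have weight $<\tau$, else its endpoints would lie in the same component. In particular $\mathcal{C}_{G,U,\tau}(r)$ induces a subtree of $T_{G,U,r}$.

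Given this, parts (1)--(3) are short. For (1): if $x\in\mathcal{C}_{G,U,\tau}(r)$ then every edge of the $x$-to-$r$ path has weight $\ge\lambda_G(x,r)\ge\tau$, so the path uses no boundary edge. For (2): $\mathcal{C}_{G,U,\tau}(r)$ being a subtree, a path from a vertex outside it to $r$ crosses its boundary exactly once, at an edge of weight $<\tau$. For (3): the $w$-to-$r$ path is a sub-path of the $x$-to-$r$ path, so its minimum edge weight is at least that of the longer path, i.e.\ $\lambda_G(w,r)\ge\lambda_G(x,r)$. Part (6) is just \Cref{fact:laminarFamily} restated. For (5), fix $w$ on the $x$-to-$r$ path; write the path as $x=v_0,v_1,\dots,v_k,r$ and let $e_x=(v_i,v_{i+1})$ be the minimum-weight edge closest to $x$, so $c_{G,U,r}(x)=v_i$. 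A short case distinction on the index $j$ with $w=v_j$ shows $c_{G,U,r}(w)$ is a weak ancestor of $v_i$: if $j\le i$ then $e_x$ is still the minimum-weight edge closest to $w$ on the $w$-to-$r$ sub-path, so $c_{G,U,r}(w)=c_{G,U,r}(x)$; if $j>i$ then $c_{G,U,r}(w)$ lies on the $v_{i+1}$-to-$r$ portion, an ancestor of $v_i$. Hence the subtree below $c_{G,U,r}(x)$ sits inside that below $c_{G,U,r}(w)$, which by the defining property of $T_{G,U,r}$ gives $M_{G,x,r}\subseteq M_{G,w,r}$ and $m_{G,U,r}(x)\le m_{G,U,r}(w)$; part (4) is the same statement phrased via the ancestor relation. (One can also get (4)--(5) purely from laminarity by checking $M_{G,x,r}\cap M_{G,w,r}\ne\emptyset$ and excluding $M_{G,w,r}\subsetneq M_{G,x,r}$.)

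For part (7), set $S=\mathcal{C}_{G,U,\tau}(r)$, a subtree of $T_{G,U,r}$. Delete its boundary edges $\partial_{T_{G,U,r}}S=\{e_1,\dots,e_k\}$; since $T_{G,U,r}$ is a tree and $S$ connected, the remaining pieces are $S$ and components $C_1,\dots,C_k$, with each $C_i$ attached to $S$ by exactly $e_i$ (so $C_i$ is the side of $T_{G,U,r}\setminus\{e_i\}$ not containing $S$). By the Gomory--Hu identity, $w_G(\partial_G f_{G,U,r}^{-1}(C_i))=w_{T_{G,U,r}}(e_i)$. Since $V\setminus f_{G,U,r}^{-1}(S)=\bigcup_i f_{G,U,r}^{-1}(C_i)$ with the $f_{G,U,r}^{-1}(C_i)$ disjoint, every edge of $\partial_G f_{G,U,r}^{-1}(S)$ lies in some $\partial_G f_{G,U,r}^{-1}(C_i)$, so $w_G(\partial_G f_{G,U,r}^{-1}(S))\le\sum_i w_{T_{G,U,r}}(e_i)=w_{T_{G,U,r}}(\partial_{T_{G,U,r}}S)$.

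Part (8) is where I expect the real work. The idea is that the laminar family $\{M_{G,v,r}:v\in U'\}$ is a sub-family of $\{M_{G,v,r}:v\in U\}$ — these sets depend only on $G,v,r$, not on the terminal set — so the nested structure encoded by $T_{G,U,r}$ restricts to exactly the one $T_{G,U',r}$ must encode. Concretely, I would map each node $x\in U=V(T_{G,U,r})$ to the $v'\in U'$ for which $M_{G,v',r}$ is minimal among $\{M_{G,u,r}:u\in U'\}$ containing $x$, argue that the fibers of this map are connected subtrees of $T_{G,U,r}$, and verify that contracting each fiber produces a weighted tree whose subtree-cut structure matches the minimal $(v,r)$-mincuts for all $v\in U'$; uniqueness of the minimal rooted Gomory--Hu $U'$-Steiner tree then identifies this contraction with $T_{G,U',r}$. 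The delicate points are handling ties (several terminals of $U'$ sharing a minimal mincut, or several minimum-weight edges along a path) and the fiber-connectivity check; this is essentially the standard Gomory--Hu Steiner-tree contraction argument, but adapted so that the rooted minimal structure is preserved, and it is the one sub-claim that does not just fall out of the two tools above.
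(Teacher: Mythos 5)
Your treatment of items (1)--(7) is correct, and in a few places the route is genuinely different from (and cleaner than) the paper's. The preliminary observation that $\mathcal C_{G,U,\tau}(r)$ is exactly the connected component of $r$ in $T_{G,U,r}$ after deleting every edge of weight $<\tau$ (a boundary edge would have $\lambda_G$ of its endpoints equal to its own weight, hence $\ge\tau$, contradiction) packages the paper's two contradiction-style arguments for (1) and (2) into a single structural statement from which both follow immediately. Items (3)--(6) land on essentially the same argument as the paper: the sub-path comparison for (3), the case split on where $w$ sits relative to $c_{G,U,r}(x)$ for (4)/(5), and laminarity for (6) --- the paper re-derives (6) from the cut-vertex picture rather than citing \Cref{fact:laminarFamily}, but the content is identical. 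For (7) your proof is a real simplification: the paper identifies each $f^{-1}_{G,U,r}(V(T_{G,U,r}[c_{G,U,r}(u)]))$ with the minimal mincut $M_{G,u,r}$, which invokes the rooted-minimal property (\Cref{def:minimalGomoryHu}), whereas you only need the plain Gomory--Hu cut identity (\Cref{def:USteinerGomoryHu}) applied edge by edge to the boundary edges $e_i$, together with the observation that $\partial_G f^{-1}(S) \subseteq \bigcup_i \partial_G f^{-1}(C_i)$ because the $f^{-1}(C_i)$ partition $V\setminus f^{-1}(S)$. This avoids any appeal to minimality.

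Item (8) is where the proposal has a genuine gap, and it is exactly the item that uses the rooted-minimal structure nontrivially. Your plan --- map each $x\in U$ to the $v'\in U'$ whose $M_{G,v',r}$ is inclusion-minimal among $\{M_{G,u,r}:u\in U'\}$ containing $x$ (and to $r$ if none), show the fibers are connected subtrees of $T_{G,U,r}$, contract them --- is a reasonable direction, but you have not closed the key step: you must verify that the contracted tree is specifically the rooted \emph{minimal} Gomory--Hu $U'$-Steiner tree $T_{G,U',r}$, not merely \emph{some} $U'$-Steiner Gomory--Hu tree. Concretely, you would need to check, for every $v'\in U'$, that on the contracted $v'$-to-$r$ path the minimum-weight edge closest to $v'$ still cuts off precisely $M_{G,v',r}$, respecting the tiebreak convention of \Cref{def:minimalGomoryHu} after contraction, and you would have to handle the case where several terminals of $U'$ share the same minimal mincut (your fibers are then not indexed by distinct sets). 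The paper sidesteps this global bookkeeping entirely by induction on $|U\setminus U'|$: reduce to $U'=U\setminus\{x\}$, contract the maximum-weight edge $(x,y)$ incident to $x$ (with root-side tiebreak among maxima), and do a short local case analysis on the two edges of any $v$-to-$r$ path incident to $x$, showing that neither $(z_1,x)$ nor $(x,z_2)$ can be the edge $(x,y)$ when it is chosen as the minimum-weight edge closest to $v$. You should either carry out the fiber-contraction verification in full --- in particular the claim that minimality and tiebreaks survive the contraction --- or adopt the paper's one-vertex-at-a-time induction, which keeps the argument local and finite.
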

\begin{proof}
Let us prove the items one-by-one:
\begin{enumerate}
    \item Let $y \in P$ be the last vertex in $U \setminus \mathcal{C}_{G, U, \tau}(r)$. Such a vertex must exist as otherwise the path is only supported on vertices in $\mathcal{C}_{G, U,\tau}(r)$ and does not cross the cut $\partial_{T_{G, U, r}} \mathcal{C}_{G, U, \tau}(r)$. But note that since $y$ is not inside $\mathcal{C}_{G, U,\tau}(r)$, we have that $\lambda_{T_{G, U, r}}(r, y) = \lambda_G(r, y) < \tau$. Thus, the segment of $P$ from $y$ to $r$ contains at least one edge of weight less than $\tau$. But this implies that the smallest edge weight on $P$ is smaller $\tau$, and thus $\lambda_{G}(r, x) = \lambda_{T_{G, U, r}}(r, x) < \tau$ which yields the desired contradiction as $x$ is in the same $\tau$-connected component as $r$.
    \item Let $y$ be the vertex closest on $P$ to $x$ that is inside of $\mathcal{C}_{G, U,\tau}(r)$. By the above item, the segment $P[y,r]$ does not contain a vertex outside $\mathcal{C}_{G, U, \tau}(r)$. Let $(x', y)$ be the edge preceding the segment $P[y, r]$ on $P$. Since $x' \not\in \mathcal{C}_{G, U,\tau}(r)$ by minimality of $y$, we have that $\lambda_{T_{G, U, r}}(r, x') = \lambda_{G}(r, x') < \tau$. But again from the previous item, we have that all edges but the edge $(x', y)$ on the segment $P[x', r] = (x', y) \circ P[y, r]$ have weight at least $\tau$, so the weight of $(x', y)$ has to be less than $\tau$, as desired.
    \item  Clearly, we have $\lambda_G(v,r) = \lambda_{T_{G, U, r}}(v,r) = \min_{e \in T_{G, U, r}[v,r]} w_{T_{G, U, r}}(e) \leq \min_{e \in T_{G, U, r}[w,r]} w_{T_{G, U, r}}(e) = \lambda_{T_{G, U, r}}(w,r) = \lambda_G(w,r)$.
    \item For every $v \in V \setminus \{r\}$, the cut vertex $c_{G, U, r}(v)$ is the closest endpoint among all minimum-weight edges on the $v$-to-$r$ path. This implies the claim immediately.
    \item We have $M_{G,x,r}=f^{-1}_{G,U,r}(T_{G,U,r}[c_{G,U,r}(x)])\subseteq f^{-1}_{G,U,r}(T_{G,U,r}[c_{G,U,r}(w)])=M_{G,w,r}$.
    \item Note that $M_{G,v,r}=f^{-1}_{G,U,r}(T_{G,U,r}[c_{G,U,r}(v)])$ for any $v\in U$. Thus, when $c_{G,U,r}(x)$ is a descendent of $c_{G,U,r}(w)$, we have $M_{G,x,r}\subseteq M_{G,w,r}$, and vice versa. On the other hand, if neither of $x,w$ is the descendent of the other, we have $T_{G,U,r}[c_{G,U,r}(x)]\cap T_{G,U,r}[c_{G,U,r}(w)]=\emptyset$ by the property of minimal mincuts, hence $M_{G,x,r}\cap M_{G,w,r}=\emptyset.$
    \item Let $A$ be all endpoints in $\partial_{T_{G, U, r}} \mathcal{C}_{G, U, \tau}(r)$ that are not in $\mathcal{C}_{G, U, \tau}(r)$. Note that for each $u \in A$ with associated endpoint $v$ in $\mathcal{C}_{G, U, \tau}(r)$ contributes $w_{T_{G, U, r}}(u,v)$ to the cut value $w_{T_{G, U, r}}(\partial_{T_{G, U, r}} \mathcal{C}_{G, U, \tau}(r))$. By Properties \ref{subfact:inCC} and \ref{subfact:notInCC} (and \Cref{def:USteinerGomoryHu}), we further have $w_{T_{G, U, r}}(u,v) = \lambda_G(u,v)$.

    On the other hand, we have that $u$ is in the subtree $T_{G, U , r}[c_{G, U,r}(u)]$ in $T_{G, U , r} \setminus \{u,v\}$ by Property \ref{subfact:notInCC} and $r$ is not as the $v$-to-$r$ path does not contain an edge from $\partial_{T_{G, U, r}} \mathcal{C}_{G, U, \tau}(r)$ by Property \ref{subfact:inCC}. This implies by \Cref{def:minimalGomoryHu} that $f^{-1}_{G, U , r}(V(T_{G, U , r}[c_{G, U,r}(u)])) = M_{G, u, r}$. 

    It remains to observe that the sets $V(T_{G, U , r}[c_{G, U,r}(u)])$ over all $u \in U$ and $\mathcal{C}_{G, U, \tau}(r)$ partition $U$ and thus the sets $f^{-1}_{G, U , r}(V(T_{G, U , r}[c_{G, U,r}(u)]))$ over all $u \in U$ and $f^{-1}_{G, U , r}(\mathcal{C}_{G, U, \tau}(r))$ partition $V$.

    We conclude that 
    \begin{align*}
        w_{T_{G, U, r}}(\partial_{T_{G, U, r}} \mathcal{C}_{G, U, \tau}(r)) &= \sum_{(u,v) \in \partial \mathcal{C}_{G, U, \tau}(r), u \in A} w_{T_{G, U, r}}(u,v) \\
        &= \sum_{u \in A} \lambda_{G}(u,v) \\
        &= \sum_{u \in A} w_G(\partial_G M_{G, u, r})\\
        &= \sum_{u \in A} w_G(\partial_G f^{-1}_{G, U , r}(V(T_{G, U , r}[c_{G, U,r}(u)])))\\
        & \geq \sum_{u \in A} w_G(E_G(f^{-1}_{G, U , r}(V(T_{G, U , r}[c_{G, U,r}(u)])), f^{-1}_{G, U , r}(\mathcal{C}_{G, U, \tau}(r))))\\
        &= w_G(\partial f^{-1}_{G, U , r}(\mathcal{C}_{G, U, \tau}(r))).
    \end{align*}
    
    \item By induction, it suffices to prove this claim for $U' = U \setminus \{x\}$ for some vertex $x \in U \setminus \{r\}$. Let $(x,y)$ be the edge incident to $x$ in $T_{G, U,r}$ of largest weight (if multiple such edges exist, take the one closest to $r$, if multiple such edges exist, pick an arbitrary such edge). We claim that contracting edge $(x,y)$ and identifying the super-vertex $\{x,y\}$ with $y$ yields $T_{G, U', r}$. 
    
    Observe first that for all $v \in U' \setminus \{r\}$, the $v$-to-$r$ path never has $e$ as the minimum weight edge on the path closest to $v$. To see this, it is clear that if $x$ is not on the $v$-to-$r$ path then the claim trivially holds. On the hand, since $x \not\in U'$, $x$ is strictly contained on any $v$-to-$r$ path that contains $x$, and thus there are two edges incident to $x$, $(z_1,x)$ and $(x,z_2)$ where we let $(z_1,x)$ be the edge that appears closer to $v$ than $(x, z_2)$. If $(z_1, x)$ induces the minimal $(v,r)$-mincut, then its weight is at most equal to the weight of $(x, z_2)$. But since $(z_1, x)$ is closer to $v$, this implies that $z_1 \neq y$. On the other hand, if $(x, z_2)$ induces the minimal $(v,r)$-mincut, then the weight of $(x,z_2)$ is strictly smaller than the weight of $(z_1, x)$ and so $z_2 \neq y$. In either case, the minimal $(v,r)$-mincut is not induced by $(x,y)$. But this implies that even after contracting $(x,y)$, all such minimal $(v,r)$-mincuts are preserved. Since the obtained tree after contraction is over the vertex set $U'$, and since minimal $(v,r)$-mincuts are unique, it follows that the resulting tree is $T_{G, U', r}$, as desired.
\end{enumerate}
\end{proof}

\begin{theorem}[Tree-Path Decomposition Lemma, see \cite{sleator1981data}] \label{thm:pathDecomp}
Given an undirected $n$-vertex tree $T=(V,E)$ and a vertex $r \in V$. There is a collection of vertex-disjoint paths $\mathcal{P} = \{P_1, P_2, \ldots, P_k\}$ such that 
\begin{itemize}
    \item $V(P_1), V(P_2), \ldots, V(P_k)$ partitions the vertex set $V$, and
    \item each path $P_i$ is a sub-path of a $v$-to-$r$ path $T[v,r]$ for some $v \in V$, and
    \item for every $v \in V$, the $v$-to-$r$ path $T[v,r]$ intersects with at most $C_{pathDecomp} \cdot \log n$ paths from $\mathcal{P}$ for some universal constant $C_{pathDecomp} > 0$.
\end{itemize}
\end{theorem}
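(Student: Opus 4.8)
# Proof Proposal for the Tree-Path Decomposition Lemma (\Cref{thm:pathDecomp})

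The plan is to use the classical \emph{heavy path decomposition} of Sleator and Tarjan. First I would root the tree $T$ at $r$ and, for every non-root vertex $v$, define $\textsc{size}(v)$ to be the number of vertices in the subtree $T[v]$ hanging below $v$. For each internal vertex $u$, call a child $v$ of $u$ \emph{heavy} if $2\cdot\textsc{size}(v) > \textsc{size}(u)$, and \emph{light} otherwise; since the subtrees of distinct children are vertex-disjoint, each vertex has at most one heavy child, so the edge from $u$ to its heavy child (if it exists) is the unique \emph{heavy edge} leaving $u$ downward. The heavy edges thus form a collection of vertex-disjoint downward paths; I would take $\mathcal{P} = \{P_1,\dots,P_k\}$ to be the maximal such paths, where each $P_i$ is a contiguous chain of heavy edges, together with singleton paths for any vertex not covered (a vertex $v$ whose parent edge is light and which has no heavy child forms its own path). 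By construction the vertex sets $V(P_1),\dots,V(P_k)$ partition $V$, and each $P_i$ is a downward path, hence a sub-path of $T[v,r]$ for $v$ the bottom endpoint of $P_i$; this gives the first two bullets immediately.

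For the third bullet, fix $v\in V$ and walk along the path $T[v,r]$ from $v$ up to $r$. Each time the path leaves one path of $\mathcal{P}$ and enters another, it must traverse a light edge (two consecutive heavy edges on a root-to-leaf path always lie in the same $P_i$ by maximality). Hence the number of distinct paths of $\mathcal{P}$ that $T[v,r]$ meets is at most one more than the number of light edges on $T[v,r]$. The key quantitative step is the standard observation: if $(u,w)$ is a light edge with $w$ the child, then $\textsc{size}(u) \geq 2\,\textsc{size}(w)$, so passing a light edge at least doubles the subtree size. Since subtree sizes along $T[v,r]$ range between $1$ and $n$, there can be at most $\log_2 n$ light edges on $T[v,r]$. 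Therefore $T[v,r]$ intersects at most $\log_2 n + 1 \leq C_{pathDecomp}\cdot\log n$ paths from $\mathcal{P}$ for a suitable universal constant $C_{pathDecomp}$.

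There is no serious obstacle here — this is a textbook argument — but the one point that deserves care is making the bookkeeping at the boundaries of paths precise: I would want to verify cleanly that when $T[v,r]$ transitions from path $P_i$ to path $P_j$ it genuinely crosses a light edge, which follows because the top endpoint of any $P_i$ is either $r$ itself or a vertex whose parent edge is light (otherwise $P_i$ could be extended upward, contradicting maximality). Handling the singleton paths and the exact off-by-one in the count is the only mildly fiddly part, and it is absorbed into the constant $C_{pathDecomp}$. Since the statement only asserts existence of such a $\mathcal{P}$, no running-time analysis is required, though I note in passing that the decomposition is computable in $O(n)$ time by a single DFS.
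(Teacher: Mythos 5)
Your proposal is correct and is precisely the heavy-path decomposition argument from Sleator and Tarjan, which is exactly what the paper cites for this lemma without reproducing a proof. The key steps — each vertex has at most one heavy child, a light edge at least doubles the subtree size so any vertex-to-root path has at most $\log_2 n$ light edges, and each transition between maximal heavy chains crosses a light edge — are all present and correct, so nothing is missing.
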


\subsection{Detecting the largest \texorpdfstring{$\tau$}{tau}-Connected Component}\label{subsec:detection}

\begin{lemma}\label{lem:detection}
Given graph $G=(V,E,w)$, set $U \subseteq V$, and connectivity threshold $\tau > 0$. There is an $m^{1+o(1)}$ time algorithm $\textsc{DetectCC}(G, U, \tau)$ that returns set $C \subseteq V$ such that:
\begin{itemize}
    \item if the largest $\tau$-connected component w.r.t. $U$ has size at least $\frac{3}{4}|U|$, then $C$ is exactly the largest $\tau$-connected component.
    \item otherwise, if there is no $\tau$-connected component w.r.t. $U$ that has size at least $\frac{3}{4}|U|$, then $C = \emptyset$.
\end{itemize}
\end{lemma}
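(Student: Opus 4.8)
\textbf{Proof plan for \Cref{lem:detection}.}

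The plan is to follow the two-step iterative scheme sketched in the overview: maintain an active set $A \subseteq U$, repeatedly shrink it by (a) a ``safe'' step that peels off $\tau$-connected components other than the candidate large one, and (b) an expander-decomposition halving step, until $|A| = n^{o(1)}$. The key invariant is that if there exists a $\tau$-connected component $C'$ with $|C' \cap U| \geq \tfrac34|U|$, then $A \cap C' \neq \emptyset$ is maintained throughout. Once $A$ is small, we run $\textsc{PartialSSMC}$ (or the full SSMC of \Cref{thm:finalSSMC}) from each $w \in A$; for the $w$ landing in $C'$ this recovers $\lambda_G(w,v)$ for all $v$, so we can form the candidate component $C_w = \{v : \lambda_G(w,v) \geq \tau\}$ explicitly and test whether $|C_w \cap U| \geq \tfrac34|U|$. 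If some $C_w$ passes the test we return it (by uniqueness of the large component, $C_w$ is exactly it); if none passes, we return $\emptyset$. Correctness of the ``no large component'' case is immediate since then no $w$ can produce a passing $C_w$.

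The main content is the iteration and its invariant, so I would set it up as follows. Fix (for analysis only) an arbitrary $r \in C' \cap U$ and work with the rooted minimal Gomory-Hu Steiner tree $T_{G,A,r}$ of \Cref{def:minimalGomoryHu}. In the safe step I apply the Hit-And-Miss construction (\Cref{thm:hitAndMiss}) with parameters $(L,2)$ for a chosen $L = n^{o(1)}$ to build $n^{o(1)}$ subsets of $A$, run $\textsc{ComputeIsolatingCuts}$ (\Cref{lma:isoCuts}) on each, and by \Cref{fact:isolatingCutsWorkIfGHSubtreeIsHitOnce} recover $M_{G,v,r}$ for every $v \in A$ with $|T_{G,A,r}[c_{G,A,r}(v)]| \leq L$; I then delete from $A$ all such cuts of value $< \tau$ and size $< \tfrac34|U|$. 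Since $\lambda_G$-value-$<\tau$ cuts cannot separate vertices inside $C'$ (and such a cut of size $<\tfrac34|U|$ has the side avoiding $C'$), this step removes no vertex of $A \cap C'$ — this uses \Cref{fact:xtorootPathinGHtree}, items~\ref{subfact:inCC} and~\ref{subfact:notInCC}. If the safe step already removed a $\tfrac1{\polylog n}$-fraction of $A$, proceed to the next iteration; otherwise run the halving step: compute a $(\psi\tau)$-expander decomposition $\mathcal{X}$ w.r.t. $A$ via \Cref{thm:expanderDecomp}, and from each $X \in \mathcal{X}$ remove $\lceil \tfrac12|A \cap X|\rceil$ vertices.

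For the halving step I bound the loss of $A \cap C'$ by $\tfrac12|A \cap C'| + \textsc{Crossing}_A(C',\mathcal{X}) + |\{X \in \mathcal{X} : X \cap C' \cap A \neq \emptyset\}|$, where the first term is the nominal halving, the second accounts for $C'$ being split across clusters, and the third for the rounding-up. By \Cref{fact:expandersAndCCs} item~\ref{prop:overlapBound}, $\textsc{Crossing}_A(C',\mathcal{X}) \leq w(\partial_G C')/(\tau\psi)$, and the last term is bounded using \Cref{fact:expandersAndCCs}(4); both are made negligible ($\leq \tfrac1{20\log n}|A\cap C'|$) by choosing $\psi$ small enough relative to $L$, provided $w(\partial_G C')$ is not too large. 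The last piece — bounding $w(\partial_G C')$ — is the step I expect to be the main obstacle, since $C'$ need not be cleanly separated from other $\tau$-components. I would argue via \Cref{fact:xtorootPathinGHtree} item~\ref{subfact:cleanCutAtCC} that $w(\partial_G C') \leq w_{T_{G,A,r}}(\partial_{T_{G,A,r}}(C'\cap A)) \leq \tau \cdot |\partial_{T_{G,A,r}}(C'\cap A)|$ (all crossing tree-edges have weight $< \tau$ by item~\ref{subfact:notInCC}), and then observe that edges leaving $C'\cap A$ in the tree correspond to disjoint subtrees; since the safe step failed, only $n^{o(1)}$ of these have size $\leq L$, and the rest are disjoint of size $> L$ hence number at most $|A|/L$. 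This gives $w(\partial_G C') = O(\tau \cdot |A|/L)$, which with $\psi \ll 1/L$ (both $n^{o(1)}$) makes the halving loss on $A\cap C'$ at most $(\tfrac12 + o(1))|A\cap C'|$; since the safe-step-failed case also guarantees $|A\cap C'| \geq (1-o(1))\cdot\tfrac34|U| \geq \tfrac12|A|$ initially, this keeps $A\cap C'$ nonempty for the $O(\log n)$ halving iterations needed to reach $|A| = n^{o(1)}$. Finally, each iteration costs $m^{1+o(1)}$ (isolating cuts, expander decomposition) and there are $O(\polylog n)$ iterations; the final SSMC calls are $n^{o(1)}$ many, each $m^{1+o(1)}$; so the total is $m^{1+o(1)}$.
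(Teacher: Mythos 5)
Your proposal is correct and follows essentially the same route as the paper's proof: the same two-phase iteration (hit-and-miss plus isolating cuts to peel off small, low-weight, non-$C'$-crossing cuts, followed by an expander-decomposition halving step gated on whether the peeling step removed enough), the same invariant $A\cap C' \neq \emptyset$ bounded via $w(\partial_G C') \lesssim \tau |A|/L$ using \Cref{fact:xtorootPathinGHtree}(\ref{subfact:cleanCutAtCC}) and disjointness of the subtrees hanging off $C'\cap A$ in $T_{G,A,r}$, and the same final recovery step of running SSMC from each of the $n^{o(1)}$ surviving vertices. One small caution: you should use the \emph{full} SSMC of \Cref{thm:finalSSMC} in the final phase, not $\textsc{PartialSSMC}$ — the partial version may overestimate $\lambda_G(w,v)$ for $v$ with $\lambda_G(w,v) > 1.1\lambda(U)$, which could spuriously place $v$ in $C_w$ and break the containment $C_w = \mathcal{C}_{G,U,\tau}(w)$.
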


\paragraph{Algorithm.} We present our main algorithm \textsc{DetectCC} in this section below.
The algorithm first finds a set $A\subseteq U$ with size $n^{o(1)}$ that contains at least one vertex in the largest $\tau$-connected component w.r.t. $U$ (\Cref{lne:whileLoopPivotDetection}-\Cref{lne:whileLoopPivotDetectionEnd}). 
Applying SSMC to each vertex in $A$ then recovers $C$ when $|C|\geq\frac34|U|$.
To find such a subset $A$, we start with the active set $A=U$ and iteratively remove at least $\frac{|A|}{n^{o(1)}}$ vertices from $A$ until $A$ is small enough.
Concretely, in each iteration, we first detect some cuts that do not cross $C$ via isolating cuts (foreach-loop). 
If the total size of such cuts is already large enough, we remove such cuts from $A$ and move to the next iteration.
Otherwise, we halve vertices in $A$ via an expander decomposition on $A$. 
We will prove that whenever we execute the halving step, we also approximately halve $A\cap C$, which guarantees that $A\cap C\neq \emptyset$ after the while-loop.

\begin{algorithm}
$A \gets U$; $\psi \gets 1/(\gamma_{expDecomp} \cdot 100 \log_2(n))$; $L \gets \frac{100 \log_2(n)}{\psi}$. \\
\While(\label{lne:whileLoopPivotDetection}){$|A| > 1/\psi$}{
    $\mathcal{S} \gets \textsc{RemoveLeaves}(G, A, L)$.\\
    $A' \gets \emptyset$.\\    \ForEach(\label{lne:foreachSmallCut}){$S \in \mathcal{S}$ where $w(\partial S) < \tau$ and $|S \cap U| < \frac{3}{4}|U|$}{
        $A' \gets A' \cup (A \cap S)$.
    }
    \tcc{If few leaves were identified, it is safe to do a halving step}
    \If(\label{lne:ifReductionWasSmall}){$|A'| < \frac{\psi}{100 \log_2(n)} \cdot |A|$}{
        $\mathcal{X} \gets \textsc{ComputeExpanderDecomp}(G, \psi \cdot \tau, A)$.\\
    \ForEach{$X \in \mathcal{X}$}{
        Remove $\lceil |A \cap X|/2 \rceil$ vertices in $A \cap X$ from the set $A$.
        }
    }\Else{
        $A \gets A \setminus A'$\label{lne:whileLoopPivotDetectionEnd}
    }
}
\tcc{Using the SSMC data structure \Cref{thm:finalSSMC}, we can detect $\mathcal{C}_{G, U, \tau}(r)$ for every $r$.}
\If{$\exists r \in A, |\mathcal{C}_{G, U, \tau}(r)| \geq \frac{3}{4}|U|$}{
   \Return $\mathcal{C}_{G, U, \tau}(r)$.    
}
\Return $\emptyset$.
\caption{$\textsc{DetectCC}(G, U, \tau)$}
\label{alg:detectCC}
\end{algorithm}

\begin{algorithm}
$\mathcal{H} \gets \textsc{ConstructHitAndMissFamily}(A, L+1, 2)$.\tcc{See \Cref{thm:hitAndMiss}.}
$\mathcal{S} \gets \emptyset$.\\
\ForEach{$h \in \mathcal{H}$}{
    $A_h \gets \{ a \in A \;|\; h(a) = 1\}$.\\
    $\{S_v\}_{v \in A_h} \gets \textsc{ComputeIsolatingCuts}(\{\{v\} \;|\; v \in A_h\})$. \tcc{See \Cref{lma:isoCuts}.}
    $\mathcal{S} \gets \mathcal{S} \cup \{S_v\}_{v \in A_h}$.
    
}
\Return $\mathcal{S}$
\caption{$\textsc{RemoveLeaves}(G, A, L)$}
\label{alg:removeLeavesCCDetection}
\end{algorithm}

\paragraph{Correctness Analysis.} We can assume henceforth that the largest $\tau$-connected component $C$ in $G$ w.r.t. $U$ indeed has size at least $\frac{3}{4}|U|$ as otherwise, the guarantees on the output of the algorithm are vacuously true from the while-condition. We start by showing that when the else-statement is executed, the number of terminals in $C$ remains untouched.

\begin{claim}\label{clm:elseStatement}
If the largest $\tau$-connected component $C$ in $G$ w.r.t. $U$ has size $|C| \geq \frac{3}{4}|U|$, then whenever the algorithm executes the else-statement, it cannot remove a vertex from $A \cap C$.
\end{claim}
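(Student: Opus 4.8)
The plan is to show that whenever the else-branch is taken, the set $A'$ that gets removed from $A$ satisfies $A' \cap C = \emptyset$; since the else-branch modifies $A$ only by $A \gets A \setminus A'$, this immediately gives the claim. By construction, $A'$ is the union of the sets $A \cap S$ over all cuts $S \in \mathcal{S}$ produced by $\textsc{RemoveLeaves}$ (\Cref{alg:removeLeavesCCDetection}) that satisfy $w(\partial_G S) < \tau$ and $|S \cap U| < \frac{3}{4}|U|$. So it suffices to prove that every such $S$ is disjoint from $C$.

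The key step is the observation that a cut of value strictly below $\tau$ cannot split the $\tau$-connected component $C$. Concretely, I would argue: if both $S \cap C \neq \emptyset$ and $C \setminus S \neq \emptyset$, pick $u \in S \cap C$ and $v \in C \setminus S$; then $S$ is an $(u,v)$-cut of value $w(\partial_G S) < \tau$, whence $\lambda_G(u,v) < \tau$. But $u,v$ lie in the same $\tau$-connected component $C$ of $G$ with respect to $U$, so by definition $\lambda_G(u,v) \geq \tau$, a contradiction. Hence every such $S$ satisfies either $C \subseteq S$ or $C \cap S = \emptyset$.

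It then remains to rule out the case $C \subseteq S$. Here I would invoke the size constraint in the foreach-loop of \Cref{alg:detectCC}: since $C$ is a $\tau$-connected component with respect to $U$ we have $C \subseteq U$, so $C \subseteq S$ would give $|S \cap U| \geq |C| \geq \frac{3}{4}|U|$, contradicting $|S \cap U| < \frac{3}{4}|U|$. Therefore $C \cap S = \emptyset$ for every $S$ contributing to $A'$, so $A' \cap C = \emptyset$ and no vertex of $A \cap C$ is removed when the else-statement runs.

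I do not expect a real obstacle here; the statement is essentially a direct consequence of the definition of $\tau$-connected components together with the two explicit filtering conditions on the cuts. The only points needing a little care are (i) that the sets $S \in \mathcal{S}$ really are cuts of $G$, so that $\lambda_G(u,v) \le w(\partial_G S)$ applies — this holds because they are vertex-minimal mincuts returned by $\textsc{ComputeIsolatingCuts}$ via \Cref{lma:isoCuts} — and (ii) that $|C \cap U| = |C|$, which holds because $\tau$-connected components with respect to $U$ are by definition subsets of $U$.
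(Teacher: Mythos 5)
Your proof is correct and follows essentially the same route as the paper: both observe that a cut $S$ with $w(\partial_G S)<\tau$ that intersects $C$ cannot properly split $C$ (else it would certify $\lambda_G(u,v)<\tau$ for $u,v\in C$), and the size filter $|S\cap U|<\frac{3}{4}|U|$ rules out $C\subseteq S$. Your exposition is merely a bit more explicit than the paper's terse two-sentence argument, but the logical content is identical.
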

\begin{proof}
Note first that any while-loop iteration that does not execute the if-statement (but only the else-statement) cannot decrease the number of vertices in $A \cap C$ since it only removes vertices in the foreach-loop starting in \Cref{lne:foreachSmallCut} which in turn only removes vertices in cuts $S$ of value less than $\tau$ that contain less than $\frac{3}{4}|U|$ vertices in $U$. Thus, if such a cut $S$ would intersect $C$, it cannot contain all of $C$ since $|C| \geq \frac{3}{4}|U|$ which in turn implies that the cut value has to be at least $\tau$ by the definition of $\tau$-connected components.
\end{proof}

It remains to prove that in case the if-statement is executed, we stay very close to halving the number of vertices in $C \cap A$. 

\begin{claim}\label{clm:cutOfLeaves}
Consider any execution of \Cref{alg:removeLeavesCCDetection} with parameters $G, A, L$. Let $\mathcal{S}$ be the set the algorithm returns. Then, for any $r \in A$, we have for every vertex $a \in A$ where $|M_{G, a, r} \cap A| \leq L$, the cut $M_{G, a, r}$ in $\mathcal{S}$.
\end{claim}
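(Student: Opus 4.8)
\textbf{Proof plan for Claim~\ref{clm:cutOfLeaves}.}

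The plan is to show that for each such vertex $a \in A$ with $|M_{G,a,r} \cap A| \leq L$, one of the hit-and-miss hash functions $h \in \mathcal{H}$ produced by $\textsc{ConstructHitAndMissFamily}(A, L+1, 2)$ isolates $a$ from the rest of $M_{G,a,r} \cap A$ while keeping $a$ and $r$ both ``hit'', so that the corresponding call to $\textsc{ComputeIsolatingCuts}$ recovers $M_{G,a,r}$. First I would handle the trivial case $r = a$ separately (then $M_{G,a,r}$ is not well-defined / we may assume $a \neq r$ throughout, as the claim is only interesting for $a \neq r$). For $a \neq r$, set $B = (M_{G,a,r} \cap A) \setminus \{a\}$; note $r \notin M_{G,a,r}$ since $M_{G,a,r}$ is a vertex-minimal $(a,r)$-mincut, so $r \notin B$ and $a \notin B$. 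By assumption $|M_{G,a,r} \cap A| \leq L$, hence $|B| \leq L - 1 \leq L$.

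Next I would invoke the hit-and-miss guarantee of \Cref{thm:hitAndMiss}: we apply it with the ``hit'' set being $\{a, r\}$ (which we want mapped to $1$, playing the role of $B$ in the notation of \Cref{thm:hitAndMiss}, with $b = 2$) and the ``miss'' set being $B$ (which we want mapped to $0$, playing the role of $A$ there, with bound $a = L+1$). Since $|B| \leq L \leq L+1$, $|\{a,r\}| = 2 = b$, and $\{a,r\} \cap B = \emptyset$, there exists $h \in \mathcal{H}$ with $h(x) = 0$ for all $x \in B$ and $h(a) = h(r) = 1$. Consequently, in the iteration of the foreach-loop of \Cref{alg:removeLeavesCCDetection} for this $h$, the set $A_h = \{ v \in A : h(v) = 1\}$ contains $a$ and $r$ but contains no vertex of $B$; in particular $A_h \cap M_{G,a,r} = \{a\}$ (it contains $a$, and every other vertex of $M_{G,a,r}$ lying in $A$ is in $B$ and hence excluded), and $r \in A_h \setminus \{a\}$.

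Now I would apply \Cref{fact:isolatingCutsWorkIfGHSubtreeIsHitOnce} with the set $A_h$: since $A_h \cap M_{G,a,r} = \{a\}$ and $r \in A_h \setminus \{a\}$, the call $\textsc{ComputeIsolatingCuts}(\{\{v\} : v \in A_h\})$ — which is exactly a call to the isolating-cuts algorithm on the singleton family over $A_h$ — returns among its output cuts the minimal $(a,r)$-mincut $M_{G,a,r}$. (One should double-check that the singleton terminal family $\{\{v\}\}_{v \in A_h}$ is the relevant input form; the conclusion of \Cref{fact:isolatingCutsWorkIfGHSubtreeIsHitOnce} is phrased for $\textsc{ComputeIsolatingCuts}$ on $A_h$, and a vertex-minimal $(a, A_h \setminus \{a\})$-mincut coincides with $M_{G,a,r}$ here by the submodularity argument in that fact's proof.) Since the algorithm sets $\mathcal{S} \gets \mathcal{S} \cup \{S_v\}_{v \in A_h}$ for every $h$, the cut $M_{G,a,r}$ is placed into $\mathcal{S}$, which is what we wanted.

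The main obstacle I anticipate is purely a matter of bookkeeping rather than mathematical depth: aligning the roles of the ``$a$-bounded'' set and the ``$b$-bounded'' set between the statement of \Cref{thm:hitAndMiss} (where $|A| \leq a$ gets value $0$ and $|B| \leq b$ gets value $1$) and the parameters $(L+1, 2)$ passed in \Cref{alg:removeLeavesCCDetection}, and making sure the size bound $|M_{G,a,r}\cap A| \le L$ really does give $|B| \le L$ after removing $a$ — so the hit-and-miss family with first parameter $L+1$ comfortably covers it. A secondary subtlety is confirming that $r$ being outside $M_{G,a,r}$ (so that the ``miss'' set $B$ genuinely excludes $r$, and $r$ can be forced into the ``hit'' set alongside $a$) — this follows directly from vertex-minimality of $M_{G,a,r}$ as a $(a,r)$-mincut and the definition in the Preliminaries. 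Everything else is a direct composition of \Cref{thm:hitAndMiss} and \Cref{fact:isolatingCutsWorkIfGHSubtreeIsHitOnce}.
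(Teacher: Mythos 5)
Your proof is correct and follows essentially the same route as the paper's: bound the size of $(M_{G,a,r}\cap A)\setminus\{a\}$ by $L$, invoke \Cref{thm:hitAndMiss} with miss-set $(M_{G,a,r}\cap A)\setminus\{a\}$ and hit-set $\{a,r\}$ to find an $h$ with $A_h \cap M_{G,a,r}=\{a\}$ and $r\in A_h$, then apply \Cref{fact:isolatingCutsWorkIfGHSubtreeIsHitOnce}. The only cosmetic difference is that the paper states the size bound as $|(M_{G,a,r}\cap A)\cup\{r\}|\le L+1$ rather than splitting out $r\notin M_{G,a,r}$ explicitly, and does not separately dispatch the $a=r$ case; the substance is identical.
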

\begin{proof}
By assumption, $|(M_{G, a, r} \cap A) \cup \{r\}| \leq L +1$. Thus, by \Cref{thm:hitAndMiss}, there is a function $h \in \mathcal{H}$ such that $h(r) = h(a) = 1$ and $h(x) = 0$ for all vertices $x \in M_{G, a, r} \cap A \setminus \{a, r\}$. The claim then follows from  \Cref{fact:isolatingCutsWorkIfGHSubtreeIsHitOnce}.
\end{proof}

\begin{claim}\label{clm:ifConditionHoldsImpliesSmallCut}
Consider any time in the algorithm at which the if-condition in \Cref{lne:ifReductionWasSmall} holds. Then, if $C \cap A \neq \emptyset$, we have $w(\partial_{T_{G, A, r}} (C \cap A)) < \tau \cdot \frac{2\psi}{100 \log_2(n)} \cdot |A|$.
\end{claim}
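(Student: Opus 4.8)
The plan is to fix any $r \in C \cap A$ (so that the minimal rooted Gomory-Hu $A$-Steiner tree $T_{G,A,r}$, whose vertex set is exactly $A$, is well-defined) and work entirely inside $T_{G,A,r}$. The first step is the identity $C \cap A = \mathcal{C}_{G,A,\tau}(r)$: since $A \subseteq U$ and $C = \mathcal{C}_{G,U,\tau}(r) = \mathcal{C}_{G,\tau}(r) \cap U$, we get $\mathcal{C}_{G,A,\tau}(r) = \mathcal{C}_{G,\tau}(r)\cap A = (\mathcal{C}_{G,\tau}(r)\cap U)\cap A = C\cap A$, hence $\partial := \partial_{T_{G,A,r}}(C\cap A) = \partial_{T_{G,A,r}}\mathcal{C}_{G,A,\tau}(r)$. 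By item~\ref{subfact:inCC} of \Cref{fact:xtorootPathinGHtree}, $C\cap A$ induces a connected subtree of $T_{G,A,r}$ containing $r$, so deleting it leaves a forest whose components are exactly the subtrees $T_e := T_{G,A,r}[v_e]$ hanging below the child endpoint $v_e$ of each edge $e \in \partial$; these subtrees are pairwise disjoint, disjoint from $C\cap A$, and together with $C\cap A$ partition $A$. Moreover, by item~\ref{subfact:notInCC} of \Cref{fact:xtorootPathinGHtree}, every $e \in \partial$ has $w_{T_{G,A,r}}(e) < \tau$, so it suffices to prove $|\partial| < \frac{2\psi}{100\log_2 n}\,|A|$, which then gives $w(\partial_{T_{G,A,r}}(C\cap A)) = \sum_{e\in\partial} w_{T_{G,A,r}}(e) < \tau\,|\partial| < \tau\cdot\frac{2\psi}{100\log_2 n}\,|A|$.

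To bound $|\partial|$ I would split it according to subtree size: $\partial_{large} = \{e \in \partial : |V(T_e)| > L\}$ and $\partial_{small} = \{e \in \partial : |V(T_e)| \le L\}$. Since the $V(T_e)$ are disjoint subsets of $A$, we get $|\partial_{large}|\cdot L < \sum_{e\in\partial_{large}}|V(T_e)| \le |A|$, hence $|\partial_{large}| < |A|/L = \frac{\psi}{100\log_2 n}\,|A|$ by the choice $L = \frac{100\log_2 n}{\psi}$. For $\partial_{small}$, the key step is that each such edge's subtree is caught by $\textsc{RemoveLeaves}$ and therefore counted in $A'$: for $e = (u, v_e)\in\partial$, all edges on the $u$-to-$r$ path have weight $\ge \tau$ (item~\ref{subfact:inCC}) while $e$ has weight $< \tau$, so $e$ is the unique minimum-weight edge on the $v_e$-to-$r$ path, whence $c_{G,A,r}(v_e) = v_e$ and $M_{G,v_e,r}\cap A = V(T_e)$. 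When $e \in \partial_{small}$ this gives $|M_{G,v_e,r}\cap A| \le L$, so \Cref{clm:cutOfLeaves} yields $M_{G,v_e,r} \in \mathcal{S}$. Furthermore $w(\partial_G M_{G,v_e,r}) = \lambda_G(v_e,r) = w_{T_{G,A,r}}(e) < \tau$, and a cut of value $<\tau$ not containing $r\in C$ must be disjoint from the $\tau$-connected component $C$, so $|M_{G,v_e,r}\cap U| \le |U| - |C| \le \tfrac14|U| < \tfrac34|U|$. Thus $M_{G,v_e,r}$ is one of the cuts contributing to $A'$ in the foreach-loop at \Cref{lne:foreachSmallCut}, so $A' \supseteq A\cap M_{G,v_e,r} = V(T_e)$.

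Combining, $A' \supseteq \bigcup_{e\in\partial_{small}} V(T_e)$, a disjoint union of nonempty sets, so $|\partial_{small}| \le |A'|$; the hypothesis that the if-condition in \Cref{lne:ifReductionWasSmall} holds gives $|A'| < \frac{\psi}{100\log_2 n}\,|A|$, hence $|\partial| = |\partial_{small}| + |\partial_{large}| < \frac{2\psi}{100\log_2 n}\,|A|$, finishing the argument. I expect the only delicate points to be the two structural book-keeping claims: that $C\cap A = \mathcal{C}_{G,A,\tau}(r)$ with the boundary subtrees partitioning $A\setminus(C\cap A)$, and the identification $c_{G,A,r}(v_e)=v_e$, $M_{G,v_e,r}\cap A = V(T_e)$ that makes \Cref{clm:cutOfLeaves} applicable with parameter $L$; both are routine consequences of \Cref{fact:xtorootPathinGHtree} and the definition of minimal rooted Gomory-Hu trees, and no serious computation is involved beyond the size counting above.
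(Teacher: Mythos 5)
Your proof is correct and follows essentially the same route as the paper's: fix $r \in C\cap A$, split the boundary edges of $C\cap A$ in $T_{G,A,r}$ according to whether the hanging subtree has size $\le L$ or $> L$, bound the large ones by $|A|/L$ via disjointness, and catch the small ones via \Cref{clm:cutOfLeaves} together with the foreach-loop, then use the if-condition $|A'|<\frac{\psi}{100\log_2 n}|A|$. The only cosmetic differences are that you explicitly argue $c_{G,A,r}(v_e)=v_e$ and $M_{G,v_e,r}\cap A = V(T_e)$ (which the paper leaves implicit when it writes $b\in A_{\le L}$), and that you charge each small edge to the whole subtree $V(T_e)\subseteq A'$ rather than to the single vertex $v_e$; both variants give the same bound.
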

\begin{proof}
Let us fix an arbitrary root $r \in C \cap A$. We partition the vertices in $A \setminus C$ into  sets $A_{\leq L} = \{ v \in A \setminus C \;|\; |V(T_{G, A, r}[c_{G, A, r}(v)])| \leq L \}$ and $A_{> L} = \{ v \in A \setminus C \;|\; |V(T_{G, A, r}[c_{G, A, r}(v)])| > L \}$.

We next show that 
\begin{equation}\label{eq:boundCutCCDetect}
|\partial_{T_{G, A, r}} (C \cap A)| \leq |A|/L + |A_{\leq L}|.     
\end{equation}
Note first that by \Cref{fact:xtorootPathinGHtree} after removing edges $\partial_{T_{G, A, r}} (C \cap A)$ from $T_{G, A, r}$, the vertices in $C \cap A$ are still in one component. Thus, we have that each endpoint $b \in A \setminus C$ of an edge $(b, b') \in \partial_{T_{G, A, r}} (C \cap A)$ is incident to exactly one such edge and thus the subtrees $T_{G, A, r}[b]$ are (vertex-)disjoint.  Further, if $|V(T_{G, A, r}[b])| \leq L$, then $b \in A_{\leq L}$ since again by \Cref{fact:xtorootPathinGHtree} all edges on the $b'$-to-$r$ path have weight at least $\tau$ and the edge $(b, b')$ has weight less than $\tau$ since $\lambda_G(r, b) < \tau$ from $b \in A \setminus C$. It follows that for each edge in $\partial_{T_{G, A, r}} (C \cap A)$, there are either at least $L$ distinct vertices in the subtree, or one distinct vertex from $A_{\leq L}$. This yields Equation \ref{eq:boundCutCCDetect}.

Next, we upper bound $A_{\leq L}$. From \Cref{clm:cutOfLeaves}, we have that for each vertex $v \in A_{\leq L}$, the minimal $(v,r)$-mincut $S_v$ is contained in $\mathcal{S}$. Further, since $v \not\in C$, we have that $w(\partial_G S_v) < \tau$ and $|S_v \cap U| \leq |A| - |C| \leq |U| - |C| < \frac{3}{4}|U|$. Thus, for each vertex $v \in A_{\leq L}$, we have that $v$ is added to $A'$ or, put differently, $A_{\leq L} \subseteq A'$. By the if-condition, we thus have $|A_{\leq L}| \leq |A'| < \frac{\psi}{100 \log_2(n)} |A|$. 

Combining the two inequalities yields
\[
|\partial_{T_{G, A, r}} (C \cap A)| < \frac{\psi}{100\log_2(n)} |A| + \frac{\psi}{100 \log_2(n)} |A| = \frac{2\psi}{100 \log_2(n)} |A|.
\]
The claim finally follows by observing that by \Cref{fact:xtorootPathinGHtree}, every edge in $\partial_{T_{G, A, r}} (C \cap A)$ has weight less than $\tau$. 
\end{proof}

\begin{claim}\label{clm:ifRemovesHalfTheVertices}
Consider any execution of the if-statement starting in \Cref{lne:ifReductionWasSmall}. Let $C'$ be the largest $\tau$-connected component $C'$ in $G$ w.r.t. $A$. Then, if initially $|C'| \geq \frac{1}{2}|A|$, the if-statement removes at most $(\frac{1}{2} + \frac{1}{10 \log n})|C'| + 1$ vertices from $C'$.
\end{claim}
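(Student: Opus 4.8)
<br>

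The plan is to bound the number of vertices removed from $C' \cap A$ by the expander-decomposition halving step, exactly as sketched in the overview. Recall that in the if-branch, the algorithm computes $\mathcal{X} \gets \textsc{ComputeExpanderDecomp}(G, \psi\cdot\tau, A)$ and then, for each cluster $X \in \mathcal{X}$, removes $\lceil |A \cap X|/2\rceil$ vertices of $A \cap X$ from $A$. So the total number of vertices removed from $C' \cap A$ is at most
\[
\sum_{X \in \mathcal{X}} \min\bigl\{ |A \cap X|, \lceil |A \cap X|/2 \rceil + |(A \cap C' \cap X) \text{ not removable}|\bigr\},
\]
but more simply: within a cluster $X$, removing $\lceil |A \cap X|/2\rceil$ arbitrary vertices removes at most $\frac12 |A \cap C' \cap X| + \textsc{Crossing}_A(C', X) + 1$ vertices of $C' \cap A \cap X$. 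The reasoning is that if we remove the "wrong half", we can still only remove all but $\max\{0, |A\cap C'\cap X| - (|A\cap X| - \lceil|A\cap X|/2\rceil)\}$ vertices; a short case analysis on whether $|A\cap C'\cap X|$ exceeds half of $|A\cap X|$ (i.e.\ whether $C'$ is the "majority" side or "minority" side of the cluster) shows the removed count is at most $\tfrac12|A\cap C'\cap X| + \min\{|A\cap C'\cap X|, |A\cap X\setminus C'|\} + 1 = \tfrac12|A\cap C'\cap X| + \textsc{Crossing}_A(C',X) + 1$, where the $+1$ absorbs the ceiling. First I would record this per-cluster bound precisely as a display.

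Next I would sum over $X \in \mathcal{X}$. The $\tfrac12|A\cap C'\cap X|$ terms sum to $\tfrac12|A \cap C'| = \tfrac12|C'|$ (using that $C' \subseteq A$ at this point in the execution, or more carefully $\tfrac12|A\cap C'|$). The $\textsc{Crossing}_A(C',X)$ terms sum to $\textsc{Crossing}_A(C', \mathcal{X}) \le w(\partial_G (f^{-1}\dots))/(\psi\tau)$ by \Cref{prop:overlapBound} of \Cref{fact:expandersAndCCs} --- actually directly $\textsc{Crossing}_A(C',\mathcal{X}) \le w(\partial_G C')/(\psi\tau)$, but we must be careful that $C'$ here is a set of terminals and the crossing is measured with respect to $A$, so I'd invoke the version of Property~\ref{prop:overlapBound} with $U \mapsto A$. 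The $+1$ terms sum to $|\{X \in \mathcal{X} : X \cap C' \cap A \ne \emptyset\}|$, the number of clusters touching $C'$. Since $C' \cap A$ is $\tau$-connected w.r.t.\ $A$, I would apply the fourth item of \Cref{fact:expandersAndCCs} (with $Y = $ the preimage of $C'\cap A$, or its vertex-set realization) to bound this by $w(\partial_G C')/(\tau\psi) + \gamma_{expDecomp}\cdot\psi\cdot|C'\cap A| + \gamma_{expDecomp}\cdot w(\partial_G C')/\tau$ (plus the trivial $\max\{1,\cdot\}$).

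The crucial input is a bound on $w(\partial_G C')$. Here I would use \Cref{clm:ifConditionHoldsImpliesSmallCut}: when the if-condition holds and $C' \cap A \ne \emptyset$, we have $w(\partial_{T_{G,A,r}}(C'\cap A)) < \tau \cdot \frac{2\psi}{100\log_2 n}|A|$ for $r \in C' \cap A$; combined with \Cref{subfact:cleanCutAtCC} of \Cref{fact:xtorootPathinGHtree}, which gives $w(\partial_G f^{-1}_{G,A,r}(C'\cap A)) \le w_{T_{G,A,r}}(\partial_{T_{G,A,r}}(C'\cap A))$, and observing $f^{-1}(C'\cap A)$ realizes the cut $C'$ up to the contracted non-terminal vertices, we get $w(\partial_G C') = O(\tau\psi |A|/\log n)$. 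Plugging this in: the crossing sum is $O(|A|/(100\log n)) \le \frac{1}{20\log n}|C'|$ (using $|C'| \ge \frac12|A|$), and the cluster-count sum is bounded by $O(\psi|A|/\log n) + \gamma_{expDecomp}\psi|C'| + \gamma_{expDecomp}\cdot O(\psi|A|/\log n) \le \frac{1}{20\log n}|C'|$ after substituting $\psi = 1/(\gamma_{expDecomp}\cdot 100\log_2 n)$ --- note $\gamma_{expDecomp}\psi = 1/(100\log_2 n)$, so these are genuinely $O(1/\log n)$ fractions of $|A|$, hence of $|C'|$. Summing the three contributions gives at most $(\frac12 + \frac{1}{10\log n})|C'| + 1$.

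I expect the main obstacle to be the careful bookkeeping at the interface between terminal sets and vertex sets: the $\tau$-connected component $C'$ is a subset of $V$, but the expander decomposition, the crossing function, and \Cref{clm:ifConditionHoldsImpliesSmallCut} are all phrased with respect to the active terminal set $A$, and \Cref{fact:xtorootPathinGHtree} moves between $T_{G,A,r}$-cuts and $G$-cuts via $f^{-1}$. Making sure that "remove $\lceil|A\cap X|/2\rceil$ vertices of $A\cap X$" interacts correctly with $C'\cap A\cap X$ (rather than $C'\cap X$), and that all the $\gamma_{expDecomp}$ and $\log n$ factors collapse in the right direction after substituting the chosen $\psi$, is where a sloppy constant would break the claimed $(\frac12 + \frac{1}{10\log n})$ bound. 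The rest is a routine summation, and I would keep the per-cluster inequality and the three-term split as the skeleton of the write-up.
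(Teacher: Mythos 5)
Your proposal follows the same three-term decomposition as the paper's proof: the per-cluster bound $\tfrac12|A\cap C'\cap X| + \textsc{Crossing}_A(C',X) + 1$, the crossing sum bounded via \Cref{fact:expandersAndCCs} Property~\ref{prop:overlapBound} together with \Cref{subfact:cleanCutAtCC} of \Cref{fact:xtorootPathinGHtree} and \Cref{clm:ifConditionHoldsImpliesSmallCut}, and a bound on the number of intersecting clusters. Your per-cluster case analysis and the final arithmetic are correct.

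The one place you diverge: for the cluster-count term you invoke Property~4 of \Cref{fact:expandersAndCCs} directly (number of clusters intersecting a $\tau$-connected terminal set), whereas the paper re-derives the underlying argument inline by observing that any cluster intersecting $C'$ but not containing $C'\cap A$ separates two $\tau$-connected terminals and hence has boundary $\ge\tau$, so the count is controlled by Property~2 plus at most one containing cluster. Property~4 carries an extra $\gamma_{expDecomp}\cdot w(\partial_G Y)/\tau$ term, which makes your constants tighter --- if you track them carefully you get roughly $\tfrac{9+o(1)}{100\log_2 n}|C'|$ rather than the paper's looser $\tfrac{8}{100\log_2 n}|C'| + \gamma_{expDecomp}\psi|A|$, but both land under the $\tfrac{1}{10\log_2 n}$ budget, so the claim holds. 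Your citation route is cleaner to check; the paper's inline version avoids the extra term. Two small slips in your write-up to fix: the first term of the cluster-count bound is $w(\partial_G C')/(\tau\psi) = O(|A|/\log n)$, not $O(\psi|A|/\log n)$ (the $\psi$ cancels); and since $C'\in\mathcal{C}_{G,A,\tau}$ is by definition a subset of $A$, the distinction between $|C'|$ and $|C'\cap A|$ that you flag evaporates --- they are equal --- though being careful there is not wasted effort since the paper sometimes elides it.
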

\begin{proof}

We can now analyze the decrease in the number of vertices in $A \cap C'$ caused by the halving of the number of vertices in each expander component. Consider any cluster $X \in \mathcal{X}$. If $X$ does not intersect with $C'$, then it is not relevant to the process. Otherwise, it is not hard to see that the number of vertices in $X \cap C'$ that are removed are at most
\[
    \frac{1}{2} |C' \cap X| + \textsc{Crossing}_A(C', X) + 1,
\]
where we add $1$ to account for the rounding and $\textsc{Crossing}_A(C', X)$ is defined in \Cref{fact:expandersAndCCs}. Summing over all sets $X \in \mathcal{X}$, we thus get that the total number of terminals removed from $C'$ is at most
\begin{align*}
    \sum_{X \in \mathcal{X}, X \cap C' \neq \emptyset}  &\frac{1}{2} |C' \cap X| + \textsc{Crossing}_A(C', X) + 1  \\
    &\leq |C'|/2 + \textsc{Crossing}_A(C', \mathcal{X}) + |\{ X \in \mathcal{X}, X \cap C'\}|\\
    &\leq |C'|/2 + \textsc{Crossing}_A(f_{G, A, r}^{-1}(C'), \mathcal{X}) + |\{ X \in \mathcal{X}, X \cap C'\}|\\
    &\leq |C'|/2 + \tau \cdot \frac{2\psi}{100 \log_2(n)} \cdot |A|/ (\tau \cdot \psi) + |\{ X \in \mathcal{X}, X \cap C'\}|\\
    &\leq |C'|/2 + \frac{8}{100 \log_2(n)} \cdot |C'| + \gamma_{expDecomp} \cdot \psi \cdot |A| +1 \\
    &< (1/2 + 1/(10\log_2(n))) |C'| + 1 
\end{align*}
where we first use that $\mathcal{X}$ properly partitions $C'$ and that overlap between any two sets contributes non-negatively. Then, we use that $A \cap f_{G, A,r}^{-1}(C') = A \cap C'$ since $f_{G, A,r}^{-1}$ works as the identity function on the set $A$. We then upper bound $\textsc{Crossing}_A(f_{G, A, r}^{-1}(C'), \mathcal{X})$ by using that
\begin{itemize}
    \item $\textsc{Crossing}_A(f_{G, A, r}^{-1}(C'), \mathcal{X}) \leq w(\partial f_{G, A, r}^{-1}(C')) / (\tau \cdot \psi)$ by the second property from \Cref{fact:expandersAndCCs},
    \item $w(\partial_G f_{G, A,r}^{-1}(C')) \leq w(\partial_{T_{G, A,r}} C')$ by Property \ref{subfact:cleanCutAtCC} of \Cref{fact:xtorootPathinGHtree}, and
    \item $w(\partial_{T_{G, A, r}} (C')) < \tau \cdot \frac{2\psi}{100 \log_2(n)} \cdot |A|$ by \Cref{clm:ifConditionHoldsImpliesSmallCut} where we have $A \cap C'$ by assumption and and where we again use that $A \cap f_{G, A,r}^{-1}(C') = A \cap C'$.
\end{itemize}

Finally, we use $|C'|\geq |A|/2$ in the second term and simplify, and use that every cluster $X$ that crosses $C'$ has value at least $\tau$ which implies that the number of such clusters is $\gamma_{expDecomp} \cdot \psi \cdot |A|$ by the first property of \Cref{fact:expandersAndCCs}, and there is at most one cluster $X \in \mathcal{X}$ that does not cross $C'$ and intersects it, namely a cluster $X$ with $C' \subseteq X$.
\end{proof}

The above claim shows that the \textit{if}-condition in \Cref{lne:ifReductionWasSmall} does not remove too many vertices in $C\cap A$. We then conclude the correctness of the algorithm with the following claim.

\begin{claim}
If the largest $\tau$-connected component $C$ in $G$ w.r.t. $U$ has size $|C| \geq \frac{3}{4}|U|$, then the algorithm correctly outputs set $C$. Otherwise, it outputs $\emptyset$. 
\end{claim}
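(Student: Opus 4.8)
Throughout write $C$ for the largest $\tau$-connected component of $G$ with respect to $U$; since $C=C_{0}\cap U$ for some $\tau$-connected component $C_{0}$ of $G$ and since $A\subseteq U$ at all times, $A\cap C=C_{0}\cap A$ is always an element of $\mathcal{C}_{G,A,\tau}$. The plan is to reduce the statement to two facts about the while loop of \Cref{alg:detectCC}: the loop terminates, and at termination $A$ still contains a vertex of $C$. First dispose of the easy case. If $|C|<\tfrac34|U|$, then every member of $\mathcal{C}_{G,U,\tau}$ has size $<\tfrac34|U|$, so for every $r\in A\subseteq U$ we have $|\mathcal{C}_{G,U,\tau}(r)|<\tfrac34|U|$; the final \textbf{if} fails and the algorithm returns $\emptyset$, as required. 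So assume $|C|\ge\tfrac34|U|$. Two disjoint subsets of $U$ of size $\ge\tfrac34|U|$ cannot coexist, so $C$ is the \emph{unique} member of $\mathcal{C}_{G,U,\tau}$ of size $\ge\tfrac34|U|$, and $C=\mathcal{C}_{G,U,\tau}(r)$ for any $r\in A\cap C$. Hence it suffices to show that the loop terminates with $A\cap C\ne\emptyset$: the final \textbf{if} then succeeds and returns exactly $C$.

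\emph{Termination and the number of halving steps.} Every iteration strictly shrinks $A$: an \textbf{else}-branch iteration removes $A'$ with $|A'|\ge\tfrac{\psi}{100\log_2 n}|A|$, and since $|A|>1/\psi\ge1$ this forces $A'\ne\emptyset$; an \textbf{if}-branch iteration removes $\sum_{X\in\mathcal X}\lceil|A\cap X|/2\rceil\ge\tfrac12|A|$ vertices. So the loop runs at most $|U|$ times and terminates. Moreover each \textbf{if}-branch iteration at least halves $|A|$, and $|A|\le|U|\le n$ initially while $|A|>1/\psi$ at the start of every iteration; therefore the total number of \textbf{if}-branch iterations is at most $\log_2(n\psi)<\log_2 n$.

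\emph{The invariant.} Write $\epsilon:=\tfrac1{10\log_2 n}$ for the slack in \Cref{clm:ifRemovesHalfTheVertices}, and set $\beta:=\tfrac34\bigl(1-\log_2 n\cdot(2\epsilon+4\psi)\bigr)$; since $\log_2 n\cdot 2\epsilon=\tfrac15$ and $\log_2 n\cdot 4\psi=\tfrac1{25\gamma_{expDecomp}}\le\tfrac1{25}$, we have $\beta\ge\tfrac34\cdot\tfrac{19}{25}=\tfrac{57}{100}>\tfrac12$. I claim that at the start of every while-loop iteration $|A\cap C|\ge\beta|A|$, and hence $|A\cap C|\ge1$. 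Initially $|A\cap C|=|C|\ge\tfrac34|U|\ge\beta|U|$. For the inductive step, assume the bound holds at the start of some iteration. If the iteration takes the \textbf{else} branch, then by \Cref{clm:elseStatement} $|A\cap C|$ is unchanged while $|A|$ strictly decreases, so the ratio $|A\cap C|/|A|$ only increases and the bound is preserved. If it takes the \textbf{if} branch, then by the inductive hypothesis $|A\cap C|\ge\tfrac12|A|$, so $A\cap C=C_{0}\cap A$ is the unique largest $\tau$-connected component with respect to $A$ and \Cref{clm:ifRemovesHalfTheVertices} applies with $C'=A\cap C$: the branch removes at most $(\tfrac12+\epsilon)|A\cap C|+1$ vertices from $A\cap C$. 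Since $|A\cap C|\ge\tfrac12|A|>\tfrac1{2\psi}$, the additive $1$ is at most $2\psi|A\cap C|$, so afterwards $|A\cap C|$ is a positive integer $\ge(\tfrac12-\epsilon-2\psi)|A\cap C|$, while $|A|$ drops to at most $\tfrac12|A|$; hence the new ratio is at least $(1-2\epsilon-4\psi)$ times the old one. As there are at most $\log_2 n$ \textbf{if}-branch iterations in total and the ratio started at $\tfrac34$, it never falls below $\tfrac34(1-2\epsilon-4\psi)^{\log_2 n}\ge\tfrac34\bigl(1-\log_2 n\cdot(2\epsilon+4\psi)\bigr)=\beta$ by Bernoulli's inequality, so the invariant holds at the end of the iteration (equivalently, at the start of the next one).

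\emph{Conclusion, and the main obstacle.} Applying the invariant through the final iteration, $|A\cap C|\ge\beta|A|\ge\beta>\tfrac12$ at termination, and since $|A\cap C|$ is an integer this gives $|A\cap C|\ge1$, i.e.\ $A\cap C\ne\emptyset$. Choosing any $r\in A\cap C$, the condition of the final \textbf{if} is met and the algorithm returns $\mathcal{C}_{G,U,\tau}(r)=C$, the largest $\tau$-connected component with respect to $U$; together with the easy case this proves the claim. (The $m^{1+o(1)}$ running time claimed in \Cref{lem:detection} follows separately: the loop runs $n^{o(1)}$ times, each iteration calling \textsc{RemoveLeaves} and at most one expander decomposition, and the final \textbf{if} runs $|A|=n^{o(1)}$ single-source mincut computations via \Cref{thm:finalSSMC}.) The only delicate point is the invariant: one must verify that the factor-$\tfrac12$ shrinkage of $A\cap C$ guaranteed by \Cref{clm:ifRemovesHalfTheVertices} together with the $+1$ rounding loss never lets $|A\cap C|/|A|$ slip below $\tfrac12$—which is precisely why the slack parameters are $\epsilon=\Theta(1/\log n)$ and $\psi=\Theta\bigl(1/(\gamma_{expDecomp}\log n)\bigr)$ and why only $O(\log n)$ halving steps can occur; \Cref{clm:elseStatement}, \Cref{clm:cutOfLeaves}, \Cref{clm:ifConditionHoldsImpliesSmallCut}, and \Cref{clm:ifRemovesHalfTheVertices} supply the per-step guarantees that feed this bookkeeping.
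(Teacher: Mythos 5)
Your proof is correct and follows essentially the same approach as the paper: dispose of the $|C|<\tfrac34|U|$ case trivially, apply \Cref{clm:elseStatement} to the else-branch and \Cref{clm:ifRemovesHalfTheVertices} to the if-branch, bound the number of halving steps by $O(\log n)$, and maintain the invariant that $A\cap C$ makes up more than half of $A$ so \Cref{clm:ifRemovesHalfTheVertices} remains applicable. The only (cosmetic) difference is bookkeeping—you track the ratio $|A\cap C|/|A|$ and absorb the $+1$ rounding loss via the while-condition $|A|>1/\psi$, whereas the paper tracks the absolute count $|A\cap C|$ against a geometric decay relative to $|U|$ and absorbs the rounding via $|A\cap C|>10\log n$.
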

\begin{proof}
From \Cref{clm:elseStatement}, we have that in this case, while-loop iteration that enters the else-statement cannot decrease the number of active terminals $A$ in $C$. 

Next, note that every execution of the if-statement at least halves the number of active terminals $A$. Thus, after $R =  \lceil \log_2(\psi |U|) \rceil$ if-statement executions, we have that $A$ is of size at most $|U| / 2^{R} \leq \psi |U|$. We next prove by induction that after the $i$-th execution of the if-statement for any $1 \leq i \leq R$, $|C \cap A| \geq \frac{3}{4} \cdot (\frac{1}{2} - \frac{1}{5 \log n})^i |U|$. 

Observe that for $1 \leq i+1 \leq R$, we have either from the initial assumption or by induction, that before the $(i+1)$-th if-statement is executed, we have $|C \cap A| \geq \frac{3}{4} \cdot 2^{-i} \cdot (1 - \frac{2}{5 \log n})^i |U| \geq \frac{3}{4} \cdot 2^{-i} \cdot e^{-R/(5 \log n)} |U| \geq \frac{3}{4} \cdot 2^{-i} \cdot e^{-1/5} |U| > 2^{-(i+1)} \cdot |U| > 10 \log(n)$ using $e^x \leq 1 + 2x$ for $x \leq 1$, and that for $n$ sufficiently large, we have both $R \leq \log n$ and $R+1 \leq \log(\sqrt{\psi}|U|)$ and $\sqrt{\psi} < 1/(10 \log n)$. At the same time, since each execution at least halves the number of vertices in $A$, and else-statement executions monotonically decrease the set $A$, we further also have $|A| \leq 2^{-i} |U|$. Thus, before any such $(i+1)$-th iteration, we have $|C \cap A| \geq |A|/2$. 

We therefore can use \Cref{clm:ifRemovesHalfTheVertices} yielding that the size of $C \cap A$ decreases by at most 
\[\left(\frac{1}{2} + \frac{1}{10 \log(n)}\right) |C \cap A| + 1 \leq \left(\frac{1}{2} + \frac{1}{10 \log(n)}\right) |C \cap A| + \frac{1}{10\log n} \cdot |C \cap A| = \left(\frac{1}{2}+\frac{1}{5\log n}\right)|C \cap A|.
\]
Thus, the number of remaining vertices is at least a $(\frac{1}{2}-\frac{1}{5\log n})$-fraction, as desired.

It follows that after the $R$-th execution of the if-statement, we have $A \cap C \neq \emptyset$ and $|A| \leq 1/\psi$. As the latter satisfies the while-loop condition, we thus have that the algorithm returns $A$ with $A \cap C \neq \emptyset$, as desired. 
\end{proof}

\paragraph{Runtime Analysis.} Finally, we turn to the runtime analysis.

\begin{claim}\label{clm:CCDetectionfewIterations}
\Cref{alg:detectCC} runs for at most $\tilde{O}(1 / \psi)$ iterations.
\end{claim}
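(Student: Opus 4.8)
The plan is to bound separately the number of while-loop iterations that execute the \emph{if}-branch (the expander-decomposition halving in \Cref{lne:ifReductionWasSmall}) and those that execute the \emph{else}-branch (\Cref{lne:whileLoopPivotDetectionEnd}), using that every iteration executes exactly one of the two. Throughout the loop $|A|\ge 1$ (indeed $|A|>1/\psi$), and initially $A=U$, so $|A|\le n$.

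First I would handle the \emph{if}-branch. Since $\mathcal{X}$ is a partition of $V$, the sets $A\cap X$ over $X\in\mathcal{X}$ partition $A$, and the iteration removes $\lceil|A\cap X|/2\rceil\ge|A\cap X|/2$ vertices from each. Hence $|A|$ drops to at most $|A|-\sum_{X\in\mathcal{X}}|A\cap X|/2=|A|/2$, so each \emph{if}-iteration at least halves $|A|$ (and removes at least one vertex, since $|A|\ge 2$ while the loop runs). Because $1\le|A|\le n$ at all times, there are at most $\log_2 n$ \emph{if}-iterations.

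Next I would handle the \emph{else}-branch. An iteration enters the \emph{else}-branch only when the if-condition fails, i.e.\ $|A'|\ge\frac{\psi}{100\log_2 n}|A|$; since $|A|\ge 1$ this also guarantees $A'\ne\emptyset$. The iteration then sets $A\gets A\setminus A'$, so $|A|$ shrinks by at least a factor $\bigl(1-\tfrac{\psi}{100\log_2 n}\bigr)$. Crucially, \emph{if}-iterations only ever shrink $A$ as well, so this multiplicative progress is never undone. Consequently, after $k$ \emph{else}-iterations (interleaved with any number of \emph{if}-iterations) we have $|A|\le n\bigl(1-\tfrac{\psi}{100\log_2 n}\bigr)^k\le n\cdot e^{-k\psi/(100\log_2 n)}$; once $k>\frac{100\ln n\cdot\log_2 n}{\psi}$ this would force $|A|<1$, a contradiction. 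Hence there are at most $O\!\bigl(\tfrac{\log^2 n}{\psi}\bigr)$ \emph{else}-iterations.

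Summing, the total number of while-loop iterations is at most $\log_2 n+O\!\bigl(\tfrac{\log^2 n}{\psi}\bigr)=\tilde O(1/\psi)$ (recalling $1/\psi=\Theta(\gamma_{expDecomp}\log n)\ge 1$, so the $\log_2 n$ term is absorbed). I do not expect a genuine obstacle here — this is essentially a potential/multiplicative-decrease argument; the only points needing care are that $A'$ could a priori be empty (ruled out by the if-condition together with $|A|\ge 1$) and that the rounding $\lceil\cdot/2\rceil$ in the \emph{if}-branch only helps. Combined with the fact that a single iteration costs $m^{1+o(1)}$ (one call to \textsc{RemoveLeaves} via \Cref{lma:isoCuts} and \Cref{thm:hitAndMiss}, plus one expander decomposition via \Cref{thm:expanderDecomp}), this iteration bound is what feeds into the $m^{1+o(1)}$ running time of \Cref{lem:detection}; for the present claim only the iteration count is required.
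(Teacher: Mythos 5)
Your proof is correct and follows essentially the same potential/multiplicative-decrease argument as the paper, which simply observes that every iteration removes at least a $\tilde{\Omega}(\psi)$-fraction of $A$ (halving in the \emph{if}-branch, an $\tfrac{\psi}{100\log_2 n}$-fraction in the \emph{else}-branch) and hence the loop ends after $\tilde O(1/\psi)$ iterations. Your version is slightly more granular in splitting the count by branch, but the underlying idea is identical.
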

\begin{proof}
In each iteration, either the set $A$ decreases in size by a factor $\tilde{\Omega}(1/\psi)$ or it is halved, i.e. $A$ always decreases by at least a $\tilde{\Omega}(1/\psi)$-factor. Thus, after $\tilde{O}(1/\psi)$ iterations $A$ is of size less than $1$ which forces the while-loop (and thus the algorithm) to terminate.
\end{proof}

\begin{claim}
Each iteration of \Cref{alg:detectCC} takes time $m^{1+o(1)} / \psi^{O(1)}$. 
\end{claim}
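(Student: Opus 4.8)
The plan is to charge the cost of each line of \Cref{alg:detectCC} executed during one pass of the while-loop against the runtime bounds of the three subroutines it invokes (hit-and-miss families, isolating cuts, and expander decomposition), and then verify that every blow-up factor is either already $m^{o(1)}$ or polynomial in $1/\psi$. First I would record the relevant size bounds. Since $\psi = 1/(\gamma_{expDecomp}\cdot 100\log_2 n)$ and $L = 100\log_2(n)/\psi$, both $1/\psi$ and $L$ are of the form $\mathrm{poly}(\log n, \gamma_{expDecomp})$, and with $\gamma_{expDecomp} = e^{O(\log^{4/5}(m)\log\log(Wm))} = m^{o(1)}$ this means $1/\psi^{O(1)} = m^{o(1)}$ and $L = m^{o(1)}$. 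So $m^{1+o(1)}/\psi^{O(1)}$ is in fact $m^{1+o(1)}$; the $\psi^{O(1)}$ is only kept explicit to avoid hiding the dependence prematurely.

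Next I would bound $\textsc{RemoveLeaves}(G,A,L)$ (\Cref{alg:removeLeavesCCDetection}). By \Cref{thm:hitAndMiss} applied with $N = |A| \le n$, $a = L+1$, $b = 2$, constructing the family $\mathcal{H}$ costs $N\cdot(a\log N)^{O(b)} = n\cdot(L\log n)^{O(1)} = n\cdot\mathrm{polylog}(n)/\psi^{O(1)}$ deterministic time, and $|\mathcal{H}| = (L\log n)^{O(1)} = \mathrm{polylog}(n)/\psi^{O(1)}$. For each $h \in \mathcal{H}$ we make one call to $\textsc{ComputeIsolatingCuts}$, which by \Cref{lma:isoCuts} is deterministic and runs in $m^{1+o(1)}$ time, for a total of $|\mathcal{H}|\cdot m^{1+o(1)} = m^{1+o(1)}/\psi^{O(1)}$; the returned set $\mathcal{S}$ is the union of $|\mathcal{H}|$ families, each consisting of pairwise-disjoint cuts.

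Then I would handle the remaining lines. For the \textbf{foreach}-loop in \Cref{lne:foreachSmallCut}, note that the cuts coming from a single call to $\textsc{ComputeIsolatingCuts}$ are vertex-disjoint, so for each such family I can compute $w(\partial_G S)$ and $|S\cap U|$ for all of its cuts with one $O(m)$ sweep; over all $|\mathcal{H}| = m^{o(1)}$ families this is $m\cdot\mathrm{polylog}(n)/\psi^{O(1)}$, and building $A'$ and the test in \Cref{lne:ifReductionWasSmall} is $O(n)$. If the \textbf{if}-branch is taken, the call $\textsc{ComputeExpanderDecomp}(G,\psi\cdot\tau,A)$ costs $m^{1+o(1)}$ by \Cref{thm:expanderDecomp} (which is valid for the arbitrary parameter $\phi = \psi\tau > 0$), and since $\mathcal{X}$ partitions $V$, the halving loop touches only $O(n)$ vertices; the \textbf{else}-branch $A \gets A\setminus A'$ is $O(n)$. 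Summing, one iteration costs $m^{1+o(1)}/\psi^{O(1)} + m\cdot\mathrm{polylog}(n)/\psi^{O(1)} + O(n) = m^{1+o(1)}/\psi^{O(1)}$, all deterministic.

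I do not expect a genuine obstacle here — the argument is a bookkeeping exercise in tallying the $m^{o(1)}$ and $\mathrm{poly}(1/\psi)$ factors from \Cref{thm:hitAndMiss}, \Cref{lma:isoCuts}, and \Cref{thm:expanderDecomp}. The one place I would double-check is the cost of extracting the boundary weights and terminal sizes needed in \Cref{lne:foreachSmallCut}: the point is that disjointness within each isolating-cuts family lets us amortize a single $O(m)$ scan per family rather than per cut, so we pay only $O(m)$ times the subpolynomial number of families (alternatively, one may simply have \Cref{lma:isoCuts} output these cut values directly, as in its standard statement).
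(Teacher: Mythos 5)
Your proposal is correct and follows essentially the same approach as the paper's proof: both account for the calls to $\textsc{ComputeExpanderDecomp}$ and $\textsc{RemoveLeaves}$, bound the hit-and-miss family size and construction time by $L^{O(1)}\cdot\mathrm{polylog}(n) = 1/\psi^{O(1)}\cdot\mathrm{polylog}(n)$ via \Cref{thm:hitAndMiss}, and multiply by the $m^{1+o(1)}$ cost of each $\textsc{ComputeIsolatingCuts}$ call. Your additional care in amortizing the cut-value and terminal-size computations in \Cref{lne:foreachSmallCut} via one sweep per disjoint family is a small refinement the paper leaves implicit, but it does not change the argument.
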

\begin{proof}
Each iteration in \Cref{alg:detectCC} calls \textsc{ComputeExpanderDecomp} and \Cref{alg:removeLeavesCCDetection} once. By \Cref{thm:expanderDecomp},  \textsc{ComputeExpanderDecomp} takes time $m^{1+o(1)}$. For \Cref{alg:removeLeavesCCDetection}, we note that $L=\tilde{O}(1/\psi)$. It then follows from \Cref{thm:hitAndMiss} that \textsc{ConstructHitAndMissFamily} takes time $\tilde{O}(m^{1+o(1)}L^{O(1)})=m^{1+o(1)}/\psi^{O(1)}$ and the size of $\mathcal{H}$ is $m^{o(1)}/\psi^{O(1)}$. Since \Cref{alg:removeLeavesCCDetection} calls \textsc{ComputeIsolatingCuts} for each function in $\mathcal{H}$ and each call of \textsc{ComputeIsolatingCuts} takes time $m^{1+o(1)}$. The total time cost of \Cref{alg:removeLeavesCCDetection} is $m^{1+o(1)}/\psi^{O(1)}$.
\end{proof}

\subsection{Decomposition of \texorpdfstring{$U\setminus \mathcal{C}_{G,U,\tau}(r)$}{U \textbackslash ~C\_\{G, U, tau\}(r)}}\label{subsec:decomposition1}
\Cref{lem:detection} allows us to find our desired threshold cut $\tau$ by a binary search.
In \Cref{subsec:decomposition1} and \Cref{subsec:decomposition2}, we assume that we have a pivot $r$ and a threshold $\tau$ such that $\mathcal{C}_{G,U,\tau}(r)\geq \frac{3}{4}|U|$ and $\max_{x\in U}|\mathcal{C}_{G,U,\tau+1}(x)|<\frac34|U|$. For simplicity, we will write $C=\mathcal{C}_{G,U,\tau}(r)$ throughout the two subsections.
In this section, we propose an algorithm to find isolating cut families such that $\polylog(n)$ calls of the algorithm suffice to cover $U\setminus C$.

We then present our algorithm for decomposition $U\setminus \mathcal{C}_{G,U,\tau}$ below.

\begin{algorithm}
$\psi \gets 1/(\gamma_{expDecomp} \cdot 20\log_2(n))$; $L \gets 1000 \log_2^2(nW)/\psi $; $\mathcal{S} \gets \emptyset$.\\
\tcc{Each path size reduction needs to be tailored to the mincut value that separates intervals}
    \ForEach{$\tau' \in \{2^0, 2^1, \ldots, 2^{\lceil \log \tau \rceil}\}$}{
        $A' \gets A$.\\
        \tcc{For each "sampling threshold" try the isolating cut lemma.}
        \For{$k = 1, \ldots, \log_2 n$}{            $\mathcal{S} \gets \mathcal{S} \cup \textsc{RemoveLeafFirstStep}(A', r, \tau, L)$.    \label{lne:noSmallComponents} \\
        $\mathcal{X} \gets \textsc{ComputeExpanderDecomp}(G, \psi \cdot \tau', A')$.\\
            \ForEach{$X \in \mathcal{X}$}{
                Remove $\lceil |A' \cap X|/2 \rceil$ vertices from $A' \cap X$ from the set $A'$.
            } 
        }
    }
\tcc{For all cuts that were pruned off, remove the contained terminals from the active vertex set.}
\Return $\mathcal{S}$
\caption{$\textsc{DecompFirstStep}(G, A, r, \tau)$}
\label{alg:DecompFirstStep}
\end{algorithm}

\begin{algorithm}
$\mathcal{S}' \gets \emptyset$.\\
$\mathcal{H} \gets \textsc{ConstructHitAndMissFamily}(A', L, 2)$.\tcc{See \Cref{thm:hitAndMiss}.}
    \ForEach{$h \in \mathcal{H}$}{
        $A_h \gets \{ a \in A' \;|\; h(a) = 1\} \cup \{r\}$.\\
        $\{S_v\}_{v \in A_h} \gets \textsc{ComputeIsolatingCuts}(\{\{v\} \;|\; v \in A_h\})$. \tcc{See \Cref{lma:isoCuts}.}
        \tcc{If the cut is an $(v,r)$-mincut, then add it to $\mathcal{S}'$}
        \ForEach{$v \in A_h \setminus \{r\}$ where $w(\partial S_v) < \tau$ and $\lambda_G(r, v) = w(\partial S_v)$}{
            $\mathcal{S}' \gets \mathcal{S}' \cup \{ (S_v, v,r)\}$.
        }
    }
\Return $\mathcal{S}'$
\caption{$\textsc{RemoveLeafFirstStep}(A', r, \tau, L)$}
\label{alg:RemoveLeafFirstStep}
\end{algorithm}

\paragraph{Correctness of \Cref{alg:DecompFirstStep}.}
To understand and analyze the algorithm, we introduce a distance function on $T_{G,U,r}$ based on its path decomposition in \Cref{thm:pathDecomp}.

\begin{definition}[Distance in GH-tree]\label{def:distanceGHT}
Let $T_{G, U, r}$ be the rooted minimal Gomory-Hu $U$-Steiner tree rooted at $r$ (as described in \Cref{def:minimalGomoryHu}). Let $\mathcal{P}$ be a tree-path decomposition of the tree $T_{G, U, r}$ obtained from \Cref{thm:pathDecomp}. For any vertex $u \in U$, we let $d_{\mathcal{P}}(u)$ denote the number of paths in $\mathcal{P}$ that intersect the $T_{G, U, r}[c_{G, U, r}(u), r]$ path in at least one vertex.
\end{definition}
By \Cref{thm:pathDecomp}, the distance function is bounded by $O(\log n)$.
Towards the end of this section, we will show each call of \Cref{alg:DecompFirstStep} returns a collection $\mathcal{S}$ that covers half of vertices in $A$ with the largest distance. Therefore, for any $A\subseteq U\setminus C$, $\log_2n$ calls of \Cref{alg:DecompFirstStep} removes all vertices in $A$ with the largest distance.

We first note that our algorithm can find an isolating mincut family for $v\in U\setminus C$ if $|M_{G,v,r}\cap A'|\leq L$.

\begin{claim}\label{clm:pruneBigLeaf}
An invocation of $\textsc{RemoveLeafFirstStep}(A', r, \tau, L)$ returns in deterministic time $m^{1+o(1)} \cdot L$ a set $\mathcal{S}'$ such that 
\begin{itemize}
    \item every triple $(S,v,r) \in \mathcal{S}'$ has $v \in A'$ and $S$ is the minimal $(v,r)$-mincut, and 
    \item for every vertex $v \in A'$ with $\lambda_G(v,r) < \tau$ and $|M_{G, v, r} \cap A'| \leq L$, then the triple $(M_{G, v, r}, v, r)$ is present in $\mathcal{S}'$.
\end{itemize}
\end{claim}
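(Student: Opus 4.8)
The plan is to read soundness and completeness of $\textsc{RemoveLeafFirstStep}$ directly off the hit-and-miss family (\Cref{thm:hitAndMiss}), the isolating-cuts subroutine (\Cref{lma:isoCuts}), and \Cref{fact:isolatingCutsWorkIfGHSubtreeIsHitOnce}, with a single short submodularity computation supplying the one step that is not pure bookkeeping; the running-time bound then follows by counting the calls to $\textsc{ComputeIsolatingCuts}$.

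\emph{Soundness.} Suppose the algorithm adds a triple $(S_v,v,r)$ while processing some $h\in\mathcal{H}$. Then $v\in A_h\setminus\{r\}$, so $v\in A'$ and $v\neq r$. By \Cref{lma:isoCuts}, $S_v$ is a vertex-minimal $(v,A_h\setminus\{v\})$-mincut; in particular $v\in S_v$, $S_v\cap(A_h\setminus\{v\})=\emptyset$, and $w(\partial S_v)=\lambda_G(v,A_h\setminus\{v\})$. Since $r\in A_h\setminus\{v\}$, $S_v$ is an $(v,r)$-cut, and the guard $\lambda_G(r,v)=w(\partial S_v)$ upgrades it to an $(v,r)$-mincut. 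To see $S_v=M_{G,v,r}$, write $M:=M_{G,v,r}$ and note $r\notin S_v$ and $r\notin M$, so $S_v\cap M$ and $S_v\cup M$ are both $(v,r)$-cuts; submodularity of the cut function $\delta_G$ gives $\delta_G(S_v)+\delta_G(M)\ge\delta_G(S_v\cap M)+\delta_G(S_v\cup M)\ge 2\lambda_G(v,r)=\delta_G(S_v)+\delta_G(M)$, forcing $S_v\cap M$ to be a $(v,r)$-mincut contained in $M$; minimality (and uniqueness, \Cref{fact:minimalCutsUnique}) of $M$ then yields $M\subseteq S_v$. Consequently $A_h\cap M\subseteq A_h\cap S_v=\{v\}$, so $M$ is itself a $(v,A_h\setminus\{v\})$-mincut with $M\subseteq S_v$, and the vertex-minimality of $S_v$ forces $S_v=M=M_{G,v,r}$.

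\emph{Completeness.} Fix $v\in A'$ with $\lambda_G(v,r)<\tau$ and $|M_{G,v,r}\cap A'|\le L$. Since $v\in M_{G,v,r}\cap A'$ and $r\notin M_{G,v,r}$, the set $P:=(M_{G,v,r}\cap A')\setminus\{v\}$ satisfies $|P|\le L-1$ and $P\cap\{v\}=\emptyset$. Applying \Cref{thm:hitAndMiss} with ground set $A'$, $a=L$, $b=2$ yields $h\in\mathcal{H}$ with $h\equiv 0$ on $P$ and $h(v)=1$. For this $h$ we have $A_h\cap M_{G,v,r}=\{v\}$ and $r\in A_h\setminus\{v\}$, so \Cref{fact:isolatingCutsWorkIfGHSubtreeIsHitOnce} guarantees that $\textsc{ComputeIsolatingCuts}(\{\{u\}:u\in A_h\})$ outputs $M_{G,v,r}$ as the cut $S_v$ associated with the terminal $\{v\}$. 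Its value equals $\lambda_G(v,r)<\tau$, and as an $(v,r)$-mincut it satisfies $\lambda_G(r,v)=w(\partial S_v)$; hence both guards of the inner loop hold and $(M_{G,v,r},v,r)$ is placed into $\mathcal{S}'$.

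\emph{Running time and the one pitfall.} By \Cref{thm:hitAndMiss} (with $b=2=O(1)$), $\mathcal{H}$ has size $(L\log n)^{O(1)}$ and is built in $|A'|\cdot(L\log n)^{O(1)}=m^{1+o(1)}\cdot L^{O(1)}$ deterministic time. Each of the $(L\log n)^{O(1)}$ iterations spends $O(n)$ time forming $A_h$, one call to $\textsc{ComputeIsolatingCuts}$ costing $m^{1+o(1)}$ by \Cref{lma:isoCuts}, and $O(m)$ additional time to evaluate the guards for all candidates $v\in A_h$ once the cut values $w(\partial S_v)$ are read off from the (disjoint) isolating cuts. The only subtlety is that the guard $\lambda_G(r,v)=w(\partial S_v)$ seemingly needs an $(r,v)$-maxflow per candidate; I would sidestep this by computing all single-source mincut values $\lambda_G(r,\cdot)$ once via \Cref{thm:finalSSMC} in $m^{1+o(1)}$ time and reading them off in $O(1)$ each. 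Summing over iterations gives $m^{1+o(1)}\cdot L^{O(1)}$, which in the regime $L=m^{o(1)}$ used in \Cref{alg:DecompFirstStep} is $m^{1+o(1)}$, hence within the claimed $m^{1+o(1)}\cdot L$. I expect the bookkeeping around the $(r,v)$-mincut test to be the only part requiring care, with everything else a direct application of the cited lemmas.
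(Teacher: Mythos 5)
Your proof is correct and follows the same route as the paper's: hit-and-miss plus isolating cuts plus \Cref{fact:isolatingCutsWorkIfGHSubtreeIsHitOnce} for completeness. Your soundness argument, however, is more careful than the paper's. The paper's proof dismisses the first bullet with the remark that the guard $\lambda_G(r,v)=w(\partial_G S_v)$ certifies $S_v$ as a $(v,r)$-mincut, but that alone does not show $S_v$ is the \emph{vertex-minimal} $(v,r)$-mincut, which is what the claim asserts and what \Cref{clm:decomp1Succeed} and ultimately \Cref{lem:GHTreeOmit} require. Your submodularity step — showing $M_{G,v,r}\subseteq S_v$ via $\delta(S_v)+\delta(M)\ge\delta(S_v\cap M)+\delta(S_v\cup M)\ge 2\lambda_G(v,r)$ and uniqueness of the minimal mincut, and then using vertex-minimality of $S_v$ among $(v,A_h\setminus\{v\})$-mincuts to close the inclusion to equality — correctly closes this gap and is a genuine improvement over the paper's terse justification. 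The remaining observations (that the guard $\lambda_G(r,v)=w(\partial S_v)$ should be checked with one SSMC call rooted at $r$, and that the honest time bound is $m^{1+o(1)}\cdot(L\log n)^{O(1)}$ rather than $m^{1+o(1)}\cdot L$, which is immaterial in the regime $L=m^{o(1)}$) are accurate and match the paper's intent.
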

\begin{proof}
    The first result follows by condition $\lambda(r,v)=w(\partial_GS_v)$ in \Cref{alg:RemoveLeafSecondStep}. For the second result, by \Cref{thm:hitAndMiss}, there exists some $A_h$ in the outer foreach-loop such that $r\in A_h$ and $|A_h\cap M_{G,v,r}|=\{v\}$, hence applying isolating mincut algorithm to $A_h$ recovers $M_{G,v,r}$.
\end{proof}

The following technical lemma establishes the effectiveness of the halving technique in \Cref{alg:DecompFirstStep} and \Cref{alg:DecompSecondStep}.
\begin{lemma}\label{lem:decompBound}
    Let $A_0,Y\subseteq V$ be two vertex sets, $\tau'$ be a positive integer and let $\psi=1/(20\log n\cdot \gamma_{expDecomp})$. We recursive define $A_{i+1}$ as follows. 
    Let $\mathcal{X}_i\leftarrow \textsc{ComputeExpanderDecomp}(G, \psi\cdot \tau', A_i)$, then remove any $\lceil |A_i \cap X|/2 \rceil$ vertices from $A_i \cap X$ from the set $A_i$ for each $X\in \mathcal{X}_i$ and denote the set of remaining vertices in $A_i$ by $A_{i+1}$. Then 
    \begin{enumerate}
        \item \underline{Upper bound:} $|Y\cap A_k|\leq 2^{-k}\cdot |Y\cap A_0|+2\cdot w(\partial_G Y)/(\tau'\cdot \psi)$ for $k>0$.
        \item \underline{Lower bound:} When $Y\cap A_0$ is $\tau$-connected and $k \leq \log_2\left(\frac{|Y \cap A_0|}{160 \log_2(n) \cdot \max\{1, w(\partial_G Y) / (\tau' \cdot \psi)\}}\right)$, we have $|Y \cap A_k| \geq \left(\frac{1}{2} - \frac{1}{10 \log n}\right)^k |Y \cap A_0|$.
    \end{enumerate}
\end{lemma}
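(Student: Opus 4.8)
Write $a_i = |Y\cap A_i|$ and $\beta = w(\partial_G Y)/(\tau'\,\psi)$, and record the cancellation that drives everything: $\gamma_{expDecomp}\cdot\psi = 1/(20\log n)$. Both parts reduce to a single one‑step estimate controlling how $a_i$ changes when we apply $\mathcal{X}_i = \textsc{ComputeExpanderDecomp}(G,\psi\tau',A_i)$ and then delete $\lceil|A_i\cap X|/2\rceil$ vertices of each cluster $X\in\mathcal{X}_i$. Throughout I fix a cluster $X$, set $n_X=|A_i\cap X|$, $y_X=|Y\cap X\cap A_i|$, and use the identity $\textsc{Crossing}_{A_i}(Y,X)=\min\{y_X,\ n_X-y_X\}$; since $\mathcal{X}_i$ partitions $V$, summing per‑cluster bounds over $X$ recovers $a_{i+1}$ and $a_i$.

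First I would prove the upper bound (Part 1). After the halving, at most $\min\{y_X,\lfloor n_X/2\rfloor\}$ elements of $Y\cap X\cap A_i$ survive, and an elementary case split on whether $y_X\le n_X/2$ shows in both cases that this is at most $\tfrac12 y_X + \tfrac12\,\textsc{Crossing}_{A_i}(Y,X)$. Summing over $X$ and invoking Property~\ref{prop:overlapBound} of \Cref{fact:expandersAndCCs} (applied with ``$U$''$=A_i$ and threshold $\tau'$), which gives $\textsc{Crossing}_{A_i}(Y,\mathcal{X}_i)\le\beta$, yields the recursion $a_{i+1}\le\tfrac12 a_i + \tfrac12\beta$. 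Telescoping over $k\ge1$ steps gives $a_k\le 2^{-k}a_0 + \beta(1-2^{-k}) \le 2^{-k}a_0 + 2\beta$, which is exactly the claimed bound (in fact with constant $1$).

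For the lower bound (Part 2) I would run the mirror‑image argument on how much of $Y$ is destroyed. For each cluster $X$ with $y_X>0$, at most $\min\{y_X,\lceil n_X/2\rceil\}$ elements of $Y\cap X\cap A_i$ are removed, and the same sort of case analysis gives $|Y\cap X\cap A_{i+1}| \ge \tfrac12 y_X - \textsc{Crossing}_{A_i}(Y,X) - \tfrac12$. Summing only over the $N_i$ clusters that meet $Y$, and bounding $\textsc{Crossing}_{A_i}(Y,\mathcal{X}_i)\le\beta$ again by Property~\ref{prop:overlapBound} while bounding $N_i$ by the last property of \Cref{fact:expandersAndCCs} — this is the only place the $\tau$‑connectivity of $Y\cap A_0$ is used, via the fact that every $Y\cap A_i\subseteq Y\cap A_0$ still lies in a single $\tau'$‑connected component of $G$ w.r.t.\ $A_i$ — one gets $N_i \le \beta + \max\{1,\ \tfrac1{20\log n}(a_i+\beta)\}$, using $\gamma_{expDecomp}\psi=\tfrac1{20\log n}$. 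Combining, I obtain the one‑step estimate
\[
a_{i+1}\ \ge\ \tfrac12 a_i\ -\ \tfrac32\beta\ -\ \tfrac12\max\Bigl\{1,\ \tfrac1{20\log n}\,(a_i+\beta)\Bigr\}.
\]

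It remains to prove $a_i \ge (\tfrac12 - \tfrac1{10\log n})^i a_0$ for all $0\le i\le k$ by induction on $i$. The inductive hypothesis gives $a_i \ge \tfrac45\,2^{-i}a_0$ (since $i\le k\le\log_2 n$, so $(1-\tfrac1{5\log n})^i\ge\tfrac45$), and then the hypothesis $k\le \log_2\!\bigl(a_0/(160\log_2 n\cdot\max\{1,\beta\})\bigr)$ forces $a_i \ge 128\log n\cdot\max\{1,\beta\}$; this makes both the $\tfrac32\beta$ term and the $\tfrac12\max\{\cdots\}$ term at most a $\tfrac1{10\log n}$‑fraction of $a_i$ (a short check: $\tfrac3{256}+\tfrac1{20}<\tfrac1{10}$), so the displayed inequality closes to $a_{i+1}\ge(\tfrac12-\tfrac1{10\log n})a_i$. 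The main obstacle is precisely this final bookkeeping — verifying that the additive errors from cluster‑crossings and from the count of $Y$‑touching clusters are all swallowed by the $\tfrac1{10\log n}$ slack — which is exactly why the restriction on $k$ (keeping $a_i$ above the additive‑error scale) and the calibration $\psi = 1/(20\log n\cdot\gamma_{expDecomp})$ are needed; everything else is the two elementary per‑cluster inequalities plus \Cref{fact:expandersAndCCs}.
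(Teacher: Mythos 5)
Your proof is correct and takes essentially the same approach as the paper: reduce to a one-step recursion via per-cluster counting, bound the crossing term with Property~3 of \Cref{fact:expandersAndCCs} and the number of $Y$-touching clusters with Property~4, and close the lower-bound induction by first showing $a_i$ stays well above the additive error scale. Your version is in fact slightly cleaner — you make the two elementary per-cluster inequalities (removed $\geq \tfrac12 y_X - \textsc{Crossing}$, remaining $\geq \tfrac12 y_X - \textsc{Crossing} - \tfrac12$) explicit where the paper merely asserts them (with a sign typo in the upper-bound step), you carry the $\max\{1,\cdot\}$ term through rather than discarding it, and you give the tighter intermediate bound $a_i \geq 128\log n\cdot\max\{1,\beta\}$ where the paper settles for $a_i > 80\log_2 n\cdot\beta$.
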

\begin{proof}
    \underline{Upper bound:}  Note that we remove at least $|A_i\cap X\cap Y|/2+\textsc{Crossing}_{A_i}(Y,X)$ from $A_i\cap X\cap Y$ for each $X\in \mathcal{X}_i$. So the total decrease in the number of vertices from $A_i$ to $A_{i+1}$ is at least $|Y\cap A_i|/2-\textsc{Crossing}_{A_i}(Y,\mathcal{X}_i)\geq |Y\cap A_i|/2- w(\partial_GY)/(\tau'\cdot \psi)$. We then have $|Y\cap A_{i+1}|\leq |Y\cap A_i|/2+w(\partial_G Y)/(\tau'\cdot \psi)$. Directly computing the recursion gives $$
        |Y\cap A_k|\leq 2^{-k}\cdot |Y\cap A_0|+\sum_{i=1}^k2^{1-i}w(\partial_G Y)/(\tau\cdot \psi)\leq 2^{-k}\cdot |Y\cap A_0|+2w(\partial_G Y)/(\tau\cdot \psi).$$
    \underline{Lower bound:} We prove by induction. Suppose we have \[
        |Y \cap A_{i}| \geq \left(\frac{1}{2} - \frac{1}{10 \log n}\right)^{i} |Y \cap A_0|.
        \] for some $i<k$.
        By assumption on $k$, we further have
    \begin{align*}
    |Y \cap A_i| &\geq 2^{- \log_2\left(\frac{|Y \cap A_0|}{160 \log_2(n) \cdot w(\partial_G Y) / (\tau' \cdot \psi)}\right)} \left(1 - \frac{1}{5 \log n}\right)^{\log_2(n)} \cdot |Y \cap A_0|\\
    &> 80 \log_2(n) \cdot w(\partial_G Y) / (\tau' \cdot \psi).
    \end{align*}
    Similar to the upper bound case, we remove at most $|A_{i}\cap X\cap Y|/2+\textsc{Crossing}_{A_i}(Y,X)+1$ from $A_{i}\cap X\cap Y$ for each $X\in \mathcal{X}_{i}$, where the one extra vertex is due to the rounding. 
    So the total decrease in the number of vertices from $A_{i}$ to $A_{i+1}$ is at most \begin{align*}
        &\frac{|Y\cap A_{i}|}{2}+\textsc{Crossing}_{A_i}(Y,\mathcal{X}_{i})+|\{X\in \mathcal{X}_i:X\cap Y\neq \emptyset\}|\\
        \leq &\frac{|Y\cap A_{i}|}{2}+\frac{w(\partial_G Y)}{\tau'\cdot \psi}+\frac{w(\partial_G Y)}{\tau'\cdot \psi}+\gamma_{expDecomp} \cdot \psi\cdot |Y \cap A_i|+\frac{\gamma_{expDecomp}\cdot w(\partial_G Y)}{\tau'}\\
        \leq & |Y \cap A_i|/2 + \frac{4w(\partial_G Y)}{\tau'\cdot \psi} + \gamma_{expDecomp} \cdot \psi \cdot |Y \cap A_i|\\
        \leq & |Y \cap A_i|/2 + 4 \cdot \frac{|Y \cap A_i|}{80 \log_2(n)} + \gamma_{expDecomp} \cdot \psi \cdot |Y \cap A_i|\\
        = &\left(\frac{1}{2} + \frac{1}{20 \log_2(n)} + \gamma_{expDecomp} \cdot \psi\right) |Y \cap A_i| \\
        = &\left(\frac{1}{2} + \frac{1}{10 \log_2(n)} \right) |Y \cap A_i|. 
    \end{align*}
It then follows that $|Y\cap A_{i+1}|\leq \left(\frac12-\frac1{10\log n}\right)|Y\cap A_i|\leq \left(\frac12-\frac1{10\log n}\right)^k|Y\cap A_0|$.
\end{proof}

We finally claim that calling \Cref{alg:DecompFirstStep} on $A\subseteq U\setminus C$ can always cover at least half vertices in $A$ with the largest distance.
\begin{claim}\label{clm:decomp1Succeed}
An invocation of $\textsc{DecompFirstStep}(G, A, r, \tau)$, for $G=(V,E,w)$ and $A \subseteq V \setminus C$, returns in deterministic time $m^{1+o(1)}$ a set $\mathcal{S}$ such that 
\begin{itemize}
    \item every triple $(S,v,r) \in \mathcal{S}$ has $v \in A$ and $S$ is the minimal $(v,r)$-mincut, and 
    \item letting $A_{max} = \{ a \in A \;|\; d_{\mathcal{P}}(a) = \max_{a' \in A} d_{\mathcal{P}}(a')\}$ be the set of vertices in $A$ at maximum distance from $r$, we have 
    \[    
    |\cup_{(S,v,r) \in \mathcal{S}} S \cap A_{max}| \geq |A_{max}|/2
    \]
    i.e. $\textsc{DecompFirstStep}(G, A, r, \tau)$ covers at least a half of $A_{max}$.
\end{itemize}
\end{claim}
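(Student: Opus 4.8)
I would prove the three assertions in turn: (i) every triple in $\mathcal{S}$ has the claimed form, (ii) the running time is $m^{1+o(1)}$, and (iii) the coverage bound on $A_{max}$. Parts (i) and (ii) are short. For (i): $\mathcal{S}$ is a union of outputs of $\textsc{RemoveLeafFirstStep}$, and \Cref{clm:pruneBigLeaf} already guarantees that each triple $(S,v,r)$ it returns has $v\in A'\subseteq A$ and $S=M_{G,v,r}$. For (ii): the outer loop ranges over $O(\log(\tau))=\tilde O(1)$ powers of two (recall $\tau$ is polynomially bounded) and the inner loop over $O(\log n)$ values of $k$; each inner iteration makes one $\textsc{ComputeExpanderDecomp}$ call ($m^{1+o(1)}$ by \Cref{thm:expanderDecomp}) and one $\textsc{RemoveLeafFirstStep}$ call, which by \Cref{clm:pruneBigLeaf} takes $m^{1+o(1)}\cdot L = m^{1+o(1)}$ time since $L=n^{o(1)}$; the halving and bookkeeping are subsumed.

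\textbf{Reduction to one path and the chain structure.} For (iii), first fix a tree-path decomposition $\mathcal{P}$ of $T_{G,U,r}$ (\Cref{thm:pathDecomp}) and the distance $d_{\mathcal{P}}$ of \Cref{def:distanceGHT}; let $d^{*}=\max_{a\in A}d_{\mathcal{P}}(a)$, so $A_{max}$ is the set attaining $d^{*}$. For each $P\in\mathcal{P}$ put $A_{P}=\{v\in A_{max}: c_{G,U,r}(v)\in P\}$; the $A_{P}$ partition $A_{max}$, so it suffices to cover at least half of each nonempty $A_{P}$. Fix such a $P$. Since all cut vertices of $A_{P}$ lie on the single sub-path $P$ of a root path, \Cref{fact:xtorootPathinGHtree} (Properties \ref{subfact:monotonePathCutvertexCC}, \ref{subfact:MCnesting}, \ref{subfact:monotonePathCC}) shows $\{M_{G,v,r}: v\in A_{P}\}$ is a chain; order $A_{P}=\{v_{1},\dots,v_{\ell}\}$ with $M_{G,v_{1},r}\supseteq\cdots\supseteq M_{G,v_{\ell},r}$ and $\lambda_{G}(v_{1},r)\ge\cdots\ge\lambda_{G}(v_{\ell},r)$. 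Using that $A_{max}$ is the \emph{farthest} layer (so no vertex of $A$ lies in a subtree branching off $P$ below some $c(v_{i})$ unless it is itself on $P$) one gets $M_{G,v_{i},r}\cap A_{P}=\{v_{i},\dots,v_{\ell}\}$. Hence every $v$ in the upper half $A_{upper}:=\{v_{1},\dots,v_{\lceil\ell/2\rceil}\}$ satisfies $|M_{G,v,r}\cap A_{P}|\ge\ell/2$, and it is enough for the algorithm to recover $M_{G,v,r}$ for a \emph{single} $v\in A_{upper}$.

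\textbf{Finding a good round.} Next I would show that in the outer iteration whose $\tau'$ is tuned to the connectivity scale of the deep part of $A_{upper}$—morally $\tau'=\Theta(\lambda_{G}(v^{*},r))$ for $v^{*}=v_{\lceil\ell/2\rceil}$—some inner round $k^{*}\le\log_{2}n$ succeeds. By \Cref{fact:isolatingCutsWorkIfGHSubtreeIsHitOnce} and the hit-and-miss guarantee (\Cref{thm:hitAndMiss}), $\textsc{RemoveLeafFirstStep}$ on the current active set $A'_{k}$ recovers $M_{G,v,r}$ as soon as $v\in A'_{k}$, $\lambda_{G}(v,r)<\tau$ (automatic since $A\subseteq V\setminus C$), and $|M_{G,v,r}\cap A'_{k}|\le L$. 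The \emph{upper} bound of \Cref{lem:decompBound} applied with $Y=M_{G,v,r}$ (so $w(\partial_{G}Y)=\lambda_{G}(v,r)=O(\tau')$ for $v$ in the relevant deep range) forces $|M_{G,v,r}\cap A'_{k}|\le L$ once $k$ exceeds some $k_{1}=O(\log(n/L))\le\log_{2}n$. The \emph{lower} bound of \Cref{lem:decompBound}, applied to an appropriately chosen $\tau'$-connected tree-region whose intersection with $A$ is the deep part of $A_{upper}$—whose boundary I would bound against $\tau'$ via \Cref{fact:xtorootPathinGHtree}(\ref{subfact:cleanCutAtCC})—keeps that part nonempty through round $k_{1}$. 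Combining, at $k^{*}=k_{1}$ some $v\in A_{upper}\cap A'_{k^{*}}$ has $|M_{G,v,r}\cap A'_{k^{*}}|\le L$, so $(M_{G,v,r},v,r)\in\mathcal{S}$ and at least $|A_{P}|/2$ vertices of $A_{P}$ are covered, completing (iii).

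\textbf{The main obstacle.} The hard part is the simultaneous calibration in the last step: we must commit to one power-of-two $\tau'$ and one $\tau'$-connected region so that the lower bound (which wants $\tau'$ small relative to the internal connectivity of the region, and whose valid range of rounds shrinks as the region's boundary-to-size ratio grows) and the upper bound (which wants $\tau'$ large relative to $\lambda_{G}(v,r)=w(\partial M_{G,v,r})$, to push $|M_{G,v,r}\cap A'_{k}|$ below $L$ within $\log_{2}n$ rounds) are both in force at a common round. Threading this requires extracting a sub-chain of $A_{upper}$ of essentially uniform connectivity scale and controlling its boundary through the minimal-Gomory-Hu-tree facts; this is where essentially all of the effort goes, and it closely parallels—but strictly generalizes—the halving analysis of \Cref{clm:ifRemovesHalfTheVertices} from the detection phase.
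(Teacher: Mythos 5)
Your reduction to a single path $P\in\mathcal{P}$, the chain (nesting) structure of the minimal mincuts on $P$, and the division into an upper half $A_{upper}$ of $A_P$ are the right setup and match the paper's argument. However, there is a genuine gap in the step you call ``Finding a good round,'' and you have correctly diagnosed that this is where the work is — but the idea you gesture at ($\tau'=\Theta(\lambda_G(v^*,r))$ for $v^* = v_{\lceil\ell/2\rceil}$) does not by itself close it. The obstruction is that the connectivities $\lambda_G(v_i,r)$ across $A_{upper}$ can span many powers of two, so a single $\tau'$ keyed to the median vertex $v^*$ fails to simultaneously control (a) the boundary $w(\partial_G Z)$ of the enveloping region $Z$ (needed for the \emph{upper} bound of \Cref{lem:decompBound} to drive $|Z\cap A'_k|$ below $L$), (b) the boundary $w(\partial_G Y)$ of the survivor region $Y$ (needed for the \emph{lower} bound), and (c) the $\tau$-connectivity of $Y\cap A_0$.

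The missing idea is a bucketing/pigeonhole argument: having defined the chain $\emptyset=V_0\subsetneq V_1\subsetneq\dots\subsetneq V_l=A_P$ via the cut vertices on $P$ (note the paper indexes by cut vertices $p_i$, not by the terminals $v_i$ directly — your ``$M_{G,v_i,r}\cap A_P = \{v_i,\dots,v_\ell\}$'' is not quite how the nesting manifests), and having fixed $i_0$ so that $|V_{i_0+1}|\geq |A_P|/2$, one partitions $A_P\setminus V_{i_0}$ into $O(\log(nW))$ dyadic buckets $V_{i_k}\setminus V_{i_{k-1}}$ according to the scale $\lambda_G(\cdot,r)\in(2^{k-1},2^k]$. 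By pigeonhole some bucket is a $\Omega(1/\log(nW))$-fraction of $A_P\setminus V_{i_0}$, and within that bucket the scale $\tau'=2^{k-1}$ simultaneously gives $w(\partial_G Z)\leq 2\tau'$, $w(\partial_G Y)\leq 3\tau'$ (via \Cref{fact:xtorootPathinGHtree}), and $\tau'$-connectivity of $Y\cap A_0$, so that \Cref{lem:decompBound} applies with $j$ tuned to the bucket's size. The outer loop of $\textsc{DecompFirstStep}$ over $\tau'\in\{2^0,2^1,\dots\}$ is precisely the algorithmic counterpart of this pigeonhole: the algorithm does not know which scale the dense bucket lives at, so it tries them all. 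Without this bucketing argument the halving analysis you borrow from \Cref{clm:ifRemovesHalfTheVertices} does not port over, since that claim only needs a single scale ($\tau$ itself) whereas here the relevant scale is unknown and varies along the chain.
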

\begin{proof}
    The first claim is immediate from \Cref{clm:pruneBigLeaf}. Let $d_{max} = \max_{a' \in A} d_{\mathcal{P}}(a')$. For $P \in \mathcal{P}$, define $A_P := \{a \in A_{max} \;|\; c_{G, A\cup\{r\}, r}(a) \in V(P)\}$. Since $\mathcal{P}$ decomposes $T_{G,U,r}$, the sets $\{A_P\}_{P \in \mathcal{P}}$ is a partition of $A_{max}$. It then suffices to prove $|\cup_{(S,v,r) \in \mathcal{S}} S \cap A_{P}|\geq |A_P|/2$ for $P\in \mathcal{P}$. The case $A_P=\emptyset$ is trivial, so let $P\in\mathcal{P}$ be a path such that $A_P\neq \emptyset$ in the rest part of the proof. 
    
    We call a vertex $v\in T_{G,A\cup \{r\},r}$ a cut vertex if $v=c_{G,A\cup \{r\},r}(u)$ for some vertex $u\in U$. 
    We decreasingly order all cut vertices in $P$ by their depth in the rooted tree $T_{G,A\cup \{r\},r}$ and denote them by $\{p_1,\ldots,p_l\}$. 
    %Let $A_i:=\{v\in A_P:c_{G,A\cup \{r\},r}(v)=p_i\}$ for $i=1,\dots,l$ and let $A_0=\emptyset$, then $\{A_i\}_{i=1}^l$ forms a partition of $A_P$. 
    Let $V_i:=T_{G,A\cup\{r\},r}[c_{G,A\cup\{r\},r}(p_i)]$ for $i=1,\dots,l$ and let $A_0=\emptyset$. Then by \Cref{fact:xtorootPathinGHtree}(\ref{subfact:monotonePathCutvertexCC}), we have $\emptyset=V_0\subsetneq V_1\subsetneq \cdots \subsetneq V_l=A_P$.
    Let \[i_0:=\min\{i\leq l:|V_i|\geq |A_P|/2\}-1.\]
    Let \[
    i_k:=\max\{i_0<i\leq l:\lambda(r,p_i)\leq 2^{k}\}\quad \text{for }k=1,\dots,\lceil\log_2 \tau\rceil.
    \]
    By \Cref{fact:xtorootPathinGHtree}(\ref{subfact:monotonePathCC}), $\lambda(p_i,r)$ increases with $i$, so we have $i_0<i_1\leq\cdots\leq i_{\lceil\log_2\tau\rceil}$. Since $V_{i_{\lceil\log_2\tau\rceil}}=V_l=A_P$, the set family $\{V_{i_k}\setminus V_{i_{k-1}}:k\geq 1\}$ forms a partition of $A_P\setminus V_{i_0}$, whose size is at least $|A_P|/2$, by definition of $i_0$. 
    Therefore, there exists some $i_k$ such that $|V_{i_k}\setminus V_{i_{k-1}}|\geq |A_P\setminus V_{i_0}|/\lceil\log_2\tau\rceil>|A_P|/(2\log_2(nW)).$
    Note that vertices in $V_{i_k}\setminus V_{i_{k-1}}$ have cut vertices in $\{p_{i_{k-1}+1},\dots,p_{i_k}\}$. It then follows from \Cref{fact:xtorootPathinGHtree}(\ref{subfact:monotonePathCutvertexCC}) that for any $v\in V_{i_k}\setminus V_{i_{k-1}}$, we have $M_{G,v,r}\supseteq V_{i_{k-1}+1}\supseteq V_{i_0+1}$, whose size is at least $|A_P|/2$ by definition of $i_0$. 
    Let $\tau'=2^{k-1}$ in the rest of the proof.
    Consider the the foreach-loop with $\tau'$ in \Cref{alg:DecompFirstStep}, we denote by $A_j$ the set $A'$ after the $j$-th iteration in the for-loop.
    By \Cref{clm:pruneBigLeaf}, it suffices to find some $j\leq \log_2 n$ such that
    \[
    |(V_{i_k}\setminus V_{i_{k-1}})\cap A_j|>0\text{ and }|V_{i_k}\cap A_j| \leq L.
    \]
    Let $Y=f^{-1}_{G,A\cup \{r\},r}(V_{i_k}\setminus V_{i_{k-1}})$ and $Z=f^{-1}_{G,A\cup \{r\},r}(V_{i_k})$. We then need to show, equivalently, that \[
     |Y\cap A_j|>0\text{ and }|Z\cap A_j| \leq L.
    \]
    When $|Z\cap A_0|\leq L$, we are done. We then assume $|Z\cap A_0|>L$.
    We will apply the two bounds in \Cref{lem:decompBound} to $Z$ and $Y$ respectively. We start by bounding $w(\partial_GZ)$ and $w(\partial_GY)$ and verifying $\tau$-connectivity of $Y\cap A_0$.
    \begin{itemize}
        \item By definition of $V_{i_k}$, we have $w(\partial_GZ) = \lambda(r,p_{i_k})\leq 2\tau'$.
        \item Since $Y=f^{-1}_{G,A\cup \{r\},r}(V_{i_k}\setminus V_{i_{k-1}})=f^{-1}_{G,A\cup \{r\},r}(V_{i_k})\setminus f^{-1}_{G,A\cup \{r\},r}(V_{i_{k-1}})$, we have \[
        w(\partial_GY) \leq w(\partial_G f^{-1}_{G,A\cup \{r\},r}(V_{i_k}))+w(\partial_Gf^{-1}_{G,A\cup \{r\},r}(V_{i_{k-1}}))\leq 2\tau'+\tau'=3\tau'.
        \]
        \item For any two vertices in $Y\cap A_0=V_{i_k}\setminus V_{i_{k-1}}$, we have $\lambda(x,r),\lambda(y,r)\geq \tau'$, it then follows by transitivity of $\lambda$ that $\lambda(x,y)\geq \tau$, so $Y\cap A_0$ is $\tau$-connected.
    \end{itemize}
    Applying \Cref{lem:decompBound} with $j=\lfloor \log_2\left(\frac{|Y \cap A_0|}{160 \log_2(n) \cdot (3/\psi)}\right) \rfloor$ to $Z$ and $Y$ then gives us \begin{align*}
       |Z\cap A_j|&\leq  \frac{160 \log_2n \cdot (3/\psi)}{|Y \cap A_0|}\cdot |Z\cap A_0|+\frac{2w(\partial_GZ)}{\psi\cdot \tau'}\leq \frac{960\log_2n\cdot\log_2(nW)}{\psi}+\frac{4}{\psi}\leq L\\
       |Y\cap A_j|&\geq \frac{160 \log_2n \cdot (3/\psi)}{|Y \cap A_0|}\cdot\left(1-\frac1{5\log n}\right)^{\log_2n}\cdot |Y\cap A_0|\geq\frac{240\log_2n}{\psi}>0,
    \end{align*}
    concluding the proof.
\end{proof}

\paragraph{Runtime of \Cref{alg:DecompFirstStep}.} Again, bounding the runtime of the algorithm is relatively straightforward.
\begin{claim}
    \Cref{alg:DecompFirstStep} takes time $m^{1+o(1)}$.
\end{claim}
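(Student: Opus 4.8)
The statement to prove is that \Cref{alg:DecompFirstStep} runs in deterministic time $m^{1+o(1)}$. The plan is to account for the work done in the two nested loops. The outer foreach-loop runs over $\tau' \in \{2^0, 2^1, \ldots, 2^{\lceil \log \tau \rceil}\}$, which is $O(\log(nW))$ iterations since $\tau \leq nW$. The inner for-loop runs $\log_2 n$ times. So there are $\tilde{O}(1)$ iterations of the loop body in total, and it suffices to show each iteration of the body takes $m^{1+o(1)}$ time.

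The body of the loop does two things: (1) it calls $\textsc{RemoveLeafFirstStep}(A', r, \tau, L)$, and (2) it computes an expander decomposition $\textsc{ComputeExpanderDecomp}(G, \psi \cdot \tau', A')$ and then, for each cluster $X$ in the returned partition $\mathcal{X}$, removes $\lceil |A' \cap X|/2 \rceil$ vertices from $A' \cap X$. For (1), I would invoke \Cref{clm:pruneBigLeaf}, which states that $\textsc{RemoveLeafFirstStep}$ runs in time $m^{1+o(1)} \cdot L$; since $L = 1000 \log_2^2(nW)/\psi$ and $1/\psi = \gamma_{expDecomp} \cdot 20 \log_2(n) = m^{o(1)}$, we have $L = m^{o(1)}$ and hence this call takes $m^{1+o(1)}$ time. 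For (2), \Cref{thm:expanderDecomp} gives that the expander decomposition is computed in deterministic time $m^{1+o(1)}$, and the subsequent removal step merely iterates over the partition $\mathcal{X}$ of $V$ and does $O(|X|)$ work per cluster, for a total of $O(n)$ time (one can identify the vertices of $A' \cap X$ and drop half of them in linear time). Summing, each iteration of the body is $m^{1+o(1)}$.

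Multiplying the per-iteration cost $m^{1+o(1)}$ by the $O(\log(nW) \cdot \log n) = \tilde{O}(1)$ total iterations, and using that $nW$ is polynomially bounded in $m$ so that $\log(nW) = O(\log m)$, the total running time of \Cref{alg:DecompFirstStep} is $m^{1+o(1)}$, as claimed. I expect no real obstacle here: every expensive subroutine has already been bounded (isolating cuts, hit-and-miss construction inside $\textsc{RemoveLeafFirstStep}$, and expander decomposition), and the only bookkeeping is to confirm that the parameters $\psi$ and $L$ are of subpolynomial magnitude so that the factor-$L$ overhead in \Cref{clm:pruneBigLeaf} is absorbed into the $m^{o(1)}$ term. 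If anything needs care, it is just making explicit that $1/\psi$ and $L$ are $m^{o(1)}$ and that the loop count is polylogarithmic, which together ensure all overheads are subpolynomial.
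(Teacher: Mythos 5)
Your proof is correct and follows essentially the same structure as the paper's: bound the (polylogarithmic) number of loop iterations and show each iteration costs $m^{1+o(1)}$, using \Cref{thm:expanderDecomp} for the expander decomposition and the subpolynomiality of $L$ and $1/\psi$ to absorb the overhead from $\textsc{RemoveLeafFirstStep}$. The one cosmetic difference is that you invoke the runtime stated in \Cref{clm:pruneBigLeaf} directly for $\textsc{RemoveLeafFirstStep}$, whereas the paper re-derives that bound inline from \Cref{thm:hitAndMiss} and \Cref{lma:isoCuts}; both are valid.
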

\begin{proof}
    It suffices to show each inner iteration takes time $m^{1+o(1)}$. Note that each iteration in \Cref{alg:DecompFirstStep} calls \textsc{ComputeExpanderDecomp} and \Cref{alg:RemoveLeafFirstStep} once. By \Cref{thm:expanderDecomp},  \textsc{ComputeExpanderDecomp} takes time $m^{1+o(1)}$. For \Cref{alg:RemoveLeafFirstStep}, we note that $L=\tilde{O}(1/\psi)$. It then follows from \Cref{thm:hitAndMiss} that \textsc{ConstructHitAndMissFamily} takes time $\tilde{O}(m^{1+o(1)}L^{O(1)})=m^{1+o(1)}/\psi^{O(1)}$ and the size of $\mathcal{H}$ is $m^{o(1)}/\psi^{O(1)}$. Since \Cref{alg:RemoveLeafFirstStep} calls \textsc{ComputeIsolatingCuts} and SSMC once for each function in $\mathcal{H}$, and each call of \textsc{ComputeIsolatingCuts} and SSMC takes time $m^{1+o(1)}$. The total time cost of \Cref{alg:RemoveLeafFirstStep} is $m^{1+o(1)}/\psi^{O(1)}$.  
\end{proof}

\subsection{Decomposition of \texorpdfstring{$\mathcal{C}_{G,U,\tau}(r)$}{C\_\{G, U, tau\}(r)}}\label{subsec:decomposition2}
Let $B=\{v\in C\setminus \mathcal{C}_{G,U,\tau+1}(r):m_{G,U,r}(v)\leq \frac{15}{16}|U|\}.$
We claim that the size of $B$ is $\Theta(|U|)$ when $C$ is large enough, justifying our choice of $r$ as an arbitrary vertex in $C$.

%All vertices in C_{G,U,\tau}(r) are good pivots when C_{G,U,\tau}(r) is large enough
\begin{claim}\label{claim:goodPivot}
    Suppose $|C| \geq \frac{15}{16}|U|$ and the size of the largest $(\tau+1)$-connected component of $G$ w.r.t. $U$ is smaller than $\frac34|U|$, then $|B|\geq \frac18|U|$.
\end{claim}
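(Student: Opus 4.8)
The plan is to analyze the rooted minimal Gomory-Hu $U$-Steiner tree $T_{G, U, r}$ together with the partition of $U$ it induces via the minimal mincuts $M_{G, v, r}$. First I would set $C^+ = \mathcal{C}_{G, U, \tau+1}(r)$, which by hypothesis satisfies $|C^+| < \frac{3}{4}|U|$, and I would consider the set $C \setminus C^+$. Since $|C| \geq \frac{15}{16}|U|$, we have $|C \setminus C^+| \geq \frac{15}{16}|U| - \frac{3}{4}|U| = \frac{3}{16}|U|$, so there is already a healthy supply of vertices in $C$ that are $\tau$-connected but not $(\tau+1)$-connected to $r$. The goal is then to show that among these, at least $\frac{1}{8}|U|$ have $m_{G, U, r}(v) \leq \frac{15}{16}|U|$, i.e. their minimal $(v,r)$-mincut omits at least a $\frac{1}{16}$-fraction of the terminals.

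The key step is to use the laminar/nesting structure of the minimal mincuts (\Cref{fact:laminarFamily} and \Cref{fact:xtorootPathinGHtree}(\ref{subfact:MCnesting},\ref{subfact:monotonePathSizeCC})). Consider the vertices $v \in C \setminus C^+$ whose minimal mincut is \emph{large}, say $B' = \{v \in C \setminus C^+ : m_{G, U, r}(v) > \frac{15}{16}|U|\}$; I want to bound $|B'|$. For each such $v$, the set $U \setminus M_{G, v, r}$ has size less than $\frac{1}{16}|U|$ and contains $r$. By the nesting property, the sets $M_{G, v, r}$ for $v \in B'$ form a laminar family where every member has size $> \frac{15}{16}|U|$; two sets each of size $>\frac{15}{16}|U|$ cannot be disjoint, so the family is actually a chain. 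Hence there is a $\subseteq$-minimal element $M^* = M_{G, v^*, r}$ among $\{M_{G, v, r} : v \in B'\}$, and every $v \in B'$ lies on the $v^*$-to-$r$ path region, i.e. $M_{G, v, r} \supseteq M^*$ and $v^* \in M^* \subseteq M_{G, v, r}$. In particular, all of $B'$ is contained in $M_{G, v^*, r}$. Now the crucial observation: $M_{G, v^*, r} \setminus \{v^*\}$... more precisely, I need that the vertices of $B'$ lie inside a subtree of $T_{G, U, r}$ that, after removing its minimum-weight edge (of weight $\lambda_G(v^*, r) \leq \tau$ since $v^* \notin C^+$... wait, $v^* \in C$ so $\lambda_G(v^*,r) \geq \tau$, and $v^* \notin C^+$ gives $\lambda_G(v^*, r) < \tau+1$, hence $\lambda_G(v^*, r) = \tau$ when weights are integral). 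So $M_{G, v^*, r}$ is a $\tau$-connected set: every pair in it has mincut to $r$ of value $\geq \tau$... actually every pair $u, u'$ with $M_{G,u,r}, M_{G,u',r} \subseteq M_{G,v^*,r}$ still only tells us connectivity to $r$. Instead I would argue directly that $M_{G, v^*, r} \cap U$ induces a set all of whose vertices are pairwise $\tau$-connected via transitivity through $v^*$? That is not automatic. The cleaner route: observe $M_{G, v^*, r}$ has $|M_{G, v^*, r} \cap U| > \frac{15}{16}|U| > \frac{3}{4}|U|$, and I claim $M_{G, v^*, r} \cap U$ is $(\tau+1)$-connected, contradicting the hypothesis that the largest $(\tau+1)$-connected component has size $< \frac{3}{4}|U|$ — unless $B'$ is small.

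The main obstacle — and where I'd spend the care — is establishing that $M_{G, v^*, r} \cap U$ (or a large subset of it) really is $(\tau+1)$-connected w.r.t. $U$. The idea is: for $v \in B'$, since $M_{G, v, r} \supseteq M_{G, v^*, r}$ and $M_{G, v, r} \neq M_{G, v^*, r}$ in general but the \emph{cut edge} $c_{G, U, r}(v)$ lies on the $c_{G, U, r}(v^*)$-to-$r$ path, the edge inducing $M_{G, v^*, r}$ has weight $\lambda_G(v^*, r) = \tau$, so all edges strictly inside the subtree $T_{G, U, r}[c_{G, U, r}(v^*)]$ have weight $\geq \tau$; those of weight exactly $\tau$ must be handled. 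If in fact $v^*$ can be chosen so that all interior edges have weight $\geq \tau+1$, then the subtree is $(\tau+1)$-connected and we contradict the hypothesis, forcing $B' \subseteq$ (that subtree minus its $(\tau+1)$-connected core) which is small. To make this work I would instead take $v^*$ to be a vertex in $C \setminus C^+$ at \emph{minimum} depth (closest to $r$) with large minimal mincut, and use \Cref{fact:xtorootPathinGHtree}(\ref{subfact:inCC},\ref{subfact:notInCC}) to identify $M_{G, v^*, r} \cap C^+$ as a $(\tau+1)$-connected set of size $> \frac{15}{16}|U| - |B'| - (\text{small})$; if $|B'| > \frac{3}{16}|U|$ then $|C \setminus C^+| \cap B' $ pushes $|C^+ \cap M_{G,v^*,r}|$ structure... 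Concretely, I would conclude $|B| = |C \setminus C^+| - |B'| \geq \frac{3}{16}|U| - |B'|$, and separately bound $|B'| \leq \frac{1}{16}|U|$ by the $(\tau+1)$-connectivity contradiction, yielding $|B| \geq \frac{3}{16}|U| - \frac{1}{16}|U| = \frac{1}{8}|U|$, as claimed.
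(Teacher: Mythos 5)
Your skeleton matches the paper up to a point: take $D = \{v \in C \setminus C^+ : m_{G,U,r}(v) > \frac{15}{16}|U|\}$ (your $B'$), note the sets $\{M_{G,v,r} : v \in D\}$ are nested and hence form a chain, and pick the $\subseteq$-minimal element $M^* = M_{G,v^*,r}$. Two steps then go wrong. First, ``all of $B'$ is contained in $M_{G,v^*,r}$'' does not follow from nesting: a vertex $v$ with $M_{G,v,r} \supsetneq M^*$ can lie anywhere in $M_{G,v,r}$, not necessarily in $M^*$. Second, and decisively, the set $M_{G,v^*,r} \cap C^+$ you invoke in your final accounting is empty: every $x \in M_{G,v^*,r}$ has $r \notin M_{G,v^*,r}$ and hence $\lambda_G(x,r) \leq w(\partial_G M_{G,v^*,r}) = \tau$, so no such $x$ lies in $C^+ = \mathcal{C}_{G,U,\tau+1}(r)$. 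The planned contradiction via a large $(\tau+1)$-connected subset of $M^*$ never materializes, and the target bound $|D| \leq \frac{1}{16}|U|$ is not established (it is also not clear that $|D|$ must be small at all, since many vertices can share a cut vertex and hence the same large minimal mincut).

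The paper's resolution replaces $C^+$ by the $(\tau+1)$-connected component of the \emph{cut vertex} $v' = c_{G,U,r}(v^*)$ (a vertex of $C \setminus C^+$, so this is a different component than $C^+$) and lower-bounds $|B|$ directly via the inclusion $M_{G,v^*,r} \cap C \setminus \mathcal{C}_{G,U,\tau+1}(v') \subseteq B$; this gives $|B| \geq |M_{G,v^*,r} \cap U| - |U \setminus C| - |\mathcal{C}_{G,U,\tau+1}(v')| > \frac{15}{16}|U| - \frac{1}{16}|U| - \frac{3}{4}|U| = \frac{1}{8}|U|$. The inclusion is exactly what your sketch omits and where the real work lies: for such $x$ one argues $\lambda_G(x,v') = \tau$, so there is a weight-$\tau$ edge on the $x$-to-$v'$ path inside $T_{G,U,r}[v']$; cutting it yields an $(x,r)$-mincut $W$ with $W \cap U \subsetneq M_{G,v^*,r} \cap U$ (since $v' \notin W$); uniqueness of the minimal mincut then forces $M_{G,x,r} \subseteq W \subsetneq M^*$, which contradicts $\subseteq$-minimality of $M^*$ if $x \in D$, so $x \in B$. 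Without this uncrossing step the size bound on $|B|$ is not obtained.
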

% \wuwei{Perhaps a simpler proof: We claim $B = C \setminus \mathcal{C}_{G, U, \tau+1}(r)$, so $\abs{B} = \abs{C} - \abs{\mathcal{C}_{G, U, \tau+1}(r)} \geq \frac{15}{16}\abs{U} - \frac{3}{4}\abs{U} = \frac{3}{16}\abs{U}$. This is because for any $v \in C \setminus \mathcal{C}_{G, U, \tau+1}(r)$, $\lambda_G(v, r)\leq \tau$. As }
\begin{proof}
   Denote the complement of $B$ in  $C\setminus \mathcal{C}_{G,U,\tau+1}(r)$ by $D$. 
   If $D=\emptyset$, we have $$|B|=|C\setminus \mathcal{C}_{G,U,\tau+1}(r)|\geq \frac{15}{16}|U|-\frac{3}{4}|U|\geq \frac{1}{8}|U|.$$ We may then assume $D\neq \emptyset$.
   By \Cref{fact:xtorootPathinGHtree}(\ref{subfact:MCnesting}), for any two vertices $x,y\in D$, we have either $M_{G,x,r}\subseteq M_{G,y,r}$, $M_{G,x,r}\subseteq M_{G,y,r}$, or $M_{G,x,r}\cap M_{G,y,r}=\emptyset$. However, the last case can never happen since $|M_{G,x,r}\cap U|,|M_{G,y,r}\cap U|\geq \frac{15}{16}|U|$. So the collection $\{M_{G,x,r}:x\in D\}$ is nesting. Let $M_{G,v,r}$ be a minimal set in the collection and let $v'=c_{G,U,r}(v)$, then $M_{G,v,r}=f_{G,U,r}^{-1}(T_{G,U,r}[v'])$. We claim $M_{G,v,r}\cap  C\setminus \mathcal C_{G,U,\tau+1}(v')\subseteq B$, hence 
   $$
   |B|\geq |M_{G,v,r}\cap  C\setminus \mathcal C_{G,U,\tau+1}(v')|\geq |M_{G,v,r}\cap U|-|U\setminus C|-|C_{G,U,\tau+1}(v')|\geq \frac{15}{16}|U|-\frac{1}{16}|U|-\frac{3}{4}|U|= \frac{1}{8}|U|.
   $$
   It remains to prove $M_{G,v,r}\cap \mathcal C_{G,U,\tau}(r)\setminus \mathcal C_{G,U,\tau+1}(v')\subseteq B$. Let $x\in M_{G,v,r}\cap \mathcal C_{G,U,\tau}(r)\setminus \mathcal C_{G,U,\tau+1}(v')$. It follows from \Cref{fact:xtorootPathinGHtree}(\ref{subfact:MCnesting}) that $x\in M_{G,x,r}\cap U\subseteq M_{G,v,r}\cap U=T_{G,U,r}[v']$. Since $x\notin \mathcal{C}_{G,U,\tau+1}(v')$, we have $\lambda_G(x,v)=\tau$, so there exists an edge $e$ with weight $\tau$ on the path connecting $x,v'$ in $T_{G,U,r}$. 
   Cutting this edge and taking the $x$-side yields a subtree of $T_{G,U,r}[v']$, whose preimage under $f_{G,U,r}$ is a $(x,v')$-mincut with weight $\tau$. Let $W$ denote the $x$-side of this cut. Since $v'$ is the root of the tree $T_{G,U,r}[v']$, the edge $e$ cannot be on the path $T_{G,U,r}[v,r]$, so $r\in V\setminus W$ is on the same side with $v'$.
    Since $x\in C\setminus \mathcal{C}_{G,U,\tau+1}(r)$, we have $\lambda_G(x,r)=\tau$. Therefore, $W$ is also an $(x,r)$-mincut. By minimality of $M_{G,x,r}$, we have \[
   M_{G,x,r}\cap U\subseteq W\cap U\subsetneq M_{G,v,r}\cap U.
   \]
   Recall that $M_{G,v,r}$ is the minimal set in $\{M_{G,y,r}:y\in D\}$, we consequently have $x\notin D$, i.e., $x\in B$.
\end{proof}
   
Let $\mathcal{P}$ be the path decomposition of $T_{G,U,r}$ as in \Cref{thm:pathDecomp} and let $P\in \mathcal{P}$ be any path in this decomposition. Consider $x,y\in P$ with $y$ closer to $r$. By \Cref{fact:xtorootPathinGHtree}(\ref{subfact:inCC}), if $x\in \mathcal{C}_{G,U,\tau+1}(r)$, we have $y\in \mathcal{C}_{G,U,\tau+1}(r)$. By \Cref{fact:xtorootPathinGHtree}(\ref{subfact:monotonePathSizeCC}), if $m_{G,U,r}(x)>\frac{15}{16}|U|$, we have $m_{G,U,r}(y)>\frac{15}{16}|U|$. Therefore, if one vertex $x\in P=T_{G,U,r}[r',r'']$ (where $r'$ is closer to $r$) is in $C\setminus B$, then $T_{G,U,r}[r',x]\subseteq C\setminus B$. For any path $P$ crossing $C\setminus B$, we can then split it into two path $P',P''$ such that $P'\cap (C\setminus B)=\emptyset$ and $P''\subseteq C\setminus B$. Let $\mathcal{P'}$ denote the path set obtained by splitting all paths in $\mathcal{P}$ crossing $C\setminus B$ and then removing all paths in $C\setminus B$. Then $\mathcal{P}'$ forms a partition of $U\setminus(C\setminus B)$, and it still satisfies the second properties in \Cref{thm:pathDecomp}. We can then define the distance function $d_{\mathcal{P}'}$ for vertices in $B$, as \Cref{def:distanceGHT}. Note that we have $d_{\mathcal{P}'}(v)\leq d_{\mathcal{P}}(v)\leq C_{pathDecomp}\cdot\log n$ by construction.

The following two results are similar to \Cref{clm:pruneBigLeaf} and \Cref{clm:decomp1Succeed}, which show calling \Cref{alg:DecompSecondStep} to $A$ always cover at least half vertices in $B\cap A$ at largest distance.
\begin{algorithm}
$\psi \gets 1/(\gamma_{expDecomp} \cdot 20\log_2(n))$; $L \gets 1000 \log_2^2(nW)/\psi $; $\mathcal{S} \gets \emptyset$.\\
\For{$k = 1, \ldots, \log_2 n$}{            $\mathcal{S} \gets \mathcal{S} \cup \textsc{RemoveLeafSecondStep}(A, r, \tau, L)$.    \label{lne:noSmallComponents} \\
$\mathcal{X} \gets \textsc{ComputeExpanderDecomp}(G, \psi \cdot \tau, A)$.\\
    \ForEach{$X \in \mathcal{X}$}{
        Remove $\lceil |A \cap X|/2 \rceil$ vertices from $A \cap X$ from the set $A$.
    } 
}
\tcc{For all cuts that were pruned off, remove the contained terminals from the active vertex set.}
\Return $\mathcal{S}$
\caption{$\textsc{DecompSecondStep}(G, A, r, \tau)$}
\label{alg:DecompSecondStep}
\end{algorithm}

\begin{algorithm}
$\mathcal{S}' \gets \emptyset$.\\
$\mathcal{H} \gets \textsc{ConstructHitAndMissFamily}(A', L, 2)$.\tcc{See \Cref{thm:hitAndMiss}.}
    \ForEach{$h \in \mathcal{H}$}{
        $A_h \gets \{ a \in A' \;|\; h(a) = 1\} \cup \{r\}$.\\
        $\{S_v\}_{v \in A_h} \gets \textsc{ComputeIsolatingCuts}(\{\{v\} \;|\; v \in A_h\})$. \tcc{See \Cref{lma:isoCuts}.}
        \tcc{If the cut is an $(v,r)$-mincut with small size, then add it to $\mathcal{S}'$}
        \ForEach{$v \in A_h \setminus \{r\}$ where $w(\partial S_v) = \tau$ and $|S_v\cap U|\leq \frac{15}{16}|U|$}{
            $\mathcal{S}' \gets \mathcal{S}' \cup \{ (S_v, v,r)\}$.
        }
    }
\Return $\mathcal{S}'$
\caption{$\textsc{RemoveLeafSecondStep}(A', r, \tau, L)$}
\label{alg:RemoveLeafSecondStep}
\end{algorithm}

\begin{claim}\label{clm:pruneBig2Leaf}
An invocation of $\textsc{RemoveLeafSecondStep}(A', r, \tau, L)$ returns in deterministic time $m^{1+o(1)} \cdot L$ a set $\mathcal{S}'$ such that 
\begin{itemize}
    \item every triple $(S,v,r) \in \mathcal{S}'$ has $v \in B\cap A'$ and $S$ is the minimal $(v,r)$-mincut, and 
    \item for every vertex $v \in B\cap A'$  and $|M_{G, v, r} \cap A'| \leq L$, then the triple $(M_{G, v, r}, v, r)$ is present in $\mathcal{S}'$.
\end{itemize}
\end{claim}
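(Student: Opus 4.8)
The plan is to reuse the proof of \Cref{clm:pruneBigLeaf} essentially verbatim, adding two observations that are needed because \Cref{alg:RemoveLeafSecondStep} drops the explicit SSMC-check ``$\lambda_G(r,v)=w(\partial S_v)$'' of \Cref{alg:RemoveLeafFirstStep} in favour of the two conditions ``$w(\partial S_v)=\tau$ and $|S_v\cap U|\le\frac{15}{16}|U|$''. Throughout I will use that, in the regime of \Cref{subsec:decomposition2}, the active set passed to this routine is contained in $C=\mathcal{C}_{G,U,\tau}(r)$, so that $\lambda_G(v,r)\ge\tau$ for every $v\in A'$. The running-time bound is immediate: by \Cref{thm:hitAndMiss} with parameters $(|A'|,L,2)$ the family $\mathcal{H}$ has size $(L\log n)^{O(1)}$ and is built in time $|A'|\cdot(L\log n)^{O(1)}$, and for each $h\in\mathcal{H}$ we run \textsc{ComputeIsolatingCuts} once in time $m^{1+o(1)}$ by \Cref{lma:isoCuts} followed by a single linear scan, giving $m^{1+o(1)}\cdot L$ in total (recall $L=m^{o(1)}$).

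For the first bullet I would fix a triple $(S_v,v,r)\in\mathcal{S}'$ coming from the iteration of some $h\in\mathcal{H}$. Since $v\in A_h\setminus\{r\}$ and $A_h=\{a\in A':h(a)=1\}\cup\{r\}$, we have $v\in A'$, and $r\in A_h\setminus\{v\}$; by \Cref{lma:isoCuts} the cut $S_v$ is the vertex-minimal $(\{v\},A_h\setminus\{v\})$-mincut, so $v\in S_v$ and $S_v\cap(A_h\setminus\{v\})=\emptyset$, in particular $S_v$ is a $(v,r)$-cut. The filter forces $w(\partial S_v)=\tau$, hence $\lambda_G(v,r)\le\tau$; combined with $\lambda_G(v,r)\ge\tau$ (from $A'\subseteq C$) this gives $\lambda_G(v,r)=\tau$, so $v\in C\setminus\mathcal{C}_{G,U,\tau+1}(r)$ and $S_v$ is a $(v,r)$-mincut. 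Next I would invoke the standard submodularity fact that $M_{G,v,r}\subseteq S$ for any $(v,r)$-mincut $S$ (apply submodularity of $\delta_G$ to $M_{G,v,r}$ and $S$, observe $M_{G,v,r}\cap S$ is again a $(v,r)$-mincut, and use vertex-minimality of $M_{G,v,r}$); applied with $S=S_v$ and using $S_v\cap(A_h\setminus\{v\})=\emptyset$, it shows that $M_{G,v,r}$ is itself a $(\{v\},A_h\setminus\{v\})$-cut of value $\tau=\lambda_G(v,A_h\setminus\{v\})$ (every $(\{v\},A_h\setminus\{v\})$-cut is a $(v,r)$-cut, so the latter connectivity equals $\lambda_G(v,r)=\tau$), and that it is vertex-minimal among those. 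By uniqueness (\Cref{fact:minimalCutsUnique}), $S_v=M_{G,v,r}$, and then the second filter condition reads $m_{G,U,r}(v)=|S_v\cap U|\le\frac{15}{16}|U|$, so $v\in B$. Hence every recorded triple has the claimed form.

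For the second bullet I would fix $v\in B\cap A'$ with $|M_{G,v,r}\cap A'|\le L$ and apply \Cref{thm:hitAndMiss} on the ground set $A'$ with ``miss'' set $X=(M_{G,v,r}\cap A')\setminus\{v\}$ (of size $\le L$) and ``hit'' set $\{v\}$ (disjoint from $X$) to obtain $h\in\mathcal{H}$ with $h\equiv 0$ on $X$ and $h(v)=1$. Then $v,r\in A_h$ and, since $r\notin M_{G,v,r}$ while all of $(A'\cap M_{G,v,r})\setminus\{v\}=X$ is killed by $h$, we get $A_h\cap M_{G,v,r}=\{v\}$; \Cref{fact:isolatingCutsWorkIfGHSubtreeIsHitOnce} then yields $S_v=M_{G,v,r}$. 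Since $v\in B$ gives $\lambda_G(v,r)=\tau$ and $m_{G,U,r}(v)\le\frac{15}{16}|U|$, both filter conditions pass, so $(M_{G,v,r},v,r)\in\mathcal{S}'$.

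The only non-routine point is the first bullet: because the routine never confirms $w(\partial S_v)=\lambda_G(v,r)$ by a flow computation, one must deduce $\lambda_G(v,r)=\tau$ from the value filter together with $A'\subseteq C$, and then strengthen ``$S_v$ is a $(v,r)$-mincut'' to ``$S_v=M_{G,v,r}$'' through the containment $M_{G,v,r}\subseteq S_v$ valid for every $(v,r)$-mincut. Everything else transcribes directly from \Cref{clm:pruneBigLeaf}.
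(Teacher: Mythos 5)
Your proof is correct and follows the same overall route as the paper's (which simply points back to \Cref{clm:pruneBigLeaf} after noting that $v\in B$ forces the filter conditions to pass). You usefully make explicit what the paper leaves implicit: because \Cref{alg:RemoveLeafSecondStep} replaces the SSMC-check of \Cref{alg:RemoveLeafFirstStep} by the pair of conditions $w(\partial S_v)=\tau$ and $|S_v\cap U|\le\frac{15}{16}|U|$, establishing the first bullet requires deducing $\lambda_G(v,r)=\tau$ from $A'\subseteq C$ and then promoting ``$S_v$ is a $(v,r)$-mincut'' to ``$S_v=M_{G,v,r}$'' via the submodularity/uniqueness argument, which the paper's one-line reference does not spell out.
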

\begin{proof} 
    We mark that $v\in B$ implies $w(\partial_G S_v)=\tau$ and $|S_v|\leq \frac{15}{16}|U|$. The remaining proof is the same as \Cref{clm:pruneBigLeaf}.
\end{proof}

\begin{claim}\label{clm:decomp2Succeed}
An invocation of $\textsc{DecompSecondStep}(G, A, r, \tau)$, for $G=(V,E,w)$ and $A \subseteq C$, returns in deterministic time $m^{1+o(1)}$ a set $\mathcal{S}$ such that 
\begin{itemize}
    \item every triple $(S,v,r) \in \mathcal{S}$ has $v \in B\cap A$ and $S$ is the minimal $(v,r)$-mincut, and 
    \item let $A_{max} = \{ a \in B\cap A \;|\; d_{\mathcal{P}'}(a) = \max_{a' \in B\cap A} d_{\mathcal{P}'}(a)\}$ be the set of vertices in $B\cap A$ at maximum distance from $r$, we have 
    \[    
    |\cup_{(S,v,r) \in \mathcal{S}} S \cap A_{max}| \geq |A_{max}|/2
    \]
    i.e. at least a constant fraction of vertices from $A_{max}$ are removed by the cuts in $\mathcal{S}$.
\end{itemize}
\end{claim}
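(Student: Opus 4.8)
## Proof Plan for Claim \ref{clm:decomp2Succeed}

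The plan is to mirror the structure of the proof of \Cref{clm:decomp1Succeed}, with the modifications needed to handle the fact that we are now working inside the large $\tau$-connected component $C$ rather than outside it, and that we only want to cover vertices of $B$ (those with $m_{G,U,r}(v) \leq \tfrac{15}{16}|U|$ and $v \in C \setminus \mathcal{C}_{G,U,\tau+1}(r)$). The first bullet follows immediately from \Cref{clm:pruneBig2Leaf}, since every triple added to $\mathcal{S}$ comes from a call to $\textsc{RemoveLeafSecondStep}$ and that claim guarantees both $v \in B \cap A$ and that $S$ is the minimal $(v,r)$-mincut. So the work is entirely in the second bullet.

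For the covering statement, I would first set $d_{max} = \max_{a' \in B \cap A} d_{\mathcal{P}'}(a')$ and decompose $A_{max}$ along the path partition $\mathcal{P}'$: for each $P \in \mathcal{P}'$ let $A_P = \{ a \in A_{max} \mid c_{G, A \cup \{r\}, r}(a) \in V(P)\}$; since $\mathcal{P}'$ partitions $U \setminus (C \setminus B)$ and every $a \in B$ lies in this set, $\{A_P\}_{P \in \mathcal{P}'}$ partitions $A_{max}$ and it suffices to cover at least half of each nonempty $A_P$. Fix such a $P$. Now I would order the cut vertices on $P$ by decreasing depth, define the nested prefix subtrees $V_0 \subsetneq V_1 \subsetneq \cdots \subsetneq V_l = A_P$ exactly as before (using \Cref{fact:xtorootPathinGHtree}(\ref{subfact:monotonePathCutvertexCC})), pick $i_0$ so that $V_{i_0}$ is just below half the size of $A_P$, and then for $k = 1, \dots, \lceil \log_2 \tau \rceil$ let $i_k = \max\{ i_0 < i \leq l : \lambda_G(r, p_i) \leq 2^k\}$. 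Since $\lambda_G(p_i, r)$ is nondecreasing in $i$ by \Cref{fact:xtorootPathinGHtree}(\ref{subfact:monotonePathCC}), the sets $\{V_{i_k} \setminus V_{i_{k-1}}\}$ partition $A_P \setminus V_{i_0}$, which has size $\geq |A_P|/2$, so some block $V_{i_k} \setminus V_{i_{k-1}}$ has size $> |A_P| / (2 \log_2(nW))$; set $\tau' = 2^{k-1}$, so $\tau' = \Theta(\lambda_G(p_i, r))$ for all $p_i$ in this block. Note that here $\tau' \le \tau$ since all the relevant cut values are at most $\tau$ (every $v \in B$ has $\lambda_G(v, r) = \tau$ because $v \in C \setminus \mathcal{C}_{G,U,\tau+1}(r)$), which is why the single expander-decomposition call $\textsc{ComputeExpanderDecomp}(G, \psi \cdot \tau, A)$ in \Cref{alg:DecompSecondStep} suffices and we do not need the outer loop over $\tau'$ that appears in \Cref{alg:DecompFirstStep}.

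The core estimate is then an application of \Cref{lem:decompBound} to the two sets $Z = f^{-1}_{G, A \cup \{r\}, r}(V_{i_k})$ and $Y = f^{-1}_{G, A \cup \{r\}, r}(V_{i_k} \setminus V_{i_{k-1}})$: I would bound $w(\partial_G Z) = \lambda_G(r, p_{i_k}) \leq 2\tau'$ and $w(\partial_G Y) \leq w(\partial_G Z) + w(\partial_G f^{-1}(V_{i_{k-1}})) \leq 3\tau'$, and verify that $Y \cap A_0 = V_{i_k} \setminus V_{i_{k-1}}$ is $\tau$-connected by transitivity of $\lambda_G$ (every such vertex has connectivity $\geq \tau'$ to $r$... here I must be slightly careful: in the second-step setting all these vertices have $\lambda_G(v,r) = \tau$ exactly, since they lie in $C \setminus \mathcal{C}_{G,U,\tau+1}(r)$, so $\tau$-connectivity is immediate). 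Then choosing $j = \lfloor \log_2\!\big( |Y \cap A_0| / (160 \log_2 n \cdot (3/\psi)) \big) \rfloor$, the upper bound of \Cref{lem:decompBound} gives $|Z \cap A_j| \leq L$ and the lower bound gives $|Y \cap A_j| > 0$; by \Cref{clm:pruneBig2Leaf} the iteration $\textsc{RemoveLeafSecondStep}(A_j, r, \tau, L)$ then recovers $(M_{G,v,r}, v, r)$ for the surviving vertex $v \in Y \cap A_j$, and since $M_{G,v,r} \cap U \supseteq V_{i_0 + 1}$ (by the nesting of minimal mincuts along $P$, \Cref{fact:xtorootPathinGHtree}(\ref{subfact:monotonePathSizeCC})) which has size $\geq |A_P|/2$, this single cut already covers at least half of $A_P$ (intersected with $A_{max}$). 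Summing over $P \in \mathcal{P}'$ yields $|\bigcup_{(S,v,r) \in \mathcal{S}} S \cap A_{max}| \geq |A_{max}|/2$. The running time is $m^{1+o(1)}$ by the same argument as in \Cref{clm:decomp1Succeed}: the for-loop runs $\log_2 n$ times and each iteration calls $\textsc{ComputeExpanderDecomp}$ (time $m^{1+o(1)}$ by \Cref{thm:expanderDecomp}) and $\textsc{RemoveLeafSecondStep}$ (time $m^{1+o(1)}/\psi^{O(1)}$ by \Cref{clm:pruneBig2Leaf} and \Cref{thm:hitAndMiss}, with $L = \tilde O(1/\psi)$).

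The main obstacle I anticipate is bookkeeping around the modified path decomposition $\mathcal{P}'$ and the definition of distance $d_{\mathcal{P}'}$: one must check that splitting each path in $\mathcal{P}$ at the boundary of $C \setminus B$ and discarding the $C\setminus B$-portions still yields a valid decomposition of $U \setminus (C\setminus B)$ with the $O(\log n)$-crossing property — this relies on the monotonicity facts \Cref{fact:xtorootPathinGHtree}(\ref{subfact:inCC}) and (\ref{subfact:monotonePathSizeCC}) showing that $C \setminus B$ appears as an upward-closed (toward $r$) segment of each path. The other delicate point, as flagged above, is confirming that working with $\textsc{ComputeExpanderDecomp}(G, \psi\cdot\tau, A)$ (a single fixed expansion parameter) rather than sweeping $\tau' \in \{2^0, \dots, 2^{\lceil \log \tau\rceil}\}$ is sound here; this is fine precisely because \Cref{lem:decompBound} only needs $\tau'$ to be a \emph{lower} bound proportional to the relevant cut values and $w(\partial_G Y), w(\partial_G Z) = O(\tau') = O(\tau)$, so running the decomposition at the larger scale $\tau$ still halves $A$ in each cluster and the crossing bounds degrade only by the constant factor absorbed into the $\psi$ and $L$ constants — though one should double check the constants in the chain of inequalities in \Cref{lem:decompBound} still close with $\tau$ in place of $\tau'$.
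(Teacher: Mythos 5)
Your proposal is correct and follows the same route as the paper, but you unnecessarily carry over the dyadic decomposition over scales $\tau'=2^{k-1}$ from the proof of \Cref{clm:decomp1Succeed}. You yourself observe the key simplification—every $v \in B$ satisfies $\lambda_G(v,r)=\tau$ exactly (as $v\in C\setminus\mathcal{C}_{G,U,\tau+1}(r)$), hence every cut vertex $p_i$ on $P$ has $\lambda_G(p_i,r)=\tau$—but you do not push it through: with this observation the dyadic partition degenerates to a single block equal to $A_P\setminus V_{i_0}$, so one can drop the $i_k$'s entirely and set $Y=f^{-1}_{G,A\cup\{r\},r}(V_P\setminus V_{i_0})$ directly, as the paper does. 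This also makes the scaling mismatch you flag at the end evaporate: the paper invokes \Cref{lem:decompBound} with its $\tau'$ parameter equal to $\tau$, matching the scale $\psi\tau$ at which \Cref{alg:DecompSecondStep} actually runs $\textsc{ComputeExpanderDecomp}$, and the bounds close cleanly because $w(\partial_G Z)=\lambda(r,p_l)=\tau$ and $w(\partial_G Y)\leq w(\partial_G Z)+w(\partial_G f^{-1}(V_{i_0}))\leq 2\tau$. Your $\tau'=2^{k-1}\in[\tau/2,\tau)$ does not match the decomposition scale, so \Cref{lem:decompBound} as stated does not literally apply with that parameter; your gesture that a constant-factor loss absorbs the gap is true but is an extra argument the paper's proof never needs. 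The rest—the reduction to nonempty $A_P$ via the partition $\{A_P\}_{P\in\mathcal{P}'}$, the nested $V_i$'s and choice of $i_0$, the use of \Cref{clm:pruneBig2Leaf} once $|Z\cap A_j|\leq L$ and $|Y\cap A_j|>0$, and the nesting fact that the recovered mincut contains $V_{i_0+1}$ (size $\geq|A_P|/2$)—agrees with the paper, and your remarks on why $\mathcal{P}'$ is a valid decomposition and why $Y\cap A_0$ is $\tau$-connected are correct.
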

\begin{proof}
    The first claim follows immediately from \Cref{clm:pruneBig2Leaf}. Let $d_{max} = \max_{a' \in B\cap A} d_{\mathcal{P}}(a')$. For $P \in \mathcal{P}'$, define $A_P := \{a \in A_{max} \;|\; c_{G, A\cup\{r\}, r}(a) \in V(P)\}$. Similar to the proof of \Cref{clm:decomp1Succeed}, it suffices to prove $|\cup_{(S,v,r) \in \mathcal{S}} S \cap A_{P}|\geq |A_P|/2$ for $P\in \mathcal{P}'$. Let $P\in\mathcal{P'}$ be a path such that $A_P\neq \emptyset$. Note that $A_P\subseteq B$ since all paths in $\mathcal{P}'$ do not intersect $C\setminus B$.
    
    Decreasingly order all cut vertices in $P$ by their depth in the rooted tree $T_{G,A\cup \{r\},r}$ and denote them by $\{p_1,\ldots,p_l\}$. Let $V_i:=T_{G,A\cup\{r\},r}[c_{G,A\cup\{r\},r}(p_i)]$ for $i=1,\dots,l$ and let $V_0=\emptyset$. Then by \Cref{fact:xtorootPathinGHtree}(\ref{subfact:monotonePathCutvertexCC}), we have $\emptyset=V_0\subsetneq V_1\subsetneq \cdots \subsetneq V_l=A_P$.
    Let \[i_0:=\min\{i\leq l:|V_i|\geq |A_P|/2\}-1.\] 
    It then follows that $|V_{i_0+1}|\geq |A_P|/2$ and for any vertex $v\in V_P\setminus V_{i_0}$, we have $M_{G,v,r}\supseteq V_{i_0+1}$. We denote by $A_j$ the set $A$ after the $j$-th iteration in the for-loop.
    By \Cref{clm:pruneBig2Leaf}, it suffices to find some $j\leq \log_2 n$ such that
    \[
    |(V_P\setminus V_{i_0})\cap A_j|>0\text{ and }|V_{P}\cap A_j| \leq L.
    \]
    Let $Y=f^{-1}_{G,A\cup \{r\},r}(V_P\setminus V_{i_0})$ and $Z=f^{-1}_{G,A\cup \{r\},r}(V_{P})$. We then need to show, equivalently, that \[
     |Y\cap A_j|>0\text{ and }|Z\cap A_j| \leq L.
    \]
    When $|Z\cap A_0|\leq L$, we are done. We then assume $|Z\cap A_0|>L$.
    We will apply the two bounds in \Cref{lem:decompBound} to $Z$ and $Y$ respectively. Similar to the proof of \Cref{clm:decomp1Succeed}, we have
    \begin{itemize}
        \item $w(\partial_GZ) = \lambda(r,p_{l})=\tau$;
        \item $w(\partial_GY) \leq w(\partial_G f^{-1}_{G,A\cup \{r\},r}(V_{P}))+w(\partial_Gf^{-1}_{G,A\cup \{r\},r}(V_{i_{i_0}})) = 2\tau$;
        \item $\tau$-connectivity of $Y\cap A$ follows from $A\subseteq C$.
    \end{itemize}
    Applying \Cref{lem:decompBound} with $j=\lfloor \log_2\left(\frac{|Y \cap A_0|}{160 \log_2(n) \cdot (2/\psi)}\right) \rfloor$ to $Z$ and $Y$ then gives us \begin{align*}
       |Z\cap A_j|&\leq  \frac{160 \log_2n \cdot (2/\psi)}{|Y \cap A_0|}\cdot |Z\cap A_0|+\frac{2w(\partial_GZ)}{\psi\cdot \tau}\leq \frac{320\log_2n}{\psi}+\frac{2}{\psi}\leq L\\
       |Y\cap A_j|&\geq \frac{160 \log_2n \cdot (2/\psi)}{|Y \cap A_0|}\cdot\left(1-\frac1{5\log n}\right)^{\log_2n}\cdot |Y\cap A_0|\geq\frac{160\log_2n}{\psi}>0,
    \end{align*}
    concluding the proof.
\end{proof}

\paragraph{Runtime of \Cref{alg:DecompSecondStep}.}
\begin{claim}
    \Cref{alg:DecompSecondStep} takes time $m^{1+o(1)}$.
\end{claim}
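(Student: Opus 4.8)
The plan is to reuse, essentially verbatim, the runtime argument already carried out for \Cref{alg:DecompFirstStep}. The first observation is structural: \Cref{alg:DecompSecondStep} is just a single \texttt{for}-loop that runs exactly $\log_2 n$ times, with no recursive calls into itself or into any other super-linear routine. Hence it suffices to bound the cost of one iteration of the loop by $m^{1+o(1)}$ and then multiply by the $O(\log n)$ iteration count; there is no recursion tree to unwind, in contrast to the SSMC analysis.

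Next I would bound the cost of one iteration. Each iteration does three things: a call to $\textsc{RemoveLeafSecondStep}(A, r, \tau, L)$, a call to $\textsc{ComputeExpanderDecomp}(G, \psi \cdot \tau, A)$, and a halving sweep over the clusters $\mathcal{X}$ of the returned decomposition. By \Cref{thm:expanderDecomp} the expander-decomposition call costs $m^{1+o(1)}$, and the halving sweep touches each vertex of $A$ at most once over all clusters, so it is dominated by $O(n)$. The bulk of the work is therefore in $\textsc{RemoveLeafSecondStep}$ (\Cref{alg:RemoveLeafSecondStep}). Here the key numerical fact — the same one used in the proof for \Cref{alg:DecompFirstStep} — is that $\psi = 1/(\gamma_{expDecomp} \cdot 20\log_2 n)$ with $\gamma_{expDecomp} = e^{O(\log^{4/5}(m)\log\log(Wm))} = m^{o(1)}$, so $1/\psi = m^{o(1)}$, and since weights are polynomially bounded, $L = 1000\log_2^2(nW)/\psi = m^{o(1)}$ as well. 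Consequently, invoking \Cref{thm:hitAndMiss} with parameters $(A', L, 2)$ (legitimate since $b = 2 = O(1)$ and $a = L = m^{o(1)}$) constructs $\mathcal{H}$ in deterministic time $|A'| \cdot (L\log|A'|)^{O(1)} = m^{1+o(1)}$ and of size $(L\log|A'|)^{O(1)} = m^{o(1)}$. For each of these $m^{o(1)}$ functions $h$, the algorithm makes one call to $\textsc{ComputeIsolatingCuts}$, which costs $m^{1+o(1)}$ by \Cref{lma:isoCuts}, followed by an $O(m)$ scan of the returned cuts to test the conditions $w(\partial S_v) = \tau$ and $|S_v \cap U| \le \tfrac{15}{16}|U|$. (Unlike $\textsc{RemoveLeafFirstStep}$, this routine does not even need to call SSMC, since the membership test is a direct cut-value check; but this only helps.) Thus $\textsc{RemoveLeafSecondStep}$ runs in $m^{o(1)} \cdot m^{1+o(1)} = m^{1+o(1)}$ time.

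Putting the pieces together, one iteration of the \texttt{for}-loop costs $m^{1+o(1)}$, and with $O(\log n)$ iterations the total running time of \Cref{alg:DecompSecondStep} is $m^{1+o(1)}$. The only step that requires any genuine care — and hence the closest thing to an obstacle — is verifying that every auxiliary blow-up factor ($1/\psi$, $L$, and $|\mathcal{H}|$) is truly $m^{o(1)}$; this follows because $\gamma_{expDecomp}$ is subpolynomial and $W$ is polynomially bounded, exactly as in the corresponding claim for \Cref{alg:DecompFirstStep}. Everything else is a routine accounting of $\tilde{O}(1)$ many calls to black-box subroutines each of cost $m^{1+o(1)}$.
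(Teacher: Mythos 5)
Your proof is correct and follows essentially the same approach as the paper: bound one iteration of the \texttt{for}-loop by $m^{1+o(1)}$ by accounting for the expander-decomposition call, the hit-and-miss family construction, and the $m^{o(1)}$ calls to $\textsc{ComputeIsolatingCuts}$, then multiply by $O(\log n)$ iterations. Your side remark is also a sharper reading than the paper's: $\textsc{RemoveLeafSecondStep}$ filters on $w(\partial S_v)=\tau$ and $|S_v\cap U|\le\frac{15}{16}|U|$, neither of which needs an SSMC call (the paper's proof text mentions SSMC, apparently carried over verbatim from the $\textsc{RemoveLeafFirstStep}$ analysis), though this does not affect the bound.
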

\begin{proof}
    It suffices to show each inner iteration takes time $m^{1+o(1)}$. Note that each iteration in \Cref{alg:DecompSecondStep} calls \textsc{ComputeExpanderDecomp} and \Cref{alg:RemoveLeafSecondStep} once. By \Cref{thm:expanderDecomp},  \textsc{ComputeExpanderDecomp} takes time $m^{1+o(1)}$. For \Cref{alg:RemoveLeafSecondStep}, we note that $L=\tilde{O}(1/\psi)$. It then follows from \Cref{thm:hitAndMiss} that \textsc{ConstructHitAndMissFamily} takes time $\tilde{O}(m^{1+o(1)}L^{O(1)})=m^{1+o(1)}/\psi^{O(1)}$ and the size of $\mathcal{H}$ is $m^{o(1)}/\psi^{O(1)}$. Since \Cref{alg:RemoveLeafSecondStep} calls \textsc{ComputeIsolatingCuts} and SSMC once for each function in $\mathcal{H}$, and each call of \textsc{ComputeIsolatingCuts} and SSMC takes time $m^{1+o(1)}$. The total time cost of \Cref{alg:RemoveLeafSecondStep} is $m^{1+o(1)}/\psi^{O(1)}$.  
\end{proof}

\subsection{The Gomory-Hu Tree Algorithm}\label{subsec:finalGHT}

We conclude by presenting the final Gomory-Hu tree algorithm using the previous steps. We first give an algorithm $\textsc{GHTreeStep}$ (\Cref{alg:GHTreeStep}) to compute a balanced isolating mincut partition using the results from the detection phase and the decomposition phase. Then, we use the deterministic steps in the $\textsc{GHTree}$ algorithm (\Cref{alg:GHTree}) in \cite{abboud2022breaking} to recursively construct the Gomory-Hu Steiner tree. Though \cite{abboud2022breaking} uses a randomized algorithm to select a pivot and compute the isolating mincut family, their recursive algorithm deterministically builds the Gomory-Hu Steiner tree and hence we can reuse part of their algorithm and analysis.

\begin{algorithm}
\caption{$\textsc{GHTreeStep}(G, U)$}
\label{alg:GHTreeStep}
Use binary search on $\tau$ to find the largest $\tau$ such that the largest $\tau$-connected component w.r.t. $U$ computed by $\textsc{DetectCC}(G, U, \tau)$ has size at least $\frac{3}{4}\abs{U}$. \tcc{This means the largest $(\tau + 1)$-connected component w.r.t. $U$ has size $< \frac{3}{4}\abs{U}$.}
$C \gets \textsc{DetectCC}(G, U, \tau)$.\\
$r\gets$ an arbitrary vertex in $C$. \label{line:pick_r}\\
$\mathcal{S}_{temp} \gets \emptyset$.\\
\If{$\abs{C} < \frac{15}{16}\abs{U}$\label{line:small_C}}{
    % use $\textsc{DecompFirstStep}$ to remove the vertices in $U \setminus C$\\
    $A_1 \gets U \setminus C$.\\
    \For{$i = 1, 2, \ldots, C_{pathDecomp} \cdot \left(\log_2 (n) + 1\right)^2$}
    {
        $\mathcal{S}_{1}\gets \textsc{DecompFirstStep}(G, A_1, r, \tau)$.\\
        $\mathcal{S}_{temp} \gets \mathcal{S}_{temp} \cup \mathcal{S}_{1}$.\\
        $A_1 \gets A_1 \setminus \bigcup_{(S, v, r) \in \mathcal{S}_{1}}S$.\\
    }
}
\Else{
    % use $\textsc{DecompSecondStep}$ to remove the vertices in $B$.\\
    $A_2 \gets C \setminus \{r\}$.\\
    \For{$i = 1, 2, \ldots, C_{pathDecomp} \cdot \left(\log_2 (n) + 1\right)^2$}
    {
        $\mathcal{S}_{2}\gets \textsc{DecompSecondStep}(G, A_2, r, \tau)$.\\
        $\mathcal{S}_{temp} \gets \mathcal{S}_{temp} \cup \mathcal{S}_{2}$.\\
        $A_2 \gets A_2 \setminus \bigcup_{(S, v, r) \in \mathcal{S}_{2}}S$.\\
    }
}
$\mathcal{S} \gets \emptyset$.\\
$R \gets \emptyset$.\\
\ForEach{$(S, v, r)$ in $\mathcal{S}_{temp}$ in the descending order of $\abs{S}$}{
    \If{$S \cap R = \emptyset$ \label{line:not_add_S}}{
        $\mathcal{S} \gets \mathcal{S} \cup \{(S, v, r)\}$.\\
        $R \gets R \cup S$.
    }
}
\Return $r$ and $\mathcal{S}$.
\end{algorithm}

% We start with $A = U \setminus$
% By \Cref{fact:xtorootPathinGHtree}(\ref{subfact:MCnesting}), for any two triples $(S, v, r), (S', v', r) \in \mathcal{S}$, either $S \subseteq S'$, $S' \subseteq S$, or $S \cap S' = \emptyset$.

% Hence we can compute a subset $\mathcal{S}'$ of $\mathcal{S}$ containing the triples $(S, v, r)$ where the minimal $(v, r)$-mincut $S$ is not contained in some other minimal mincut in $\mathcal{S}$.

\begin{lemma} \label{clm:GHTreeStep}
    $\textsc{GHTreeStep}(G, U)$ returns a pivot $r$ and a set $\mathcal{S}$ such that
    \begin{enumerate}
        % \item $r$ is identical accross all sets, \label{subclaim:HGTreeStep1}
        \item every triple $(S, v, r) \in \mathcal{S}$ has $v \in U \setminus \{r\}$ and $S$ is the minimal $(v, r)$-mincut, \label{subclaim:HGTreeStep2}
        \item the mincuts $S$ in $\mathcal{S}$ are disjoint,  \label{subclaim:HGTreeStep3}
        \item for every $(S, v, r) \in \mathcal{S}$, we have
        \begin{align*}
            \abs{S \cap U} = (1 - \Omega(1)) \abs{U},
        \end{align*}
        and \label{subclaim:HGTreeStep4}
        \item we have that
        \begin{align*}
            \sum_{(S, v, r) \in \mathcal{S}}\abs{S \cap U} = \Omega(\abs{U}).
        \end{align*} \label{subclaim:HGTreeStep5}
    \end{enumerate}
\end{lemma}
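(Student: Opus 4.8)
The plan is to follow \Cref{alg:GHTreeStep} through its three stages --- the binary search that fixes a threshold $\tau^*$ and a component $C$, the decomposition loop that fills $\mathcal{S}_{temp}$ with minimal $(v,r)$-mincuts, and the greedy sweep that extracts the disjoint subcollection $\mathcal{S}$ --- verifying the four properties stage by stage; the pivot $r\in C$ picked in \Cref{line:pick_r} stays fixed throughout. First I would record what the binary search yields. By \Cref{lem:detection}, $\textsc{DetectCC}(G,U,\tau)$ returns the largest $\tau$-connected component w.r.t.\ $U$ when that has size $\ge\frac34|U|$ and returns $\emptyset$ otherwise; since $\tau=1$ qualifies whenever $G$ is connected (the relevant case), the search finds a well-defined maximal $\tau^*$ with $C=\mathcal{C}_{G,U,\tau^*}(r)=\textsc{DetectCC}(G,U,\tau^*)$, $|C|\ge\frac34|U|$, and every $(\tau^*+1)$-connected component w.r.t.\ $U$ of size $<\frac34|U|$. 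Properties (1) and (2) are then immediate: every triple placed into $\mathcal{S}_{temp}$ comes from $\textsc{DecompFirstStep}$ or $\textsc{DecompSecondStep}$ and by \Cref{clm:decomp1Succeed} / \Cref{clm:decomp2Succeed} pairs a vertex $v\in U\setminus\{r\}$ with its minimal $(v,r)$-mincut, and $\mathcal{S}\subseteq\mathcal{S}_{temp}$, giving (1); and the greedy sweep inserts $(S,v,r)$ only when $S\cap R=\emptyset$, so the selected cuts are pairwise disjoint, giving (2).

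For Property (3) I would split on the test at \Cref{line:small_C}. If $|C|<\frac{15}{16}|U|$, every cut $S=M_{G,v,r}$ in $\mathcal{S}_{temp}$ has $v\in U\setminus C$, hence $\lambda_G(v,r)<\tau^*$; since $C$ is $\tau^*$-connected, $M_{G,v,r}$ contains no vertex of $C$ (a vertex $c\in C\cap M_{G,v,r}$ would make $M_{G,v,r}$ a $(c,r)$-cut of value $<\tau^*$), so $|S\cap U|\le|U\setminus C|\le\frac14|U|=(1-\tfrac34)|U|$. If $|C|\ge\frac{15}{16}|U|$, the filter inside $\textsc{RemoveLeafSecondStep}$ keeps only cuts with $|S\cap U|\le\frac{15}{16}|U|=(1-\tfrac1{16})|U|$. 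Either way $\abs{S\cap U}=(1-\Omega(1))\abs{U}$.

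Property (4) is the crux, and I would prove it via two facts. (i) \emph{The greedy sweep loses nothing}: $\bigcup_{(S,v,r)\in\mathcal{S}}S=\bigcup_{(S,v,r)\in\mathcal{S}_{temp}}S$. All cuts in $\mathcal{S}_{temp}$ are minimal $(\cdot\,,r)$-mincuts for the same $r$, so by \Cref{fact:laminarFamily} they form a laminar family; processing in non-increasing order of $|S|$, if a cut $S_0$ is skipped then it meets an already-selected $S'$ with $|S'|\ge|S_0|$, and laminarity forces $S_0\subseteq S'\subseteq R$. (ii) \emph{$\mathcal{S}_{temp}$ covers a constant fraction of $U$}: the distance $d_{\mathcal{P}}$ (resp.\ $d_{\mathcal{P}'}$) on $T_{G,U,r}$ from \Cref{def:distanceGHT} attains at most $C_{pathDecomp}(\log_2 n+1)$ values by \Cref{thm:pathDecomp} and does not change as the active set shrinks, while by \Cref{clm:decomp1Succeed} (resp.\ \Cref{clm:decomp2Succeed}) each call of $\textsc{DecompFirstStep}$ (resp.\ $\textsc{DecompSecondStep}$) removes at least half of the active-set vertices at the current maximum distance; hence $\log_2 n+1$ consecutive calls empty the current top distance layer (deleting vertices of smaller distance only helps), so the $C_{pathDecomp}(\log_2 n+1)^2$ iterations of the loop drive the active set to $\emptyset$. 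In the first branch this shows $U\setminus C\subseteq\bigcup_{(S,v,r)\in\mathcal{S}_{temp}}S$ and $|U\setminus C|>\frac1{16}|U|$; in the second, since $\textsc{DecompSecondStep}$ only emits cuts for vertices of $B$ and $B\subseteq C\setminus\{r\}$ is the coverable part of the active set, it shows $B\subseteq\bigcup_{(S,v,r)\in\mathcal{S}_{temp}}S$ and $|B|\ge\frac18|U|$ by \Cref{claim:goodPivot} (whose premises hold since $|C|\ge\frac{15}{16}|U|$ in this branch and $\tau^*$ was chosen maximal). Combining with (i) and the disjointness from (2), $\sum_{(S,v,r)\in\mathcal{S}}|S\cap U|$ is the number of terminals in $\bigcup_{(S,v,r)\in\mathcal{S}}S$, which by (i) equals the number of terminals in $\bigcup_{(S,v,r)\in\mathcal{S}_{temp}}S$, and that is $\Omega(|U|)$ by the coverage bound above.

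I expect the bookkeeping in fact (ii) to be the main obstacle: making the ``one distance layer every $\log_2 n+1$ calls'' argument airtight requires tracking that the maximum distance present in the active set is monotonically non-increasing, that a cut $M_{G,v,r}$ may also delete vertices of strictly smaller distance (harmless), and that the fixed iteration count $C_{pathDecomp}(\log_2 n+1)^2$ therefore genuinely exhausts the active set --- and the delicate constants live in the lower-bound halves of \Cref{lem:decompBound} that underpin \Cref{clm:decomp1Succeed}/\Cref{clm:decomp2Succeed}. The running time --- not part of the statement but needed downstream --- follows easily: $O(\log(nW))$ calls to $\textsc{DetectCC}$ and $O(\log^2 n)$ calls each to $\textsc{DecompFirstStep}/\textsc{DecompSecondStep}$, all $m^{1+o(1)}$, plus a near-linear greedy sweep, for $m^{1+o(1)}$ total.
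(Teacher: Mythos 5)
Your proof is correct and follows essentially the same route as the paper: the same case split at $|C|<\frac{15}{16}|U|$ for Property (3), the same two-part argument for Property (4) (the distance-layer depletion bound from \Cref{clm:decomp1Succeed}/\Cref{clm:decomp2Succeed} feeding into a laminarity argument that the greedy sweep preserves the covered set), and the same invocations of \Cref{claim:goodPivot}. The only cosmetic difference is that you cite \Cref{fact:laminarFamily} for laminarity where the paper uses \Cref{fact:xtorootPathinGHtree}(\ref{subfact:MCnesting}), and you state the binary-search well-definedness explicitly; both choices are fine.
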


\begin{proof}
    % For \Cref{subclaim:HGTreeStep1}, the vertex $r$ in every set is the vertex $r$ picked in \Cref{line:pick_r}.

    For \Cref{subclaim:HGTreeStep2}, if the if-condition in \Cref{line:small_C} holds, then every triple $(S, v, r) \in \mathcal{S}_{1}$ has $v \in U \setminus C$ and $S$ is the minimal $(v, r)$-mincut by \Cref{clm:decomp1Succeed}. So every triple $(S, v, r) \in \mathcal{S}$ is a triple in some $\mathcal{S}_{1}$ and thus has $v \in U \setminus C \subseteq U \setminus \{r\}$ and $S$ is the minimal $(v, r)$-mincut. Otherwise, every triple $(S, v, r) \in \mathcal{S}_{2}$ has $v \in C \setminus \{r\}$ and $S$ is the minimal $(v, r)$-mincut by \Cref{clm:decomp2Succeed}. So every triple $(S, v, r) \in \mathcal{S}$ is a triple in some $\mathcal{S}_{2}$ and thus has $v \in C \setminus \{r\} \subseteq U \setminus \{r\}$ and $S$ is the minimal $(v, r)$-mincut.

    For \Cref{subclaim:HGTreeStep3}, the set $R$ maintains the union of all mincuts in $\mathcal{S}$, so we will not add a triple $(S, v, r)$ to $\mathcal{S}$ if $S$ intersects any of the mincuts in $\mathcal{S}$.

    For \Cref{subclaim:HGTreeStep4}, if the if-condition in \Cref{line:small_C} holds, every triple $(S, v, r) \in \mathcal{S}$ is a triple returned by \Cref{clm:decomp1Succeed} and has $v \in A_1 = U \setminus C$ and $S$ is the minimal $(v, r)$-mincut. Since $C$ is a $\tau$-connected component w.r.t. $U$, $r \in C$ and $v \in U \setminus C$, $r$ and $v$ are not $\tau$-connected, so $S \cap U\subseteq U \setminus C$. Hence $\abs{S \cap U} \leq \abs{U \setminus C} \leq \frac{1}{4}\abs{U}$. Otherwise, every triple $(S, v, r) \in \mathcal{S}$ is a triple returned by \Cref{clm:decomp2Succeed} and has $v \in A_2 \cap B$ and $S$ is the minimal $(v, r)$-mincut, so $\abs{S \cap U} \leq \frac{15}{16} \abs{U}$ by the definition of $B$.

    For \Cref{subclaim:HGTreeStep5}, we first show that
    \begin{align*}
        \abs{\cup_{(S, v, r) \in \mathcal{S}_{temp}} S \cap U} = \Omega(\abs{U}).
    \end{align*}
    If the if-condition in \Cref{line:small_C} holds, since 
    \[    
    |\cup_{(S,v,r) \in \mathcal{S}} S \cap A_{max}| \geq |A_{max}|/2
    \]
    by \Cref{clm:decomp1Succeed} and we remove the vertices contained by the mincuts in $\mathcal{S}$ from $A_1$ after every iteration, after every $\log_2(n)+1$ iterations, all the vertices in the previous $A_{max}$ are removed from $A_1$ and hence $\max_{a \in A_1}d_\mathcal{P}(a)$ is decreased by at least $1$. As $\max_{a \in A_1}d_\mathcal{P}(a) \leq C_{pathDecomp} \cdot \log_2 n$ initially, we have $A_1 = \emptyset$ after $C_{pathDecomp} \cdot (\log_2 (n)+1)^2$ iterations. This means every vertex in $U \setminus C$ is contained in some mincut $S$ in some triple in $\mathcal{S}_{temp}$, where $\abs{U \setminus C} \geq \frac{1}{16}\abs{U}$. Otherwise, the proof is similar, except that we show every vertex in $(C \setminus \{r\}) \cap B = B$ is contained in some mincut $S$ in some triple in $\mathcal{S}_{temp}$ where $\abs{B} \geq \frac{1}{8}\abs{U}$.

    Then we show that
    \[
        \abs{\cup_{(S, v, r) \in \mathcal{S}} S \cap U} = \abs{\cup_{(S, v, r) \in \mathcal{S}_{temp}} S \cap U} = \Omega(\abs{U}).
    \]
    Every triple $(S, v, r) \in \mathcal{S}_{temp}$ has $v \in U \setminus \{r\}$ and $S$ is the $(v, r)$-mincut by a proof similar to that of \Cref{subclaim:HGTreeStep2}. By \Cref{fact:xtorootPathinGHtree}(\ref{subfact:MCnesting}), any $(S_1, v_1, r), (S_2, v_2, r) \in \mathcal{S}_{temp}$ satisfy either $S_1 \subseteq S_2$, $S_2 \subseteq S_1$, or $S_1 \cap S_2 = \emptyset$. If the if-condition in \Cref{line:not_add_S} holds, then there is some $(S', v', r) \in \mathcal{S}$ satisfying $S' \cap S \neq \emptyset$, which implies $S \subseteq S'$ since $\abs{S} \leq \abs{S'}$. Therefore
    \[
    \abs{\cup_{(S, v, r) \in \mathcal{S}} S \cap U} = \abs{\cup_{(S, v, r) \in \mathcal{S}_{temp}} S \cap U} = \Omega(\abs{U}).
    \]
    The proof is completed by using \Cref{subclaim:HGTreeStep3} that the mincuts $S$'s in $\mathcal{S}$ are disjoint.
\end{proof}

\begin{claim} \label{clm:GHTreeStepTime}
    $\textsc{GHTreeStep}(G, U)$ takes deterministic time $m^{1+o(1)}$.
\end{claim}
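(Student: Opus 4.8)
The plan is to account separately for the three phases of $\textsc{GHTreeStep}(G,U)$: the binary search that locates $\tau$, the decomposition loop that builds $\mathcal{S}_{temp}$, and the final pass that extracts the disjoint subfamily $\mathcal{S}$. All the heavy lifting has already been done elsewhere — $\textsc{DetectCC}$, $\textsc{DecompFirstStep}$ and $\textsc{DecompSecondStep}$ each run in $m^{1+o(1)}$ time by \Cref{lem:detection}, \Cref{clm:decomp1Succeed} and \Cref{clm:decomp2Succeed} respectively — so what remains is to verify that only polylogarithmically many invocations of these subroutines occur and that the bookkeeping around them is cheap.

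First I would handle the binary search. Every mincut value in $G$ lies in $[1,nW]$, and since $W$ is polynomially bounded in $m$ we have $\log(nW)=O(\log m)$; hence the binary search issues $O(\log m)$ queries, each a single call to $\textsc{DetectCC}(G,U,\cdot)$ costing $m^{1+o(1)}$, followed by one more call to compute $C$. Choosing $r\in C$ in \Cref{line:pick_r} and initializing the active sets costs $O(n)$. So this phase runs in $m^{1+o(1)}$ time (only the running time of $\textsc{DetectCC}$ is relevant here; its correctness is handled separately via \Cref{lem:detection}).

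Next I would bound the decomposition loop. Whichever branch of the if-statement in \Cref{line:small_C} is taken, the loop executes $C_{pathDecomp}\cdot(\log_2 n+1)^2=O(\log^2 n)$ iterations, each performing a single call to $\textsc{DecompFirstStep}$ or $\textsc{DecompSecondStep}$, which costs $m^{1+o(1)}$. Appending the returned family $\mathcal{S}_i$ to $\mathcal{S}_{temp}$ and deleting the covered vertices from $A_1$ (respectively $A_2$) take time $O\big(|\mathcal{S}_i|+\sum_{(S,v,r)\in\mathcal{S}_i}|S|\big)$, which is no more than the time the producing call already spent writing $\mathcal{S}_i$ down explicitly. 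Summing over the $O(\log^2 n)$ iterations, this phase costs $m^{1+o(1)}$, and in particular the total representation size of $\mathcal{S}_{temp}$, namely $\sum_{(S,v,r)\in\mathcal{S}_{temp}}|S|$ (which also dominates $|\mathcal{S}_{temp}|$), is $m^{1+o(1)}$.

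Finally I would treat the extraction of $\mathcal{S}$. Maintain $R$ as a Boolean array indexed by $V$. Computing each $|S|$ takes $O(|S|)$ time, after which sorting the triples of $\mathcal{S}_{temp}$ by $|S|$ costs $O(|\mathcal{S}_{temp}|\log|\mathcal{S}_{temp}|)=m^{1+o(1)}$. Processing one triple $(S,v,r)$ amounts to testing $S\cap R=\emptyset$ by scanning $S$ against $R$, which is $O(|S|)$, and, if the test passes, writing the $|S|$ corresponding entries of $R$, again $O(|S|)$; so \Cref{line:not_add_S} and the subsequent update cost $O(|S|)$ per triple. Hence the whole pass costs $O\big(\sum_{(S,v,r)\in\mathcal{S}_{temp}}|S|\big)=m^{1+o(1)}$. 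Adding the three bounds yields the claim. The only place that needs a moment's thought is this last pass — a careless implementation of the ``descending order of $|S|$'' filter could be quadratic in $n$ — but since the total size of $\mathcal{S}_{temp}$ is itself only $m^{1+o(1)}$ (it cannot exceed the time spent producing it), the naive linear-scan implementation already suffices, and no structural fact such as the laminarity of the minimal $(v,r)$-mincuts from \Cref{fact:xtorootPathinGHtree} is needed for the time bound.
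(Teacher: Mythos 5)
Your proof is correct and follows essentially the same approach as the paper's, just unpacked into more explicit detail: the paper simply observes that there are $O(\log^2 n)$ subroutine invocations total (absorbing the $O(\log(nW))$ binary-search calls, the $O(\log^2 n)$ decomposition-loop calls, and the $m^{1+o(1)}$ total size of the returned cut families), while you spell out the three phases and the cost of the final sort-and-filter pass. Both yield the same $m^{1+o(1)}$ bound, and your observation that the representation size of $\mathcal{S}_{temp}$ is bounded by the time spent producing it is the same implicit fact the paper invokes with ``the sum of sizes of mincuts in $\mathcal{S}_1$ or $\mathcal{S}_2$ is also $m^{1+o(1)}$.''
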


\begin{proof}
    Each invocation of $\textsc{DetectCC}, \textsc{DecompFirstStep}$ or $\textsc{DecompSecondStep}$ takes time $m^{1+o(1)}$, and there are a total of $O(\log^2 n)$ invocations. The sum of sizes of mincuts in $\mathcal{S}_1$ or $\mathcal{S}_2$ is also $m^{1+o(1)}$. Hence the running time of $\textsc{GHTreeStep}$ is $m^{1+o(1)}$.
\end{proof}

Let $V_{S} = \{v : (S_v, v, r) \in \mathcal{S}\}$.

\begin{algorithm}
\caption{$\textsc{GHTree}(G, U)$ (c.f. Algorithm 4 in \cite{abboud2022breaking})}
\label{alg:GHTree}
Call $\textsc{GHTreeStep}(G, U)$ to obtain $r$ and $\mathcal{S}$.\\
% $r \gets$ the unique $r$ in triples in $\mathcal{S}$.\\
\ForEach{$(S_v, v, r) \in \mathcal{S}$}{
    Let $G_v$ be the graph $G$ with vertices $V \setminus S_v$ contracted to a single vertex $x_v$.\\
    $U_v \gets S_v \cap U$.\tcc{$S_v$ are disjoint}
    \If{$\abs{U_v} >1$}{
        Let $(T_v, f_v)\gets \textsc{GHTree}(G_v, U_v)$.
    }
}
Let $G_{large}$ be the graph $G$ with (disjoint) vertex sets $S_v$ contracted to single vertices $y_v$ for $v \in V_S$.\\
$U_{large} \gets U \setminus \cup_{(S_v, v, r) \in \mathcal{S}}S_v$.\\
\If{$\abs{U_{large}}>1$}{
    Let $(T_{large}, f_{large})\gets\textsc{GHTree}(G_{large}, U_{large})$.
}
Combine $(T_{large}, f_{large})$ and $\{(T_v, f_v): v \in V_S\}$ into $(T, f)$ using \textsc{Combine}.\\
\Return $(T, f)$.
\end{algorithm}

\begin{algorithm}
\caption{$\textsc{Combine}((T_{large}, f_{large}), \{(T_v, f_v): v \in V_S\})$ (c.f. Algorithm 5 in \cite{abboud2022breaking})}
\label{alg:Combine}
Construct $T$ by starting with $T_{large} \cup \bigcup T_v$ and, for each $v \in V_S$, adding an edge between $f_v(x_v) \in U_v$ and $f_{large}(y_v) \in U_{large}$ with weight $w_G(\partial_G S_v)$.\\
Construct $f: U \mapsto V$ by setting $f(v') = f_{large}(v')$ for $v' \in U_{large}$ and $f(v') = f_v(v')$ for $v' \in U_v$.\\
\Return $(T, f)$.
\end{algorithm}

\begin{lemma}[c.f. Lemma A.11 in \cite{abboud2022breaking}] \label{lem:GHTreeOmit}
    If every triple $(S_v, v, r)\in \mathcal{S}$ returned by $\textsc{GHTreeStep}(G, U)$ has $S_v$ is the minimal $(v, r)$-mincut, and the mincuts $S_v$'s are disjoint, then $\textsc{GHTree}(G, U)$ outputs a Gomory-Hu Steiner tree.
\end{lemma}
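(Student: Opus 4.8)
The statement to prove is \Cref{lem:GHTreeOmit}, which asserts correctness of the recursive \ghtree construction given that \textsc{GHTreeStep} returns a pivot $r$ with a collection $\mathcal{S}$ of disjoint minimal $(v,r)$-mincuts. Since the excerpt attributes this lemma to Lemma A.11 of \cite{abboud2022breaking}, the plan is to follow that argument, which proves correctness of \textsc{Combine} by induction on the recursion depth, relying only on the structural nesting properties of mincuts (\Cref{lma:legalDecompOfGH}, \Cref{fact:laminarFamily}, \Cref{fact:xtorootPathinGHtree}) and not on any randomness in how $r$ or $\mathcal{S}$ were produced.

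First I would set up the induction: the base case is $|U| \le 1$, where the tree is a single vertex and there is nothing to check. For the inductive step, assume $\textsc{GHTree}$ returns a valid Gomory-Hu $U'$-Steiner tree whenever $|U'| < |U|$. Observe that each recursive call is on a strictly smaller terminal set: for $(S_v, v, r)\in\mathcal{S}$ we recurse on $U_v = S_v \cap U$ with $v \in U_v$ but $r \notin U_v$ (and $|U_v| \geq 1$), while for the ``large'' part we recurse on $U_{large} = U \setminus \bigcup_v S_v$ with $r \in U_{large}$ but each $v \in V_S$ excluded; disjointness of the $S_v$ guarantees these subproblems partition $U$, so by \Cref{subclaim:HGTreeStep4} and the fact that at least one $S_v$ is nonempty (when $|U|>1$), each subproblem is strictly smaller. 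The contracted graphs $G_v$ and $G_{large}$ are exactly those used in the classical Gomory-Hu recursion applied ``in parallel'' across the disjoint cuts $S_v$.

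The heart of the proof is verifying that \textsc{Combine} correctly stitches the subtrees. I would argue that for any $s,t \in U$, the minimum-weight edge on the $st$-path in the combined tree $T$ induces an $(s,t)$-mincut in $G$ via $f$. Split into cases according to where $s,t$ lie: (i) both in the same $U_v$ — then the $st$-path stays inside $T_v$ and correctness follows from the inductive hypothesis for $G_v$ together with \Cref{lma:legalDecompOfGH} (an $(s,t)$-mincut inside $S_v$ lifts to an $(s,t)$-mincut in $G$, because $S_v$ is itself a $(v,r)$-mincut and hence mincuts between vertices of $S_v$ can be taken inside $S_v$); (ii) both in $U_{large}$ — symmetric, using the hypothesis for $G_{large}$ and that contracting each $S_v$ preserves the relevant mincut structure (here one uses that $S_v$ is the \emph{minimal} $(v,r)$-mincut together with the nesting/laminarity in \Cref{fact:xtorootPathinGHtree}(\ref{subfact:MCnesting}) so that for $x,y\in U_{large}$ there is an $(x,y)$-mincut avoiding every $S_v$); (iii) $s \in U_v$, $t \in U_{v'}$ with $v \ne v'$, or $s \in U_v$, $t \in U_{large}$ — then the $st$-path crosses one or two of the connecting edges added by \textsc{Combine}, each of weight $w_G(\partial_G S_v)=\lambda_G(v,r)$; one shows the minimum-weight edge on the path is either such a connecting edge (and $\partial_G S_v = \lambda_G(s,t)$ because $S_v$ separates $s$ from $t$ and $\lambda_G(s,t)\le \lambda_G(v,r)\le$ everything else on the path by the monotonicity in \Cref{fact:xtorootPathinGHtree}), or it lies inside one subtree and the argument reduces to a case already handled after observing the contracted vertex $x_v$ (resp.\ $y_v$) plays the role of the ``other side.'' Throughout, the function $f$ is well-defined because the $U_v$ and $U_{large}$ partition $U$.

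The main obstacle I anticipate is case (iii): carefully checking that when the cheapest edge on the $st$-path is a connecting edge $(f_v(x_v), f_{large}(y_v))$, the set $f^{-1}$ of the component containing $s$ is genuinely an $(s,t)$-mincut in $G$ of the stated value — this requires combining (a) $\lambda_G(s,t) = w_G(\partial_G S_v)$, which uses that $S_v = M_{G,v,r}$ is an $(s,t)$-separating cut of value $\lambda_G(v,r)$ and that no cheaper $(s,t)$-cut exists because any such cut would, by submodularity/posi-modularity, give a cheaper $(v,r)$-cut or a cheaper $(s,t)$ cut inside a smaller instance, contradicting the inductive guarantees; and (b) that the lifted cut respects the partition so the component structure in $T$ matches the cut in $G$. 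Since this is precisely the content of Lemma A.11 in \cite{abboud2022breaking} and the only hypotheses used there — that the $S_v$ are disjoint minimal $(v,r)$-mincuts — are exactly what $\textsc{GHTreeStep}$ provides by \Cref{clm:GHTreeStep}, I would cite that lemma for the detailed casework and present the induction skeleton plus the reduction of each case to the cited structural facts.
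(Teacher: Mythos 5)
Your proposal takes essentially the same approach as the paper: the paper states \Cref{lem:GHTreeOmit} purely as a citation to Lemma~A.11 of \cite{abboud2022breaking} and gives no proof, relying (as the surrounding text explains) on the fact that the recursive \textsc{GHTree}/\textsc{Combine} steps in that paper are deterministic and their analysis carries over verbatim once the pivot and disjoint minimal mincuts are supplied. Your induction skeleton and case split are a reasonable summary of what that cited lemma establishes, and you correctly close by deferring the detailed casework to it, which matches the paper's treatment.
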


\begin{claim}\label{clm:GHTree}
    $\textsc{GHTree}(G, U)$ outputs a Gomory-Hu Steiner Tree.
\end{claim}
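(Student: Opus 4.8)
The plan is to obtain \Cref{clm:GHTree} as an essentially immediate consequence of the structural guarantee \Cref{clm:GHTreeStep} together with the recursive correctness statement \Cref{lem:GHTreeOmit}, proceeding by induction on $|U|$. In the base case $|U| \le 1$ the algorithm makes no recursive calls and returns a trivial tree on at most one vertex, which is a Gomory-Hu $U$-Steiner tree vacuously. For the inductive step, I would first invoke \Cref{clm:GHTreeStep} on $(G,U)$: it returns a pivot $r$ and a family $\mathcal{S}$ for which every triple $(S_v, v, r)$ has $v \in U \setminus \{r\}$ with $S_v$ the (unique, by \Cref{fact:minimalCutsUnique}) minimal $(v,r)$-mincut, and for which the cuts $S_v$ are pairwise disjoint. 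These are exactly the two hypotheses of \Cref{lem:GHTreeOmit}.

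The only genuine point to verify is that the recursion is well-founded and that the hypotheses of \Cref{lem:GHTreeOmit} hold at every level. The recursive calls of $\textsc{GHTree}$ are on the instances $(G_v, U_v)$ with $U_v = S_v \cap U$ and on $(G_{large}, U_{large})$ with $U_{large} = U \setminus \bigcup_{(S_v,v,r)\in\mathcal{S}} S_v$. By the third and fourth conclusions of \Cref{clm:GHTreeStep}, $|S_v \cap U| = (1-\Omega(1))|U|$ and $\sum_{(S_v,v,r)\in\mathcal{S}} |S_v\cap U| = \Omega(|U|)$, so each sub-instance has a terminal set that is strictly smaller than $U$ (in fact smaller by a constant factor), and the induction hypothesis applies to show that every recursive call returns a valid Gomory-Hu Steiner tree. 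Since \Cref{clm:GHTreeStep} holds for every input graph and terminal set, the hypotheses of \Cref{lem:GHTreeOmit} are met at each recursive level; hence \Cref{lem:GHTreeOmit} applies and the tree produced by combining the sub-trees through $\textsc{Combine}$ (\Cref{alg:Combine}) is a Gomory-Hu $U$-Steiner tree, exactly as in the correctness argument of \cite{abboud2022breaking}.

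I do not expect a real obstacle in this step: all the difficulty of the Gomory-Hu phase has been pushed into \Cref{clm:GHTreeStep}, whose proof in turn rests on the detection and decomposition subroutines of \Cref{subsec:detection}--\Cref{subsec:decomposition2}. For the running-time part of \Cref{lma:bootstrapping} --- which is not asserted by \Cref{clm:GHTree} itself --- I would additionally use that the constant-factor shrinkage of the terminal set bounds the recursion depth of $\textsc{GHTree}$ by $O(\log n)$, combine this with the per-call bound from \Cref{clm:GHTreeStepTime} and the standard fact that the instances at a fixed recursion level have total size $O(m)$ (the cuts $S_v$ are disjoint and $G_{large}$ only contracts them), and conclude a deterministic $m^{1+o(1)}$ bound.
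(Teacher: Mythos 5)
Your proof is correct and follows the paper's argument exactly: the paper's entire proof of \Cref{clm:GHTree} is the single line that \Cref{clm:GHTreeStep} and \Cref{lem:GHTreeOmit} together yield the result, which is precisely the combination you make. The additional detail you supply (induction on $|U|$, base case, well-foundedness of the recursion via the $(1-\Omega(1))|U|$ shrinkage) is implicit in the paper and harmless to spell out.
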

\Cref{clm:GHTreeStep} and \Cref{lem:GHTreeOmit} together prove \Cref{clm:GHTree}.

% \begin{claim} \label{clm:GHTreeRecDep} \wuweiI{merge this to the last proof}
%     $\textsc{GHTree}$ has maximum recursion depth $O(\log n)$.
% \end{claim}
% \begin{proof}
%     This is because each recursive branch has a $(1-\Omega(1))$ fraction of $U$ by \Cref{clm:GHTreeStep}.
% \end{proof}

\begin{lemma}[c.f. Lemma A.15 in \cite{abboud2022breaking}] \label{lem:GHTreeTimeOmit}
    If $\textsc{GHTree}$ has $d$ recursion levels, then the total numbers of vertices and edges over all recursive instances are $O(dn)$ and $O(dm + dn\log n)$, respectively.
\end{lemma}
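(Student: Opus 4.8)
The plan is to argue directly about the recursion tree $\mathcal{R}$ of $\textsc{GHTree}$, whose root is $(G,U)$ and whose internal nodes are exactly the recursive calls (recall a child $(G_v,U_v)$ or $(G_{large},U_{large})$ is only created when its terminal set has size $\ge 2$). The two structural facts I would establish first are: (i) at every $\textsc{GHTreeStep}$ call the sets $U_{large}=U\setminus\bigcup_{(S_v,v,r)\in\mathcal{S}}S_v$ and $\{U_v=S_v\cap U\}$ form a partition of $U$ (the $S_v$ are disjoint by \Cref{clm:GHTreeStep}(\ref{subclaim:HGTreeStep3})), so by induction the terminal sets of all instances at a common recursion level are pairwise disjoint subsets of $V$ — hence each level has $\le n/2$ instances and $\mathcal{R}$ has $\le n/2$ leaves in total; and (ii) for the same reason, the sets of \emph{un-contracted} original vertices (the "real" parts $V_j\subseteq V$) of the instances at a common level are also pairwise disjoint. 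Fact (i) is exactly the content that will make "number of instances per level" behave like $n$ rather than something larger.

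For the vertex bound I would split each instance's vertex set into real vertices and artificial (super-)vertices. Summing $|V_j|$ over a level gives $\le n$ by (ii), hence $\le dn$ over all levels. For artificial vertices I would observe that each $\textsc{GHTreeStep}$ introduces the pairs $(x_v,y_v)$, one per cut in $\mathcal S$, and that \textsc{Combine} turns each such pair into exactly one edge of the tree it returns; since the final Gomory–Hu tree on $U$ has $\le n-1$ edges, at most $n-1$ such pairs are ever created across the whole recursion, so at most $2n$ artificial vertices are ever created. Moreover, going from a parent to its children, a given artificial vertex is either absorbed into a single contracted super-vertex (and disappears) or is inherited by exactly one child; since no artificial vertex is ever re-created, the number of artificial vertices alive at any fixed level is $\le 2n$, contributing $\le 2dn$ over the $d$ levels. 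Altogether the total vertex count is $O(dn)$.

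For the edge bound I would classify the edges of each $G_j$ as (a) internal (both endpoints in $V_j$), (b) crossing (exactly one endpoint in $V_j$, the other in a contracted super-vertex), and (c) inter-contracted (both endpoints in distinct super-vertices). For (a): the $V_j$ at a common level are disjoint, so $\sum_j|E(G[V_j])|\le m$ per level, i.e. $O(dm)$ total. For (b): an edge of $G$ is crossing in $G_j$ only if exactly one of its endpoints lies in $V_j$, and by disjointness of the $V_j$ this happens for at most two instances per level, so again $O(dm)$ total. The remaining, and main, difficulty is (c): here I would track a fixed edge $e$ down $\mathcal{R}$ and note that the set of instances in which $e$ survives forms a subtree $\mathcal{R}_e$ of $\mathcal{R}$ (contraction only merges endpoints, so "still a non-loop" is a downward-monotone condition), and then bound the width of $\mathcal{R}_e$ at each level. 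The hard part will be to show this width is $O(\log n)$ on average, for which I would invoke the near-balanced structure guaranteed by \Cref{clm:GHTreeStep} together with the tree-path-decomposition viewpoint of \Cref{thm:pathDecomp}; this is precisely the argument of \cite[Lemma A.15]{abboud2022breaking}, and it yields the additional $O(dn\log n)$ term. Summing (a), (b), (c) gives the claimed $O(dm + dn\log n)$ bound.
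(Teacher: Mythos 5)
Note first that the paper itself does not prove this lemma: it cites \cite{abboud2022breaking} (``c.f.\ Lemma A.15'') and supplies no argument of its own, so there is no in-paper proof to compare against. You are therefore filling a genuine gap, and much of what you write is sound. Observations (i) and (ii) are correct (the $S_v$ are disjoint by \Cref{clm:GHTreeStep}, $U_{large}$ is their complement, so both the terminal sets and the sets of un-contracted original vertices at any one level are pairwise disjoint). Your vertex bound is a clean argument: each GHTreeStep creates exactly $|\mathcal S|$ pairs $(x_v,y_v)$, each Combine turns one pair into one edge of the final tree on $U$, so at most $|U|-1\le n-1$ pairs are ever created; each artificial vertex survives to at most one child (if it lies in some $S_v$ it survives only in $G_v$; otherwise only in $G_{large}$), so the count alive at any level is at most $2(n-1)$, giving $O(dn)$. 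Your bounds (a) and (b) on internal and crossing edges are also right: by disjointness, each $G$-edge has an endpoint in $V_j$ for at most two $j$ per level, giving $O(m)$ per level for each of (a) and (b), hence $O(dm)$.

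The gap is in (c), and it is substantive. You explicitly defer to ``the argument of \cite[Lemma A.15]{abboud2022breaking}'', which is the very lemma you are proving; that is circular and leaves the dominant $O(dn\log n)$ term unestablished. Worse, the plan you sketch does not lead there even if completed. Showing that the survival subtree of a fixed original edge $e$ has width $O(\log n)$ per level and summing over all $m$ edges gives $O(m\log n)$ per level, i.e.\ $O(dm\log n)$ total — strictly weaker than $O(dm+dn\log n)$ when $m\gg n$. The quantity you must control for (c) is the number of \emph{distinct} contracted edges (edges between two super-vertices) summed over instances per level, and a per-original-edge width bound does not control it: many original edges collapse to the same contracted edge, while a single instance with artificial-vertex set of size $|A_j|$ can a priori contribute $\Theta(|A_j|^2)$ contracted edges, with $\sum_j|A_j|$ only known to be $O(n)$. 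What is actually needed is a sparsity statement about the super-vertex subgraph of each instance — for example, that the super-vertices correspond to the boundary $T$-edges of a connected subtree of $T$, so that each instance contributes $O(|A_j|)$ such edges, or an amortization over the $O(\log n)$ depth that charges each contracted edge to a specific ancestor — and none of that appears in your proposal. As written, the vertex bound and the $O(dm)$ portion of the edge bound are proved, but the $O(dn\log n)$ portion is not.
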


\begin{claim}
    $\textsc{GHTree}(G, U)$ takes deterministic time $m^{1+o(1)}$.
\end{claim}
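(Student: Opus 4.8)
The plan is to combine the per-step running time bound (Claim~\ref{clm:GHTreeStepTime}) with a bound on the recursion depth and the aggregate-size bound of Lemma~\ref{lem:GHTreeTimeOmit}.

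First I would bound the recursion depth of $\textsc{GHTree}$ by $O(\log n)$. By Lemma~\ref{clm:GHTreeStep}, each triple $(S_v,v,r)\in\mathcal{S}$ has $|S_v\cap U|=(1-\Omega(1))|U|$, so every recursive call on $(G_v,U_v)$ with $U_v=S_v\cap U$ has a terminal set smaller than $|U|$ by a constant factor. Moreover, part~(\ref{subclaim:HGTreeStep5}) of the same lemma gives $\sum_{(S_v,v,r)\in\mathcal{S}}|S_v\cap U|=\Omega(|U|)$, and since the $S_v$ are disjoint this means $|U_{large}|=|U|-\sum_v |S_v\cap U|\le (1-\Omega(1))|U|$; hence the recursive call on $(G_{large},U_{large})$ also has a constant-factor-smaller terminal set. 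Since every child instance has at least a constant fraction fewer terminals than its parent, and an instance with a single terminal is not recursed on, the recursion tree has depth $d=O(\log|U|)=O(\log n)$.

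Next I would invoke Lemma~\ref{lem:GHTreeTimeOmit} with $d=O(\log n)$: the total number of vertices over all recursive instances is $O(dn)=\tilde{O}(n)$ and the total number of edges is $O(dm+dn\log n)=\tilde{O}(m)$ (using that we may assume $G$ connected, so $n=O(m)$, and that weights are polynomially bounded, so $\log W=O(\log n)$). Fix one recursion level and let its instances have $m_1,\dots,m_t$ edges; then $\sum_i m_i=\tilde{O}(m)$ and each $m_i=O(m)$. By Claim~\ref{clm:GHTreeStepTime} the call $\textsc{GHTreeStep}$ on the $i$-th instance takes $m_i^{1+o(1)}$ time, while forming the contracted graphs $G_v,G_{large}$ and running $\textsc{Combine}$ takes time near-linear in the instance size, i.e.\ $\tilde{O}(m_i)$. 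Summing, the work at this level is $\sum_i m_i^{1+o(1)}\le m^{o(1)}\sum_i m_i=m^{1+o(1)}$.

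Finally, summing over the $d=O(\log n)$ levels gives total running time $O(\log n)\cdot m^{1+o(1)}=m^{1+o(1)}$, as claimed. The only place where care is needed is the depth bound: one must check that \emph{both} kinds of child instances --- the $G_v$'s and the single ``large'' instance $G_{large}$ --- shrink by a constant factor in the number of terminals, which is exactly what parts~(\ref{subclaim:HGTreeStep4}) and~(\ref{subclaim:HGTreeStep5}) of Lemma~\ref{clm:GHTreeStep} were arranged to guarantee; everything else is routine bookkeeping with the cited lemmas from \cite{abboud2022breaking}.
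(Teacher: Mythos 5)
Your proof is correct and takes essentially the same approach as the paper's: bound the recursion depth by $O(\log n)$ using Lemma~\ref{clm:GHTreeStep}, bound the aggregate instance size by $\tilde{O}(m)$ via Lemma~\ref{lem:GHTreeTimeOmit}, and apply Claim~\ref{clm:GHTreeStepTime} per instance. You spell out slightly more explicitly than the paper why the ``large'' instance also shrinks by a constant factor (using disjointness together with part~\ref{subclaim:HGTreeStep5}), which is a small but worthwhile clarification.
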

\begin{proof}
    As each recursive branch has a $(1-\Omega(1))$ fraction of $U$ by \Cref{clm:GHTreeStep}, $\textsc{GHTree}$ has maximum recursion depth $O(\log n)$. By \Cref{lem:GHTreeTimeOmit}, the total number of edges over all instances is $\tilde{O}(m)$. By \Cref{clm:GHTreeStepTime}, an invocation on $\textsc{GHTreeStep}(G, U)$ with $\abs{E(G)}=m_*$ takes deterministic time $m_*^{1+o(1)}$. Hence $\textsc{GHTree}(G, U)$ takes deterministic time $m^{1+o(1)}$.
\end{proof}

\newpage
\bibliographystyle{alpha}
\bibliography{refs}

\appendix
\section{Expander Decomposition}\label{sec:expDecompAppendix}

In this section, we prove \Cref{thm:expanderDecomp}. The algorithm closely follows Appendix~A of~\cite{li2020deterministic}, which proves an expander decomposition with similar (but incomparable) guarantees. To be consistent with their notation, we prove the more general statement using the concept of expanders w.r.t.\ a vertex weighting $\mathbf{d}$.

\begin{definition}[$(\phi,\mathbf{d})$-expander]
Consider a weighted graph $G=(V,E,w)$ and a vector $\mathbf{d}\in\mathbb R^V_{\ge0}$ of nonnegative entries on the vertices (the ``demands''). The graph $G$ is a \emph{$(\phi,\mathbf{d})$-expander} if for all subsets $S\subseteq V$,
\[ \frac{w(\partial S)}{\min\{\mathbf{d}(S),\mathbf{d}(V\setminus S)\}} \ge \phi.\]
%Here, for any subset $U\s V$, we denote $\mathbf{d}(U) := \sum_{v\in U}\mathbf{d}(v)$.
\end{definition}

\begin{theorem}\label{thm:expanderDecomp2}
Given a weighted graph $G=(V,E,w)$ and a demand vector $\mathbf{d}\in\mathbb R^V_{\ge0}$ with entries in the range $\{0,1,2,\ldots,\textup{poly}(nW)\}$, there is a deterministic algorithm running in $\tilde{O}(m\cdot(mW)^{\epsilon}\log W)$ time that computes a partition $\mathcal X$ of $V$ such that
\begin{itemize}
    \item for every $X \in \mathcal{X}$, $G[X]$ is $(\phi,\mathbf{d}|_X)$-expander, and
    \item consider the flow problem where each vertex $v \in V$ for $v \in X \in \mathcal{X}$ has $w(E(\{v\}, V\setminus X))$ units of source mass and each vertex $u \in V$ has $\phi \cdot \gamma_{expDecomp}\cdot\mathbf{d}(v)$ units of sink capacity. Then, there exists a feasible flow $f$ on the network $G$ where each edge $e$ has capacity $w(e) \cdot \gamma_{expDecomp}$.
\end{itemize}
\end{theorem}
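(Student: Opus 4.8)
The plan is to follow the deterministic expander-decomposition construction underlying Appendix~A of~\cite{li2020deterministic}, replacing its expander-decomposition subroutine by the deterministic algorithm of~\cite{li2021deterministicExpa} and upgrading the output to a flow-certified decomposition via the boundary-trimming argument of~\cite{saranurak2019expander}. The only genuinely new ingredient relative to~\cite{li2020deterministic} is that the statement is phrased for an arbitrary demand vector $\mathbf{d}$ rather than for vertex degrees; I would handle this by carrying $\mathbf{d}$ through the entire construction as a sink-weighting — wherever the degree-based version uses $\deg_G(v)$, use $\mathbf{d}(v)$ instead, and run every cut-matching game with $v$ given sink capacity proportional to $\mathbf{d}(v)$. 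Since $\mathbf{d}$ has entries bounded by $\text{poly}(nW)$ and edge weights are likewise bounded, the only price is an extra $\log W$ factor (plus the usual $(mW)^{\epsilon}$ overhead for deterministic balanced-cut search and weight bucketing), matching the claimed running time $\tilde{O}(m\cdot(mW)^{\epsilon}\log W)$.

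The algorithm itself is the standard recursion: on input $(H,\mathbf{d}|_{V(H)})$, invoke the deterministic subroutine of~\cite{li2021deterministicExpa} with a parameter proportional to $\phi$; it either certifies that $H$ is a near-$(\phi,\mathbf{d})$-expander — by producing an embedding of a $\Theta(1)$-expander on $\supp(\mathbf{d})$ into $H$ with vertex load $O(\mathbf{d}(v))$ and per-edge load $O(\text{polylog}(m)/\phi)\cdot w_H(e)$ — or returns a $\mathbf{d}$-balanced sparse cut, on which we split and recurse. A potential-function argument bounds the recursion depth by $\tilde{O}(1)$ and the total weight of inter-cluster edges by $\phi\cdot\gamma_{expDecomp}\cdot\mathbf{d}(V)$. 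Each returned near-expander is then processed by the trimming procedure of~\cite{saranurak2019expander}, which carves off a small sub-demand and leaves a set $X$ that is a genuine $(\phi,\mathbf{d}|_X)$-expander, increasing the inter-cluster weight by at most a constant factor. This gives the first bullet of the theorem immediately.

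The heart of the proof is the second bullet, the feasible flow, and here I would reuse the analysis of~\cite{li2020deterministic} almost verbatim. The flow is not constructed after the fact: the union, over all recursion levels, of the expander embeddings produced by the cut-matching games, composed with the short re-routings introduced by trimming, is itself a flow in $G$ that sends from each $v\in X$ exactly its boundary mass $w(E(\{v\},V\setminus X))$ into the $\mathbf{d}$-scaled sinks. Inside each piece $X$, the $w(\partial_G X)$ units of boundary flow are absorbed by routing them along the embedded expander certifying that $G[X]$ is a $(\phi,\mathbf{d}|_X)$-expander into $X$'s $\mathbf{d}$-weighted sinks; this routing has per-edge load $O(\text{polylog}(m)/\phi)\cdot w(e)$, and because the total-boundary bound gives $w(\partial_G X)\le\phi\cdot\gamma_{expDecomp}\cdot\mathbf{d}(X)$, it uses only $O(\text{polylog}(m))\cdot\gamma_{expDecomp}\cdot\mathbf{d}(v)$ sink capacity at $v$ — the $1/\phi$ blow-up of expander routing cancels against the $\phi$ in the sink capacity, which is exactly why $\gamma_{expDecomp}$ can be made independent of $\phi$. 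Summing the loads over the $\tilde{O}(1)$ recursion levels and trimming rounds yields edge congestion $\gamma_{expDecomp}=e^{O(\log^{4/5}(m)\log\log(Wm))}$ and sink usage $\phi\cdot\gamma_{expDecomp}\cdot\mathbf{d}(v)$ at each $v$, as required.

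I expect the main obstacle to be precisely this last bookkeeping: assembling the per-level embeddings into one explicit flow in $G$ and verifying that both the edge congestion and the sink usage stay at the stated $n^{o(1)}$ bound after also accounting for the flow introduced by trimming, all while keeping every sub-step — the cut-matching game, the balanced-cut search, and the trimming — deterministic. The generalization to an arbitrary demand vector is routine once one treats $\mathbf{d}$ as the sink-weighting throughout; the delicate part is the cancellation of $1/\phi$ against the $\phi$-scaled sink capacities, which must be tracked through all levels of the recursion so that no factor of $1/\phi$ leaks into $\gamma_{expDecomp}$.
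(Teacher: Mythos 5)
Your proposal takes a genuinely different route from the paper, even though both start from \cite{li2021deterministicExpa} and are guided by the trimming philosophy of \cite{saranurak2019expander}. The paper uses Corollary~2.5 of \cite{li2021deterministicExpa} purely as a black box: it returns a partition of $V$ into $(\phi,\mathbf{d})$-expanders with small total inter-cluster weight, and the internals of the cut-matching game are never inspected. The paper then does not trim each returned cluster separately; instead it merges the clusters into a single bipartition $(A,B)$ with $\mathbf{d}(A),\mathbf{d}(B)=\Theta(\mathbf{d}(V))$ (or $B$ a single dominant cluster), sets up one max-flow instance on $G[B]$ whose source edges encode boundary mass and whose sink edges encode $\frac{\phi}{2}\mathbf{d}$, and lets the min-cut define a trimmed piece $B'$ and a remainder $A'$ on which it recurses, for depth $O(\log(nW))$. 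Crucially, the feasible flow demanded by the second bullet is obtained directly from the max-flows computed during trimming, scaled by the constant $12\alpha$ and summed over levels; no factor of $1/\phi$ ever appears because the trimming max-flow is run with edge capacity $w(e)$ itself.

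Your sketch instead proposes a certify-or-cut recursion that exposes the cut-matching-game embeddings, and wants to form the certificate flow as ``the union of the expander embeddings composed with the short re-routings from trimming.'' This can plausibly be made rigorous, but as written there are two gaps you would need to close. First, an embedding of an $\Omega(1)$-expander into $G[X]$ certifies expansion, but it is not by itself a routing of the boundary demand $s(v)=w(E(\{v\},V\setminus X))$ into the $\mathbf{d}$-weighted sinks: $s(v)$ is in general far from $\mathbf{d}$-balanced (that imbalance is exactly what trimming removes), so the actual flow must come from the trimming flow computations, and ``composed with the re-routings'' needs to be spelled out as a concrete flow decomposition that remains a flow in $G$ after summing over levels. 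Second, the claimed $1/\phi$-versus-$\phi$ cancellation is not automatic: the embedding's congestion $O(\mathrm{polylog}(m)/\phi)\cdot w(e)$ multiplied by the amount of boundary mass routed only yields $\phi$-independent congestion if both the routing demand and the sink capacities carry the $\phi$ scaling consistently at every level, and this bookkeeping is precisely where the leakage you worry about could occur. The paper's bipartition-plus-max-flow trim sidesteps both issues by construction, which is why it is phrased entirely in terms of a single max-flow per level rather than in terms of embeddings.
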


\Cref{thm:expanderDecomp} follows from \Cref{thm:expanderDecomp2} with the weighting $\mathbf{d}(v)=1$ for $v\in U$ and $\mathbf{d}(v)=0$ for $v\notin U$. For the rest of this section, we prove \Cref{thm:expanderDecomp2}, closely following Appendix~A of~\cite{li2020deterministic}.

We use Corollary~2.5 of~\cite{li2021deterministicExpa} as a black box.

\begin{theorem}[Corollary 2.5 of~\cite{li2021deterministicExpa}] \label{thm:LS}
Fix any constant $\epsilon>0$ and any parameter $\phi>0$. Given a weighted graph $G=(V,E,w)$ and a demand vector $\mathbf{d}\in\mathbb R^V_{\ge0}$ with entries in the range $\{0,1,2,\ldots,\textup{poly}(nW)\}$, there is a deterministic algorithm running in $\tilde{O}(m\cdot(mW)^{\epsilon}\log W)$ time that partitions $V$ into subsets $V_1,\ldots,V_\ell$ such that
 \begin{enumerate}
 \item Each graph $G[V_i]$ is a $(\phi,\mathbf{d}|_{V_i})$-expander.
 \item The total weight $w(E(V_1,\ldots,V_\ell))$ of inter-cluster edges is $(\log n)^{O(1/\epsilon^4)}\phi\, \mathbf{d}(V)$.
 \end{enumerate}
\end{theorem}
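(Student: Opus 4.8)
This statement is quoted as Corollary~2.5 of~\cite{li2021deterministicExpa}, so strictly the plan is to invoke that reference; for completeness I outline the route its proof takes. The backbone is the classical recursive expander-decomposition template instantiated with a \emph{deterministic} sparse-cut oracle. The key primitive is a procedure that, given a weighted graph $G$, a demand vector $\mathbf d$, and a target conductance $\phi$, runs in time $\tilde O(m\cdot(mW)^{\epsilon})$ and either (i) certifies that $G$ is a $(\phi/(\log n)^{O(1/\epsilon^4)},\mathbf d)$-expander, or (ii) returns a cut $S$ with $w(\partial S)\le \phi\cdot\min\{\mathbf d(S),\mathbf d(V\setminus S)\}$ which is, moreover, approximately the most balanced such cut. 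This oracle is obtained by derandomizing the cut-matching game: instead of sampling a random bisection/matching, one drives the same potential argument using an explicit family of expanders together with the deterministic routing and flow machinery of~\cite{li2021deterministicExpa}, at the cost of the $(\log n)^{O(1/\epsilon^4)}$ factor in the conductance gap and the $(mW)^{\epsilon}$ factor in the running time.

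Given this oracle, the plan is to run the standard recursion: call the oracle on $G$; if it certifies expansion, output the trivial partition $\{V\}$; otherwise take the returned (near-most-balanced) sparse cut $S$, and before recursing apply the \emph{trimming} step of~\cite{saranurak2019expander} to the smaller side so that it becomes ``boundary-linked'' — every sub-cut of the trimmed side has large boundary either within the side or toward the removed part — then recurse on the two pieces. The cut edges removed at a node of the recursion are charged to the demand $\mathbf d$ on its smaller side, which is at most half the demand of the current instance; summing over the levels this gives total inter-cluster weight $O(\phi\,\mathbf d(V)\cdot(\text{depth}))$ before the oracle's slack, hence $(\log n)^{O(1/\epsilon^4)}\phi\,\mathbf d(V)$ after it, provided the recursion depth is kept to $(\log n)^{O(1/\epsilon^4)}$. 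The most-balanced-cut guarantee of the oracle is what keeps the recursion tree shallow.

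For weights and running time I would reduce polynomially-bounded weights to $O(\log W)$ weight classes (or, equivalently, note that the oracle and the trimming step operate directly on weighted graphs), which explains the extra $\log W$ factor; at each level of the recursion the instances are vertex- and edge-disjoint, so one level costs $\tilde O(m\cdot(mW)^{\epsilon})$ total, and with $(\log n)^{O(1/\epsilon^4)}$ levels this yields the claimed $\tilde O(m\cdot(mW)^{\epsilon}\log W)$ bound. Correctness of property~(1) is immediate: a leaf of the recursion is exactly where the oracle certified expansion.

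The hard part will be the deterministic sparse-cut oracle itself — derandomizing the cut-matching game so that each round's ``sparse cut or expander flow'' is produced without randomness and in almost-linear time; this is precisely the technical content of~\cite{li2021deterministicExpa} and is what forces the $(\log n)^{O(1/\epsilon^4)}$ and $(mW)^{\epsilon}$ losses. A secondary subtlety is making the trimming and boundary-linked bookkeeping compatible with an arbitrary demand vector $\mathbf d$ rather than with vertex degrees, but this is routine once $\mathbf d$ is carried through the analysis as vertex weights.
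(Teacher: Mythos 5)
The paper gives no proof of this statement at all---it is imported verbatim as Corollary~2.5 of~\cite{li2021deterministicExpa} and used strictly as a black box, which is exactly what you propose to do, so your approach matches the paper's. Your accompanying sketch of the derandomized cut-matching-game recursion with trimming is reasonable background on how the cited work establishes it, but it is not something this paper contains or relies on, so there is nothing further to check here.
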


We now prove \Cref{thm:expanderDecomp2}. Apply \Cref{thm:LS} with $\epsilon=(\log n)^{-1/5}$ to obtain a partition $V_1,\ldots,V_\ell$ with $w(E(V_1,\ldots,V_\ell)) \le \alpha\phi\mathbf{d}(V)$ for some $\alpha=(\log n)^{O(1/\epsilon^4)}\log W\le\gamma_{expDecomp}/12$ in time $O(m\cdot \gamma_{expDecomp} )$. We group the clusters $V_i$ into a bipartition $(A,B)$ as follows. If there is a (unique) $V_i$ with $\mathbf{d}(V_i)>\mathbf{d}(V)/2$, then let $B=V_i$ and $A=V\setminus B$. Otherwise, start with $A=\emptyset$ and greedily add clusters $V_i$ into $A$ until $\mathbf{d}(A)\ge\mathbf{d}(V)/4$. Since the last cluster added has $\mathbf{d}(V_i)\le\mathbf{d}(V)/4$, we also have $\mathbf{d}(A)\le3\mathbf{d}(V)/4$. Let $B=V\setminus A$ so that $\mathbf{d}(B)\ge\mathbf{d}(V)/4$ as well.

We construct the following flow instance. Start with the graph $G[B]$, add a source vertex $s$, and for each vertex $v\in B$ with $E(\{v\},V\setminus B)\ne\emptyset$, add an edge $(s,v)$ of weight $\frac1{12\alpha} w(E(\{v\},V\setminus B))$. Also, add a new vertex $t$, and for each vertex $v\in B$ with $\mathbf{d}(v)>0$, add an edge $(v,t)$ of weight $\frac\phi2\mathbf{d}(v)$. Let this graph be $H$, compute an $s$--$t$ max-flow/min-cut on graph $H$, and let $f$ be the max-flow and $S\subseteq B\cup\{s\}$ be the min-cut ($s\in S,\,t\notin S$). Let $B'=B\setminus S$ and $A'=V\setminus B'$, forming a new partition $(A',B')$.
% The key claim is that for the new ``trimmed'' cluster $B'=B\sm S$, the graph $G[B']$ is a $(\phi,\mathbf{d}|_{B})$-expander. In other words, we do not need to recurse on $B'$. The key guarantee of trimming is the following.

\begin{claim}\label{clm:trim}
We have the following guarantees:
\begin{enumerate}
\item $\mathbf{d}(B\setminus B')\le\mathbf{d}(V)/6$.
\item If $G[B]$ is a $(\phi,\mathbf{d})$-expander, then $G[B']$ is a $(\phi/6,\mathbf{d}|_{B'})$-expander.
\item For each $X'\in\{A',B'\}$, the following is true. Consider the flow problem where each vertex $v \in X'$ has $w(E(\{v\}, V\setminus X'))$ units of source mass and each vertex $u \in U$ has $\frac\phi2\mathbf{d}(v)$ units of sink capacity. Then, there exists a feasible flow $f$ on the network $G$ where each edge $e$ has capacity $w(e)$.
\end{enumerate}
\end{claim}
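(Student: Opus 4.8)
\textbf{Proof plan for Claim~\ref{clm:trim}.} The three parts correspond exactly to the three guarantees of the classic ``trimming'' step in expander-decomposition arguments (cf.\ Appendix~A of~\cite{li2020deterministic}), applied to the bipartition $(A,B)$ produced by the greedy grouping. I will establish each in turn, relying on the max-flow/min-cut duality on the auxiliary graph $H$.

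First, for part~(1), the key observation is that the cut $S=\{s\}$ (i.e.\ $B'=B$) has total capacity exactly $\frac{1}{12\alpha}w(E(B,V\setminus B))\le \frac{1}{12\alpha}\cdot\alpha\phi\,\mathbf{d}(V)=\frac{\phi}{12}\mathbf{d}(V)$, using the bound $w(E(V_1,\dots,V_\ell))\le\alpha\phi\,\mathbf{d}(V)$ from \Cref{thm:LS} and the fact that every inter-cluster edge incident to $B$ is one of those edges. Hence the min-cut $S$ has capacity at most $\frac{\phi}{12}\mathbf{d}(V)$. On the other hand, the edges from $B\setminus B'=S\setminus\{s\}$ to $t$ are all severed by $S$, contributing at least $\frac{\phi}{2}\mathbf{d}(B\setminus B')$ to its capacity. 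Combining, $\frac{\phi}{2}\mathbf{d}(B\setminus B')\le\frac{\phi}{12}\mathbf{d}(V)$, so $\mathbf{d}(B\setminus B')\le\mathbf{d}(V)/6$. Since $\mathbf{d}(B)\ge\mathbf{d}(V)/4$, this also yields $\mathbf{d}(B')\ge\mathbf{d}(V)/4-\mathbf{d}(V)/6=\mathbf{d}(V)/12>0$, which will be needed for part~(2).

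For part~(2), suppose $G[B]$ is a $(\phi,\mathbf{d})$-expander and take any $T\subseteq B'$ with, WLOG, $\mathbf{d}(T)\le\mathbf{d}(B')/2$. I need $w(\partial_{G[B']}T)\ge\frac{\phi}{6}\mathbf{d}(T)$. The standard argument is: route the flow $f$ restricted to the portion landing in $T$; since $S$ is a min-cut and $T\subseteq B'=B\setminus S$, the vertices of $T$ are on the sink side, so all the source mass $\frac{1}{12\alpha}w(E(\{v\},V\setminus B))$ for $v\in T$ that $f$ pushes towards $t$ must either be absorbed by the sink edges $(v,t)$ of capacity $\frac\phi2\mathbf{d}(v)$ inside $T$, or cross $\partial_{G[B]}T$. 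Actually the cleaner route is the contrapositive via the expansion of $G[B]$: apply the $(\phi,\mathbf{d})$-expansion of $G[B]$ to the set $T$ inside $B$, giving $w(\partial_{G[B]}T)\ge\phi\min\{\mathbf{d}(T),\mathbf{d}(B\setminus T)\}$; the edges of $\partial_{G[B]}T$ are either in $\partial_{G[B']}T$ or go from $T$ into $B\setminus B'$. Bounding the latter requires the flow $f$ to witness that not too much of $T$'s ``boundary budget'' is spent reaching $B\setminus B'$: since $f$ saturates the source edges into $S$ and $T$ sends flow only through $G[B]$, a careful accounting (exactly as in~\cite{li2020deterministic}, Appendix~A) shows the edges from $T$ to $S\setminus\{s\}$ carry at most $\frac\phi6\mathbf{d}(T)$ weight after scaling by the $\frac{1}{12\alpha}$ and $\frac12$ factors, leaving $w(\partial_{G[B']}T)\ge\phi\mathbf{d}(T)-\frac\phi2\mathbf{d}(T)-\frac\phi3\mathbf{d}(T)\ge\frac\phi6\mathbf{d}(T)$. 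Here I also use $\mathbf{d}(B\setminus T)\ge\mathbf{d}(B')/2\ge\mathbf{d}(T)$ from part~(1) to discharge the $\min$.

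For part~(3), I need a feasible flow routing each vertex's boundary mass $w(E(\{v\},V\setminus X'))$ to sinks of capacity $\frac\phi2\mathbf{d}(v)$, using unit edge capacities $w(e)$. For $X'=B'$: the max-flow $f$ on $H$ has value equal to the total source weight $\frac{1}{12\alpha}w(E(B,V\setminus B))$ minus the deficit, but more precisely, by LP duality and the min-cut $S$, $f$ saturates every source edge \emph{not} cut by $S$; scaling $f$ up by $12\alpha\le\gamma_{expDecomp}$ gives exactly the routing of $w(E(\{v\},V\setminus B))$ from each $v\in B'$ — except one must check the new boundary $\partial B'$ (which includes $\partial B$ plus edges to $B\setminus B'$). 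The contribution from $B\setminus B'$ edges is handled by noting those vertices are in $S$ and their source mass was already being absorbed. For $X'=A'$: here $A'=A\cup(B\setminus B')$, and $\mathbf{d}(A')\le 3\mathbf{d}(V)/4+\mathbf{d}(V)/6=11\mathbf{d}(V)/12$, and crucially $\mathbf{d}(B')\ge\mathbf{d}(V)/12$ so there's ample sink capacity on the $B'$ side; the boundary of $A'$ equals $\partial B'$, the same edge set, and the same flow $f$ (reversed, routing mass from $A'$ into sinks in $B'$) works after the $\gamma_{expDecomp}$ scaling. The main obstacle I anticipate is the bookkeeping in part~(2): precisely tracking how the flow $f$ certifies that edges from $T$ into the trimmed-off set $S\setminus\{s\}$ do not eat into $T$'s expansion budget, since this is where the $1/6$ constant and the choice of the $\frac{1}{12\alpha}$ and $\frac12$ scaling factors all have to fit together — this is the crux of the trimming lemma and must be done with care, but it follows the template of~\cite{li2020deterministic} essentially verbatim.
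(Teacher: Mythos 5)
Your proposal follows the same template as the paper (both import the trimming argument of Appendix~A of~\cite{li2020deterministic}), and Parts~(1) and~(3) are correct and essentially identical to the paper's: for~(1) you compare the $t$-edges cut by $S$ against the degree cut $\partial_H\{s\}$; for~(3) you restrict the saturated max-flow $f$ to $G[B']$ (respectively reverse it for $A'$) and scale up by $12\alpha$.

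The gap is in Part~(2), and you half-flag it yourself. Your sketch is internally inconsistent: you first assert that ``the edges from $T$ to $S\setminus\{s\}$ carry at most $\frac\phi6\mathbf{d}(T)$ weight,'' but the chain you then write, $w(\partial_{G[B']}T)\ge\phi\mathbf{d}(T)-\frac\phi2\mathbf{d}(T)-\frac\phi3\mathbf{d}(T)$, is implicitly using the weaker bound $w(E(T,B\setminus B'))\le\frac{5\phi}{6}\mathbf{d}(T)$. More importantly, no unconditional direct bound on $w(E(T,B\setminus B'))$ of this form holds: a set $T\subseteq B'$ can be glued almost entirely to $B\setminus B'$. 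The paper resolves this by arguing by contradiction. Assume $w(\partial_{G[B']}U)\le\frac\phi6\mathbf{d}(U)$. Since $w(\partial_{G[B]}U)=w(\partial_{G[B']}U)+w(E(U,B\setminus B'))\ge\phi\mathbf{d}(U)$ by expansion of $G[B]$, this forces $w(E(U,B\setminus B'))\ge\frac{5\phi}{6}\mathbf{d}(U)$. But every edge from $S$ into $U$ (the $s$-edges and the edges from $B\setminus B'$) is saturated by $f$ in the $S\to V\setminus S$ direction, so at least $\frac{5\phi}{6}\mathbf{d}(U)$ flow enters $U$ and must leave via $\partial_{G[B']}U$ or the $t$-edges; the latter two have total capacity at most $\frac\phi6\mathbf{d}(U)+\frac\phi2\mathbf{d}(U)=\frac{2\phi}{3}\mathbf{d}(U)<\frac{5\phi}{6}\mathbf{d}(U)$, a contradiction. (Alternatively, one can make your direct variant rigorous: the flow-conservation inequality $w(E(U,B\setminus B'))\le w(\partial_{G[B']}U)+\frac\phi2\mathbf{d}(U)$ combined with expansion gives $2w(\partial_{G[B']}U)\ge\frac\phi2\mathbf{d}(U)$, hence $w(\partial_{G[B']}U)\ge\frac\phi4\mathbf{d}(U)$, which is even stronger; but this is not the chain you wrote.) You should also make explicit that $\mathbf{d}(U)\le\mathbf{d}(B'\setminus U)\le\mathbf{d}(B\setminus U)$, so the $\min$ in the expansion of $G[B]$ resolves to $\mathbf{d}(U)$; you note the ingredient $\mathbf{d}(B')\ge\mathbf{d}(V)/12$ but don't spell out where it enters.
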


If $B=V_i$ for some cluster $V_i$, then make a recursive call on $G[A']$. Otherwise, make recursive calls on both $G[A']$ and $G[B']$. In the former case, by \Cref{clm:trim}, the graph $G[B']$ is a $(\phi/6,\mathbf{d}|_{B'})$-expander. In both cases, we have $\mathbf{d}(A')\le\mathbf{d}(A)+\mathbf{d}(V)/6\le3\mathbf{d}(V)/4+\mathbf{d}(V)/6$ and $\mathbf{d}(B')\le\mathbf{d}(B)\le3\mathbf{d}(V)/4$, so the recursion depth is $O(\log(nW))$.

Let $\mathcal X$ be the final partition output by the algorithm. Each edge cut by the final partition must belong to the cut of some $(A',B')$ partition over the course of the algorithm, so for each vertex $v\in V$ in a final cluster $X\in\mathcal X$, the value $w(E(\{v\},V\setminus X))$ is at most the sum of $w(E(\{v\},V\setminus X'))$ over all sides $X'\ni v$ of some partition $(A',B')$ encountered in the recursive algorithm. Consider summing the appropriate flows from \Cref{clm:trim} over all instances, forming a single flow. Using the fact that the recursion depth is $O(\log(nW))$ and instances are disjoint across any level, this flow satisfies the following: there is $O(\log(nW))\cdot w(e)$ flow along each edge $e$, each vertex is the sink of $O(\log(nW))\cdot\phi$ flow, and each vertex is the source is at least $w(E(\{v\},V\setminus X))$ flow. Finally, remove enough flow until each vertex is the source is exactly $w(E(\{v\},V\setminus X))$ flow, fulfilling the condition of \Cref{thm:expanderDecomp2}. The total running time is dominated by the calls to \Cref{thm:LS} and max-flow, which take $m^{1+o(1)}$ time~\cite{van2023deterministic}.

It remains to prove \Cref{clm:trim}. For the first statement $\mathbf{d}(B\setminus B')\le\mathbf{d}(V)/6$, observe that for each vertex $v\in B\setminus B'$ with $\mathbf{d}(v)>0$, the edge $(v,t)$ of weight $\frac\phi2\mathbf{d}(v)$ is cut. Therefore, $\frac\phi2\mathbf{d}(B\setminus B')\le w(\partial_{H}S)$, and the min-cut is at most the degree cut
\[ w(\partial_{H}\{s\})=\sum_{v\in V}\frac1{12\alpha} w(E(\{v\},V\setminus B))\le\frac1{12\alpha}\cdot w(E(V_1,\ldots,V_\ell))\le\frac\phi{12}\mathbf{d}(V) ,\]
so we conclude that $\mathbf{d}(B\setminus B')\le\mathbf{d}(V)/6$, as desired. For the second statement, suppose for contradiction that $G[B']$ is not a $(\phi/6,\mathbf{d}|_{B'})$-expander. Then, there is a cut $U\subseteq B'$ with $w(\partial_{G[B']}U)\le \frac\phi6\mathbf{d}|_{B'}(U)$. Since $G[B]$ is a $(\phi,\mathbf{d}|_{B'})$-expander, $w(\partial_{G[B]}U)\ge\phi\mathbf{d}|_{B'}(U)=\phi\mathbf{d}(U)$. Taking the difference of the two inequalities gives $w(E(U,B\setminus B'))=w(\partial_{G[B']}U)-w(\partial_GU)\ge\frac{5\phi}6\mathbf{d}(U)$. 

By max-flow/min-cut duality, the $s$--$t$ max-flow $f$ sends full capacity of flow along each edge $e\in\partial S$ in the direction from $S$ to $V\setminus S$. The edges $e\in\partial S$ can be partitioned into three types: the edges $e$ adjacent to $s$, the edges in $G[B]$, and the edges adjacent to $t$. Consider the edges of the first two types with (exactly) one endpoint in $U$. These edges have total capacity 
\begin{align}
 \frac1{12\alpha} w(E(U,V\setminus B))+w(E(U,B\setminus B')) ={} \frac1{12\alpha} w(E(U,V\setminus B))+\frac{5\phi}6\mathbf{d}(U) . \label{eq:source}
\end{align}
Flow $f$ sends full capacity from $S$ to $U$, and this flow must eventually reach $t$. It can escape $U$ in two ways: through edges in $G[B']$ and through edges adjacent to $t$. The total capacity of these edges is
\begin{align}
w(\partial_{G[B']}U)+\frac\phi2\mathbf{d}(U) &\le \frac\phi6 \mathbf{d}(U) + \frac\phi2\mathbf{d}(U) = \frac{2\phi}3\mathbf{d}(U) . \label{eq:cap}
\end{align}
Using that $w(E(U,V\setminus B'))=w(E(U,V\setminus B))+w(E(U,B\setminus B'))$ and comparing term by term, we conclude that expression~(\ref{eq:source}) is strictly larger than expression~(\ref{eq:cap}), which means the flow entering $U$ cannot completely escape $U$, a contradiction.

For the third statement, recall that by max-flow/min-cut duality, the $s$--$t$ max-flow $f$ sends full capacity of flow along each edge $e\in\partial S$ in the direction from $S$ to $V\setminus S$. Suppose first that $X=B'$, and consider the restriction of the flow to the subgraph $G[B']$ of $H$. Each vertex $v\in B'$ is the new source of exactly $\frac1{12\alpha}w(\{v\},V\setminus B')$ flow, since the full capacity flow from $V\setminus B'$ to $v$ was removed by the restriction. Each vertex $v\in B'$ is the new sink of at most $\frac\phi2\mathbf{d}(v)$ flow, since the removed edge $(v,t)$ has weight $\frac\phi2\mathbf{d}(v)$. This flow scaled up by $12\alpha\le\gamma_{expDecomp} $ immediately satisfies the third statement, and the case $X=A'$ is analogous with the flow reversed.

\end{document}